\newtheorem{thm}{Theorem}[section]
\newtheorem{lemma}[thm]{Lemma}
\newtheorem{remark}[thm]{Remark}
\newtheorem{example}[thm]{Example}
\newtheorem{conjecture}[thm]{Conjecture}
\tikzstyle{white dot}=[fill=white, draw=black, shape=circle, scale=0.5]
\tikzstyle{black dot}=[fill=black, draw=black, shape=circle, scale=0.5]
\tikzstyle{red dot center}=[fill=red, draw=red, shape=circle, scale=0.1]
\tikzstyle{small black dot}=[fill=black, draw=black, shape=circle, scale=0.1]
\tikzstyle{big red dot}=[fill=red, draw=black, shape=circle, scale=.8]
\tikzstyle{white filled}=[-, fill=white, thick]
\tikzstyle{gray filled}=[-, fill={rgb,255: red,128; green,128; blue,128}, thick]
\tikzstyle{red path}=[-, fill={rgb,255: red,128; green,128; blue,128}, draw=red, line width=3.8pt]
\tikzstyle{dashed edges}=[-, dashed]
\tikzstyle{thick line}=[-, thick, fill={rgb,255: red,128; green,128; blue,128}]
\tikzstyle{highlight}=[-, ultra thick, preaction={draw,yellow,-,
\tikzstyle{highlight blue}=[-, ultra thick, preaction={draw,blue,-,
\tikzstyle{thick arrow}=[->, ultra thick]
\tikzstyle{direct arrow}=[->]
\tikzstyle{double arrow}=[<->, thick]
\tikzstyle{Blue edges}=[-, draw=blue, ultra thick, fill={rgb,255: red,128; green,128; blue,128}]
\tikzstyle{thick Blue}=[-, draw=blue, line width=2mm]
\newcommand{\bmb}{\left( \begin{array}{rr}}
\newcommand{\enm}{\end{array}\right)}
\newcommand{\cD}{\mathcal D}
\newcommand{\ti}{{\tilde i}}
\newcommand{\tj}{{\tilde j}}
\newcommand{\tk}{{\tilde k}}
\newcommand{\tl}{{\tilde l}}
\newcommand{\tm}{{\tilde m}}
\newcommand{\tn}{{\tilde n}}
\newcommand{\Z}{{\mathbb Z}}
\newcommand{\bk}{{\mathbf k}}
\newcommand{\al}{{\alpha}}
\newcommand{\ds}{\displaystyle}
\numberwithin{equation}{section}
\begin{document}

\title{Arctic curves of the $T$-system with Slanted Initial Data}
\author{Philippe Di Francesco$^*$}
\address{Department of Mathematics, University of Illinois, Urbana, IL 61821, U.S.A. 
%and \break
%Institut de physique th\'eorique, Universit\'e Paris Saclay, 
%CEA, CNRS, F-91191 Gif-sur-Yvette, FRANCE\hfill
\break  e-mail: philippe@illinois.edu\footnote{Corresponding Author}
}
\author{Hieu Trung Vu}
\address{Department of Mathematics, University of Illinois, Urbana, IL 61821, U.S.A. 
\break  e-mail: hvu@illinois.edu}
\begin{abstract}
We study the $T$-system of type $A_\infty$, also known as the octahedron recurrence/equation, 
viewed as a $2+1$-dimensional discrete evolution equation. Generalizing earlier work on arctic curves for
the Aztec Diamond obtained from solutions of the octahedron recurrence with ``flat" initial data, 
we consider initial data along parallel ``slanted" planes perpendicular to an arbitrary admissible direction 
$(r,s,t)\in \Z_+^3$. The corresponding solutions of the $T$-system are interpreted as partition functions of 
dimer models on some suitable ``pinecone" graphs introduced by Bousquet-Melou, Propp, and West in 2009. 
The $T$-system formulation and 
some exact solutions in uniform or periodic cases allow us to explore the thermodynamic limit of the 
corresponding dimer models and to derive exact arctic curves separating the various phases of the system.
This direct approach bypasses the standard general theory of dimers using the Kasteleyn matrix approach
and uses instead the theory of Analytic Combinatorics in Several Variables, by focusing on a linear 
system obeyed by the dimer density generating function.
% \vskip .2in
% \noindent{${}^*$ Corresponding Author.}
\end{abstract}

\maketitle
\date{\today}
\tableofcontents

\section{Introduction}

	The $T$-system, also known as the octahedron recurrence, is a system of non-linear equations describing the time evolution of a quantity $T_{i,j,k}$ indexed by the $\Z^3$ lattice, where $(i,j)$ are thought of as discrete space coordinates, and $k$ a discrete time. The $T$-system originated in the context of integrable quantum spin chains, as a functional relation between transfer matrices \cite{KUNIBA_1994}, \cite{KUNIBA1994_2}. The $T$-system was more recently reinterpreted in the framework of cluster algebras \cite{clusDFK} as a particular set of mutations in an infinite rank system. As a consequence, solutions display the Laurent phenomenon: the solution can be expressed in terms of any admissible initial data as a Laurent polynomial with non-negative integer coefficients. 
%	
%	The history of the subject dates all the way back to C.N. Yang \cite{Yang}, originates in Yang-Baxter quantum integrable systems. After that, there was a series of research in the area of mathematical physics considering families of solvable vertex models by Kulish, Kirillov, Reshetikhin, Klümper and Pearce \cite{Kulish1981}, \cite{Pearce}, \cite{Kirillov1987}, \cite{Kirillov1987_2}. The term $T$-system was named by Kuniba, Nakanishi and Suzuki \cite{KUNIBA_1994}, \cite{KUNIBA1994_2}, \cite{Kuniba_2011} in the context of quantum spin chain in 1994. It was shown to be a particular set of mutations within some infinite rank cluster algebra \cite{DF2}, and as such to enjoy the Laurent property.
This system displays rich combinatorial properties, depending on the choice of initial data/boundary conditions, such as discrete integrability \cite{DFKposit}, and periodicity properties \cite{DFPKedem,MR2931902,MR2649278,MR3029995,MR3029994}. In particular, the $T$-system with periodic boundary conditions is related to the pentagram map, an integrable dynamical system on polygons of the projective plane \cite{Kedem_2015}, and its higher generalizations \cite{GSTV}. 

We also consider the combinatorics of dimer configurations, i.e. perfect matchings of (planar) graphs. The \emph{perfect matching} of a graph $G$ is a subgraph of $G$ such that every vertex belongs to exactly one edge. In the case of the graph on the $\Z^2$ lattice, perfect matchings can be visualized dually as domino tilings, i.e. tilings by means of $2 \times 1$ and $1 \times 2$,rectangles. A method for counting the number of domino tilings of a finite domain of $\Z^2$  was devised  independently by Kasteleyn \cite{Kasteleyn} and by Fisher and Temperley \cite{Temperley1961DimerPI}. 

For suitable domain/graph shapes, dimer models display the so-called arctic phenomenon: when the domain/graph is scaled to a very large size, in typical configurations there is a sharp separation between ``frozen" regions of the domain with regular lattice-like dimer configurations and
``liquid" regions where the dimers are disordered, eventually converging to an ``arctic curve". 
The simplest instance is the arctic circle theorem for the uniform domino tiling of large Aztec diamonds \cite{ProppShor,CEP}. 
A general theory of arctic curves in dimer problems was developed by Kenyon, Okounkov and Sheffield, building on Kasteleyn's solution, and establishes a connection to solutions of the complex B\"urgers equation \cite{OK1,OK2}.

%
%	\begin{thm}[Kasteleyn]
%		The number of domino tilings of a square lattice is $\sqrt{|det K|}$ where $K$ is the weighted adjacency matrix of the graph $G$, with horizontal edges weighted 1 and vertical edges weighted $i$ (the complex number)
%	\end{thm}
%
%The theory developed by Kasteleyn initiated a long string of research, culminating in Ref.~ \cite{CEP,EKLP,ProppWestB-M} who considered the domino tiling of Aztec diamonds and other families of planar domains. 
% In 2003, a combinatorial method to solve this class of dimer problems was given by Kuo \cite{Kuo1} and later generalized by Ciucu \cite{Ciucu}. 

	The $T$-system solutions with suitable initial conditions can be interpreted in terms of various combinatorial objects such as tessellations of the triangular lattice and families of non-intersecting lattice paths. 
Significant progress was made by Speyer \cite{Speyer}, who worked out the general solution in terms of a weighted dimer model on a suitable graph (see also \cite{DiFrancesco2}).  
In addition to computing exact dimer partition functions, this interpretation of $T$-system solutions provides a tool  to investigate asymptotic properties of the corresponding dimer models. This was first applied to the domino tilings of the Aztec diamond for various types of (periodic) weights \cite{DiFrancesco1}, by considering the solutions of the $T$-system with ``flat initial data" assignments providing a weighting of the dimer model. This work uses the recent progress in the area of Analytic Combinatorics in Several Variables (ACSV) which provides analytic tools to study the asymptotic enumeration of combinatorial objects with rational multivariate generating functions \cite{pemantle_wilson_2013,PW2005,BP1,PW2004}. 
%	These series of paper considers the problem of finding the asymptotic of a multivariate sequences, specifically counting or enumerating some quantities like our $T$-system solutions. 
%In \cite{DiFrancesco1} the theory of ACSV was successfully applied to the determination of the artic curve phenomenon for the dimer partition functions corresponding to $T$-system solutions with uniform and periodic ``flat" initial data. 
Indeed, the crucial
ingredient in \cite{DiFrancesco1} is the fact that the average local dimer density $\rho_{i,j,k}$ at point $(i,j,k)$, which vanishes in crystalline phases and is non-trivial in liquid phases, has an explicit 
rational generating function in 3 variables, the denominator of which governs the behavior of $\rho_{i,j,k}$ when
$i,j,k\to\infty$ with $(i/k,j/k)\to (u,v)$ finite, eventually yielding via ACSV the arctic curve for the rescaled 
model in the $(u,v)$ plane.  Similar and further results were also found by the more traditional Kasteleyn method \cite{MR4634340,MR4610279,MR3479561}. However, the $T$-system approach advocated in this paper is in a
sense much more direct, as it simply relies on exactly solving the $T$-system for given initial data that determine the geometry and weights of the corresponding dimer problem. Moreover it provides exact explicit formulas for dimer density generating functions, instrumental in the study of the arctic phenomenon. A detailed comparison between the results of this paper and the Kasteleyn method will be done elsewhere \cite{workinprogress}.
 
The aim of the present paper is to study solutions of  the $T$-system with different initial data, giving rise to different dimer models, that also display an artic phenomenon, which we investigate by use of  ACSV. Our initial data are along collections of ($2t$) parallel planes perpendicular to a fixed direction $(r,s,t)\in \Z^3$ in the $(i,j,k)$ space-time. The corresponding solutions of the $T$-system are 
 interpreted as the partition functions of weighted dimer configurations of so-called pinecones \cite{ProppWestB-M}, certain families of bipartite planar graphs with square and hexagonal inner faces only. 
We find new solutions of the $T$-system corresponding to different uniform dimer weights along each initial data slanted plane, for which an arctic phenomenon occurs. We show this by 
computing  explicit rational generating functions for the corresponding dimer density $\rho_{i,j,k}$ at point $(i,j,k)$. 
As before, the singularities of the latter determine the arctic curves for the corresponding dimer models. We then explore non-uniform but periodic initial data along slanted planes, and in the exactly solvable cases we obtain higher order linear systems for the local density, leading to more involved arctic curves in the same spirit as \cite{DiFrancesco1}. 

Finally, we show that a given $T$-system solution for a given $(r,s,t)-$slanted initial data also provides
insights on dimer models arising from in any other $(\tilde r,\tilde s,\tilde t)-$slanted initial data given by the values taken by the previous solution along the corresponding new set of parallel planes. By construction, the new dimer model also displays an arctic phenomenon with its own arctic curve, which we view as a holographic image of the former.

% We again apply the theory of ACSV, study of the denominator of the density generating function of the $T$-system allows to explore the singularity structure of the dimer models in the thermodynamic limit of large size, and to confirm their phase structure, displaying frozen, disordered and facet-like phases separated by generalized arctic curves. 

The paper is organized as follows.

In Section 2, we recall known facts on $T$-system solutions and their interpretation in terms of dimer models. We define the $(r,s,t)-$slanted initial data and show their relation to dimer models on the pinecone graphs of \cite{ProppWestB-M}.
Section 3 is devoted to the case of uniform but specific initial values along each initial data plane, and proceeds as follows: we first present the exact solution of the $T$-system, then derive the local dimer density, which we finally analyze via ACSV to get the arctic curve.
We then follow the same procedure for non-uniform but 2x2-periodic initial data within each plane in  Sections 4 and 5.
Section 4 is devoted to the exact solution and its periodicity properties. Section 5 deals with the local dimer density, and the computation of the associated arctic curves. In particular, like in \cite{DiFrancesco1}, a new ``facet" dimer phase emerges as a consequence of the staggering of initial data. 
In Section 6, we describe the holographic principle, which allows to ``view" any $(r,s,t)-$slanted solution from a different
$(\tilde r,\tilde s,\tilde t)$ point of view. 
Section 7 is devoted to a discussion of the detailed structure of the facet phase, a 3D formulation of the holographic principle,  
and a few concluding remarks. 

Some cumbersome explicit expressions for systems and arctic curves of this paper are available online \cite{mathfiles}.

\noindent{\bf Acknowledgments.} We thank Greg Musiker for his valuable comments and feedback on the problem and David Speyer for enlightening discussions during the workshop ``Dimers: Combinatorics, Representation Theory and Physics", CUNY Graduate Center, N.-Y., Aug. 14-25, 2023. 
HTV is supported by the David G. Bourgin Mathematics Fellowship and the University of Illinois at Urbana-Champaign Campus Research Board. 
PDF acknowledges support from the Morris and Gertrude Fine endowment and the Simons Foundation travel grant MPS-TSM00002262, as well as the NSF-RTG grant DMS-1937241.

\section{$T$-system and dimers}

\subsection{General setting and slanted plane initial data}\label{general setting}

The $T$-system or octahedron relation is the following recursion relation for variables $T_{i,j,k}>0$,
$i,j,k\in \Z$
\begin{equation}\label{Tsys}
T_{i,j,k+1}\, T_{i,j,k-1}=T_{i+1,j,k}\, T_{i-1,j,k}+T_{i,j+1,k}\, T_{i,j-1,k} .\end{equation}
It may be interpreted as a discrete time $k$ evolution for the variable $T$, expressing its value at the time $k+1$ vertex of an octahedron 
in terms of the values at the 4 vertices at time $k$ and at a single vertex at time $k-1$. It is also interpreted as a particular 
mutation in an infinite rank cluster algebra. As the $T$-system clearly conserves the parity of $i+j+k$, we may restrict our study to solutions
subject to the additional condition $i+j+k=0$ mod 2. This condition will always be assumed implicitly unless otherwise specified.

The solution $T_{i,j,k}$ is unique once we fix admissible initial data along any given ``stepped surface" $\bk$ made of the vertices $(i_0,j_0,k_{i_0,j_0})$,
$i_0,j_0\in \Z$, where the height function $k_{i,j}:\Z^2\to\Z$ obeys $|k_{i+1,j}-k_{i,j}|=|k_{i,j+1}-k_{i,j}|=1$ for all $i,j\in \Z$. 
The initial data assignments read 
\begin{equation}\label{init}
T_{i_0,j_0,k_{i_0,j_0}}=t_{i_0,j_0}\, , \qquad (i_0,j_0\in \Z)
\end{equation}
for some fixed initial variables $t_{i_0,j_0}>0$, $i_0,j_0\in \Z$.

In this paper, we consider solutions of the $A_\infty$ $T$-system
subject to initial data along $(r,s,t)$-slanted parallel planes
$$(P_m)=\{(i,j,k)\, \vert \,  r i+s j+t k=m\} $$
for some fixed integers $r,s,t\geq 0$ such that $t>\max(r,s)$ and $\gcd(r,s,t)=1$. Throughout the paper, without loss of generality we shall also assume $r\leq s$, as the converse is easily reached upon interchanging $i\leftrightarrow j$. Note that when $r,s,t$ are odd, only even values of $m$ occur, as $i+j+k$ is even.

It is easy to see that
an admissible initial data set for the $T$-system consists of specifying the values of $T_{i,j,k}$
along $2t$ consecutive parallel planes $(P_m)$, $m=0,1,2,...,2t-1$. These form a particular stepped surface, by noting that 
neighboring points $(i,j,k)$ and $(i\mp 1,j,k\pm 1)$ belong respectively to planes $(P_{m})$, $m=ri+s j+t k$ and $(P_{m\pm(t-r)})$ 
while $(i,j\mp 1,k\pm 1)$ belong to $(P_{m\pm (t-s)})$.
Moreover, using the $T$-system as a 
recursion relation in the discrete variable $k$, we may write
$$T_{i,j,k+1}=\frac{T_{i+1,j,k}\, T_{i-1,j,k}+T_{i,j+1,k}\, T_{i,j-1,k}}{T_{i,j,k-1}} .$$
The point $(i,j,k+1)$ belongs to the plane $(P_{M})$ for $M=ri+sj+t(k+1)$. The above relation shows that $T_{i,j,k+1}$ is determined by values of $T$ on the 5 other planes: $(P_{M+r-t}),(P_{M-r-t}),(P_{M+s-t})$, $(P_{M-s-t}),(P_{M-2t})$.
We may therefore use the relation recursively to obtain all values of $T$ in $(P_M)$ from the data on $(P_{M-1}),(P_{M-2}),\ldots,(P_{M-2t})$.

The stepped surface corresponding to $(r,s,t)$-slanted planes initial conditions reads
\begin{equation}\label{kdef} k_{i,j}=\frac{1}{t}\times \left\{\begin{matrix}{\rm Mod}(ri+sj,2t)-ri-sj  & {\rm if}\, i+j=0\, [2]\\
{\rm Mod}(ri+sj+t,2t)-ri-sj  & {\rm if}\, i+j=1\, [2] \end{matrix} \right. 
\end{equation}
where we have identified the index $m={\rm Mod}(ri+sj+t\,{\rm Mod}(i+j,2),2t)$ of the plane $P_m$ containing the point $(i,j,k_{i,j})$.
Equivalently, we have
\begin{equation}\label{otherk}
k_{i,j}=\left\{ \begin{matrix} 
-2\left\lfloor \frac{ri+sj}{2t} \right\rfloor & {\rm if}\ \ i+j \ {\rm even}\\
1-2\left\lfloor \frac{ri+sj+t}{2t} \right\rfloor & {\rm otherwise}
\end{matrix}\right.
\end{equation}
\begin{remark}\label{minimality}
The latter expression allows to characterize alternatively the $(r,s,t)$-slanted stepped surface as the {\it lowest} stepped surface lying above the plane $(P_0):$ $ri+s i+tk=0$, $(i,j,k)\in \Z^3$. By lowest we mean that any ``down" mutation sending some point  $(i,j,k)\to (i,j,k-2)$ 
will end strictly below the $(P_0)$ plane.
\end{remark}

We may finally also write a single parity-independent formula for $k_{i,j}$ in the form
\begin{equation}
\label{goodk}
k_{i,j}=i+j-2 \left\lfloor \frac{ri+sj+t(i+j)}{2t} \right\rfloor .
\end{equation}

\subsection{Solution as a dimer model partition function}\label{solution as dimer}\label{secsol}

In \cite{DiFrancesco1} and \cite{DiFrancesco2}, it was shown that the solution $T_{i,j,k}$ to the $T$-system subject to initial conditions of the form \eqref{init} on some stepped surface $\bk$ is the partition function of a dimer model on a bipartite graph obtained as follows. 

First, from the octahedral mutation interpretation, we note that the solution $T_{i,j,k}$ may be expressed in terms
of a {\it finite subset} of the initial data \eqref{init}, namely that lying in the cone
$|x-i|+|y-j|\leq |z-k|$, $(x,y,z)\in \Z^3$ with apex $(i,j,k)$. Let $\mathcal D={\mathcal D}_{i,j,k}^{r,s,t}$ denote the intersection of the initial data stepped surface with this cone.
\begin{figure}[H]
\centering
		\begin{subfigure}{\linewidth}
		%\ctikzfig{tikz/tessellation}
		\includegraphics[width=\linewidth]{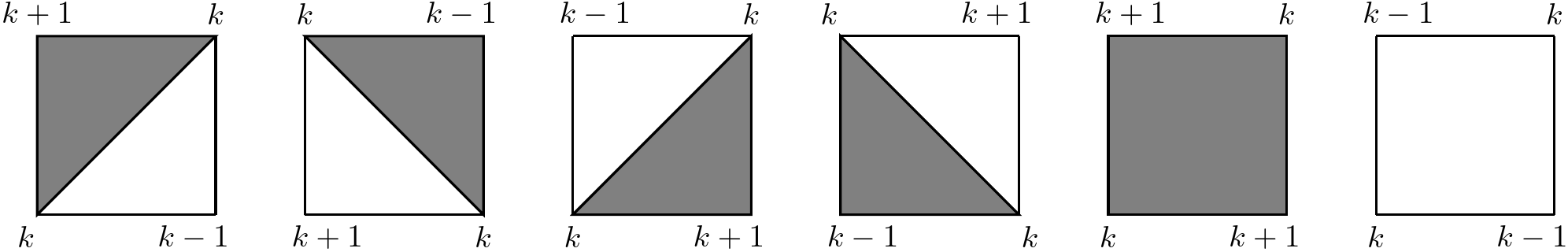}
		\end{subfigure}
		\caption{The six possible face configurations for the $k$ coordinate of any stepped surface $\bk$ and the associated black/white square/triangle  tessellation.}
		\label{fig:rule}
\end{figure}
Further recording the $k$-coordinates of the points in $\mathcal D$, and applying the dictionary of Fig.~\ref{fig:rule}, allows to construct a tessellation with black/white triangles and squares of the projection of $\mathcal D$ onto the $(x,y)$ plane. It turns out that the $(r,s,t)$-slanted tessellated stepped surfaces are very special:

\begin{thm}\label{restricthm}
Each vertex of an $(r,s,t)$-slanted tessellated stepped surface may only have one of the five possible environments depicted below:
\begin{figure}[H]
\centering
\includegraphics[width=12.cm]{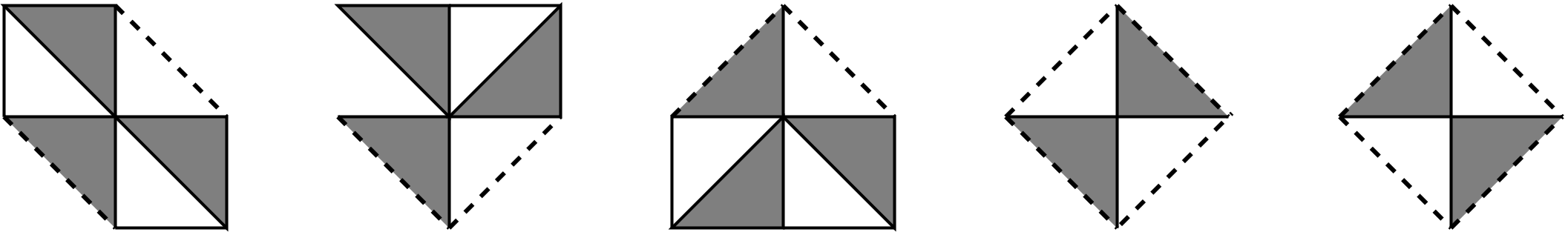}
\end{figure}
\end{thm}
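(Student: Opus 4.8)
The plan is to reduce the statement to a finite local check on the discrete height function $k_{i,j}$ defined in \eqref{goodk}. The six possible face configurations of a general stepped surface in Figure~\ref{fig:rule} are governed by the four increments $k_{i+1,j}-k_{i,j}$, $k_{i-1,j}-k_{i,j}$, $k_{i,j+1}-k_{i,j}$, $k_{i,j-1}-k_{i,j}$, each of which is $\pm 1$; a vertex environment in the tessellation is determined by the pattern of these four signs together with the analogous signs at the four neighbors $(i\pm 1,j)$, $(i,j\pm 1)$ (i.e. by the $3\times 3$ block of heights around $(i,j)$ on the even/odd sublattice). So the first step is to enumerate, purely combinatorially, which of the $2^4=16$ sign patterns at a single vertex can actually occur for the specific height function \eqref{goodk}, and then which pairs of adjacent patterns are compatible.

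First I would compute the increments directly from \eqref{goodk}. Writing $N=ri+sj+t(i+j)$ so that $k_{i,j}=i+j-2\lfloor N/(2t)\rfloor$, a one-step move changes $N$ by $r+t$ (for $i\to i\pm 1$) or $s+t$ (for $j\to j\pm 1$), and since $0\le r\le s<t$ we have $0<r+t,\,s+t<2t$; hence the floor either stays put or jumps by exactly one, and the sign of the increment of $k_{i,j}$ is dictated by whether ${\rm Mod}(N,2t)$ crosses a multiple of $2t$ under the shift. This turns the question into: for which residues $a={\rm Mod}(N,2t)$ does adding $r+t$ (resp. $s+t$, resp. subtracting them) cross $0$ mod $2t$? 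Because the four shifts are $\pm(r+t),\pm(s+t)$ and $t>\max(r,s)$, exactly a constrained family of sign patterns is realizable — for instance, one cannot have all four neighbors lower (that would force the floor to jump up under all four shifts $\pm(r+t),\pm(s+t)$, impossible since the two "$+$" shifts and the two "$-$" shifts cannot all cross $0$). Carrying this residue bookkeeping out I expect to land on exactly five admissible single-vertex patterns, and then the adjacency compatibility (the neighbor's $N$-residue differs from $a$ by the corresponding shift) should automatically rule out the remaining spurious combinations, leaving the five environments pictured.

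The key steps in order: (1) translate ``vertex environment'' into the combinatorial datum of the $4$ local sign increments plus those of the four neighbors; (2) using \eqref{goodk}, express each increment in terms of where the residue $a={\rm Mod}(ri+sj+t(i+j),\,2t)$ sits relative to the arithmetic-progression thresholds coming from the shifts $\pm(r+t),\pm(s+t)$; (3) partition the residue interval $[0,2t)$ according to the resulting sign vector and read off that only five cells are nonempty (using $0\le r\le s<t$ and $\gcd(r,s,t)=1$ only to the extent needed to know the cells are genuinely nonempty/consistent with parity); (4) check neighbor-compatibility to confirm the listed environments are exactly the allowed ones and no others; (5) match these five sign-patterns to the five pictures via the dictionary of Figure~\ref{fig:rule}. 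Alternatively one can phrase (2)–(3) geometrically via Remark~\ref{minimality}: the $(r,s,t)$-slanted stepped surface is the lowest one above $(P_0)$, so a ``valley'' (local min of $k$) at $(i,j)$ is forbidden because one could push it down, which already excludes the all-neighbors-higher pattern and its near-cousins — this is a slick way to kill the bad cases without residue arithmetic.

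The main obstacle I anticipate is step (3): carefully verifying that precisely five of the sixteen sign patterns survive, and in particular that the ``mixed'' patterns which are a priori not excluded by the simple valley/peak argument (those with, say, three neighbors on one side) really are impossible for the specific slope constraints $0\le r\le s<t$. This requires tracking how the residue $a$ and its four neighbors' residues $a\pm(r+t)$, $a\pm(s+t)$ (all mod $2t$) interleave around the circle $\Z/2t\Z$ relative to the single threshold $0$, and being careful about the parity split (even vs. odd $i+j$) and the boundary cases $r=0$ or $r=s$. I would organize this as a short case analysis on the cyclic order of the four points $\pm(r+t),\pm(s+t)$ on $\Z/2t\Z$, which up to the symmetry $r\le s$ has only a couple of sub-cases.
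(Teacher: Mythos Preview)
Your plan is correct and lands on essentially the same computation as the paper, but the paper organizes it more cleanly by first isolating a small lemma (Lemma~\ref{restrictlem}) rather than partitioning the residue circle $\Z/2t\Z$ directly. From \eqref{goodk} the paper proves in one line each that
\[
k_{i-1,j}-k_{i+1,j}\in\{0,2\},\qquad k_{i,j-1}-k_{i,j+1}\in\{0,2\},\qquad k_{i+1,j}-k_{i,j+1}\in\{0,2\},
\]
simply because the two floor arguments differ by $2(r+t)/2t>1$, by $2(s+t)/2t>1$, and by $(s-r)/2t\ge 0$ respectively, forcing the floor difference to be $\ge 1$, $\ge 1$, $\ge 0$. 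These three monotonicity constraints are exactly what your threshold bookkeeping in step~(2)--(3) would eventually produce, but once they are stated as a lemma the elimination is mechanical: the first two kill nine of the $2^4$ environments and the third (which globally forbids one of the six face types in Figure~\ref{fig:rule}) removes two more, leaving five. The ``main obstacle'' you anticipate---the cyclic-order casework on $\pm(r+t),\pm(s+t)$ in $\Z/2t\Z$---never materializes with this organization.

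Two small corrections. First, the paper takes the vertex environment to be determined by the four axial sign increments alone (hence $2^4=16$), so your step~(4) about neighbor-compatibility on a $3\times 3$ block is unnecessary; the third inequality above, applied at $(i,j)$ and at the diagonally adjacent vertex, already carries all the information your neighbor check would add. Second, the minimality argument from Remark~\ref{minimality} that you float as an alternative only forbids a strict local minimum of $k$ (one pattern out of sixteen); the paper does not use it, and by itself it is far from enough to reach five.
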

To prove the Theorem, we note that the general stepped surface conditions 
$|k_{i+1,j}-k_{i,j}|=|k_{i,j+1}-k_{i,j}|=1$ ($i,j\in \Z$) would give rise to $2^4=16$ possible environments 
for the vertex $(i,j,k_{i,j})$. However, these rules 
are further restricted by eq.
\eqref{otherk} as follows.

\begin{lemma}\label{restrictlem}
The $(r,s,t)$-slanted stepped surfaces obey the further conditions: 
\begin{equation}\label{restrictone}
k_{i-1,j}-k_{i+1,j}\in \{0,2\} , \qquad k_{i,j-1}-k_{i,j+1}\in \{0,2\} , \qquad k_{i+1,j}-k_{i,j+1}\in \{0,2\}
\end{equation}
\end{lemma}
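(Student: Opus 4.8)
The plan is to work directly from the closed formula \eqref{otherk} for $k_{i,j}$, splitting into cases according to the parities of $i+j$ and of its neighbors. Note first that the three quantities to be controlled compare points at Manhattan distance $2$ (for the first two) and distance $2$ (for the third, $(i+1,j)$ versus $(i,j+1)$), and in all three cases the two points in question lie on lattice sites of the \emph{same} $i+j$ parity. So for the first inequality I would fix $i+j$ even, apply the top branch of \eqref{otherk} to both $k_{i-1,j}$ and $k_{i+1,j}$, and compute
$$
k_{i-1,j}-k_{i+1,j}=2\left\lfloor \frac{r(i+1)+sj}{2t}\right\rfloor-2\left\lfloor \frac{r(i-1)+sj}{2t}\right\rfloor .
$$
Since $0\le 2r< 2t$ (because $t>\max(r,s)\ge r$, hence $2r<2t$; the case $r=0$ gives difference $0$), the two floors differ by either $0$ or $1$, and hence the difference lies in $\{0,2\}$. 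The same computation with $i+j$ odd uses the bottom branch and the identical bound $0\le 2r<2t$. The second inequality is handled identically, replacing $r$ by $s$ and using $s<t$, so $2s<2t$.

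For the third inequality, $k_{i+1,j}-k_{i,j+1}$, the two points $(i+1,j)$ and $(i,j+1)$ have opposite $i+j$-parity relative to $(i,j)$ but the \emph{same} parity as each other (both equal to $i+j+1 \bmod 2$). So if $i+j$ is even, both neighbors have odd coordinate sum and I use the bottom branch for both:
$$
k_{i+1,j}-k_{i,j+1}=2\left\lfloor \frac{s(i+j+1)+ri}{2t} + \text{(shift)}\right\rfloor - 2\left\lfloor \frac{r(i+j+1)+sj+t}{2t}+\cdots\right\rfloor,
$$
which I would rewrite cleanly as $2\big(\lfloor \tfrac{r(i+1)+sj+t}{2t}\rfloor - \lfloor \tfrac{ri+s(j+1)+t}{2t}\rfloor\big)$. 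The arguments of the two floors differ by $(r-s)/(2t)$ in the relevant normalization; since $r\le s$ and both $r,s<t$, we have $-2t< r-s\le 0$, so again the two floors differ by $0$ or $1$ and the difference is in $\{0,2\}$ — here the sign constraint $r\le s$ is exactly what forces the difference to be nonnegative rather than possibly $-2$. When $i+j$ is odd the top branch applies to both neighbors and the computation is the same with $t$ removed from the numerators.

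The only real subtlety — and the step I expect to need the most care — is bookkeeping the parity-dependent shift by $t$ inside the floor function in \eqref{otherk}, making sure that in each of the four sub-cases the two points being compared genuinely fall in the \emph{same} branch (top or bottom) of the formula, so that the $t$-shifts cancel and one is left comparing two floors of arguments differing by a controlled amount $2r/(2t)$, $2s/(2t)$, or $(s-r)/(2t)$, each of absolute value $<1$. Once that alignment is verified, each case reduces to the elementary fact that $\lfloor x+\delta\rfloor-\lfloor x\rfloor\in\{0,1\}$ whenever $0\le\delta<1$, together with the hypotheses $0\le r\le s<t$ to pin down the sign. One then deduces Theorem \ref{restricthm} by checking that \eqref{restrictone}, combined with the basic stepped-surface relations $|k_{i\pm1,j}-k_{i,j}|=|k_{i,j\pm1}-k_{i,j}|=1$, eliminates all but five of the sixteen a priori vertex environments.
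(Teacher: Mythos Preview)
Your approach is correct and is essentially the same as the paper's: compute the differences directly from the floor formula for $k_{i,j}$ and bound the increment of the floor arguments using $r\le s<t$. The paper streamlines exactly the ``only real subtlety'' you flag by working with the parity-independent formula \eqref{goodk} rather than \eqref{otherk}; since the extra $t(i+j)$ term in \eqref{goodk} is absorbed into the floor, no case split on the parity of $i+j$ is needed, and one obtains single expressions such as
\[
k_{i-1,j}-k_{i+1,j}=-2+2\Big(\Big\lfloor \tfrac{ri+sj+t(i+j)+r+t}{2t}\Big\rfloor-\Big\lfloor \tfrac{ri+sj+t(i+j)-r-t}{2t}\Big\rfloor\Big),
\]
where the floor arguments now differ by $(r+t)/t\in(1,2)$, forcing the floor difference to be $1$ or $2$ and hence ruling out $-2$.

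One minor slip in your third case: with the bottom branch applied to both points, the correct expression is
\[
k_{i+1,j}-k_{i,j+1}=2\Big(\Big\lfloor \tfrac{ri+s(j+1)+t}{2t}\Big\rfloor-\Big\lfloor \tfrac{r(i+1)+sj+t}{2t}\Big\rfloor\Big),
\]
i.e.\ your two floor terms should be swapped; the arguments then differ by $(s-r)/(2t)\in[0,1)$, giving the claimed $\{0,2\}$. (Similarly, when you ``fix $i+j$ even'' in the first case, the points $(i\pm1,j)$ have odd coordinate sum and you should invoke the bottom branch, not the top; the computation is otherwise identical.)
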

\begin{proof}
We show that the value $-2$, allowed by the general stepped surface condition, is ruled out here.
Using \eqref{goodk}, we have first:
$$k_{i-1,j}-k_{i+1,j}=-2+2\left( \left\lfloor \frac{ri+sj+t(i+j)+r+t}{2t} \right\rfloor-\left\lfloor \frac{ri+sj+t(i+j)-r-t}{2t} \right\rfloor\right)  .
$$
The arguments of the integer parts differ by $2\frac{r+t}{2t}>1$, hence the difference cannot be $0$, which implies 
that $k_{i-1,j}-k_{i+1,j}\neq -2$. The same reasoning leads to $k_{i,j-1}-k_{i,j+1}\neq -2$. Finally, using again \eqref{goodk}:
$$k_{i+1,j}-k_{i,j+1}=2\left(\left\lfloor \frac{ri+sj+t(i+j)+s+t}{2t} \right\rfloor-\left\lfloor \frac{ri+sj+t(i+j)+r+t}{2t} \right\rfloor \right) $$
The arguments of the integer parts differ by $\frac{s-r}{2t}>0$, which implies $k_{i+1,j}-k_{i,j+1}\geq 0$ and rules out the value $-2$.
\end{proof}

We are now ready to prove Theorem \ref{restricthm}. The first two conditions of \eqref{restrictone} rule out the following nine possible vertex environments:
\begin{figure}[H]
\centering
\includegraphics[width=12.cm]{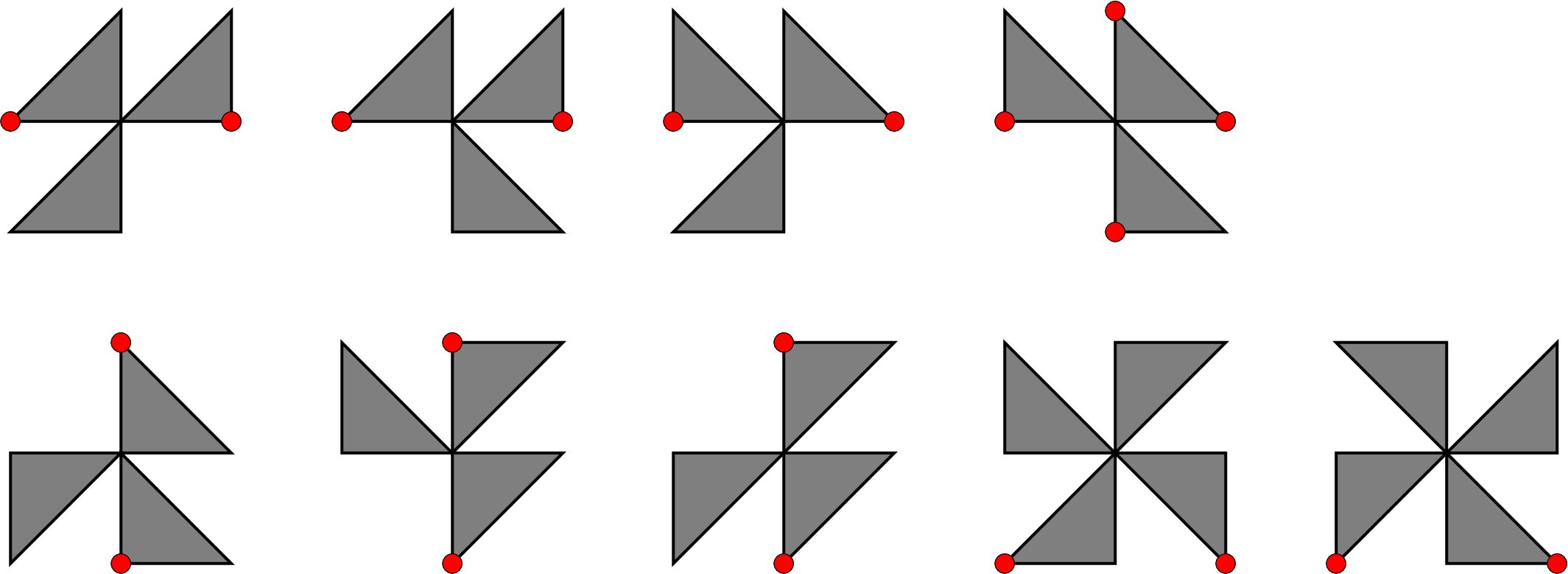}
\end{figure}
\noindent where the horizontal (resp. vertical) pairs of dots indicate the violation of the first (resp. second) restrictions of \eqref{restrictone}. Finally the third restriction in \eqref{restrictone} rules out the first face configuration of Fig. \ref{fig:rule},
and therefore the following two vertex environments:
\begin{figure}[H]
\centering
\includegraphics[width=5.5cm]{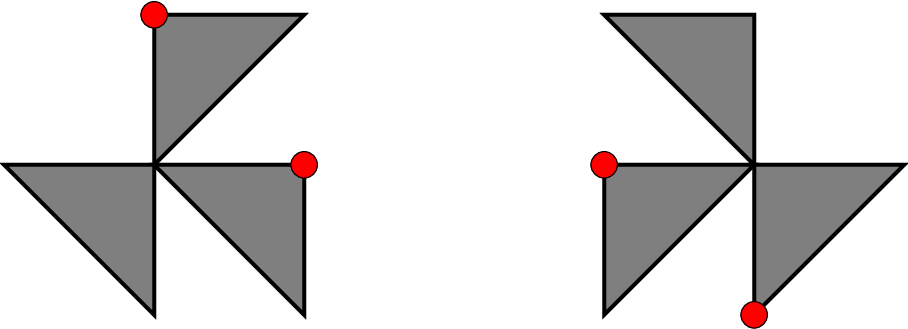}
\end{figure}
\noindent where we also indicated by a pair of dots the heights violating the third restriction of \eqref{restrictone}. Having ruled out 11 possible environments, we are left with the $16-11=5$ stated in the Theorem. 

\begin{figure}
		
		\begin{subfigure}{.3 \linewidth}
		\centering
			\includegraphics[scale=.1]{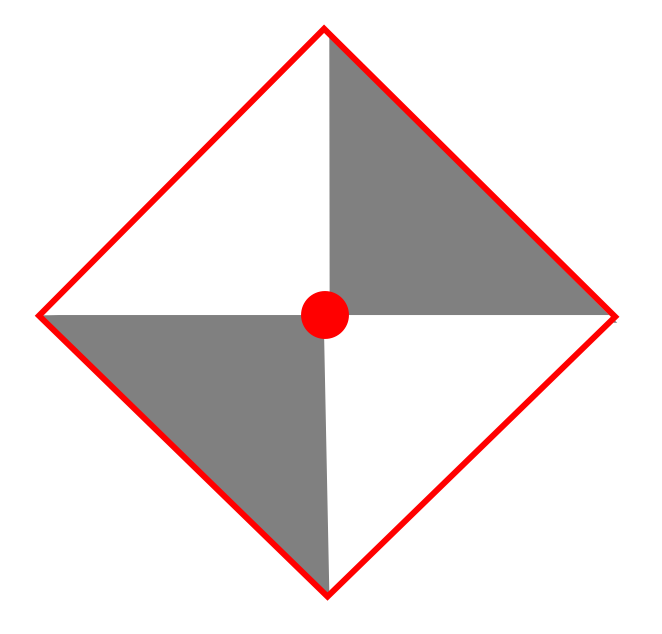}
			\subcaption[t]{$(r,s,t)=(1,1,3)$, $k=2$}
		\end{subfigure}
	\begin{subfigure}{.3 \linewidth}
	\centering 
		\includegraphics[scale=.15]{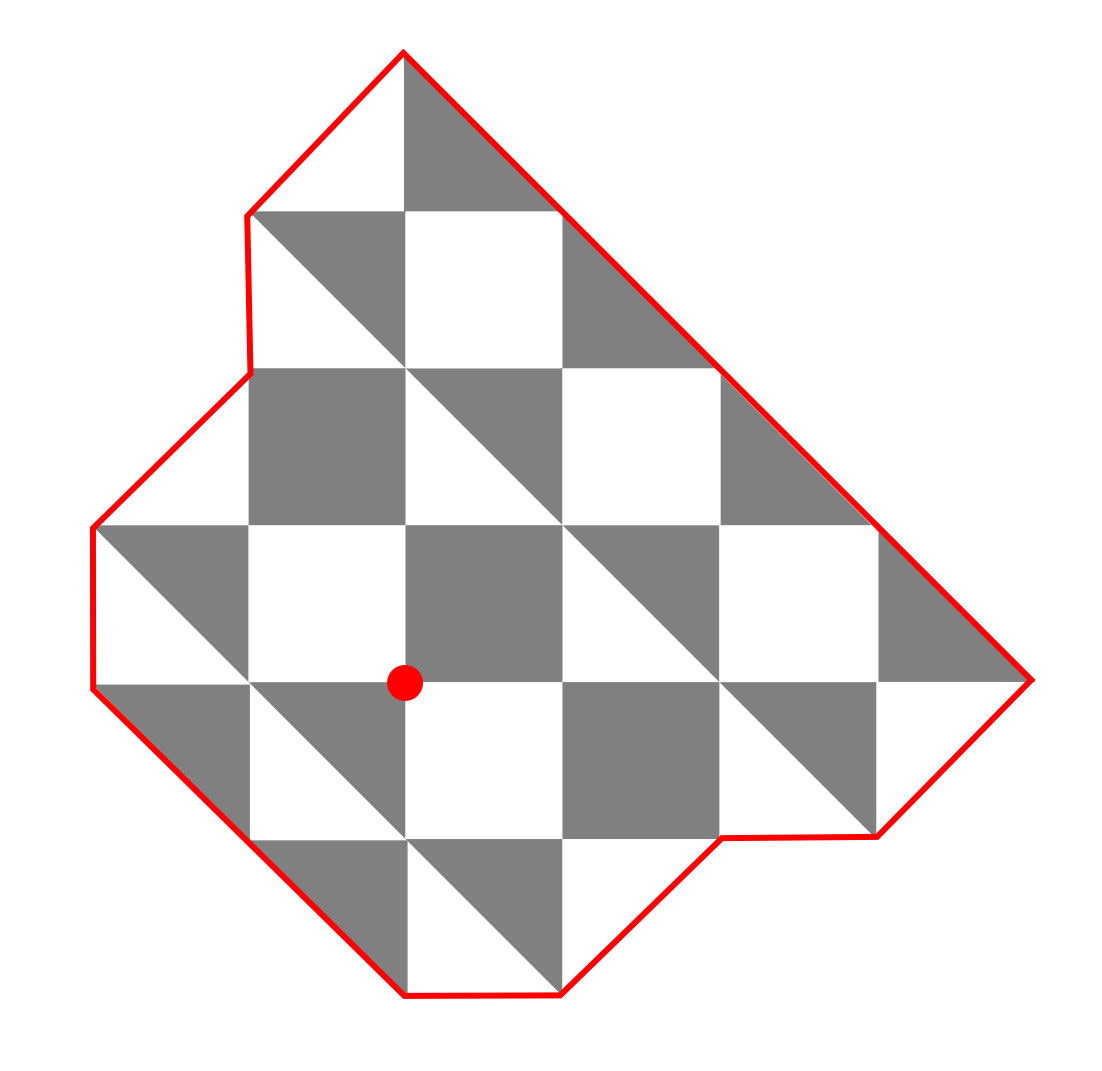}
		\subcaption[t]{ $(r,s,t)=(1,1,3)$, $k=4$}
	\end{subfigure}
	\begin{subfigure}{.3 \linewidth}
	\centering 
	\includegraphics[scale=.2]{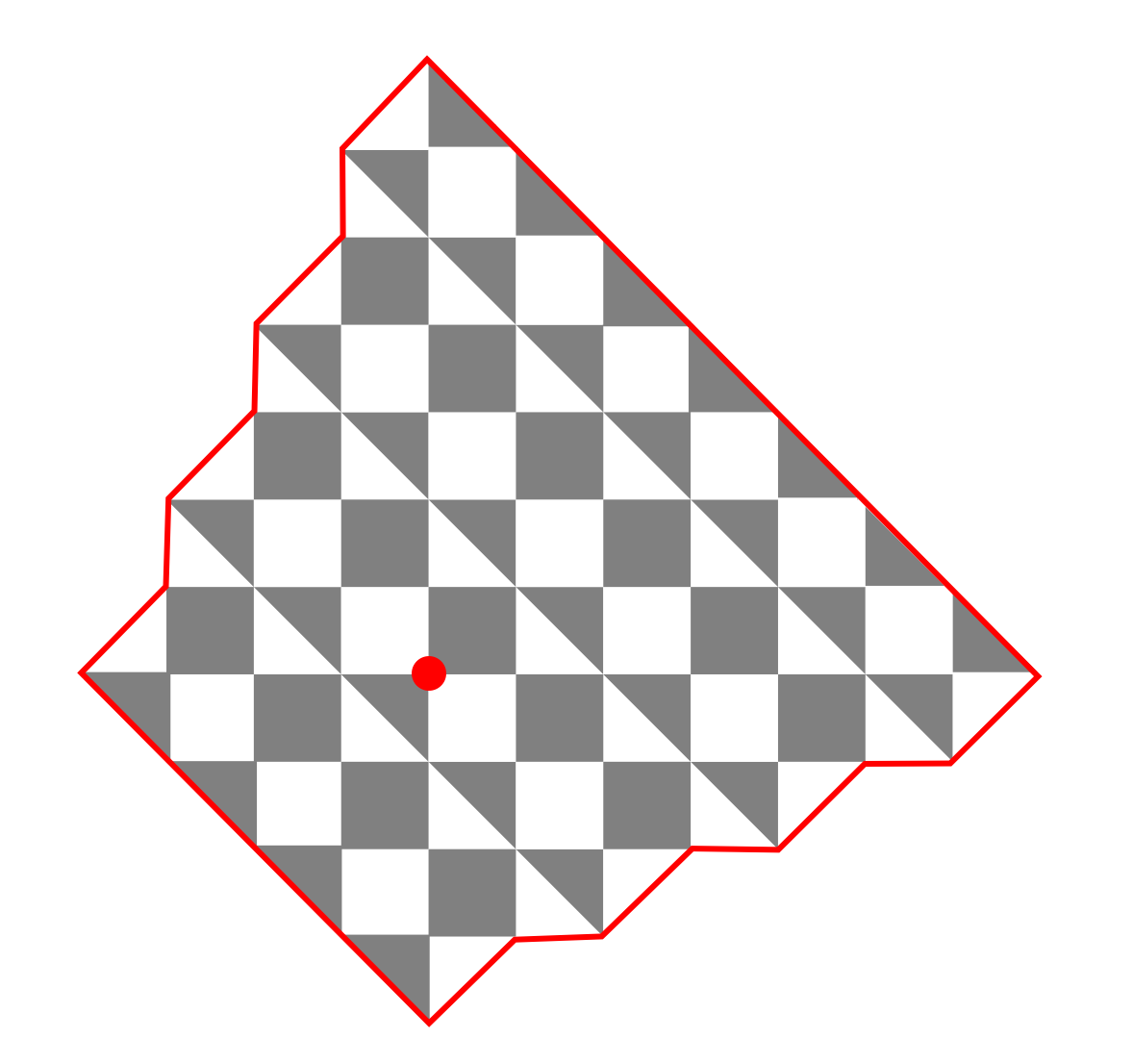}
	\subcaption[t]{$(r,s,t)=(1,1,3)$, $k=6$}
	\end{subfigure}\\
	\begin{subfigure}{.3 \linewidth}
	\centering
	\includegraphics[scale=.1]{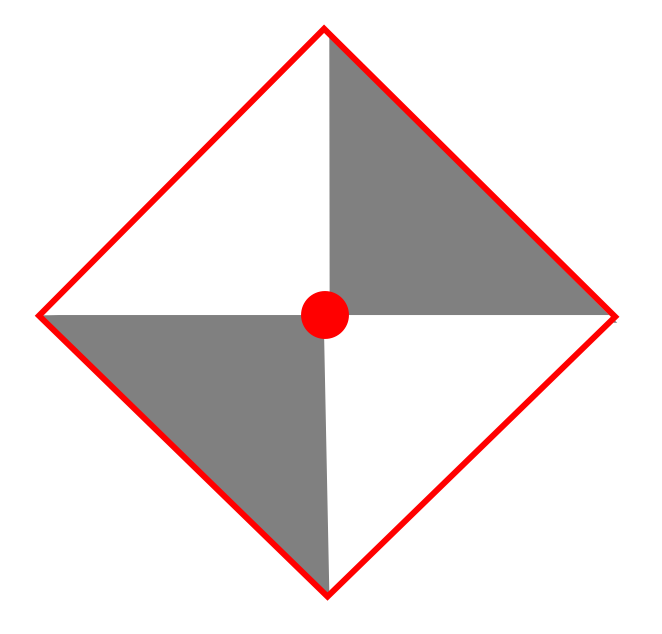}
	\subcaption[t]{$(r,s,t)=(1,2,3)$, $k=2$}
	\end{subfigure}
	\begin{subfigure}{.3 \linewidth}
	\centering
	\includegraphics[scale=.15]{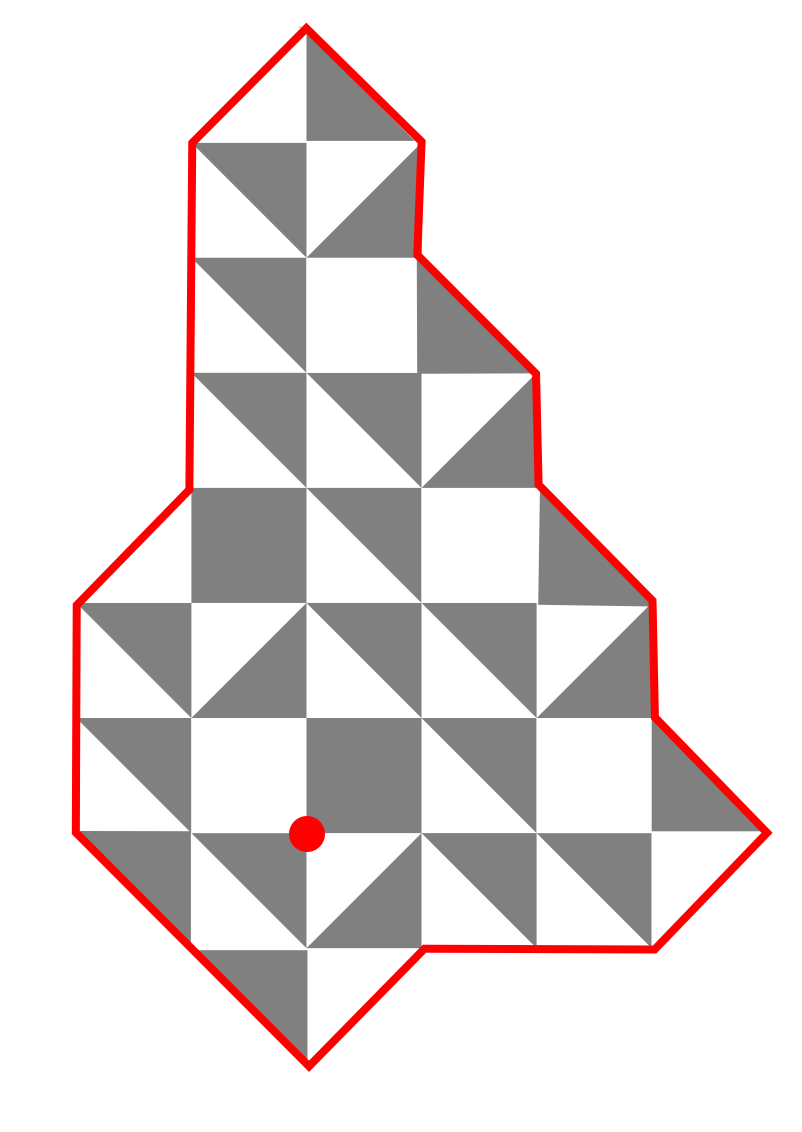}
	\subcaption[t]{$(r,s,t)=(1,2,3)$, $k=4$}
	\end{subfigure}
	\begin{subfigure}{.3 \linewidth}
	\centering
	\includegraphics[scale=.2]{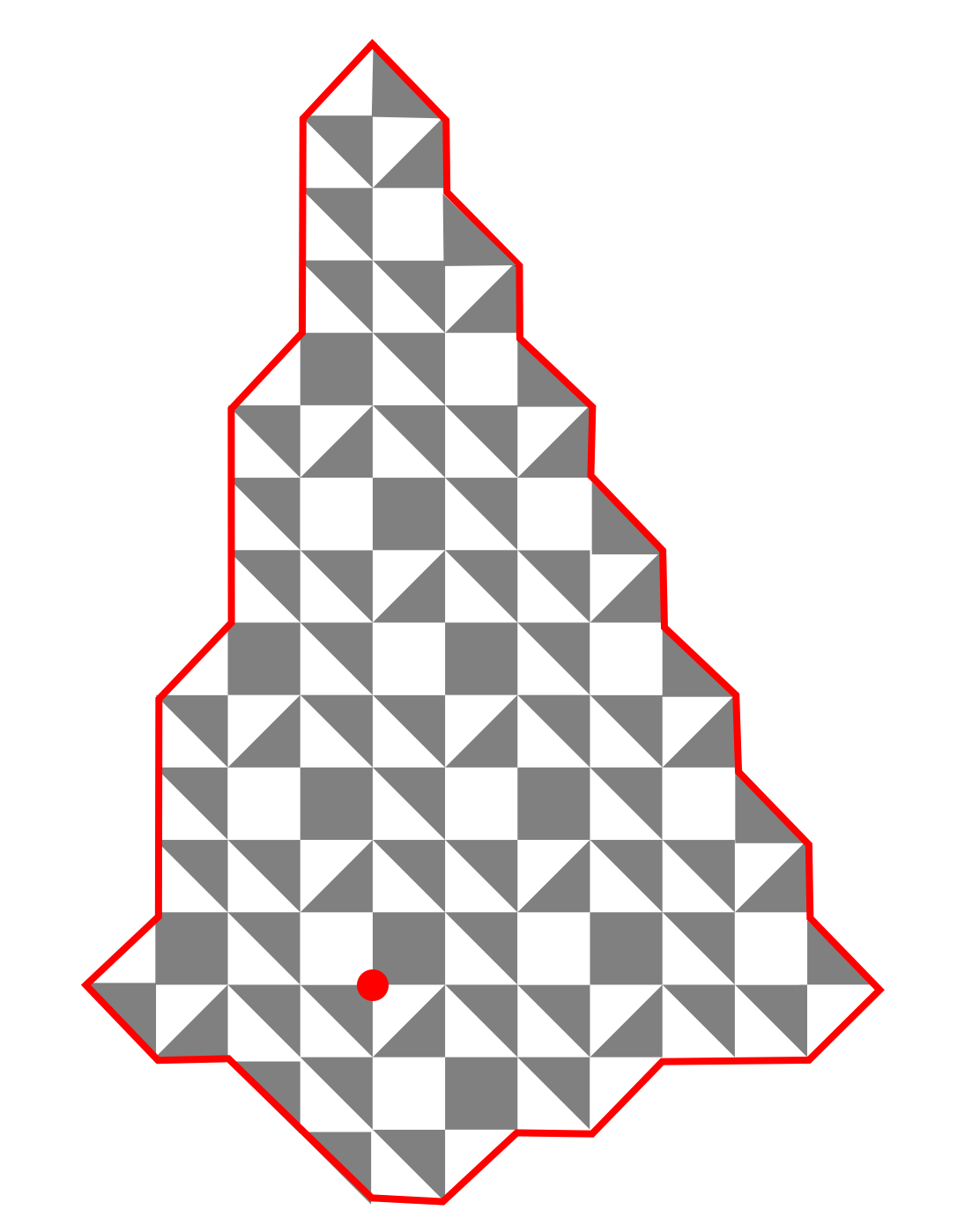}
	\subcaption[t]{$(r,s,t)=(1,2,3)$, $k=6$}
	\end{subfigure}
	\caption{The tessellation domain ${\mathcal D}_{i,j,k}^{r,s,t}$ for $(r,s,t)=(1,1,3)$ and $(r,s,t)=(1,2,3)$, centered at the point $(i,j,k_{i,j})$ (marked by a {$\textcolor{red}{\bullet}$}), equal to the projection of the point $(i,j,k)$ onto the initial data stepped surface $\bk$.}	
	\label{tessellation progress 246} 
	\end{figure}

We represent a few sample tessellated domains $\mathcal D$ in Fig. \ref{tessellation progress 246}, in the cases $(r,s,t)=(1,1,3)$ and $(r,s,t)=(1,2,3)$, for solutions $T_{0,0,k}$,
$k=2,4,6$.

From the tessellated domain $\mathcal D$, one can construct the bipartite dual graph $\mathcal G={\mathcal G}_{i,j,k}^{r,s,t}$, by assigning black/white bicolored vertices ${\rm \raisebox{-.02in}{ \scalebox{1.4}{$\bullet$}}}/ \circ$ corresponding to the color of the faces in the original triangulation. Faces of $\mathcal G$ are labeled by coordinates $(x,y)$
of the dual vertex $(x,y,k_{x,y})\in \mathcal D$. 
A {\it dimer configuration} on $\mathcal G$ is an independent set of edges of $G$ such that every vertex of $\mathcal G$ belongs to exactly one edge. The edges in this set can be thought as occupied by dimers, usually represented as thickened edges of $\mathcal G$.

\begin{thm}{\cite{DiFrancesco2}}\label{mainthm}
		The solution of the $T$-system with slanted initial data is expressed as:
			$$T_{i,j,k}=\sum_{{\rm dimer}\, {\rm configs.}\, D\atop
			{\rm on}\, \mathcal G} \prod_{{\rm faces}\, (x,y)\atop {\rm of}\, G}\begin{cases}
				(t_{x,y})^{v_{x,y}/2-1-N_{x,y}(D)} & (x,y) \  \text{interior faces} \\
				(t_{x,y})^{1-N_{x,y}(D)} & (x,y) \  \text{boundary faces} 
			\end{cases} $$
where the sum extends over all dimer configurations $D$ on the dual graph $\mathcal G$, while $v_{x,y}$ is the valency of the face $(x,y)$ and $N_{x,y}(D)\in \{0,1,...,v_{x,y}\}$ denotes the number of dimers occupying the edges at the boundary of the face $(x,y)$. The initial data $t_{x,y}$'s serve as local Boltzmann weights for the dimer model.	
\end{thm}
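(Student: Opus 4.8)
The plan is to reduce Theorem~\ref{mainthm} to the already-established dimer formula for solutions of the $T$-system on an arbitrary stepped surface proved in \cite{DiFrancesco1,DiFrancesco2}, specializing it to the particular stepped surface $\bk$ given by \eqref{goodk} (equivalently \eqref{otherk}). The general statement in those references says: given admissible initial data $T_{i_0,j_0,k_{i_0,j_0}}=t_{i_0,j_0}$ on a stepped surface, the solution $T_{i,j,k}$ is obtained as a dimer partition function on the bipartite graph $\mathcal G$ dual to the black/white square/triangle tessellation of the truncated domain $\mathcal D={\mathcal D}_{i,j,k}^{r,s,t}$ (the part of the stepped surface lying inside the cone $|x-i|+|y-j|\le |z-k|$), with the stated face weights $(t_{x,y})^{v_{x,y}/2-1-N_{x,y}(D)}$ at interior faces and $(t_{x,y})^{1-N_{x,y}(D)}$ at boundary faces. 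So the content here is entirely that the slanted initial data \eqref{init} produces a legitimate instance of that general construction, plus the observation (via Theorem~\ref{restricthm}) that the tessellations one obtains are of the special restricted type, which is what guarantees $\mathcal G$ has only square and hexagonal inner faces — i.e. is a pinecone in the sense of \cite{ProppWestB-M}.

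First I would verify that the $(r,s,t)$-slanted planes do define an admissible stepped surface. This is exactly the content of the discussion preceding the theorem: neighboring lattice points $(i,j,k)$ and $(i\mp1,j,k\pm1)$, resp.\ $(i,j\mp1,k\pm1)$, land on planes whose indices differ by $\pm(t-r)$, resp.\ $\pm(t-s)$, and since $t>\max(r,s)$ these shifts are nonzero and of fixed sign patterns; combined with the explicit height function \eqref{goodk} one checks $|k_{i+1,j}-k_{i,j}|=|k_{i,j+1}-k_{i,j}|=1$, so $\bk$ is indeed a stepped surface, and the $2t$ consecutive planes $(P_0),\dots,(P_{2t-1})$ carry the initial data. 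The recursive argument using the octahedron relation in the $k$-direction (already given in the excerpt) shows this data determines $T_{i,j,k}$ uniquely, so the hypotheses of the general theorem of \cite{DiFrancesco2} are met.

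Next I would invoke that general theorem verbatim to get the claimed formula, with $\mathcal D$ the intersection of $\bk$ with the cone of apex $(i,j,k)$. The remaining point is to identify the combinatorial type of $\mathcal G$: by Theorem~\ref{restricthm} every vertex of the tessellated slanted stepped surface has one of only five local environments, which under the dictionary of Fig.~\ref{fig:rule} forbids the first (double-triangle) face configuration entirely. Passing to the bipartite dual, the forbidden configuration is precisely the one that would create a triangular inner face, so $\mathcal G$ has only quadrilateral and hexagonal inner faces; together with the shape of the truncating cone one recognizes these graphs as the pinecones of Bousquet-Mélou–Propp–West. This yields both the formula and the structural description asserted.

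I expect the only real work to be bookkeeping rather than a genuine obstacle: carefully matching the valency exponents $v_{x,y}/2-1$ (interior) and the degenerate exponent $1$ (boundary) in the general formula of \cite{DiFrancesco2} to the present normalization, and checking that the boundary faces of $\mathcal D$ (the truncated ones, lying on $|x-i|+|y-j|=|z-k|$) are exactly the ones to which the second case applies. If anything is delicate it is keeping track of which faces of the dual graph are ``complete'' versus cut off by the cone; this is handled in \cite{DiFrancesco2} and simply needs to be transported to the slanted setting, where Theorem~\ref{restricthm} makes the local picture uniform. Hence the proof is essentially: \emph{apply \cite[Theorem on $T$-system dimer solutions]{DiFrancesco2} to the stepped surface \eqref{goodk}, and use Theorem~\ref{restricthm} to identify $\mathcal G$ as a pinecone.}
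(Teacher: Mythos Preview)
Your proposal is correct and follows essentially the same approach as the paper's own proof, which simply invokes Theorem~3.10 of \cite{DiFrancesco2} (the general $4$--$6$--$8$ graph case) together with the extra restrictions of Lemma~\ref{restrictlem}. One minor inaccuracy worth correcting: in the general stepped-surface setting the dual graphs are $4$--$6$--$8$ graphs, so the forbidden vertex environments of Theorem~\ref{restricthm} eliminate \emph{octagonal} (not triangular) inner faces, leaving only the squares and hexagons you correctly identify.
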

\begin{proof}
The proof of the theorem proceeds similarly to the case of $4-6-8$-graph in Theorem 3.10 of \cite{DiFrancesco2}, with the extra restrictions of Lemma \ref{restrictlem}. 
%The proof requires a bijection between dimers and non-intersecting lattice paths, using networks constructed via a matrix solution of the general $T$-system. 
%This will be the subject of another work, where we will discuss non-intersecting lattice paths in more detail.  
\end{proof}

In this paper, we apply Theorem \ref{mainthm} to the particular case of $(r,s,t)$ slanted initial data. The corresponding bipartite graphs $\mathcal G$ actually already appeared in the literature \cite{ProppWestB-M} under the name of ``pinecones". The precise connection is given in the next section.

\begin{example}\label{explaining core}
\begin{figure}
\begin{subfigure}{.2\linewidth}
\centering
\includegraphics[scale=.2]{figures/tess_113_4.png}
\subcaption[t]{}
\end{subfigure}
\hskip 2cm
\begin{subfigure}{.2\linewidth}
\centering
\includegraphics[scale=.17]{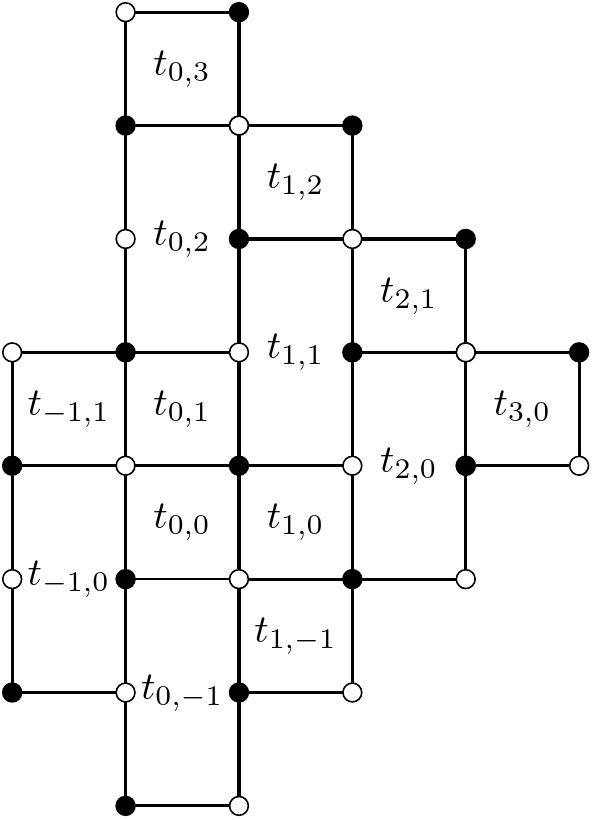}
\subcaption[t]{}
\end{subfigure}
\hskip 2cm
\begin{subfigure}{.2\linewidth}
\centering
\includegraphics[scale=.17]{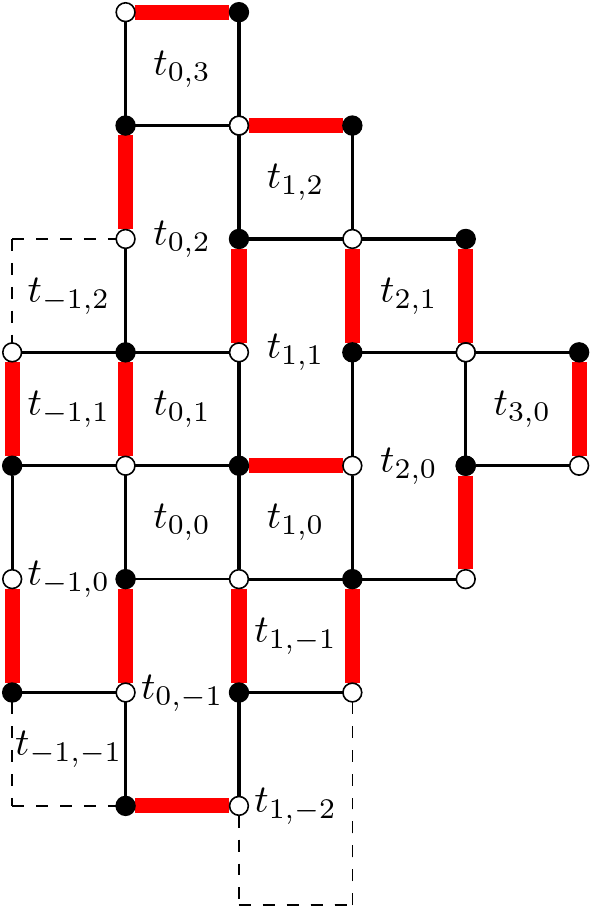}
\subcaption[t]{}
\end{subfigure}
\caption{The tessellated domain (A) for $T_{0,0,4}$ with $(1,1,3)$-slanted initial data, the corresponding dual 
graph (B) $\mathcal{G}_{1,1,3}^{0,0,4}$ and a sample dimer configuration (C).}
\label{113_dimer_basic}
\end{figure}
We consider the solution $T_{0,0,4}$ of the $(1,1,3)$-slanted T-system.
The tessellated domain ${\mathcal D}_{1,1,3}^{0,0,4}$ is represented in Fig. ~\ref{113_dimer_basic} (A).
The dual graph ${\mathcal G}_{1,1,3}^{0,0,4}$ together with its face weights $t_{a,b}$ is represented  in Fig.~ \ref{113_dimer_basic} (B). Finally, we show a sample dimer configuration in Fig. ~\ref{113_dimer_basic} (C),
corresponding to the contribution $\ds\frac{t_{-1,-1} t_{-1,2} t_{0,0} t_{1,-2} t_{2,0} t_{2,2}}{t_{-1,1} t_{0,-1} t_{1,-1}t_{1,1} t_{2,1}}$ to the partition function $T_{0,0,4}$, expressed as a Laurent polynomial of the initial data $t_{a,b}$.
\end{example}
%\begin{example}
	
%   \begin{figure}[H]
%   	\includegraphics[scale=.12]{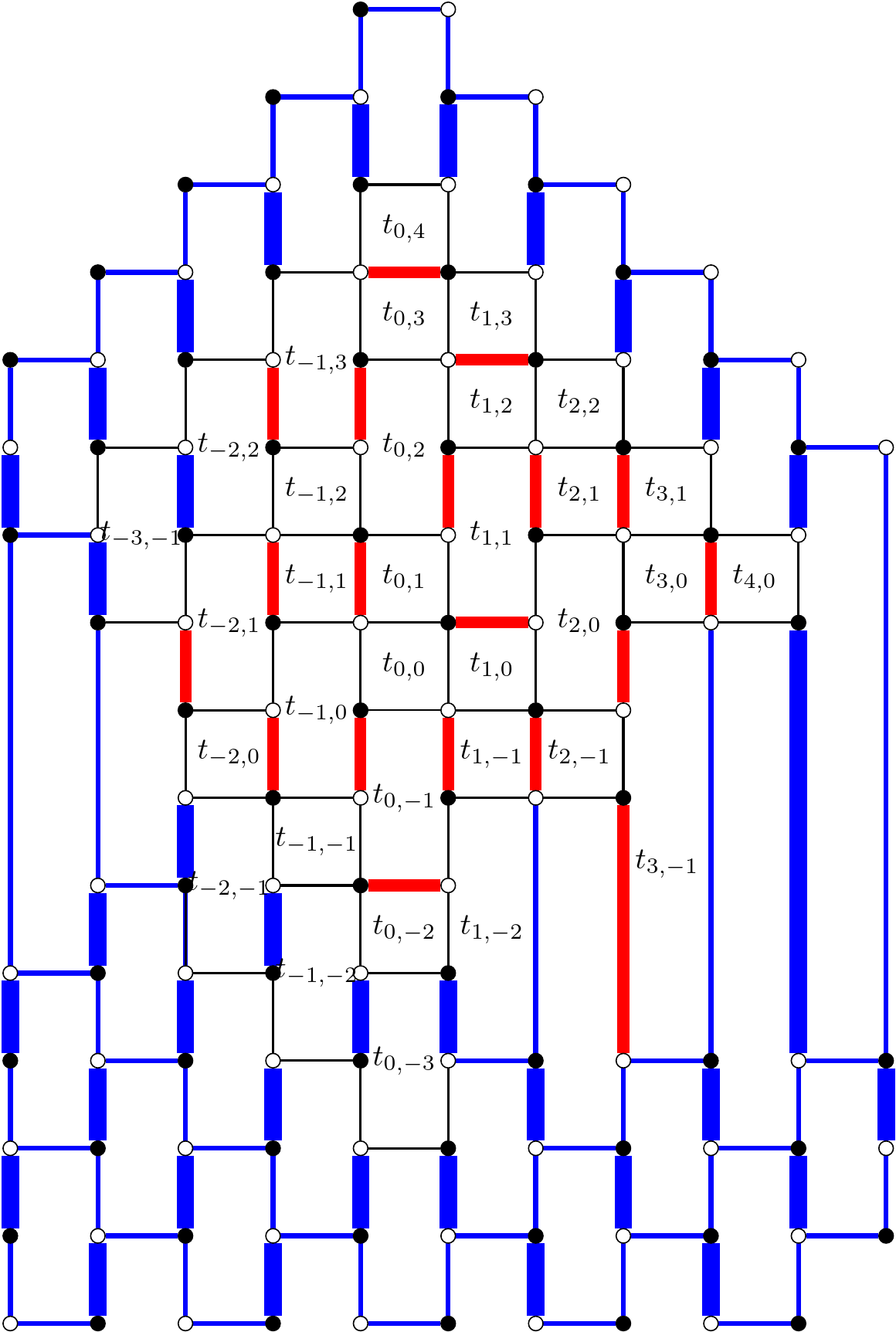}
%   	\caption{A sample dimer configuration for $T_{0,0,4}$ in the case of $(1,1,3)$-slanted initial data. }
%   	\label{T004_example_one_term}
%   \end{figure}
%\end{example}

\subsection{Slanted plane initial data and dimers on pinecones}\label{pinecone section}
\begin{figure}\centering
	\includegraphics[width=14.cm]{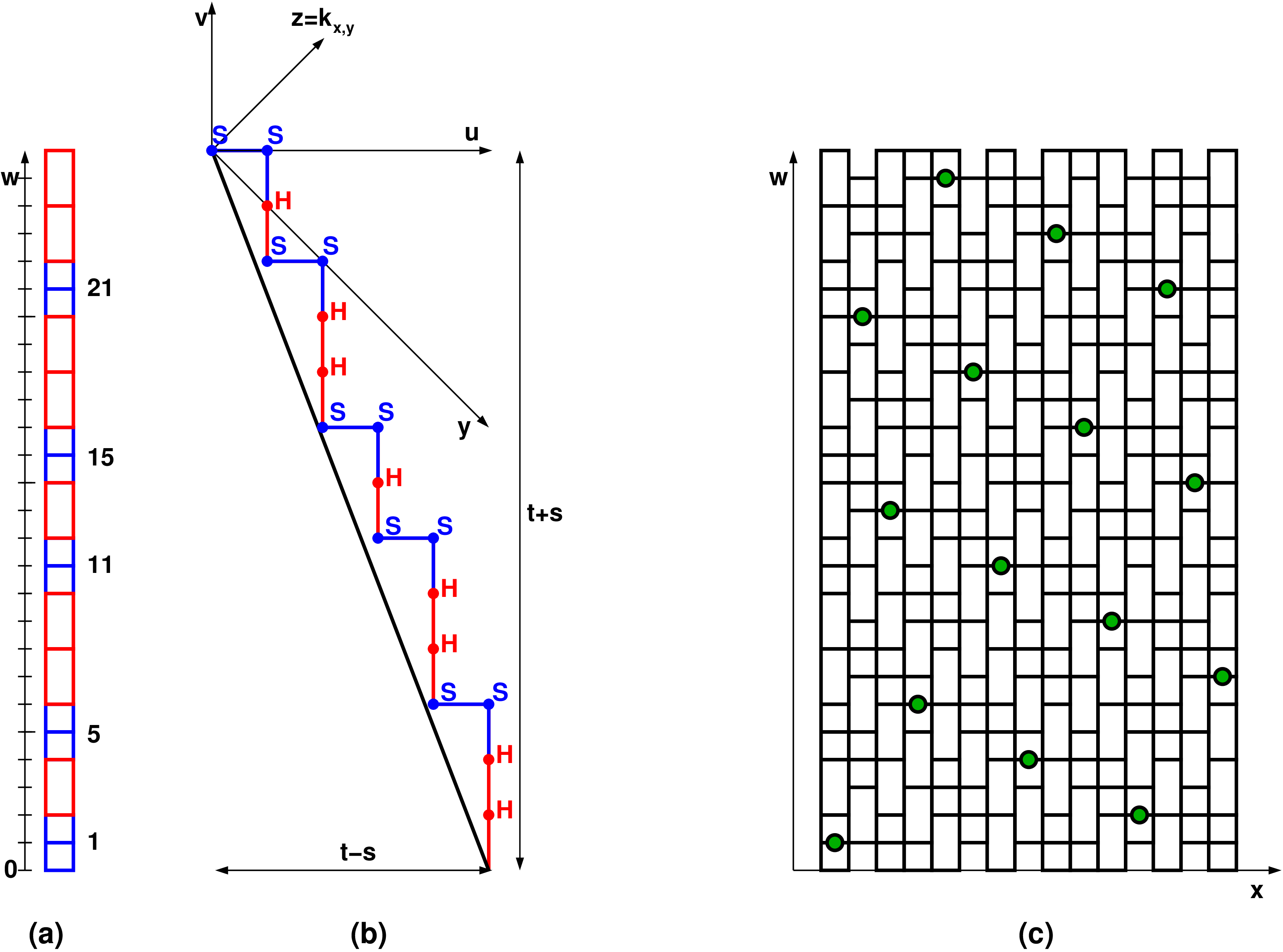}
	\caption{(a) The fundamental domain of a vertical slice of the bicolored graph dual  to the tessellated $(2,4,9)$-slanted stepped surface. 
	(b) A $x=$const.  section of the stepped surface $\bk$ (shown as a minimal path above the line $sy+t z=-rx$). We label each vertex by S/H for square/hexagon
	for the type of each corresponding dual face. (c)The infinite bicolored graph dual to the tessellated $(2,4,9)$-slanted infinite stepped surface: dots indicate 
	the periodicity lattice for the graph.}
	\label{path249}
\end{figure}

\subsubsection{$(r,s,t)$-dual graph structure}\label{dual graph struture}
Let us first describe the structure of the dual graphs $\mathcal G$ to the $(r,s,t)$-slanted initial data tessellations of previous section, 
which we shall call $(r,s,t)$-slanted graphs for short. As is clear from the discussion of Section \ref{secsol}, each such graph is a finite subset
of the infinite (doubly periodic) graph dual to the tessellation of the infinite stepped surface $\bk$, namely that corresponding to only retaining 
faces $(x,y)$ within the cone $|x-i|+|y-j|\leq |z-k|$, with $z=k_{x,y}$. Let us first describe this infinite graph.
The restriction Theorem \ref{restricthm} implies dually that the faces of any $(r,s,t)$-slanted graph may only be hexagons or squares of the following types:
\begin{figure}[H]
\centering
\includegraphics[width=6.cm]{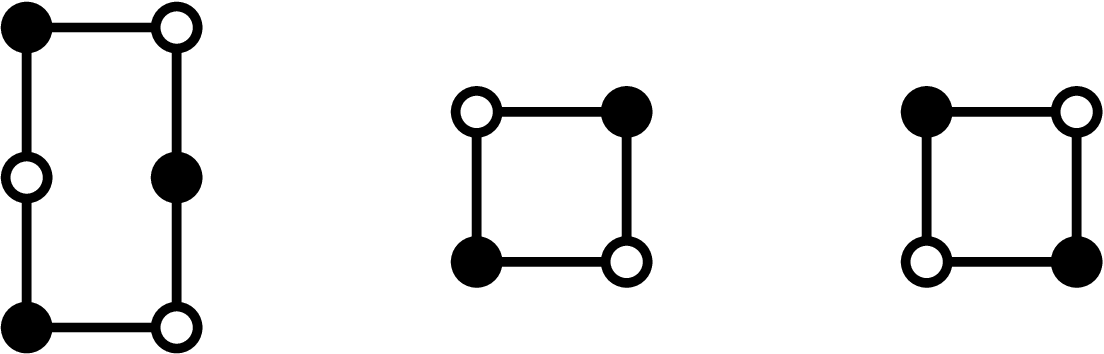}
\end{figure}
\noindent where the hexagon corresponds to the first three vertex environments of Theorem \ref{restricthm}, and the two squares to the two remaining environments. These faces are naturally arranged into columns of faces $(i,j)$, $j\in \Z$ (strips of width 1), each column a succession of hexagonal/square faces (see Fig. \ref{path249} (a) for the $(r,s,t)=(2,4,9)$ example). Bi-colorability imposes that squares always go by pairs, and we may alternatively view any vertical strip as made of only hexagons with some horizontal edges added in the middle.

By the minimality property of the stepped surface $\bk$ (see Remark \ref{minimality}), we deduce that each vertical plane section (of the form $x=$constant) of the tessellation of $\bk$ reduces to an infinite minimal path in the integer $(y,z)$ plane (with fixed parity of $y+z=x$ mod 2)
and with up/down steps $(1,\pm 1)$. The path is
described by its vertex coordinates  $(y,z=k_{x,y})$, and minimality means it is the lowest path above the line $s y+t z=-r x$.
In turn, each vertex of this path corresponds to a square or an hexagonal face of the corresponding vertical strip in the dual. 
More pecisely, each ``double descent" of the form $(k_{x-1,y},k_{x,y},k_{x+1,y})=(\mu,\mu-1,\mu-2)$ gives rise to a hexagon at $(x,y)$, while each ``down-up"
$(k_{x-1,y},k_{x,y},k_{x+1,y})=(\mu,\mu-1,\mu)$ and each ``up-down" $(k_{x-1,y},k_{x,y},k_{x+1,y})=(\mu,\mu+1,\mu)$ give rise to squares at $(x,y)$.
The H/S sequence is moreover $2t$-periodic, due to
the relation $k_{x,y+2t}=k_{x,y}-2s$.
%\textcolor{red}{this is not really true right? because by the new single-parity formula $k_{x,y}=k_{x,y+2t}-t-s$. 
%It should be that this is the periodicity of the configuration that the strip is the same modulo $2t$}, 
%we may restrict to a segment of length $2t$ in $y$.

We have represented the correspondence between the 
(minimal) path $(y,k_{x,y})$ and the vertical slice structure in Fig. \ref{path249} (b) and (a) for the case $(r,s,t)=(2,4,9)$. In Fig. \ref{path249} (b), vertices in the middle of a double descent are marked H (for hexagonal face in the dual) while those in the middle of an ``up-down" or ``down-up" are marked S (for square).
Upon changing to coordinates $(u,v)$ (by a rotation of $45^\circ$, see Fig. \ref{path249} (b)), the minimal path has steps $(1,0)$ and $(0,-1)$, and
connects the origin to the point $(t-s,-(t+s))$, while staying above the line $y=-\frac{s+t}{t-s}x$ that joins them.

To make the contact with the pinecone graphs of \cite{ProppWestB-M}, it is best to describe the above successions of squares and hexagons as a sequence of hexagons, with odd horizontal edges (connecting a white vertex on the left to a black vertex on the right) added in the middle of 
certain hexagons so as to form pairs of consecutive squares. In \cite{ProppWestB-M}, the positions of these added edges are recorded through
functions $L,U$. In our infinite $(r,s,t)$ slanted bipartite graph,
the positions (i.e. vertical coordinates) of the occupied odd horizontal edges in the vertical slice (depicted in blue in Fig. \ref{path249} (a)) are given 
by $w=1-2v$, for $v$ the coordinates of the vertices of the path closest to the line as in Fig. \ref{path249} (b), namely
$$w=1+2 \left\lfloor\frac{t+s}{t-s} u \right\rfloor \quad (u\in \Z).$$
(e.g. $w=1+2 \left\lfloor \frac{13}{5} u \right\rfloor=1,5,11,15,21$ for $u\in [0,t-s)$ in Fig. \ref{path249} (a)).
We conclude that the succession of  square and hexagonal faces is uniquely determined by $(s,t)$. 

However the {\it relative} positions of successive vertical slices  depends additionally on the value of $r$ as well. Indeed, the picture described so far remains identical in any other slice, except for the fact that the origin is a function of the position of the vertical plane $x$. Using the value \eqref{goodk} for $k_{x,y}=z$, and performing the change to rotated coordinates $u,v$ with:
$$u=\frac{z+x+y}{2},\qquad v=\frac{z-x-y}{2}, $$
the line $rx+sy+tz=0$ becomes
$$v=-\frac{(s+t)u+(r-s)x}{t-s} ,$$
and therefore the positions of the horizontal edges in the slice $x$ read again $1-2v$ for $v$ is the coordinates of the vertices of the path closest to the line, leading to:
\begin{equation}\label{w-defn}
	1+2 \left\lfloor \frac{(s+t)u+(r-s)x}{t-s} \right\rfloor .
\end{equation}
Note finally that the $v$ coordinate here is shifted by $-x/2$, hence the absolute positions of horizontal edges in the $(r,s,t)$-slanted bipartite graph are
given by
\begin{equation}\label{wofx}
w=w(x)=1-x+2 \left\lfloor \frac{(s+t)u+(r-s)x}{t-s} \right\rfloor (u,x\in \Z).
\end{equation}
This is illustrated in Fig. \ref{path249} (c), where we indicate the periodicity of the graph with a dot (which tracks the position of the $w=1$ edge of the $x=0$ slice in the other slices, modulo the $2(t+s)$ periodicity along the vertical direction). Note that the formula for $w$ in \ref{wofx} is for the case when the domain is centered at $(0,j,k)$ for some values of $j,k \in \Z$. The general case where $i\neq 0$ will requires some translation in $x$ and $v$ (see next section).	
% Let us first recall the original definition and the meaning of the parameters encoding the pinecone in the language of \cite{ProppWestB-M}

\subsubsection{Dimer graph}
Recall that the initial data domain of interest for the solution of $T_{i,j,k}$ must lie in the pyramidal cone $\mathcal C:$ $|z-k| \geq |x-i| + |y-j|$ for $(x,y,z)\in \Z^3$ with $x+y+z=0$ mod 2. The dimer graph boundaries are therefore delimited by the intersection of the initial data stepped surface $\bk$ and the cone boundary $\partial \mathcal C$.
The border of the largest domain is obtained by intersecting $\partial \mathcal C$ with the $P_0$ plane $r x+s y+t z=0$.
In the plane $x=i$, $\partial \mathcal C$ reduces to the two lines $z-k=y-j$ and $z-k=-(y-j)$. In the $(u,v)$ coordinate frame,
the former reads $v=\frac{k-j-i}{2}$ and the latter $u=\frac{k+i+j}{2}$. The first line corresponds to an upper bound on the maximum value of $v$ reached (i.e. $v_{\rm max}=\frac{k-i-j}{2}$), and the second on the maximum value of $u$
(i.e. $u_{\rm max}=\frac{k+i+j}{2}$). Recall from the previous section that the maximum value of $v$ reached in \eqref{w-defn} is $0$ from Fig. \ref{path249}(b). Thus, it is requires a translation by $\ds v_{\rm max}=\frac{k-i-j}{2}$. Once re-translated into the set of positions of horizontal edges in the dimer graph of previous section, this gives the positions:
\begin{equation}\label{wxu}
w(x,u)=1+(k-i-j)-(x-i) +2\left\lfloor \frac{(t+s)u+(r-s)x}{t-s} \right\rfloor 
=k-j+1-x+2\left\lfloor \frac{(t+s)u+(r-s)x}{t-s} \right\rfloor  .
\end{equation}
More generally, in the parallel planes $x=$const. we get bounds on the values taken by $u$.
Writing 
$k-z= \epsilon (x-i) +\eta(y-j)$ with $\epsilon, \eta \in \{1,-1\}$, and eliminating $x$ via $rx =-sy-t z$ gives:
\begin{equation}
%	\begin{aligned}
		y=\frac{(\eta j+k)t+(r-t\epsilon)x+i t \epsilon}{\eta t - s}, \qquad 
		z=\frac{(\eta j+k)s+(\eta r -\epsilon s)x +\epsilon s i}{s-t \eta}
%				y=\frac{(\eta j+k)t+(r-t\epsilon)(x-i)}{\eta t - s}, \qquad 
%		z=k+\eta j-\frac{(\eta j+k)t+(r-\epsilon\eta s)(x-i)}{t-s \eta}
\label{x,y,z boundaries}
\end{equation}
Applying the change of variables of the previous section $\ds u=\frac{x+y+z}{2}, \ds v=\frac{z-x-y}{2}$, the quantity 
$w=k-j+1-2v-x$ reads
\begin{equation}\label{wboundary}
w=k-j+1+y-z=k-j+1+\frac{(\epsilon i +\eta j+k)(s+t)+x(r(1+\eta)-\epsilon(s+t))}{\eta t -s}\qquad (\epsilon,\eta \in \{-1,1\})
%1-k-\eta j+\frac{(\eta+1)(j+k)t+\left((\eta+1)r+(t+s)\epsilon\eta\right)(x-i)}{t-\eta s}
\end{equation}

%		w&=1+y-z=1-k-\eta j+\frac{(\eta+1)(j+k)t+\left((\eta+1)r+(t+s)\epsilon\eta\right)(x-i)}{t-\eta s}
%	\end{aligned}
%	\label{x,y,z boundaries}
%\end{equation}
This gives the four lines $w=w_{\rm min}^\pm (x)$ and $w=w_{\rm max}^\pm(x)$ in the $(x,w)$ plane, delimiting the dimer graph, and eventually the four line segments
\begin{eqnarray}
&&\left\{ \begin{matrix} w_{\min}^-(x)&=&1-(x-i)  \\ 
w_{\rm max}^-(x)&=& 1+\frac{2(ri+sj+tk)+(2r+t+s)(x-i)}{t-s}\end{matrix} \right. \quad  \left(-\frac{ri+sj+tk}{r+t}\leq x-i\leq 0\right)\label{boundonew}\\
&&\left\{\begin{matrix} w_{\min}^+(x)&=&1+(x-i)  \\  
w_{\rm max}^+(x)&=&1+\frac{2(ri+sj+tk)+(2r-t-s)(x-i)}{t-s}\end{matrix}  \right .\quad  \left(0\leq x-i\leq \frac{ri+sj+tk}{t-r}\right)\label{boundtwow}
\end{eqnarray}

\begin{figure}
	\includegraphics[scale=.3]{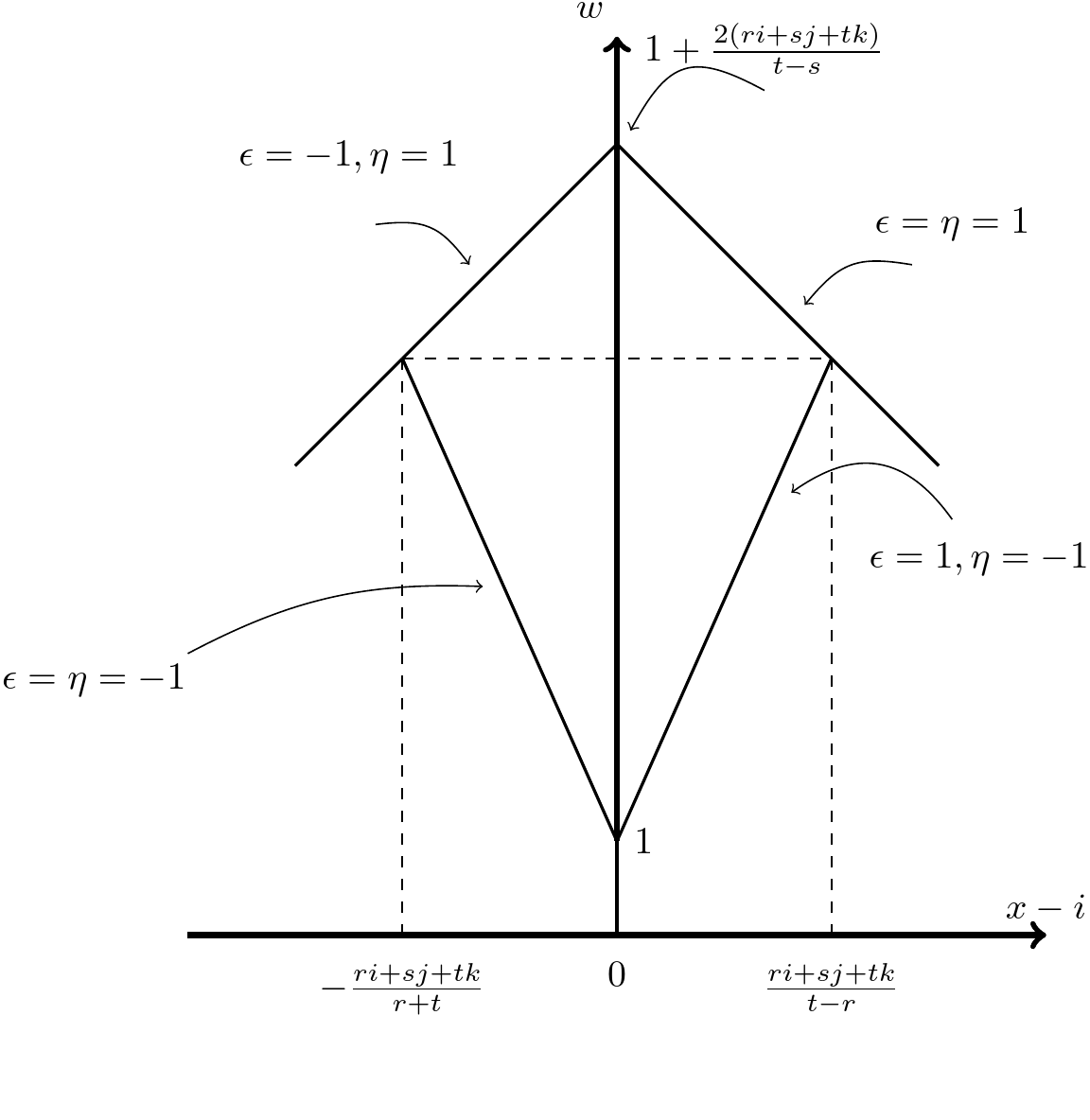}
	\caption{The domain \eqref{boundw} in the $x,w$ coordinates.}
	\label{domain in wx}
\end{figure}

Finally, the dimer graph $\mathcal G$ is determined by the function $w(x)$ \eqref{wofx} together with the conditions 
\begin{equation}\label{boundw}
 w_{\rm min}^\pm(x)\leq w(x,u)\leq w^\pm_{\rm max}(x),\quad \left(-\frac{kt+j s+ir}{r+t}\leq x-i\leq \frac{ri+sj+tk}{t-r}\right) .
 \end{equation}
 The corresponding domain is depicted in Fig. \ref{domain in wx}.
 
 \subsection{Comparison with Pinecones} The Pinecones defined in \cite{ProppWestB-M} were constructed 
 to provide combinatorial solutions of the three-term Gale-Robinson sequence:

\begin{equation}\label{G-R sequence}
 	a(\tn)a(\tn-\tm)=a(\tn-\ti)a(\tn-\tj)+a(\tn-\tk)a(\tn-\tl)
 \end{equation} 
with the initial condition $a(\tn)=1$ for $\tn = 0, 1, \dots, \tm-1$ with $\ti,\tj,\tk,\tl$ given integers satisfying $\ti+\tj=\tk+\tl=\tm$, and $\tj=\min\{\ti,\tj,\tk,\tl\}$. The parameters $\ti,\tj,\tk,\tm,\tl,\tn$ are borrowed from the notations in \cite{ProppWestB-M} and are different from our $i,j,k,m$ (we have used the tilde notation to distinguish them). For each set of parameters $\{\ti,\tj,\tk,\tl\}$, the authors construct a sequence of pinecone graphs $({\mathcal P}_n)_{n\geq 0}\equiv ({\mathcal P}(\tn;\ti,\tj,\tk,\tl))_{\tn \geq 0}$, entirely determined by the functions: \newline
 \begin{adjustbox}{scale={0.8}{0.8},center}
 $\begin{aligned}
U(\tn,R,C)&=2C+R-3-2\left\lfloor \frac{\tm C+\tk R+\ti-\tn-1}{\tj}\right\rfloor =-1-R-2\left\lfloor\frac{\ti C+(\tk-\tj)R+\tm-\tn-1}{\tj}\right\rfloor\\ 
L(\tn,R,C)&=2C+R-3-2\left\lfloor \frac{\tm C+\tl R+\ti-\tn-1}{\tj}\right\rfloor =U(\tn,-R,C+R) .
 \end{aligned}$
 \end{adjustbox}
 \newline
The graphs are drawn on a substrate of ``brick wall" lattice in $\Z^2$ made of horizontal hexagons (i.e. with all horizontal edges $(x,y)-(x+1,y)$, $x,y\in \Z$,
and every other vertical one $(x,y)-(x,y+1)$, $x,y\in Z$, $x+y=0$ mod 2), to which some central vertical edges $(x,y)-(x,y+1)$ are added at positions determined by $U(\tn,R,C)$ (in the upper part of the pinecone in the row $R=0,1,2,...$) and $L(\tn,R,C)$ (in the lower part of the pinecone in the rows $-R=0,-1,2,...$),
with $C\geq 0$ limited by the conditions that $U(\tn,R,C)>R$ and $L(\tn,R,C)>R$.
These give the respective following bounds for the positions $U(\tn,R,C)$ and $L(\tn,R,C)$:
\begin{eqnarray}
&&U\ : \ \  0\leq C\leq \left\lfloor \frac{\tn-\tm-\tk R}{\tj}\right\rfloor\label{boundU}\\
&&L\ :\ \ 0\leq C\leq  \left\lfloor \frac{\tn-\tm-\tl R}{\tj}\right\rfloor \label{boundL}
\end{eqnarray}

\begin{lemma}\label{floor_lemma}
	\begin{equation}
	\label{floor_equality}
		U(\tn,R,C)=-1-R-2\left\lfloor\frac{\ti C+(\tk-\tj)R+\tm-\tn-1}{\tj}\right\rfloor=1-R+2\left\lfloor\frac{-\ti C+(\tj-\tk)R+\tn-\tm}{\tj}\right \rfloor
	\end{equation}
	for $\tj> 0 \in \Z$
\end{lemma}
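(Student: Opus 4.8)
The plan is to reduce the claimed identity to a pair of elementary floor/ceiling manipulations. First I would observe that the two numerators appearing inside the floor functions are negatives of each other up to the shift by $1$: setting
$$A:=\ti C+(\tk-\tj)R+\tm-\tn\in\Z,$$
the floor on the left of \eqref{floor_equality} has argument $(A-1)/\tj$, while the floor on the right has argument $(-A)/\tj$, since $-\ti C+(\tj-\tk)R+\tn-\tm=-A$. Since $\ti,\tj,\tk,R,C,\tm,\tn$ are all integers, $A$ is an integer, a fact that will be used crucially.

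Next I would cancel the common term $-R$ on both sides and divide by $2$, so that \eqref{floor_equality} becomes equivalent to
$$-1-\left\lfloor\frac{A-1}{\tj}\right\rfloor=\left\lfloor\frac{-A}{\tj}\right\rfloor,\qquad \tj>0,\ A\in\Z.$$
This I would establish from the two standard identities, valid for $\tj>0$ and $A\in\Z$:
$$\left\lfloor\frac{A-1}{\tj}\right\rfloor=\left\lceil\frac{A}{\tj}\right\rceil-1,\qquad \left\lfloor\frac{-A}{\tj}\right\rfloor=-\left\lceil\frac{A}{\tj}\right\rceil.$$
Substituting the first identity into the left-hand side gives $-1-\bigl(\lceil A/\tj\rceil-1\bigr)=-\lceil A/\tj\rceil$, which is exactly the right-hand side by the second identity. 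Both identities follow immediately by writing the Euclidean division $A=q\tj+s$ with $q\in\Z$ and $0\le s\le \tj-1$, and checking the two cases $s=0$ and $s\ge 1$ separately; I would include this short case check for completeness.

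There is essentially no serious obstacle here. The only point that requires care is that the identity $\lfloor(A-1)/\tj\rfloor=\lceil A/\tj\rceil-1$ genuinely uses $A\in\Z$ (it fails for generic real $A$), and that one must keep the direction of the rounding straight when passing between floor and ceiling. Once those are handled the lemma is immediate, and it will be reused in the following section to match the edge-position function $U$ of \cite{ProppWestB-M} with the function $w(x,u)$ of \eqref{wxu}.
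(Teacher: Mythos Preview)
Your proposal is correct and essentially the same as the paper's. Both reduce the lemma to the single identity $-1-\lfloor(A-1)/\tj\rfloor=\lfloor -A/\tj\rfloor$ for $A\in\Z$ and $\tj>0$ (the paper writes this as $-\lfloor x\rfloor=\lfloor -x-1/\tj\rfloor+1$ with $x=a/\tj$, which is your identity after setting $a=A-1$). The only cosmetic difference is that you finish via the two ceiling identities $\lfloor(A-1)/\tj\rfloor=\lceil A/\tj\rceil-1$ and $\lfloor -A/\tj\rfloor=-\lceil A/\tj\rceil$, whereas the paper does a direct inequality chase ruling out the value $-n-2$; both are the same elementary case split on whether $\tj\mid A$.
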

\begin{proof}
	It is sufficient to show $- \lfloor x\rfloor=\ds \lfloor -x -\frac{1}{\tj}\rfloor+1$ for $x =\frac{a}{\tj}, \tj> 1, a\in \Z$. Let $n=\lfloor x\rfloor$, i.e. such that $n\leq x < n+1$, $n \in \Z$, 
	then $-n-1-\frac{1}{\tj}< -x-\frac{1}{\tj} \leq -n-\frac{1}{\tj}<-n$, hence:
	\begin{equation*}
		-n-2\leq \lfloor-x-\frac{1}{\tj}\rfloor \leq -n-1
	\end{equation*}
	Assuming $-n-2= \lfloor-x-\frac{1}{\tj}\rfloor$, then we would have $-x-\frac{1}{\tj} <-n-1$, hence $n+1-\frac{1}{\tj}<\frac{a}{\tj}<n+1$, 
	which contradicts $a\in \Z$, as the width of the interval is strictly less that $\frac{1}{\tj}$.
Therefore we must have $-n-1= \lfloor-x-\frac{1}{\tj}\rfloor$, and the Lemma follows.
\end{proof}

Comparing Eqs. (\ref{boundU}-\ref{boundL}) to \eqref{boundw}, we find the following correspondence:

\begin{thm}\label{corresp}
The graphs ${\mathcal G}$ are identified with pinecones via the following
correspondence between our parameters $(r,s,t)$ and the parameters $\ti,\tj,\tk,\tl,\tm$ of  \cite{ProppWestB-M}:
\begin{equation}\label{from stepped surface to pinecone parameters}
	\left\{ \begin{matrix}  \ti=t+s & \tj=t-s & \tk=t+r & \tl=t-r &\tm=2t& \tn = ri+sj+tk+2t &{\rm if} \ r,s,t \ {\rm not}\ {\rm all}\ {\rm odd}\\
 \ti=\frac{t+s}{2} & \tj=\frac{t-s}{2} & \tk=\frac{t+r}{2} & \tl=\frac{t-r}{2} &\tm=t& \tn=\frac{ri+sj+tk}{2}+t&{\rm otherwise}  \end{matrix}
\right.
\end{equation}
\end{thm}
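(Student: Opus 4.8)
\textbf{Proof plan for Theorem \ref{corresp}.}

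The strategy is to match the two combinatorial descriptions edge-by-edge: both the $(r,s,t)$-slanted graph $\mathcal G$ and the pinecone $\mathcal P(\tn;\ti,\tj,\tk,\tl)$ are defined as a brick-wall substrate with certain central vertical (odd horizontal, in our orientation) edges added, so it suffices to show that under the substitution \eqref{from stepped surface to pinecone parameters} the \emph{set of positions} of added edges coincides, together with the \emph{boundary inequalities} cutting out the finite graph. Concretely, I would first treat the generic case ($r,s,t$ not all odd) and then observe that the all-odd case is handled identically after dividing $\ti,\tj,\tk,\tl,\tm,\tn$ by $2$, since then $ri+sj+tk$ is even and all the floor arguments remain integers over $2$; this parity bookkeeping is exactly why the two-row table appears.

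For the generic case the key steps are: (i) Rewrite our edge-position function. Combining \eqref{wxu} with the relation $x-i = u+v-k$ — equivalently using $(u,v)$ as the running coordinates in a fixed vertical slice — express $w(x,u)$ purely in terms of a row index $R$ and column index $C$ adapted to the pinecone conventions. The natural dictionary is $R = x-i$ (signed, so $R\geq 0$ is the upper half and $R\leq 0$ the lower half of the pinecone) and $C$ counting the hexagon along the slice, i.e. $C$ is essentially the index of the floor in \eqref{wxu}. (ii) Plug $\ti=t+s$, $\tk-\tj = 2r$ (check: $\tk-\tj = (t+r)-(t-s) = r+s$ — so in fact $\tk-\tj=r+s$; similarly $\tl-\tj = (t-r)-(t-s)=s-r$), $\tm=2t$, $\tn = ri+sj+tk+2t$ into the second expression for $U$ in Lemma \ref{floor_lemma}, namely $U(\tn,R,C) = 1-R+2\lfloor \tfrac{-\ti C + (\tj-\tk)R + \tn-\tm}{\tj}\rfloor$, and verify term-by-term that the argument of the floor equals $\tfrac{(t+s)u + (r-s)x}{t-s}$ once $R,C$ are substituted by their values in our variables and the offset $k-j+1-x$ is absorbed into the $1-R$ prefactor. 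The companion identity $L(\tn,R,C) = U(\tn,-R,C+R)$ should then automatically match our lower-half edge positions $w_{\min}^+,w_{\max}^+$ after the $R\to -R$ reflection. (iii) Match the cutoffs: compare the pinecone bounds \eqref{boundU}, \eqref{boundL}, which say $0\leq C\leq \lfloor(\tn-\tm-\tk R)/\tj\rfloor$, against our domain \eqref{boundw} and the explicit line segments \eqref{boundonew}, \eqref{boundtwow}; substituting the parameter dictionary should turn $\lfloor(\tn-\tm-\tk R)/\tj\rfloor$ into $\lfloor(ri+sj+tk - (t+r)(x-i))/(t-s)\rfloor$, which is exactly the integer part of the $x$-range bound $0\leq x-i\leq (ri+sj+tk)/(t-r)$ translated through $C$. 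The condition $\tj=\min\{\ti,\tj,\tk,\tl\}$ from \cite{ProppWestB-M} corresponds to our standing assumptions $t>\max(r,s)$ and $r\leq s$, which guarantee $t-s = \min\{t+s,t-s,t+r,t-r\}$.

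The main obstacle I anticipate is bookkeeping rather than conceptual: getting the three coordinate conventions to agree — the $45^\circ$ rotation to $(u,v)$, the shift by $v_{\max}=(k-i-j)/2$ and by $-x/2$ recorded around \eqref{wofx}–\eqref{wxu}, and the $(R,C)$ indexing of \cite{ProppWestB-M} with its $-3$ constant and its upper/lower asymmetry — so that the \emph{same} added edge gets the \emph{same} label in both schemes, including the correct $\pm1$ parity of $w$ and the correct half ($R\gtrless 0$). Lemma \ref{floor_lemma} is the device that makes this possible: it lets us flip the sign inside the floor so that an increasing function of $C$ is compared against an increasing function, avoiding off-by-one errors at the extreme columns. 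Once the two floor expressions are shown to have literally equal arguments (a one-line linear algebra check after substitution) and the ranges of $(R,C)$ vs. $(x-i,u)$ are shown to be in bijection via \eqref{boundw}, the graphs are identical by construction, and the all-odd case follows by the rescaling remarked above. I would end by noting, as a sanity check, that the Gale–Robinson recurrence \eqref{G-R sequence} with $\ti+\tj=\tk+\tl=\tm=2t$ is precisely the $T$-system \eqref{Tsys} restricted to the $(r,s,t)$-slanted initial data, with $\tn$ advancing by the plane index $ri+sj+tk$, so the identification of graphs is consistent with the identification of the sequences they enumerate.
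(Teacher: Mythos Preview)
Your proposal is correct and follows essentially the same approach as the paper: match $w(x,u)$ against the pinecone functions $U,L$ via Lemma~\ref{floor_lemma}, verify the floor arguments agree under the parameter dictionary, check that the $C$-bounds \eqref{boundU}--\eqref{boundL} reproduce \eqref{boundonew}--\eqref{boundw}, and handle the all-odd case by halving. Two small corrections on bookkeeping: the paper identifies $x\geq i$ (i.e.\ $R=x-i\geq 0$) with the \emph{lower} function $L$, not $U$ (the pinecone is flipped relative to our orientation), and the column variable is made fully explicit as $C=\tfrac{k+i+j}{2}-u$; with this the constant offset $\tfrac{k-i-j}{2}\in\Z$ pulls out of the floor and the prefactors $1-R$ and $k-j+1-x$ match on the nose.
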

\begin{proof}
From Lemma \ref{floor_lemma}:
\begin{eqnarray*}U(\tn,R,C)&=&1-R+2\left\lfloor \frac{-\ti C+(\tj-\tk) R+\tn-\tm}{\tj} \right\rfloor \\
L(\tn,R,C)&=&1+R+2\left\lfloor \frac{-\ti C-\tl R+\tn-\tm}{\tj} \right\rfloor=1-R+2\left\lfloor \frac{-\ti C+(\tj-\tl) R+\tn-\tm}{\tj} \right\rfloor
\end{eqnarray*}
For $r,s,t$ not all odd, the first identification of parameters in the theorem (with $\ti=t+s$, $\tj-\tl=r-s$, $\tm=2t$, $\tj=t-s$) allows to identify for $x\geq i$ where $i$ is the index for the solution of the $T$-system $T_{i,j,k}$:
$$ w(x,u)=L\left(ri+sj+tk+2t,x-i,\frac{k+i+j}{2}-u\right) ,$$
corresponding to a mapping of variables $R=x-i$, $\ds C=\frac{k+i+j}{2}-u$ and $\tn=kt+js+ri+2t$.
Moreover, the bounds
$R< L(\tn,R,C)\leq L(\tn,R,0)$ turn into
$$x-i< w(x,u)\leq  1+\frac{2(ri+sj+tk)+(2r-t-s)(x-i)}{t-s} $$
which is equivalent to (\ref{boundtwow}-\ref{boundw}).
Similarly, when $x\leq i$, using $U(\tn,R,C)=L(\tn,-R,R+C)$, we find that
$$ w(x,u)=U\left(ri+sj+tk+2t,-(x-i),\frac{k+i+j}{2}-u+x-i\right) ,$$
while the bounds $R< U(\tn,R,C)\leq U(\tn,R,0)$ and \eqref{boundonew} are identical. When $r,s,t$ are all odd, we may rewrite
$$w(x,u)= k-j+1-x+2\left\lfloor \frac{\frac{t+s}{2}u+\frac{r-s}{2}x}{\frac{t-s}{2}} \right\rfloor$$
and the above identifications correspond now to the second line of \eqref{from stepped surface to pinecone parameters}.
\end{proof}

We conclude that the dimer graphs for $r,s,t$-slanted initial data $T$-system solutions are nothing but the pinecones of \cite{ProppWestB-M}, with the correspondence of Theorem \ref{corresp} above. For convenience, in the remainder of this paper, we will work in the original $(x,y,z)$ coordinates, 
and no longer refer to the $(x,u,v)$ frame. Any of our results on limit shapes can be straightforwardly translated into pinecone language  via the change of variables $\ds u=\frac{z+y+x}{2}$, $\ds v=\frac {z-x-y}{2}$, and $x$ unchanged (note that the pinecones must also be flipped to match our dual slanted graphs).

\begin{figure}
\includegraphics[scale=.35]{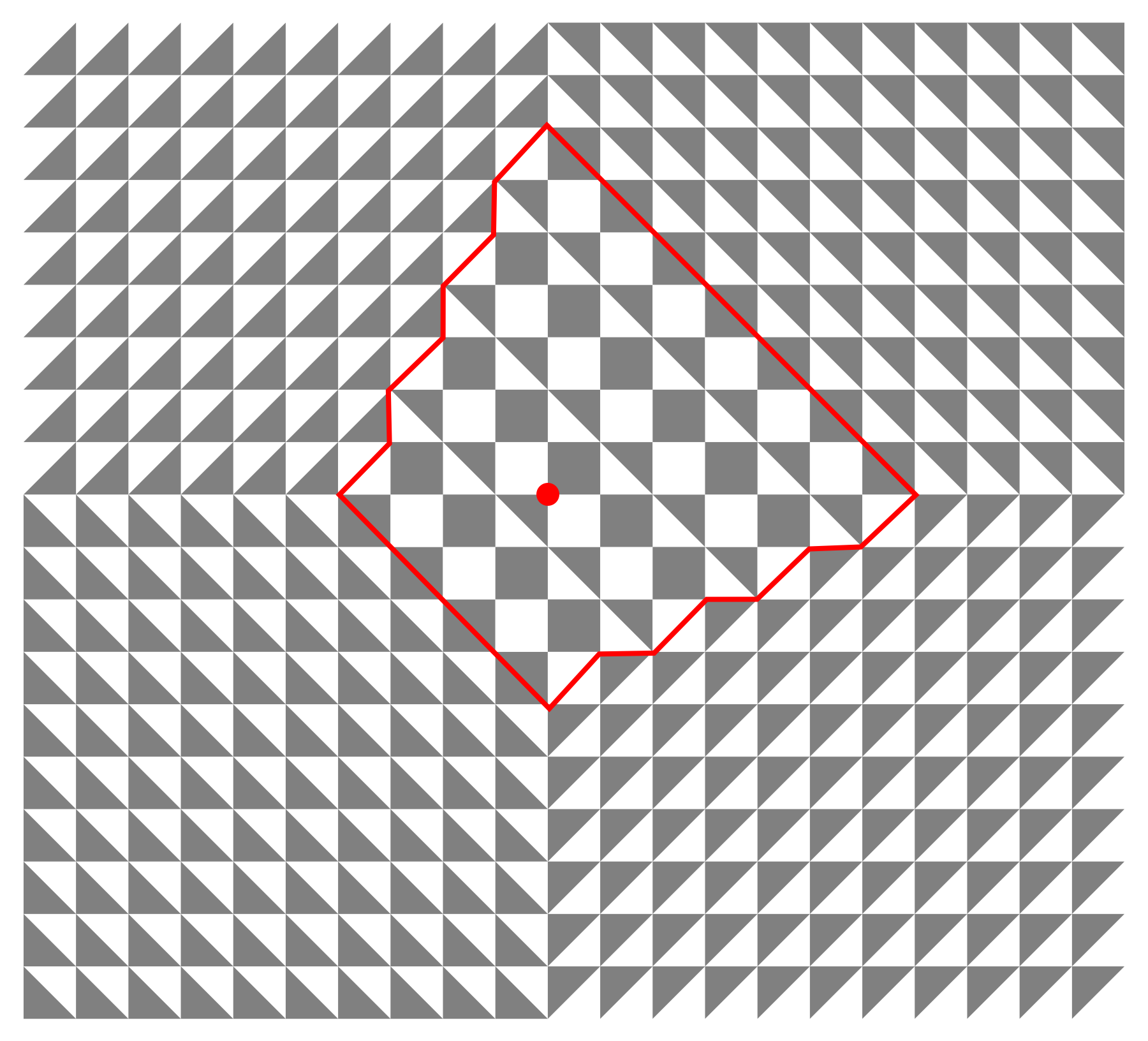}
\caption{Tessellated domain for the ($1,1,3$)-slanted stepped surface for $T_{0,0,6}$, extended by the four 
sides of the pyramid with apex at $(i,j,k)$. The red line marks the boundary between the domain and its continuation  via the four infinite planes forming the pyramid.}
\label{113_pyramid}
\end{figure}

Let us briefly recall the core phenomenon of \cite{ProppWestB-M}, which is best explained by embedding the pinecone configurations into a minimal Aztec diamond shaped domain, obtained by adding brick-wall configurations (with only hexagons and no extra vertical edges), with fixed boundary dimer configurations, which propagate throughout
the added hexagons to provide frozen configurations at/within the boundaries of the pinecone. The dimer configurations involve a ``core" of active edges that may or may not be occupied by dimers.  From the dual (stepped-surface) point of view, such brick-wall additions correspond to a continuation of the stepped surface beyond the intersection with the pyramid of apex $(i,j,k)$,  by the four plane faces of the pyramid itself (see Fig. \ref{113_pyramid} for an illustration). Indeed, the latter planes decompose into alternating Black and White triangles, with only 6-valent vertices,
thus giving rise to the hexagons of the added brick-wall in the dual graph.

We give below examples of pinecones and the corresponding function $w(x,u)$, and of the core phenomenon.

\begin{example}\label{249_wxu_structure}
Let $(r,s,t)=(2,4,9)$, and $i=1,j=2,k=3$, with $ri+sj+tk+2t=55=n$ and $\frac{k+i+j}{2}=3$. We have
$w(x,u)=2-x+2 \left\lfloor \frac{13 u-2x}{5} \right\rfloor=1-R+2 \left\lfloor \frac{37-13 C-2R}{5} \right\rfloor$, ($R=x-1,C=3-u$), leading to the successive positions for $x-1\in [-3,5]$ in \eqref{249_UV_b}
\begin{figure}[h]
	\centering
	\begin{minipage}[t]{.6\textwidth}
		\includegraphics[scale=.25]{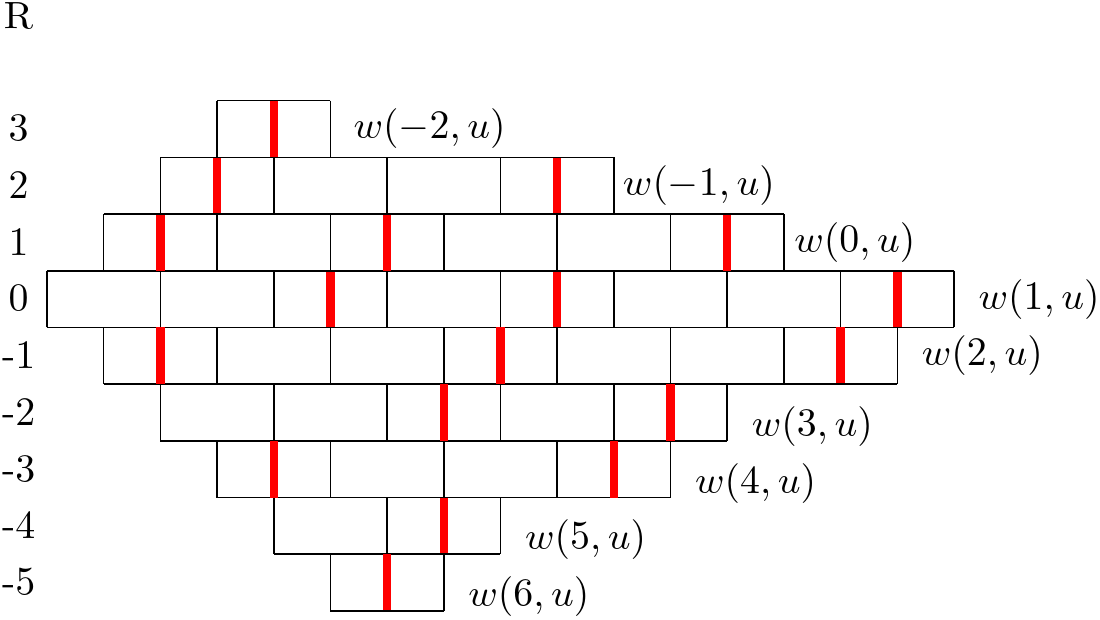}
		\caption{Dimer graph for the partition function $T_{1,2,3}$ with rows indexed by $R$ for the $(2,4,9)$-slanted stepped surface}\label{249_UV_a}
	\end{minipage}%\hskip 1cm 
	\begin{minipage}[]{.35\textwidth}\vskip -2.cm
\hskip 2 cm $\begin{matrix}
		& &4 & w(-2,u)\\
 & 3 &9 & w(-1,u) \\
 2    & 6 & 12 & w(0,u) \\
 5 & 9 & 15 & w(1,u)\\
 2 & 8 & 14 & w(2,u)\\
 & 7 & 11  & w(3,u)\\
 & 4 &10 & w(4,u)\\
 & & 7 & w(5,u)\\
 & & 6 & w(6,u)\\
\end{matrix}$
\caption{Value of $w(x,u)$ indicating the positions of extra vertical edges (in red)}\label{249_UV_b}
\end{minipage}
\end{figure}
These coincide with the values $U(55,R,C)$, ($R=3,2,1$) followed by  $L(55,R,C)$,  ($R=0,1,2,3,4,5$), which determine the pinecone on the left (Fig.  \ref{249_UV_a}).

 \end{example}

\begin{example}\label{113_wxu_structure}
	Let $(r,s,t)=(1,1,3)$, and $i=0,j=0,k=4$, with $\frac{ri+sj+tk}{2}+t=9$, and $\frac{k+i+j}{2}=2$ as in example \eqref{explaining core}. We have
$w(x,u)=5-x+4u$, ($R=x,C=2-u$) which is represented in Fig \ref{113_UV_a}.

\begin{figure}
	\centering
	\begin{minipage}[t]{.45\textwidth}
		\includegraphics[scale=.3]{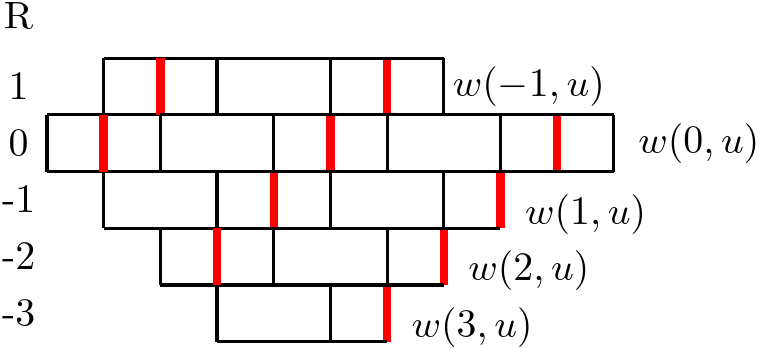}
		\caption{Dimer graph for the partition function $T_{0,0,4}$ with rows indexed by $R$ for the $(1,1,3)$-slanted stepped surface}\label{113_UV_a}
	\end{minipage}\hskip 1cm
\begin{minipage}[]{.4\textwidth}
		\vskip -2.cm 
		\hskip 2cm$\begin{matrix}
   & 2 &  6  & w(-1,u) \\
  1 & 5 & 9  & w(0,u) \\
    & 4 & 8  & w(1,u)\\
   & 3 &  7  & w(2,u)\\
   &   &  6  & w(3,u)\\
		\end{matrix}$
		\caption{Value of $w(x,u)$}\label{113_UV_b}
		\label{113_wxu_bar_loc}
	\end{minipage}

\end{figure}
These coincide with the values $U(9,R,C)$, ($R=1$) followed by  $L(9,R,C)$,  ($R=0,1,2,3$), which determine the pinecone on the left (Fig. \ref{113_wxu_bar_loc}).

 \end{example}
%\begin{figure}
%   		\includegraphics[scale=.1]{figures/112_nice_example_brickwall.png}
%   		\caption{Dimer configurations corresponding to the 4 terms of the $T_{0,-1,3}$ solution of the $(1,1,2)$-slanted initial data T-system. }
%   		\label{112_nice_example_brickwall}
%   \end{figure}

%Finally, we illustrate the core phenomenon in the pinecone language. The tessellation in Figure \ref{tessellation progress 246} depicts the core regions of the corresponding dimer domain. The domain can be extended infinitely, with a brickwall of hexagons outside of the core, Note that the hexagons don't contribute as they all receive weight $1$. This core phenomenon also coincides in the perfect matching setting in example \ref{core phenomenon example}

\begin{example} \label{core phenomenon example} We now illustrate the core phenomenon in the case of the solution
\begin{figure}
   		\includegraphics[scale=.125]{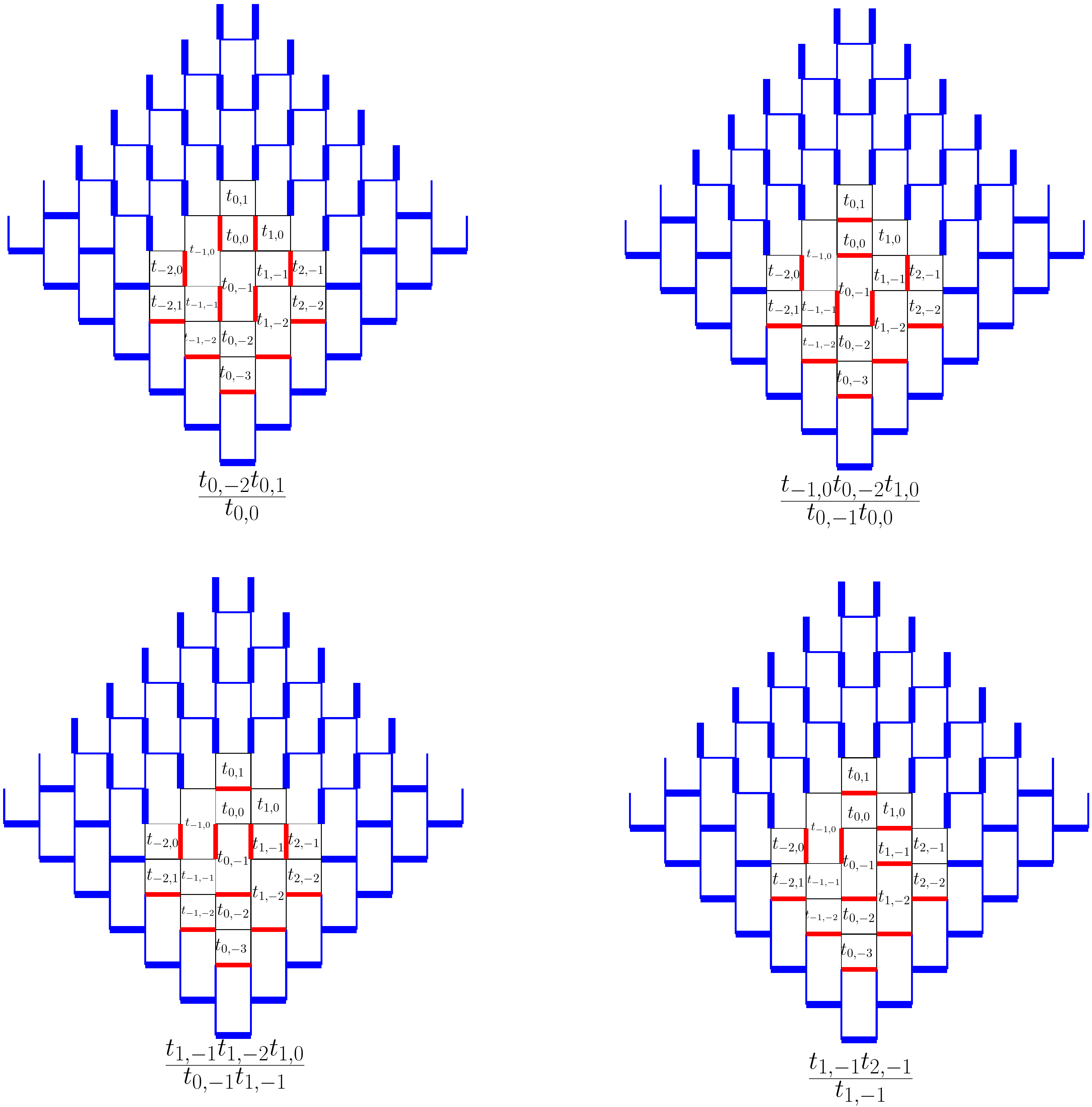}
   		\caption{Dimer configurations corresponding to the 4 terms of the $T_{0,-1,3}$ solution of the $(1,1,3)$-slanted initial data T-system. }
   		\label{113_nice_example_brickwall}
\end{figure}
$ T_{0,-1,3}$ of the $(1,1,3)$-slanted $T$-system. The solution reads:
$$\ds T_{0,-1,3} =\frac{t_{0,-2} t_{0,1}}{t_{0,0}}+\frac{t_{-1,0} t_{0,-2} t_{1,0}}{t_{0,-1}
   t_{0,0}}+\frac{t_{-1,-1} t_{1,-2} t_{1,0}}{t_{0,-1} t_{1,-1}}+\frac{t_{-1,-1}
   t_{2,-1}}{t_{1,-1}}. $$
%The differences of the model is can be computed directly via the formulation of $w(x,u)$ similar to example \ref{113_wxu_structure} and \ref{249_wxu_structure}. 
We have represented the corresponding four dimer configurations of the full dimer domain  in Figure \ref{113_nice_example_brickwall}: 
here the core is extended by brick wall hexagonal faces to an Aztec diamond shape (but could go on and 
cover the entire plane as well, as suggested by the dual graph of that of Fig. ~\ref{113_pyramid}). 
The brick wall addition is similar to Figure 5 in \cite{ProppWestB-M}, and the blue faces don't contribute to the partition function by theorem \ref{mainthm}. 
\end{example}

\section{The case of uniform slanted initial data}

\subsection{Uniform $T$-system solution}\label{uniform solution}

For fixed values of $(r,s,t)$ the simplest solution of the T-system \eqref{Tsys} corresponds to choosing uniform initial data in each initial data plane $(P_\ell)$, $\ell=0,1,...,2t-1$.
More precisely, choosing the initial values of $T$ to be $T_{i,j,k}=a_\ell$ for all $(i,j,k)\in (P_\ell)$ for some positive real numbers $a_0,a_1,...,a_{2t-1}$,
we deduce that for all $m\geq 2t$:
$$T_{i,j,k}=a_m \qquad (i,j,k)\in (P_m)$$
where $a_m$, $m\geq 2t$ are subject to the ``Gale-Robinson" recursion relation
$$ a_m \, a_{m-2t}=a_{m+r-t}\, a_{m-r-t}+a_{m+s-t}\, a_{m-s-t} $$

Among these solutions a particularly simple one consists in taking $a_\ell=\al^{\ell(\ell-1)/2}$ for $\ell=0,1,...,2t-1$, leading to
\begin{equation}\label{unisol}
a_m=\al^{m(m-1)/2},\qquad (m\in \Z),
\end{equation} 
provided $\al$ satisfies
\begin{equation}\label{alphaeq} \al^{t^2} =\al^{r^2}+\al^{s^2} .
\end{equation}
It is easy to see that this equation always admits a unique positive solution such that $\al>1$, which we pick from now on. As an example, taking $r=s=0$ and $t=1$ leads to the "flat"
initial data along two parallel planes $k=0$ and $k=1$, leading to the Aztec diamond domino tiling solution $T_{i,j,k}=2^{k(k-1)/2}$ with $\al=2$.
By a slight abuse of language we shall call the solution \eqref{unisol} the uniform solution of the $T$-system with $(r,s,t)$-slanted plane initial data.

\subsection{Density}\label{density definition}

\subsubsection{Expectation values}\label{expectation defintition}

In Sect. \ref{pinecone section}, we have interpreted the solution $T_{i,j,k}$ of the $T$-system as the partition function of some suitable $(r,s,t)$-dimer model 
with local Boltzmann weights expressed in terms of the initial data. To gain access to statistical properties of the dimer model, such
as the average number of dimers occupying the edges adjacent to a given face, we may use the dependence of $T$ on the initial data 
as follows. Pick a point $(i_0,j_0,k_0=k_{i_0,j_0})$ belonging to one of the initial data planes $(P_{ri_0+sj_0+tk_0})$
with $0\leq r i_0+s j_0+t k_0<2t$). Assume it corresponds in the dimer graph to the center of a $2v$-valent face. As the local contribution for this face to the partition function 
is $(t_{i_0,j_0})^{v-1-N_{i_0,j_0}(\cD)}$, we may write
\begin{equation}\label{averageD}
\rho^{(i_0,j_0,k_0)}_{i,j,k}:=\frac{1}{T_{i,j,k}}\, t_{i_0,j_0} \partial_{t_{i_0,j_0}} (T_{i,j,k}) =\langle v-1- N_{i_0,j_0}(\cD) \rangle_{i,j,k} 
\end{equation}
where $\langle f \rangle_{i,j,k}$ stands for the statistical average of the function $f$ over the dimer configurations $\cD$ for the $(i,j,k)$ 
dimer model, and where $k_0=k_{i_0,j_0}$ indicates the time variable along the initial data surface. 
We refer to the function $\rho$ as the (local) density of dimers at position $(i_0,j_0,k_0)$ in the $(i,j,k)$ dimer model.
\begin{example}
	\label{T004_example}
	We compute explicitly all values of $\rho_{0,0,4}$ at various sources $(i_0,j_0)$ with uniform initial data \eqref{alphaeq}. 
	\begin{equation}
		\begin{array}{cccccc}
		& & &\vdots& & \\
		&\rho_{0,0,4}^{(-3,5)}&\rho_{0,0,4}^{(-2,5)}&\cdots&\rho_{0,0,4}^{(5,5)}&\\
		&\rho_{0,0,4}^{(-3,4)}&\rho_{0,0,4}^{(-2,4)}&\cdots&\rho_{0,0,4}^{(5,4)}&\\
		\cdots&\vdots&\ddots&\ddots&\vdots&\cdots\\
		&\rho_{0,0,4}^{(-3,-3)}&\rho_{0,0,4}^{(-2,-3)}&\cdots&\rho_{0,0,4}^{(5,-3)}&\\
		& & &\vdots& & \\
\end{array}=\begin{array}{ccccccccc}
 0 & 0 & 0 & 0 & 0 & 0 & 0 & 0 & 0 \\
 0 & 0 & 0 & \frac{1}{16} & 0 & 0 & 0 & 0 & 0 \\
 0 & 0 & \frac{1}{16} & -\frac{1}{8} & \frac{1}{4} & 0 & 0 & 0 & 0 \\
 0 & 0 & \frac{3}{8} & 0 & -\frac{3}{8} & \frac{3}{8} & 0 & 0 & 0 \\
 0 & \frac{1}{4} & -\frac{1}{2} & \frac{1}{8} & -\frac{1}{8} & -\frac{3}{8} & \frac{1}{4} & 0 & 0 \\
 0 & \frac{1}{4} & -\frac{1}{4} & \textcolor{red}{0} & \frac{1}{8} & 0 & -\frac{1}{8} & \frac{1}{16} & 0 \\
 0 & 0 & \frac{1}{2} & -\frac{1}{4} & -\frac{1}{2} & \frac{3}{8} & \frac{1}{16} & 0 & 0 \\
 0 & 0 & 0 & \frac{1}{4} & \frac{1}{4} & 0 & 0 & 0 & 0 \\
 0 & 0 & 0 & 0 & 0 & 0 & 0 & 0 & 0 \\
\end{array}
	\end{equation}
	where the $0$'s at the boundary extend to infinity as these initial data points do not contribute to the density $\rho_{0,0,4}$. The density profile is shown in Figure \ref{density_profile_113_004}
	\begin{figure}
		\centering
		\includegraphics[scale=0.5]{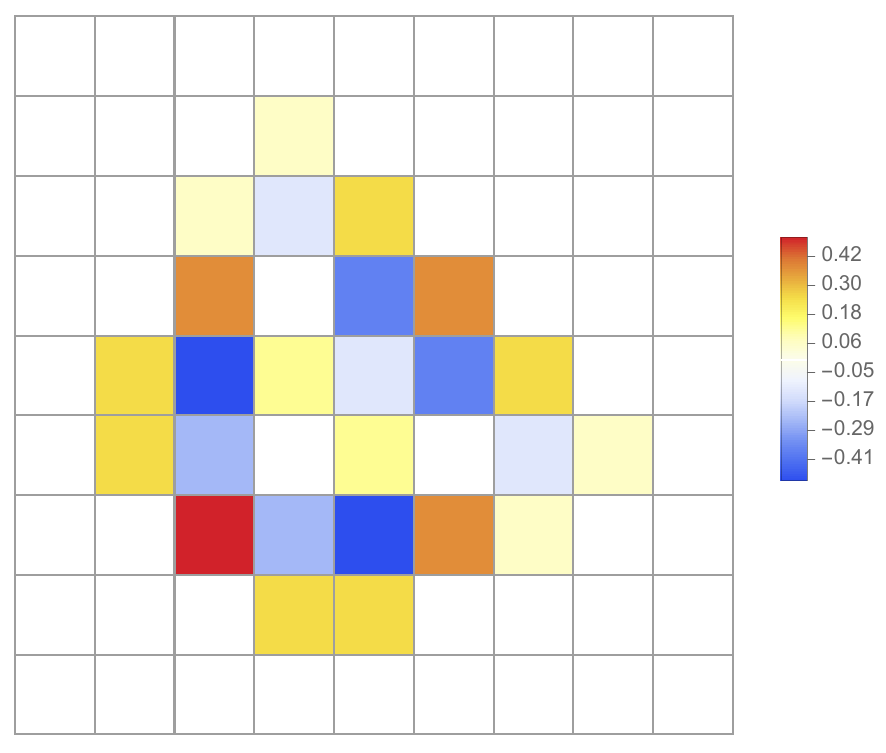}
		\caption{Density profile for $\rho_{0,0,4}^{(i_0,j_0)}$ where $i_0,j_0=-3, \cdots,5$}
		\label{density_profile_113_004}
	\end{figure}
\end{example}
An interesting property of this density is that it is an {\it order parameter} for the crystalline/liquid phases of the model, namely $\rho^{(i_0,j_0,k_0)}_{i,j,k}$
vanishes identically in the crystal phase, while it fluctuates and becomes non-zero in the liquid regions. Indeed, as we shall see below, 
the crystalline phase is characterized by
the presence of exactly $v-1$ dimers around each $2v$-valent face (1 for squares, 2 for hexagons),
leading to a vanishing local density by \eqref{averageD}.

Another property is translation invariance, namely that
\begin{equation}
\label{transinv}
\rho^{(i_0,j_0,k_0)+(x,y,z)}_{i+x,j+y,k+z}=\rho^{(i_0,j_0,k_0)}_{i,j,k}\Big\vert_{t_{a,b}\to t_{a+x,b+y}}, \quad x,y,z\in \Z,\ x+y+z=0\,  {\rm mod}\,  2,
\end{equation}
for all translations by $(x,y,z)$ that leave the initial data surface $\bk$ invariant. In the case of $(r,s,t)$ parallel initial data planes, only translations
such that $rx+sy+tz=0$ are allowed.
The latter property is key to allow us to use the explicit value of $\rho^{(i_0,j_0,k_0)}_{i,j,k}$, for varying $(i,j)$ and fixed $(i_0,j_0,k_0=k_{i_0,j_0})$ to browse 
through the local densities of the dimer model. Indeed, instead of interpreting this quantity as a local density of the $(i,j,k)$ dimer model, we may use translational invariance \eqref{transinv} to reinterpret it as the 
local density at some varying point $\sim(i_0-i,j_0-j)$ in a dimer model whose graph is centered at a fixed point $\sim(0,0)$ close to the origin.

The $T$-system relation allows to derive a linear recursion relation for $\rho$, by simply differentiating w.r.t. $t_{i_0,j_0,k_0}$:
\begin{equation}\label{recurho}
\rho^{(i_0,j_0,k_0)}_{i,j,k+1}+\rho^{(i_0,j_0,k_0)}_{i,j,k-1}= L_{i,j,k}\, (\rho^{(i_0,j_0,k_0)}_{i+1,j,k}+\rho^{(i_0,j_0,k_0)}_{i-1,j,k})+R_{i,j,k}\, (\rho^{(i_0,j_0,k_0)}_{i,j+1,k}+\rho^{(i_0,j_0,k_0)}_{i,j-1,k}), 
\end{equation}
where
\begin{equation}\label{defLR}
L_{i,j,k}:=\frac{T_{i+1,j,k}\, T_{i-1,j,k}}{T_{i,j,k+1}T_{i,j,k-1}}, \qquad R_{i,j,k}=\frac{T_{i,j+1,k}\, T_{i,j-1,k}}{T_{i,j,k+1}T_{i,j,k-1}}=1-L_{i,j,k} .
\end{equation}
$\rho$ is further determined by the initial conditions $\rho^{(i_0,j_0,k_0)}_{i,j,k} =\delta_{i,i_0}\delta_{j,j_0}\delta_{k,k_0}$ along the initial data surface $\bk$. 
%planes $(P_\ell)$, $\ell=0,1,...,2t-1$. 

The density $\rho$ can be explicitly computed whenever the solution $T_{i,j,k}$ of the $T$-system is explicit. This is done in the next sections for $(r,s,t)$-slanted initial data planes planes $(P_\ell)$, $\ell=0,1,...,2t-1$.

\subsubsection{The density of the uniform case}\label{density in the uniform case}

In the uniform case, we have the solution $T_{i,j,k}=\al^{m(m-1)/2}$ where $m=ri+sj+tk$, leading to the coefficients
$$L_{i,j,k}=\al^{r^2-t^2}, \qquad R_{i,j,k}=\al^{s^2-t^2}, $$
independent of $i,j,k$, while $\al>1$ is the solution of \eqref{alphaeq}.
Let us define the function $\mu(i,j,k)=ri+sj+tk$. It will be convenient to gather solutions of \eqref{recurho} into generating functions
$$\rho^{(i_0,j_0,k_0)}(x,y,z):= \sum_{i,j,k\in \Z\atop \mu(i,j,k)\geq 0} \rho_{i,j,k}^{(i_0,j_0,k_0)} \, x^i\, y^j\, z^k .$$
As a first example, taking $(i_0,j_0,k_0)=(0,0,0)$,
and using the recursion relation \eqref{recurho}, multiplying by $x^iy^jz^k$ and summing over $i,j,k \in \Z$ gives
$$1-\rho^{(0,0,0)}(x,y,z)=\frac{z^2}{1+z^2-z\al^{r^2-t^2}\Big(x+\frac{1}{x}\Big)-z\al^{s^2-t^2}\Big(y+\frac{1}{y}\Big)} ,$$
easily derived by noting that $\rho_{0,0,2}=-1$. For later use, we define
\begin{equation}\label{dfunc}
D_{r,s,t}(x,y,z):=1+z^2-z\al^{r^2-t^2}\Big(x+\frac{1}{x}\Big)-z\al^{s^2-t^2}\Big(y+\frac{1}{y}\Big) .
\end{equation}
This denominator will govern the arctic phenomenon for the pinecones.

More generally, we define the refined densities for $m=0,1,...,2t-1$:
$$\rho^{(i_0,j_0,k_0)}_m(x,y,z):=\sum_{i,j\in \Z,\,k\in \Z_+\atop \mu(i,j,k)\geq 0,\ \mu(i,j,k)=m \, [2t]} \rho_{i,j,k}^{(i_0,j_0,k_0)} \, x^i\, y^j\, z^k .$$
All these functions can be obtained from the density $\rho^{(i_0,j_0,k_0)}$ via the following:

\begin{lemma}\label{reflem}
Setting $\omega=e^{i\frac{\pi}{t}}$, we have the identity
$$\rho^{(i_0,j_0,k_0)}_m(x,y,z)=\frac{1}{2t} \sum_{\ell=0}^{2t-1} \omega^{-m\ell} \, \rho^{(i_0,j_0,k_0)}(x\omega^{r\ell},y\omega^{s\ell},z\omega^{t\ell})$$
\end{lemma}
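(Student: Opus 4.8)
The plan is to recognize Lemma \ref{reflem} as a standard ``root-of-unity filter'' (discrete Fourier) identity, applied to the grading of monomials $x^iy^jz^k$ by the linear form $\mu(i,j,k)=ri+sj+tk$ modulo $2t$. The point is that substituting $(x,y,z)\mapsto(x\omega^{r\ell},y\omega^{s\ell},z\omega^{t\ell})$ with $\omega=e^{i\pi/t}$ a primitive $2t$-th root of unity multiplies each monomial $x^iy^jz^k$ precisely by $\omega^{\mu(i,j,k)\ell}$, so that averaging against $\omega^{-m\ell}$ over $\ell=0,\dots,2t-1$ projects onto the monomials with $\mu\equiv m\pmod{2t}$.

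Concretely, I would first write out
$$\rho^{(i_0,j_0,k_0)}(x\omega^{r\ell},y\omega^{s\ell},z\omega^{t\ell})=\sum_{i,j,k\in\Z,\ \mu(i,j,k)\geq 0}\rho^{(i_0,j_0,k_0)}_{i,j,k}\,x^iy^jz^k\,\omega^{(ri+sj+tk)\ell},$$
using that $\omega^{r\ell i}\omega^{s\ell j}\omega^{t\ell k}=\omega^{\mu(i,j,k)\ell}$. Then I would multiply by $\omega^{-m\ell}$, sum over $\ell$ from $0$ to $2t-1$, interchange the (absolutely convergent, for $|z|$ small) double sum, and invoke the orthogonality relation $\frac{1}{2t}\sum_{\ell=0}^{2t-1}\omega^{(\mu-m)\ell}=\delta_{\mu\equiv m\,[2t]}$, valid because $\omega$ has exact order $2t$. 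This collapses the sum to exactly the terms with $\mu(i,j,k)=m\pmod{2t}$ and $\mu(i,j,k)\geq 0$, which is the definition of $\rho^{(i_0,j_0,k_0)}_m(x,y,z)$ (noting $\mu(i,j,k)\geq 0$ together with $\mu\equiv m\geq 0$ forces $k\in\Z_+$ in the regime of interest, consistent with the stated summation range).

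There is essentially no obstacle here — the only thing to be mildly careful about is the interchange of summation, which is justified by restricting to the region of convergence of the generating function (small $|z|$, $x,y$ on a suitable torus) where the triple sum converges absolutely; the substitution $z\mapsto z\omega^{t\ell}$ only rotates $z$ on its circle and hence preserves convergence. One should also remark that $\omega^{t\ell}=e^{i\pi\ell}=(-1)^\ell$, so the substitution on the $z$ variable is just a sign, which is harmless. Thus the proof is a one-line application of the finite Fourier inversion formula, and I would present it as such rather than belaboring the convergence bookkeeping.
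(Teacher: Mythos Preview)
Your proof is correct and follows essentially the same route as the paper: both expand the generating function, use that the substitution multiplies $x^iy^jz^k$ by $\omega^{\mu(i,j,k)\ell}$, and then apply the orthogonality relation $\frac{1}{2t}\sum_{\ell=0}^{2t-1}\omega^{\ell(\mu-m)}=\delta_{\mu\equiv m\,[2t]}$. The paper treats the identity purely formally and does not discuss convergence, so your remarks on absolute convergence are a harmless addition rather than a genuine difference in approach.
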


\begin{proof} We compute 
\begin{align*}
	\frac{1}{2t} \sum_{\ell=0}^{2t-1} \omega^{-m\ell} \, \rho^{(i_0,j_0,k_0)}(x\omega^{r\ell},y\omega^{s\ell},z\omega^{t\ell})&=\frac{1}{2t} \sum_{\ell=0}^{2t-1} \omega^{-m\ell} \left( \sum_{i,j,k}\rho_{i,j,k}^{(i_0,j_0,k_0)}x^iy^jz^k\omega^{ri\ell+sj\ell+tk\ell}\right)\\
	&=\frac{1}{2t} \sum_{\ell=0}^{2t-1} \left( \sum_{i,j,k}\rho_{i,j,k}^{(i_0,j_0,k_0)}x^iy^jz^k\omega^{\ell(ri+sj+tk-m)}\right)\\
	&=\rho^{(i_0,j_0,k_0)}_m(x,y,z)
\end{align*}
where in the last line we have used the identity $\delta_{x,m\, [2t]}=\frac{1}{2t}\sum_{\ell=0}^{2t-1} \omega^{\ell(x-m)}$
\end{proof}

Note that if $r,s,t$ are all odd integers, then $\mu(i,j,k)$ only takes even integer values, due to the condition $i+j+k=0\, [2]$, hence only 
even $m$'s contribute to the refined densities. 
This simplifies the study of solutions to \eqref{recurho}, which we postpone to the end of this section.

Assume $r,s,t$ are not all odd.
In this case, we may find a triple of integers $u,v,w\in \Z^2$ such that $r u+s v+t w=1$ and $u+v+w=0\, [2]$. Then $\mu(mu,mv,mw)=m$. 
Assume that $\mu(i,j,k)\geq 0$ and $\mu(i,j,k)=m$ $[2t]$.
Consider the integer $k_{i,j}$ such that $\mu(i,j,k_{i,j})=m$ (and therefore $k-k_{i,j}\in 2\Z$), so that we have
$\mu(mu-i,mv-j,mw-k_{i,j})=0$.
%and for any $(i,j,k)$ such that $\mu(i,j,k)=m\, [2t]$, we may write $\mu(i-m\al,j-m\beta,\epsilon-m\gamma)=0 \, [2t] $ for $\epsilon= k\, [2]$.
We may use the translational invariance \eqref{transinv} for the translation vector $(mu-i,mv-j,mw-k_{i,j})$ to rewrite
$$\rho_{i,j,k}^{(i_0,j_0,k_0)}=\rho_{mu,mv,mw+k-k_{i,j}}^{(mu+i_0-i,mv+j_0-j,mw+k_0-k_{i,j})} ,$$
where the values of the initial data $t_{a,b}=t_{a+mu,b+mv}$ are unchanged, as the translation is parallel to the initial data planes.
This equation allows us to  re-interpret the generating function $\rho^{(i_0,j_0,k_0)}_m(x,y,z)$ as that of the local dimer densities of the dimer model for
%$T_{m\al,m\beta,m\gamma+k-\epsilon}$. 
$T_{mu,mv,mw+k-k_{i,j}}$.
Here $k$ governs the size of the dimer graph and the coordinates $i,j$ allow to explore its faces at positions
%$(m\al+i_0-i,m\beta+j_0-j)$. 
$(mu+i_0-i,mv+j_0-j)$. 

More precisely, we may rewrite the generating function:
\begin{eqnarray*}
\rho^{(i_0,j_0,k_0)}_m(x,y,z) &=&\sum_{i,j,k\in \Z\atop \mu(i,j,k)\geq0,\ \mu(i,j,k)=m\, [2t]} \rho_{mu,mv,mw+k-k_{i,j}}^{(mu+i_0-i,mv+j_0-j,mw+k_0-k_{i,j})}\, x^i y^j z^k\\
&=&  x^{mu+i_0}\, y^{mv+j_0}\,\sum_{i',j'\in \Z,\ k'\in \Z_+ }\rho_{mu,mv,mw+2k'}^{(i',j',k_0')} x^{-i'} \, y^{-j'}\, z^{k_{i,j}+2k'} 
\end{eqnarray*}
where $i'=mu+i_0-i$, $j'=mv+j_0-j$, and $k_0'$ is such that $p=\mu(i_0,j_0,k_0)=\mu(i',j',k_0')$, and where we have expressed $k=k_{i,j}+2k'$, so that the summation is over
integers $k'$ such that $m+2t k'\geq 0$, i.e. $k'\geq 0$ as $0\leq m<2t$.
Using finally $k_{i,j}=(m-ri-sj)/t=(m-mru-msv-ri_0-sj_0+ri'+sj')/t$, we arrive at
$$\rho^{(i_0,j_0,k_0)}_m(x,y,z)=x^{mu+i_0}\, y^{mv+j_0}\,z^{mw+k_0-p/t} \,\sum_{i',j'\in \Z,\ k'\in \Z_+}\rho_{mu,mv,mw+2k'}^{(i',j',k_0')} (z^{r/t}x^{-1})^{i'} \, (z^{s/t}y^{-1})^{j'}\, z^{2k'}$$
This implies that the generating function $\tilde \rho_{p,m} (x, y, z):=\sum_{i',j',k'}  \rho_{mu,mv,mw+2k'}^{(i',j',k_0')} x^{i'}y^{j'}z^{2k'}$
is expressed as
\begin{equation}
\label{goodrho}
\tilde \rho_{p,m} (x, y, z) =x^{mu+i_0}y^{mv+j_0} z^{-m/t} \, \rho^{(i_0,j_0,k_0)}_m(z^{r/t}x^{-1},z^{s/t}y^{-1},z) .
\end{equation}

This generating function however only explores points with 
fixed value of  $\mu(i_0,j_0,k_0)=p$,
and we need to consider values of $(i_0,j_0,k_0)$
pertaining to the different planes $(P_0),(P_1)$,...,$(P_{2t-1})$ to explore all the faces of the dimer graph. To this end, we must compute the
generating function $\rho^{(i_0,j_0,k_0)}(x,y,z)$ for all values $\mu(i_0,j_0,k_0)=p\in [0,2t-1]$. We have the following:\hfill\newline
\begin{thm}\label{rhopthm}
For all triples $(i_0,j_0,k_0)$ such that $\mu(i_0,j_0,k_0)=p\in [0,2t-1]$, we have $\rho^{(i_0,j_0,k_0)}(x,y,z)=x^{i_0}y^{j_0}z^{k_0}\, \rho_p(x,y,z)$,
where for $r\leq s<t$:
$$\rho_p(x,y,z)=\frac{1}{D(x,y,z)}\left\{ \begin{matrix} 
1-z\al^{r^2-t^2}\Big(x+\frac{1}{x}\Big)-z\al^{s^2-t^2}\Big(y+\frac{1}{y}\Big) & (0\leq p<t-s) \\
1-z\al^{r^2-t^2}\Big(x+\frac{1}{x}\Big)-z\al^{s^2-t^2}\Big(\frac{1}{y}\Big) & (t-s\leq p<t-r) \\
1-z\al^{r^2-t^2}\Big(\frac{1}{x}\Big)-z\al^{s^2-t^2}\Big(\frac{1}{y}\Big) & (t-r\leq p<t+r) \\
1-z\al^{r^2-t^2}\Big(\frac{1}{x}\Big) & (t+r\leq p<t+s) \\
1 & (t+s\leq p<2t)
\end{matrix} \right. ,$$
and for $s\leq r<t$:
$$\rho_p(x,y,z)=\frac{1}{D(x,y,z)}\left\{ \begin{matrix} 
1-z\al^{r^2-t^2}\Big(x+\frac{1}{x}\Big)-z\al^{s^2-t^2}\Big(y+\frac{1}{y}\Big) & (0\leq p<t-r) \\
1-z\al^{r^2-t^2}\Big(\frac{1}{x}\Big)-z\al^{s^2-t^2}\Big(y+\frac{1}{y}\Big) & (t-r\leq p<t-s) \\
1-z\al^{r^2-t^2}\Big(\frac{1}{x}\Big)-z\al^{s^2-t^2}\Big(\frac{1}{y}\Big) & (t-s\leq p<t+s) \\
1-z\al^{r^s-t^2}\Big(\frac{1}{y}\Big) & (t+s\leq p<t+r) \\
1 & (t+r\leq p<2t)
\end{matrix} \right. ,$$
with $D(x,y,z)$ as in \eqref{dfunc}.
\end{thm}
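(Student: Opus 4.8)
The plan is to set up the generating function identity for $\rho^{(i_0,j_0,k_0)}(x,y,z)$ directly from the linear recursion \eqref{recurho}, and then break into cases according to the position $p=\mu(i_0,j_0,k_0)$ of the source point relative to the $2t$ initial data planes. First I would establish the translation reduction: by the translational invariance \eqref{transinv} along the initial data planes, it suffices to treat one representative source in each plane $(P_p)$, and the prefactor $x^{i_0}y^{j_0}z^{k_0}$ is exactly the book-keeping shift between a general source and the chosen representative, so we may assume $(i_0,j_0,k_0)$ is a conveniently located lattice point with $\mu=p$. This reduces the claim to computing $\rho_p(x,y,z)$, the generating function of $\rho_{i,j,k}$ where the source sits on plane $(P_p)$.

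Next I would run the generating-function argument already illustrated in the text for the case $p=0$: multiply \eqref{recurho} by $x^iy^jz^k$, sum over all $(i,j,k)$ with $\mu(i,j,k)\geq 0$, and use that $L_{i,j,k}=\al^{r^2-t^2}$, $R_{i,j,k}=\al^{s^2-t^2}$ are constant in the uniform case. The left side produces $(z+z^{-1})$ acting on the series minus boundary terms; the right side produces $\al^{r^2-t^2}(x+x^{-1})+\al^{s^2-t^2}(y+y^{-1})$ acting on the series, again minus boundary corrections coming from the truncation $\mu\geq 0$ and from the initial condition $\rho_{i,j,k}=\delta_{(i,j,k),(i_0,j_0,k_0)}$ on the stepped surface $\bk$. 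Collecting everything, one gets $D(x,y,z)\,\rho_p(x,y,z)=$ (explicit polynomial numerator $N_p$), where $N_p$ is a sum of monomials recording precisely which neighbors of the source point lie on or below the stepped surface, i.e. which of the four ``arms'' $x,1/x,y,1/y$ survive. The whole content of the theorem is the determination of $N_p$.

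The main obstacle — and the step I would spend the most care on — is the boundary bookkeeping that yields the five-case formula for $N_p$. The key geometric input is the explicit shape of the stepped surface from \eqref{otherk}/\eqref{goodk} together with the restriction Lemma \ref{restrictlem}: near the source on plane $(P_p)$, the surface looks locally like a fixed corner configuration whose type depends only on which of the intervals $[0,t-s)$, $[t-s,t-r)$, $[t-r,t+r)$, $[t+r,t+s)$, $[t+s,2t)$ contains $p$ (and the roles of $r,s$ swap when $s\leq r$). Concretely, a neighbor $(i_0\mp1,j_0,k_0\pm1)$ of the source lies on $(P_{p\pm(t-r)})$ and a neighbor $(i_0,j_0\mp1,k_0\pm1)$ on $(P_{p\pm(t-s)})$; whether such a neighbor sits above, on, or below the stepped surface $\bk$ (equivalently, whether it contributes a term to $N_p$) is governed by comparing $p\pm(t-r)$, $p\pm(t-s)$ against the window $[0,2t)$ and against the parity-shifted height \eqref{goodk}. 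I would make this precise by computing, for each of the five ranges of $p$, the set of those shifted neighbors that lie strictly above $(P_0)$ but whose ``reflection'' across $\bk$ re-enters the summation region, thereby reading off exactly the surviving monomials. As a consistency check I would verify the two extreme cases ($p=0$ recovers the formula $1-z\al^{r^2-t^2}(x+1/x)-z\al^{s^2-t^2}(y+1/y)$ already derived, and $p$ close to $2t$ gives numerator $1$, meaning $\rho_p=1/D$), verify that the $r\leftrightarrow s$ interchange matches the $i\leftrightarrow j$ symmetry noted in Section \ref{general setting}, and cross-check against the explicit numerical table of Example \ref{T004_example} for $(r,s,t)=(1,1,3)$.
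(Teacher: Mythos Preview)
Your strategy is exactly the paper's: multiply \eqref{recurho} by $x^iy^jz^k$, sum over the half-space $\mu\geq 0$, and identify the numerator $N_p$ as the finite collection of boundary terms coming from forward neighbors of the source that still lie on the initial data strip $0\leq\mu<2t$. The paper's proof is simply the terse endpoint of that computation: it lists the five forward points $(i_0,j_0,k_0+2)$, $(i_0\pm1,j_0,k_0+1)$, $(i_0,j_0\pm1,k_0+1)$, records their $\mu$-values $p+2t$, $p+t\pm r$, $p+t\pm s$, and reads off for which $p$ each one has crossed the threshold $\mu=2t$.

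One concrete slip to fix. The neighbors you write down, $(i_0\mp1,j_0,k_0\pm1)$ and $(i_0,j_0\mp1,k_0\pm1)$ with coupled signs, are the \emph{stepped-surface} neighbors from Section~\ref{general setting}, not the ones that generate the boundary terms. In the expression $D\rho_p$ every correction sits at time $k_0+1$ (plus the single point at $k_0+2$): the relevant points are $(i_0\pm1,j_0,k_0+1)\in (P_{p+t\pm r})$ and $(i_0,j_0\pm1,k_0+1)\in (P_{p+t\pm s})$, and the monomial $z x^{\pm1}$ (resp.\ $z y^{\pm1}$) survives in $N_p$ exactly when that point still has $\mu<2t$, i.e.\ when $p<t\mp r$ (resp.\ $p<t\mp s$). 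No ``reflection across $\bk$'' enters; it is purely the comparison of $p+t\pm r$, $p+t\pm s$ with $2t$. Your planned sanity checks at $p=0$ and $p$ near $2t$ would have flagged this, but it is cleaner to track the correct five points from the outset, which also makes the $r\leftrightarrow s$ symmetry immediate.
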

\begin{proof}
Using the recursion relation \eqref{recurho}, we see that the initial data at $(i_0,j_0,k_0)$ propagates to the following points
(with $k-k_0=1,2$):
\begin{itemize}
\item $(i_0,j_0,k_0+2)$ with $\mu=2t+p\geq 2t$ for all $p$; 
\item $(i_0+1,j_0,k_0+1)$ with $\mu=t+r+p\geq 2t$ for $p\geq t-r$;
\item $(i_0,j_0+1,k_0+1)$ with $\mu=t+s+p\geq 2t$ for $p\geq t-s$;
\item $(i_0-1,j_0,k_0+1)$ with $\mu=t-r+p\geq 2t$ for $p\geq t+r$;
\item $(i_0,j_0-1,k_0+1)$ with $\mu=t-s+p\geq 2t$ for $p\geq t+s$;
\end{itemize}
These govern the numerators in the above formulas for the densities $\rho_p$.
\end{proof}

Let us now consider the case when $r,s,t$ are all odd integers. In that case, we may find a triple of integers $(u,v,w)\in \Z^2$ such that $ru+sv+tw=2$,
and $u+v+w=0\, [2]$. As $\mu(i,j,k)$ is even, we write $\mu(i,j,k)=2m$ $[2t]$, and $\mu(i,j,k_{i,j})=2m$
and apply again the same translation invariance. The net result is an even version of \eqref{goodrho}:
\begin{equation}
\label{evengoodrho}
\tilde \rho_{2p,2m} (x, y, z) =x^{mu+i_0}y^{mv+j_0} z^{-2m/t} \, \rho^{(i_0,j_0,k_0)}_{2m}(z^{r/t}x^{-1},z^{s/t}y^{-1},z) .
\end{equation}
where $\mu(mu+i_0-i,mv+j_0-j,mw+k_0-k_{i,j})=\mu(i_0,j_0,k_0)=2p$.
%\begin{equation}
%\frac{1}{2t} \sum_{\ell=0}^{2t-1} x^{p\al}y^{p\beta}z^{p\gamma} \omega^{\ell(p-m)} \rho_{2p}(x\omega^{\ell r},y\omega^{\ell s},z\omega^{\ell t})\Big\vert_{x^i y^jz^k} ,
%\end{equation}
%with $\rho_{2p}$ as in Theorem \ref{rhopthm}, by noting that $\mu(p\al,p\beta,p\gamma)=2p$. Again, to include the local densities on all the faces of the dimer graph for 
%$(m\al,m\beta,m\gamma+k-\epsilon)$, we must consider this quantity for $p=0,1,...,t-1$ and take arbitrary coordinates $i,j\in \Z$, with $i+j=\epsilon\, [2]$.

In all cases, combining the results of Lemma \ref{reflem} and Theorem \ref{rhopthm}, we now have access to the large $i',j',k'$ asymptotics of the local densities  
$\rho_{mu,mv,mw+2k'}^{(i',j',k_0')}$ which are governed by the singularities of their generating function $\tilde \rho_{p,m}(x,y,z)$, namely
the zeroes of their common denominator as $x,y,z$ approach $1$. In all cases the denominator vanishes like
\begin{equation}\label{denom}
\Delta_{r,s,t}(x,y,z):=D_{r,s,t}(z^{r/t}x^{-1},z^{s/t}y^{-1},z) ,\end{equation}
as $x,y,z$ approach $1$. Note that in the scaling limit when $i,j,k\to \infty$ with
$i/k,j/k$ finite, the ``center" $(mu,mv)$ of the dimer domain, which depends on $m\in [0,2t-1]$ scales uniformly to the origin.

\subsection{Arctic phenomenon}
	
\subsubsection{Asymptotics of the density function and arctic curve}\label{asymptotic of the density}

As shown in the previous section the singularities of the density are governed by the zeros of the function $\Delta(x,y,z)$ \eqref{denom}.
We now apply the method of multivariate generating functions by \cite{PW2004} \cite{PW2005} for conical singularities, letting $x \mapsto e^{\epsilon x}$, $y \mapsto e^{\epsilon y}$ and $z \mapsto e^{-\epsilon(ux+ vy)}$ and expanding at leading order in $\epsilon$, we find
$$ \Delta_{r,s,t}(e^{\epsilon x},e^{\epsilon y},e^{-\epsilon (ux+ vy)})=\epsilon^2\, H_{r,s,t}(x,y,z)  +O(\epsilon^4)$$
for some explicit polynomial $H$ of $x,y$ (we drop the subscript $r,s,t$ when there is no ambiguity).
Further imposing that $H(x,y)=\partial_x H(x,y)=\partial_y H(x,y)=0$ has a non-trivial solution in $x,y$ leads to the vanishing condition of the Hessian determinant: 
$\left\vert \begin{matrix}\partial_x^2 H &\partial_x\partial_y H\\  \partial_x \partial_yH&\partial_y^2 H\end{matrix}\right\vert=0$, and finally to the (dual) arctic curve:
\begin{equation}\label{arcticu}
%(1-A)(t^2-A r^2) u^2+A(t^2-(1-A)s^2)v^2-A (1-A)(2r s u v+2r t u+2s t v+t^2)=0
(1-A)\,t^2u^2+A\,t^2v^2-A(1-A)\,(r u+s v+t)^2 =0
\end{equation}
where 
\begin{equation}\label{adef}
A=A_{r,s,t}:=\al^{r^2-t^2},\qquad 1-A=\al^{s^2-t^2}
\end{equation} 
It is easy to show that $t^2-A r^2>0$ and $t^2-(1-A)s^2>0$, while $0<A<1$ generically, so that the curve \eqref{arcticu} is always an ellipse.

Let us also derive the scaling limit of the pinecone domain, centered at the origin. In the original $(i,j,k)$ coordinates, it is the intersection of the pyramid $|x|+|y|=|k-z|$
and of the slanted initial data planes $P_m$: $r x+s y+t z=m$ $m=0,1,...,2t-1$. After rescaling by $k$, setting $u=x/k$, $v=y/k$ and $w=z/k$, we get for $k\to \infty$:
$|u|+|v|=|1-w|$ and $r u +s v +t w =0$. The resulting 4 equations $t(1 \pm u\pm v)+ru+s v=0$ give rise to 4 segments\footnote{The four corresponding segments are the images of the four segments (\ref{boundonew}-\ref{boundtwow}) of the pinecone formulation.}:
\begin{equation}\label{scaling limit}
	\begin{aligned}
 v&=-\frac{t}{t+s}-\frac{t+r}{t+s}\, u , \qquad v=-\frac{t}{s-t}-\frac{t+r}{s-t}\, u \qquad  \left(u\in\left[-\frac{t}{t+r},0\right]\right)\\
 v&=-\frac{t}{t+s}+\frac{t-r}{t+s}\, u, \qquad v=-\frac{t}{s-t}+\frac{t-r}{s-t}\, u \qquad \left(u\in \left[0,\frac{t}{t-r}\right]\right) .
	\end{aligned}
\end{equation}
We summarize the results into the following:
\begin{thm}
The limit shape of typical large size (r,s,t)-pinecone domino tilings associated to the solution of the $T$-system with uniform initial data $t_{i,j}=\al^{m(m-1)/2}$ on each slanted plane $m=ri+sj+tk = 0,1, \cdots, 2t-1$, is the ellipse \eqref{arcticu} 
inscribed in the scaling domain \eqref{scaling limit}. This ``arctic" ellipse separates a liquid phase (center) from four frozen crystalline phases (corners).
\end{thm}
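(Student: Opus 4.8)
The plan is to combine the explicit rational form of the density generating function from Theorem~\ref{rhopthm} with the Analytic Combinatorics in Several Variables (ACSV) machinery of \cite{PW2004,PW2005,pemantle_wilson_2013} for coefficient asymptotics near a conical point of a singular variety, and then to check by hand that the resulting curve is the ellipse \eqref{arcticu} and that it is inscribed in the quadrilateral \eqref{scaling limit}. First I would recall that, by Lemma~\ref{reflem}, Theorem~\ref{rhopthm} and \eqref{goodrho}, for every choice of source plane $p\in[0,2t-1]$ the generating function $\tilde\rho_{p,m}(x,y,z)$ is rational with a denominator that, as $(x,y,z)\to(1,1,1)$, is asymptotic to $\Delta_{r,s,t}(x,y,z)=D_{r,s,t}(z^{r/t}x^{-1},z^{s/t}y^{-1},z)$ of \eqref{denom}. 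Since $\rho$ is an \emph{order parameter} (it vanishes identically in any crystalline phase and fluctuates around a nonzero value in the liquid phase, see the discussion after \eqref{averageD}), the arctic curve is exactly the locus in the rescaled $(u,v)$-plane separating the region where $\rho_{i,j,k}$ decays exponentially from the region where it does not, and this locus is dictated by the local geometry of $\{\Delta_{r,s,t}=0\}$ near $(1,1,1)$.

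Next I would run the ACSV computation. The fractional powers $z^{r/t},z^{s/t}$ in $\Delta_{r,s,t}$ are innocuous: they are precisely what the change of variables used to pass from $\rho^{(i_0,j_0,k_0)}_m$ to $\tilde\rho_{p,m}$ produces, and after that substitution --- equivalently, after passing to the $2t$-fold cover $z\mapsto z\omega^{t\ell}$ of Lemma~\ref{reflem} --- the singular variety $\mathcal{V}=\{\Delta_{r,s,t}=0\}$ is algebraic. One checks that $(1,1,1)\in\mathcal{V}$ and that $\mathcal{V}$ degenerates quadratically there: substituting $x=e^{\epsilon\xi}$, $y=e^{\epsilon\eta}$, $z=e^{-\epsilon(u\xi+v\eta)}$ and expanding gives $\Delta_{r,s,t}=\epsilon^2\,H_{r,s,t}(\xi,\eta;u,v)+O(\epsilon^4)$ with $H_{r,s,t}$ a quadratic form in $(\xi,\eta)$ whose coefficients depend on $(u,v)$ and on $A=\al^{r^2-t^2}$. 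By the ACSV dictionary for cone points, the contribution of $(1,1,1)$ to $\rho_{i,j,k}$ in the direction $(u,v,1)$ changes nature --- from exponentially small to oscillatory --- precisely when $H_{r,s,t}(\cdot\,;u,v)$ ceases to be (semi)definite, i.e. when $\det\mathrm{Hess}_{\xi,\eta}H_{r,s,t}=0$. Carrying out this elementary computation from \eqref{dfunc}, and using $1-A=\al^{s^2-t^2}$ from \eqref{alphaeq}, collapses the Hessian condition to \eqref{arcticu}. One then reads off that \eqref{arcticu} is an ellipse: its quadratic part has the definite signature guaranteed by $t^2-Ar^2>0$, $t^2-(1-A)s^2>0$ and $0<A<1$, inequalities which themselves follow from $t>\max(r,s)$ together with \eqref{alphaeq}.

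It remains to place the ellipse inside the scaling domain. The scaling limit of the pinecone boundary was already derived in \eqref{scaling limit}: dividing the defining conditions $|i|+|j|=|k-z|$ and $ri+sj+tz=m$ (with $m=O(1)$, hence negligible) by $k$ and setting $u=i/k,\ v=j/k$ yields the four line segments $t(1\pm u\pm v)+ru+sv=0$, the images of \eqref{boundonew}--\eqref{boundtwow}. I would then verify tangency directly: substituting each of the four lines into \eqref{arcticu} produces a quadratic in the surviving parameter whose discriminant vanishes identically (again using \eqref{alphaeq}), so the ellipse touches each side exactly once; being a conic tangent to all four sides of the quadrilateral, it is the inscribed ellipse. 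Combining this with the order-parameter interpretation --- $\rho\equiv0$ (frozen) strictly outside the ellipse, $\rho$ fluctuating (liquid) strictly inside --- gives the claimed limit shape: a central liquid region bounded by \eqref{arcticu} and four crystalline corners between the ellipse and the sides \eqref{scaling limit}.

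The main obstacle is the rigorous ACSV input, not the algebra. One must confirm that $(1,1,1)$ is genuinely the contributing (minimal) point of $\mathcal{V}$ for every direction $(u,v)$ in the scaling domain and that the residue/steepest-descent contour can be deformed onto it; this is where one invokes the Laurent positivity of the $T$-system solution, and hence of the relevant density coefficients, exactly as in \cite{DiFrancesco1}, to force the dominant singularity onto the positive real locus. One must also check that the plane-dependent numerators of Theorem~\ref{rhopthm} do not vanish at $(1,1,1)$ in a way that would lower the order of the pole and thereby change the asymptotics, i.e. that the cone-point analysis, and not some degenerate sub-case, really governs $\rho$ uniformly in $p$. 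Once these points are secured, the derivation of \eqref{arcticu} and the tangency check are routine.
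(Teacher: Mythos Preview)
Your proposal is correct and follows essentially the same approach as the paper: both identify the conical singularity of $\Delta_{r,s,t}$ at $(1,1,1)$, expand via $x=e^{\epsilon\xi},\,y=e^{\epsilon\eta},\,z=e^{-\epsilon(u\xi+v\eta)}$ to extract the quadratic form $H_{r,s,t}$, and obtain \eqref{arcticu} from the vanishing-Hessian condition, with the scaling domain \eqref{scaling limit} derived identically. You are in fact more explicit than the paper on two points it leaves implicit --- the direct discriminant check that the ellipse is tangent to each of the four boundary segments, and the ACSV hypotheses (minimality of $(1,1,1)$ and non-degeneracy of the numerators of Theorem~\ref{rhopthm}) that the paper invokes by appeal to \cite{PW2004,PW2005,DiFrancesco1} without spelling out.
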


\subsubsection{Examples}
\begin{figure}[H]
\begin{center}
\begin{minipage}{0.33\textwidth}
        \centering
        \includegraphics[scale=.5]{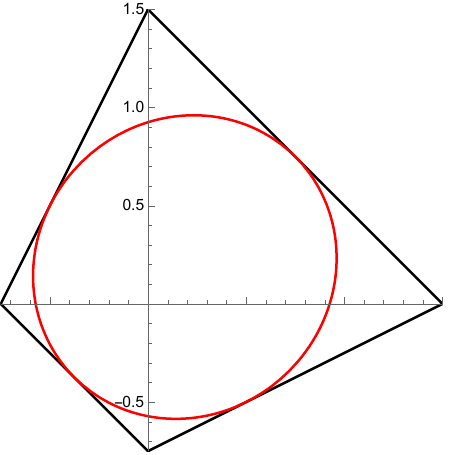}  % second figure itself
        %\caption{second figure}
    \end{minipage}
\begin{minipage}{0.33\textwidth}
        \centering
        \includegraphics[scale=.5]{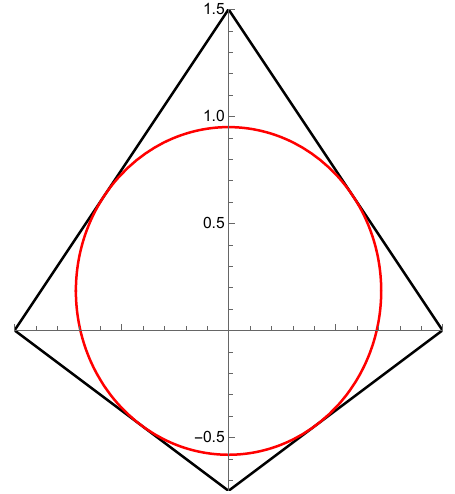} % first figure itself
        %\caption{first figure}
    \end{minipage}\hfill
    \begin{minipage}{0.33\textwidth}
        \centering
        \includegraphics[scale=.5]{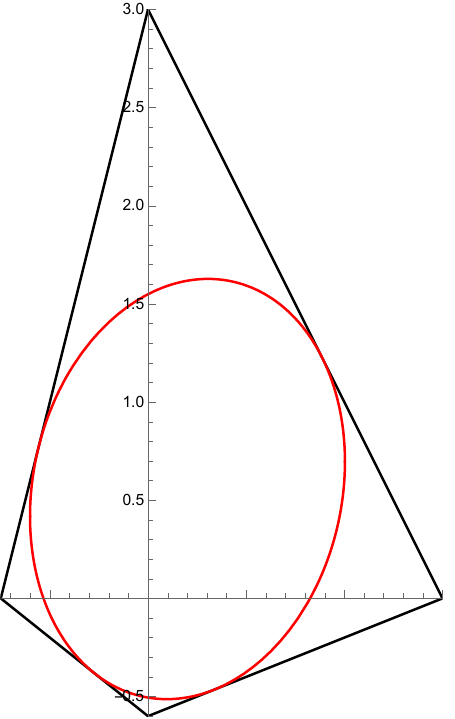}  % second figure itself   
        %\caption{second figure}
    \end{minipage}
\end{center}
\caption{\small Arctic curves for $r,s,t$-pinecones, together with the scaled domain. Left: $(r,s,t)=(1,1,3)$, center: $(r,s,t)=(0,1,3)$, right $(r,s,t)=(1,2,3)$ .}
\label{fig:arcticex}
\end{figure}

\begin{example}{\bf Case $t=1$, $r=s=0$.} In this case, we have $\al=2$ and $A=1-A=\frac{1}{2}$. This is the case of ``flat" initial data planes $k_{i,j}={\rm Mod}(i+j,2)\in \{0,1\}$, for which the pinecone dimer configurations reduce to domino tilings of the Aztec diamond
of size $k$. The corresponding arctic curve is the celebrated arctic circle $u^2+v^2-\frac{1}{2}=0$, inscribed in the rescaled domain, the square $|u|+|v|=1$.
\end{example}

\begin{example}{\bf Cases $r=s<t$.}\label{examplers} In this case, we have $\al=2^{\frac{1}{t^2-r^2}}$ and $A=1-A=\frac{1}{2}$ again. The arctic ellipse takes the simple form:
$$ t^2 (u-v)^2 +(t^2-r^2)\left(u+v-\frac{r t}{t^2-r^2}\right)^2 =\frac{t^4}{t^2-r^2} .$$
This curve is displayed in Fig. \ref{fig:arcticex} (left) for $r=s=1$ and $t=3$.
\end{example}

\begin{example}{\bf Cases $r=0<s<t$.}  We have $\al^{t^2}=1+\al^{s^2}$, $A=\al^{-t^2}$, and the arctic ellipse reads:
$$ (1-A)t^2 u^2 +A(t^2-(1-A)s^2)\left( v-\frac{(1-A)s t}{t^2-(1-A)s^2}\right)^2= \frac{A(1-A)t^4}{t^2-(1-A)s^2} . $$
This curve is displayed in Fig. \ref{fig:arcticex} (center) for $r=0$, $s=1$, and $t=3$.\newline
%\color{red} [xxx change this, and explain the figure] \color{black}
\end{example}

Recall that the local densities $\rho_{mu,mv,mw+2k}^{(x,y,k_0')}$ measure of the expectation value, within the statistical ensemble of dimer configurations of the pinecone domain of size $k$, of the observable  $v_{x,y}/2-1-\mathcal N_{x,y}(\mathcal D)$, where $\mathcal N_{x,y}(\mathcal D)$ is the number of dimers occupying the edges around the face $(x,y)$ of the domain $\mathcal D$, and $v_{x,y}$ is the valency of the face $(x,y)$. 
The 4 corners of the scaled domain have vanishing density, indicating a "frozen configuration", where each face is occupied by 1 (resp. 2) edge(s) for square faces (reps. hexagonal faces), resulting in ${v}_{x,y}/2-1-\mathcal N_{x,y}=0$ in both cases. 

For completeness, we have computed numerically the value of the local density $\rho_{mu,mv,mw+2k}^{(i,j,k_0')}$, 
which is the
coefficient of $x^iy^jz^k$ of the density generating function $\tilde \rho_{p,m}(x,y,z)$ of eqs. \eqref{goodrho} and \eqref{evengoodrho}, in domains centered at the origin, for large $k$, as functions of $(i/k,j/k)$.  
The corresponding plots for the three cases $(r,s,t)=(1,2,3),(1,1,3),(1,0,3)$ (with darker shades for larger values of $\rho_{i,j,k}$) are represented in Fig.~\ref{mathematica_simulation}, for large values of $k$. 
The domain of non-zero values of $\rho$ show the arctic ellipses, outside of which $\rho\to 0$ (white zone delimited
by the segments \eqref{scaling limit}).
%We display in Fig.~\ref{mathematica_simulation} the density profile for the value of $\rho$ up to extracting the coefficient of $\rho(x,y,z)$
\begin{figure}
\begin{center}
\begin{subfigure}{0.33\textwidth}
        \centering
        \includegraphics[width=4.cm]{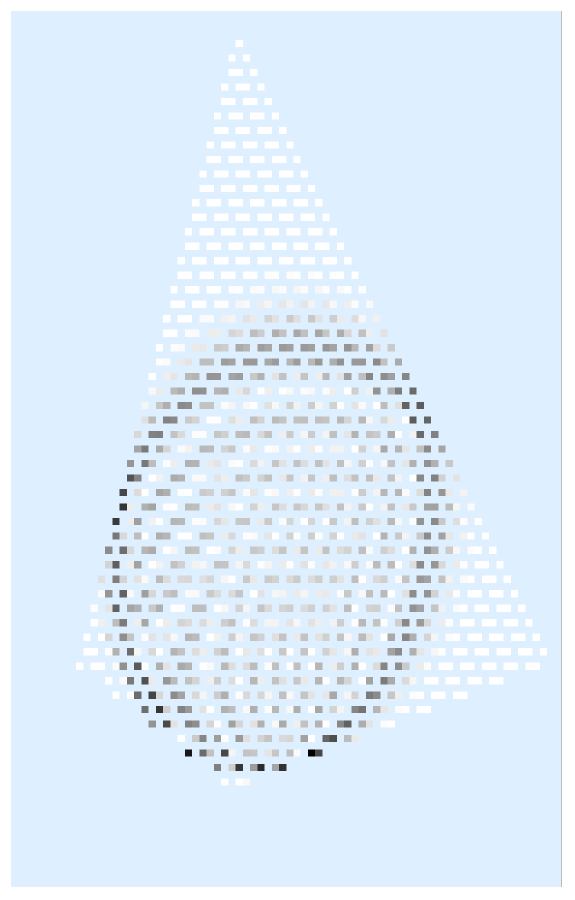}  % second figure itself
        \caption{\tiny $(r,s,t)=(1,2,3), k=90$}
    \end{subfigure}
\begin{subfigure}{0.33\textwidth}
        \centering
        \includegraphics[width=4.cm]{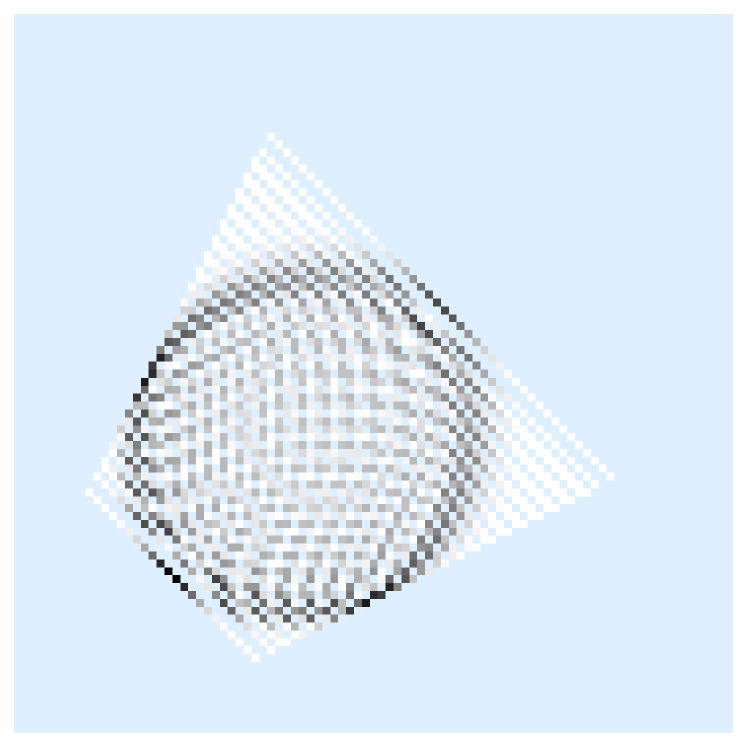} % first figure itself
        \caption{\tiny $(r,s,t)=(1,1,3), k=90$}
    \end{subfigure}\hfill
    \begin{subfigure}{0.33\textwidth}
        \centering
        \includegraphics[width=4.cm]{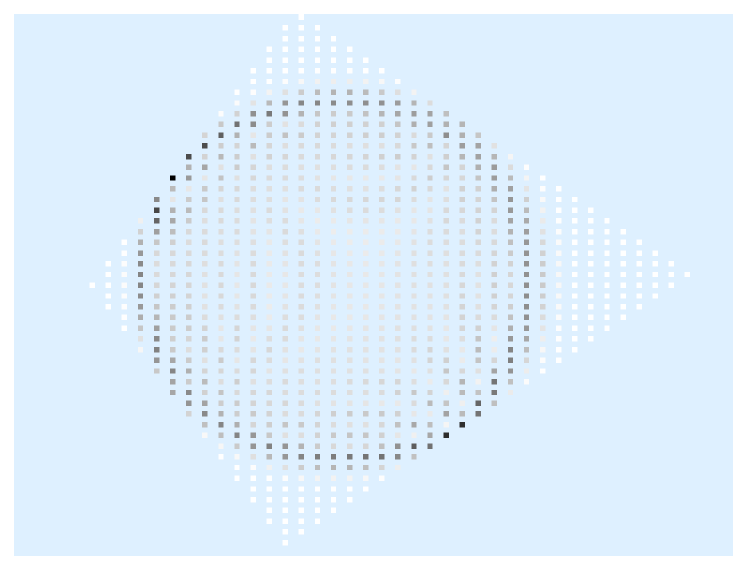}  % second figure itself   
        \caption{\tiny $(r,s,t)=(1,0,3),k=150$}
    \end{subfigure}
\end{center}
\caption{\small Density profile for $r,s,t$-pinecones obtained by extracting the coefficients of $x^iy^jz^{k}$ in 
$\tilde \rho_{p,m}(x,y,z)$ and plotting the result as a function of $i/k,j/k$ for fixed large $k$, $i,j$ in a suitable range around the origin. For better statistics, graphs for a few neighboring values of $k$ are superposed. The lighter color corresponds to the smaller value of $\rho_{i,j,k}$, and the light blue color indicates when  $\rho_{i,j,k}$ is strictly $0$.}
\label{mathematica_simulation}
\end{figure}
%In particular, we see clearly the four ``corner" domains where $\rho\to 0$ (in light blue), delimited
%by the segments \eqref{scaling limit}. 
The corresponding white-colored corners correspond to fundamental crystalline states as mentioned above.
There are four distinct such states, each corresponding to a (N,S,E,W) corner similar to the case of the Aztec Diamond in \cite{DiFrancesco1}, characterized by an occupation number $\mathcal N_{x,y}(\mathcal D)=1$ (resp. $2$) on each square (resp. hexagonal) face $(x,y)$  (see Fig. \ref{explain_frozen} for an illustration). Thus, away from the corners, the dimer model has non-trivial entropy with $\rho\neq 0$, indicating the liquid phase (darker shades) in Fig \ref{mathematica_simulation}.

%The physical explanation for the arctic curve in the uniform initial data case is that the dimer configurations contributing to the partition function $T_{i,j,k}$ tend to be in a fundamental crystalline state in the vicinity of the corners of the scaled domain, separated from the "liquid" region by the 4 line segments in \ref{scaling limit}. There are four distinct such states, each corresponding to a (N,S,E,W) corner similar to the case of the Aztec Diamond in \cite{DiFrancesco1}, characterized by an occupation number $\mathcal N_{x,y}(\mathcal D)=1 \ \text{or}\ 2$ on each face $(x,y)$. Away from the corner, the dimer model has non-trivial entropy. 
% \begin{center}
% 		\tikzfig{tikz/frozen_explain}
% \end{center}
\begin{figure}
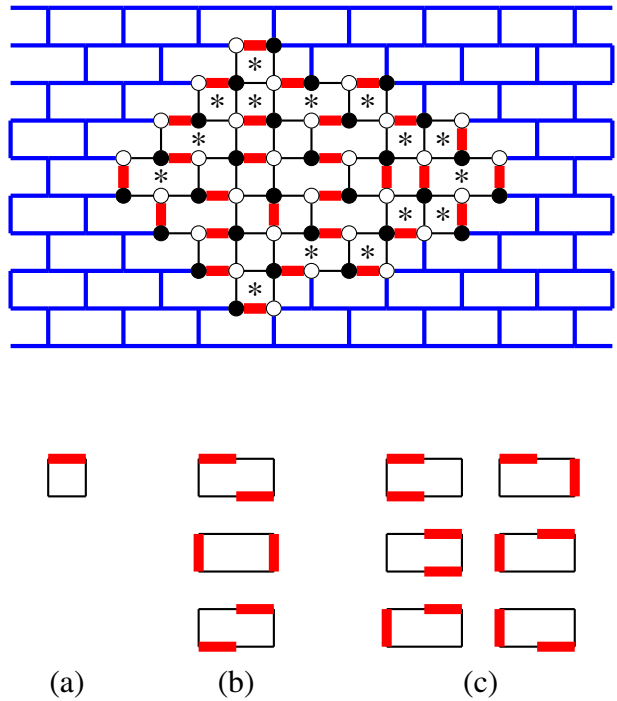

\centering
\tikzfig{figures/explain_frozen}
\caption{The four frozen dimer configurations (up to rotation) in the four corners of the $(r,s,t)=(1,0,3)$-pinecone dual graph. In a larger case, one should see that these frozen facets concentrate at the $4$ corners of the dual graph. The blue regions of brickwall configurations stay frozen outside of the "active" zone }
\label{explain_frozen}
\end{figure}
\begin{remark}\label{delete_crystalline_remark}
We expect the four corners of the quadrangular scaling domain of $(r,s,t)$ pinecones to be in a crystalline phase, where each square face is singly occupied and each hexagonal face is doubly occupied (see Fig. \ref{explain_frozen}, where we also listed the various (a) square (up to rotation)  and (b),(c) hexagonal face configurations).
In general, the $(r,s,t)$ pinecones can be drawn on a square lattice, with missing vertical edges corresponding to hexagons. The crystal phases of the Aztec diamond graph
(with only square faces) are either of the 4 ``brick wall" configurations (2 horizontal and 2 vertical obtained from each other by respectively a horizontal/vertical translation by one unit) depicted up to translation in Fig. \ref{crystal_hex}  (left). Three of these four give rise to  admissible crystal phases of the pinecone drawn on the same lattice (see Fig. \ref{crystal_hex}  (right) for the case (1,1,3), where each row is a succession of a sequence of two squares followed by one horizontal hexagon, and successive rows are shifted by one unit to the right). Indeed the condition of each square being singly occupied and each hexagon being doubly occupied is obviously satisfied. In general, we expect the bulk of the crystal phases to be in analogues of the three abovementioned states, governed by either of the three hexagon configurations of Fig \ref{explain_frozen} (b). Outside of the scaled domain, every faces are hexagonal and oriented in such ways that the contributing weight is alway $0$.
	\begin{figure}[H]
		\centering
		\includegraphics[scale=.25]{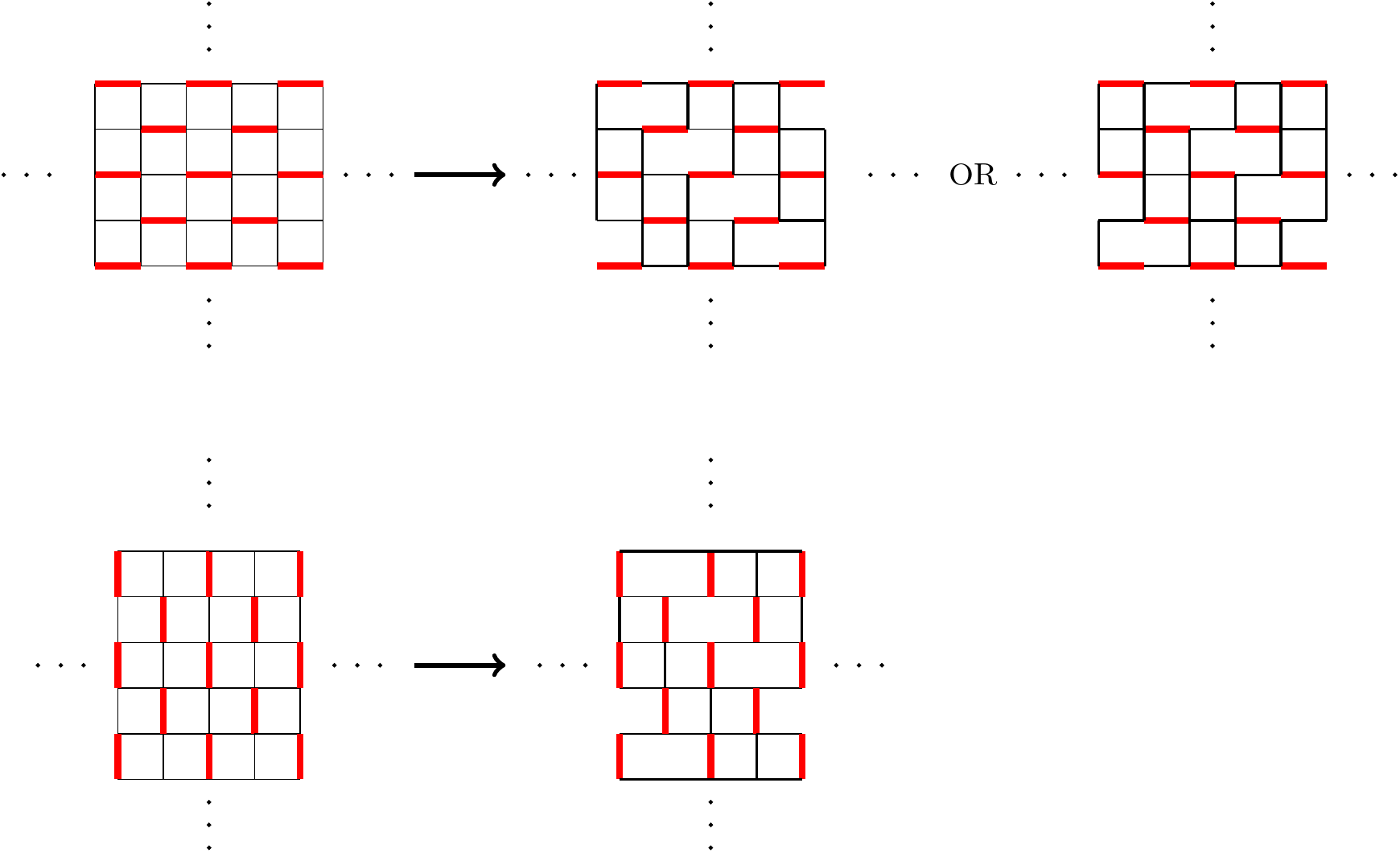}
		\caption{From horizontal/vertical ``brick wall" crystalline phases on the square lattice (left) to that on pinecones (right), illustrated here in the case (1,1,3), where the corresponding stepped surfaces are drawn on a square lattice with missing vertical edges (middle of hexagons). Out of the four possible brick wall crystals on the square lattice, only three are compatible with the hexagon arrangement.}
		\label{crystal_hex} 
	\end{figure}
\end{remark}

\section{The case of slanted $2 \times 2$-periodic Initial Data: solution and periodicities}
In this section, we will focus on a specific initial data on the stack of slanted $(r,s,t)=(r,r,t)$-planes, which has periodicity two in both $i$ and $j$ direction, namely, on each slanted plane $P_m$, $m \in \{0,1, \cdots, 2t-1\}$:
\begin{equation}\label{twobytwo}
t_{i,j}=\al^{m(m-1)/2}\, \times \, \begin{cases}
	a \hskip 0.5cm (i=0,j=0 \ mod \ 2) \\
	b \hskip 0.5cm (i=0,j=1 \ mod \ 2 )\\
	c \hskip 0.5cm (i=1,j=0 \ mod \ 2) \\
	d \hskip 0.5cm (i=1,j=1 \ mod \ 2) \\
\end{cases}
\end{equation}
such that $ri+sj+tk=m$, $i+j+k=0 \ mod \ 2$, and with  $\al=2^{\frac{1}{t^2-r^2}}$ as in Example \ref{examplers}.
This initial data determines entirely the solution of the $T$-system, and its built-in periodicity induces a periodicity property of the quantities $L_{i,j,k}$ and $R_{i,j,k}$
which allows to derive the density function exactly, as we show in the sections below. 

\subsection{The case of $r=s$ and $t$ in $2\Z+1$}
\label{sec:oddodd}

When $r=s$ and $t$ are odd coprime integers, the condition $r(i+j)+t k=\mu$, together with the fact that $i+j+k=0$ mod 2, forces $\mu$ to be an even integer. Indeed,
the quantity $ r(i+j)+t k= r(i+j+k)+(t-r) k $ must be even as both $t-r$ and $i+j+k$ are even. In particular, 
only even planes $P_{2m}$, $m\in [0,t-1]$, contain initial data points, and the solution is defined on even planes as well. 
Writing $r(i+j+k)+(t-r) k=2m$, we see that $k$ is constrained by the relation $\frac{t-r}{2}\, k =m$ mod $r$. As $r$ and $t$ are coprime this is easily solved as
$k=\theta\, m$ mod $r$, where $\theta\in [0,r-1]$ denotes the inverse of $\frac{t-r}{2}$ modulo $r$.

This suggests to apply a change of variables $(i,j,k)\to (i,k,m)$, with $m=r\,\frac{i+j+k)}{2}+\frac{t-r}{2}\, k$, and $k=\theta \, m$ mod $r$, which allows to recover $j=(m-\frac{t-r}{2}\, k)/r$.
Accordingly, we write $T_{i,j,k}=T^{2m}_{i,k}$. For these new variables, the initial data $t'_{i,k}$ is naturally indexed by pairs $i,k\in \Z$, in bijection with the initial data $t_{i,j}$. Indeed, from the discussion of Section 2.1, the stepped surface of initial data is $(i,j,k_{i,j})$ with $k_{i,j}$ as in \eqref{kdef},
and we have the initial data assignments $t_{i,j}=T_{i,j,k_{i,j}}=T_{i,k_{i,j}}^{2m_{i,j}}=t'_{i,k_{i,j}}$ \eqref{init},  where 
$2m_{i,j}={\rm Mod}(r(i+j)+t\,{\rm Mod}(i+j,2),2t)$. In the particular case of $2\times 2$ periodic initial data \eqref{twobytwo}, with $t_{i+2,j}=t_{i,j+2}=t_{i,j}$,
we have the following.

\begin{lemma}\label{periodic lemma}
For every $i,k$, the following periodicity relations hold for all initial data planes $m=0,1,...,t-1$: 
\begin{align*}
T^{2m}_{i+2,k}=T^{2m}_{i,k}\\
T^{2m}_{i,k+2r}=T^{2m}_{i,k}
\end{align*}
\end{lemma}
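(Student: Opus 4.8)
The plan is to reduce the two claimed periodicities $T^{2m}_{i+2,k}=T^{2m}_{i,k}$ and $T^{2m}_{i,k+2r}=T^{2m}_{i,k}$ to a single statement about invariance of the $T$-system solution under the two lattice translations of the initial data, and then invoke uniqueness. First I would recall that by Section~\ref{general setting} the solution $T_{i,j,k}$ of the $T$-system is uniquely determined by the initial assignments $t_{i,j}$ on the $(r,r,t)$-slanted stepped surface $\bk$. The $2\times 2$-periodicity \eqref{twobytwo}, combined with the uniform factor $\al^{m(m-1)/2}$ depending only on the plane index $m=r(i+j)+tk$, means that the initial data are invariant under the two translations $\tau_1:(i,j,k)\mapsto(i+2,j-2,k)$ and $\tau_2:(i,j,k)\mapsto(i+2,j,k-2)$ (and also $(i,j,k)\mapsto (i-2,j+2,k)$, etc.), since each of these shifts $i$ and $j$ by multiples of $2$ while keeping $ri+rj+tk$ — hence the weight $\al^{m(m-1)/2}$ — unchanged; one must check that each such translation maps the stepped surface $\bk$ of \eqref{kdef}–\eqref{goodk} to itself, which is immediate because $k_{i,j}$ in \eqref{goodk} depends on $i,j$ only through $ri+rj+t(i+j)=(r+t)(i+j)$, so $k_{i+2,j-2}=k_{i,j}$, and $k_{i+2,j}=k_{i,j}-2$ matches the $-2$ shift in $k$ under $\tau_2$.

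Next I would use the fact that the $T$-system \eqref{Tsys} is translation-equivariant: if $T_{i,j,k}$ solves it, so does $T_{\tau(i,j,k)}$ for any lattice translation $\tau$. Since both $T_{i,j,k}$ and $T_{\tau(i,j,k)}$ solve the $T$-system with the \emph{same} admissible initial data on $\bk$ (by the invariance established above), uniqueness of the solution gives $T_{\tau(i,j,k)}=T_{i,j,k}$ for all $(i,j,k)$, for $\tau\in\{\tau_1,\tau_2\}$. It then remains to translate these two relations into the $(i,k,2m)$ coordinates. Recall $T^{2m}_{i,k}=T_{i,j,k}$ where $j=(2m-(t-r)k)/(2r)$ — using $m=r\tfrac{i+j+k}{2}+\tfrac{t-r}{2}k$ solved for $j$ — so fixing $m$ and shifting $i\mapsto i+2$ forces $j\mapsto j-2$ to keep $m$ constant, i.e. corresponds exactly to $\tau_1$; hence $T^{2m}_{i+2,k}=T_{i+2,j-2,k}=T_{i,j,k}=T^{2m}_{i,k}$. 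Similarly, fixing $m$ and shifting $k\mapsto k+2r$ forces $j\mapsto j-(t-r)$; but this is not a single application of $\tau_1$ or $\tau_2$, so here I would instead compose: the translation $(i,j,k)\mapsto(i,j-(t-r),k+2r)$ can be written as a product of copies of $\tau_2^{\pm1}$ and $\tau_1^{\pm1}$ (both of which shift $i$ by $\pm2$, so an appropriate product returns $i$ to itself while producing the needed shifts in $j$ and $k$), and it preserves $\bk$ and the weights for the same reason; uniqueness then yields $T^{2m}_{i,k+2r}=T^{2m}_{i,k}$.

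The main obstacle I anticipate is the bookkeeping in the last step: one must verify that the specific translation needed for the $k\mapsto k+2r$ periodicity genuinely lies in the sublattice of initial-data-preserving translations, i.e. that $(0,-(t-r),2r)$ is an integer combination of the generators of $\{(x,y,z)\in\Z^3:\ rx+ry+tz=0,\ x\equiv 0,\ y\equiv 0\ \mathrm{mod}\ 2\}$ and that it respects the parity constraint $x+y+z\equiv 0\ \mathrm{mod}\ 2$ (here $-(t-r)+2r=t+r$ is even since $r,t$ are both odd, so this is fine). A clean way to organize this is: the weight $\al^{m(m-1)/2}$ and the surface $\bk$ both depend on $(i,j,k)$ only through the pair $(ri+rj+tk,\ i+j \bmod 2)$ together with the residues $i\bmod 2$, $j\bmod 2$; any translation by a vector $(x,y,z)$ with $rx+ry+tz=0$, $x\equiv y\equiv 0\ \mathrm{mod}\ 2$, and $x+y+z\equiv 0\ \mathrm{mod}\ 2$ therefore fixes all of these data, and both $(2,-2,0)$ and $(0,-(t-r),2r)$ are such vectors, so uniqueness applies directly to each without needing to decompose one into the other. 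Everything else is routine: the equivariance of \eqref{Tsys}, the uniqueness theorem from Section~\ref{general setting}, and the change-of-variables dictionary for $T^{2m}_{i,k}$.
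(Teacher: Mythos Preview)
Your overall strategy is sound and would in fact prove more than the lemma asks: the lemma is stated only for the initial data planes $m=0,\dots,t-1$, and the paper simply verifies the two periodicities there by checking directly that the corresponding initial points carry the same weight $\al^{m(2m-1)}\cdot\{a,b,c,d\}$. However, there is a genuine computational error in your treatment of the second relation.

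Solving $m=r\tfrac{i+j+k}{2}+\tfrac{t-r}{2}k$ for $j$ gives $j=\dfrac{2m-(t-r)k}{r}-i-k$, not $j=(2m-(t-r)k)/(2r)$. Hence fixing $i,m$ and sending $k\mapsto k+2r$ forces $j\mapsto j-2t$ (not $j\mapsto j-(t-r)$), so the translation you actually need is $(0,-2t,2r)$. Your claimed vector $(0,-(t-r),2r)$ does \emph{not} lie in the symmetry lattice: $r\cdot 0+r\cdot(-(t-r))+t\cdot 2r=r(r+t)\neq 0$, so it does not preserve the plane index $m$ and the uniqueness argument cannot be applied to it (your arithmetic $-(t-r)+2r=t+r$ is also off; the correct value is $3r-t$). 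With the corrected vector $(0,-2t,2r)$ all three conditions hold --- $r\cdot 0+r(-2t)+t(2r)=0$, both $x$ and $y$ even, and $x+y+z=2(r-t)$ even --- and your scheme then goes through. As a side remark, the earlier claim $k_{i+2,j}=k_{i,j}-2$ is also not valid for all $i,j$ (the relevant floor difference can be $1$ rather than $2$), but your final-paragraph argument does not depend on it.
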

\begin{proof}
The first relation follows from the fact that $(i+2,j,k)$ and $(i,j+2,k)$ belong to the same plane $2m=r(i+j+2)+t k$,
therefore share the same value of $t_{i+2,j}=t_{i,j+2}=t_{i,j}$ and therefore $T^{2m}_{i+2,k}=T^{2m}_{i,k}$.
For the second relation, note that the points $(i+2t,j,k)$ and $(i,j,k+2r)$ belong to the same plane with
$2m=r(i+j)+t k+2 r t$, therefore share the same value of $t_{i+2t,j}=t_{i,j}$, hence $T^{2m}_{i+2t,k}=T^{2m}_{i,k+2r}=T_{i,k}^{2m}$.

\end{proof}

Recall that for fixed $m$, $k$ must satisfy $k=\theta m$ mod $r$. From Lemma \ref{periodic lemma}, $T^{2m}_{i,k}$
depends only on $k$ mod $2r$, which takes only two values $k_0$ and $k_1$, where 
$k_0={\rm Mod}(\theta m,r)\in [0,r-1]$ and $k_1=k_0+r \in [r,2r-1]$. Moreover, as $r$ is odd, $k_0$ and $k_1$ have opposite parities, and we deduce that $T^{2m}_{i,k}=t'_{i,k}$ only depends on $k$ mod 2  (as well as $i$ mod 2 from the Lemma).
This results in the correspondence of initial data: $t'_{i, Mod(i+j,2)}=t_{i,j}$, namely on each plane $P_{2m}$:

 	\begin{equation}\label{initdataik}
 	\begin{matrix}
 		t_{0,0}=t'_{0,0}=\al^{m(2m-1)}\, a\\
 		t_{0,1}=t'_{0,1}=\al^{m(2m-1)}\, b\\
 		t_{1,0}=t'_{1,1}=\al^{m(2m-1)}\, c\\
 		t_{1,1}=t'_{1,0}=\al^{m(2m-1)}\, d 	
 	\end{matrix}
 	\end{equation}

Using the change of variables $T_{i,j,k}\to T_{i,k}^{2m}$, we may rewrite the quantities $L$ of \eqref{defLR}
as
$$L_{i,j,k}=L^{2m}_{i,k}=\frac{T^{2m+r-t}_{i+1,k-1}T^{2m-r-t}_{i-1,k-1}}{T^{2m-2t}_{i,k-2}T^{2m}_{i,k}}$$

The next theorem is the key of our  study of the arctic phenomenon for $2\times 2$-periodic initial data.

\begin{thm}\label{ratiothm} 
The solution of the $T$-system with $2 \times 2$-periodic initial data has the following periodicity properties:
\begin{enumerate}
\item $L^{2m}_{i,k}$, for $k=\theta m$ mod $r$, depends only on $i,k$ modulo $2$.
\item $L^{2m+4t}_{i,k}=L^{2m}_{i,k}$.
\end{enumerate}
 \end{thm}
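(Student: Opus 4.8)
The plan is to establish both periodicity statements by lifting them to corresponding periodicity statements for the raw $T$-system solution $T^{2m}_{i,k}$ and then feeding those into the explicit formula
$$L^{2m}_{i,k}=\frac{T^{2m+r-t}_{i+1,k-1}\,T^{2m-r-t}_{i-1,k-1}}{T^{2m-2t}_{i,k-2}\,T^{2m}_{i,k}}.$$
The main structural input is already available: Lemma \ref{periodic lemma} gives $T^{2m}_{i+2,k}=T^{2m}_{i,k}$ and $T^{2m}_{i,k+2r}=T^{2m}_{i,k}$, and the discussion leading to \eqref{initdataik} shows that along the constraint $k\equiv\theta m\ \mathrm{mod}\ r$, the value $T^{2m}_{i,k}$ in fact depends only on $(i\ \mathrm{mod}\ 2, k\ \mathrm{mod}\ 2)$. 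So for part (1), I would argue that each of the four factors appearing in the formula for $L^{2m}_{i,k}$ is itself one of these ``reduced'' $T$-values: I need to check that the shifted indices $(2m\pm(r-t),i\pm1,k-1)$ and $(2m-2t,i,k-2)$ still lie on the relevant constraint locus $k'\equiv\theta m'\ \mathrm{mod}\ r$, so that the mod-2 reduction applies to each of them. Granting that, $L^{2m}_{i,k}$ is a ratio of quantities each depending only on parities of their indices; since shifting $i$ by $2$ or $k$ by $2$ shifts all four numerator/denominator indices by even amounts, $L^{2m}_{i,k}$ depends only on $i,k$ modulo $2$.

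For part (2), the clean approach is to first prove the auxiliary periodicity $T^{2m+4t}_{i,k}=T^{2m}_{i,k}$ for all $i,k$ on the appropriate locus, and then note that $L^{2m+4t}_{i,k}$ is built from $T$-values at planes $2m+4t+r-t$, $2m+4t-r-t$, $2m+4t-2t$, $2m+4t$, each of which is $4t$ above the corresponding plane in $L^{2m}_{i,k}$, so the desired identity follows termwise. The auxiliary statement itself should come from combining the two relations of Lemma \ref{periodic lemma}: iterating $T^{2m}_{i,k+2r}=T^{2m}_{i,k}$ (or rather using that $T^{2m}_{i+2t,k}=T^{2m}_{i,k+2r}$, visible from the proof of that lemma) lets one trade a shift in the plane index against shifts in $i$ and $k$; one must check that shifting the plane label by $4t$ can be compensated by shifts in $i,k$ that are themselves multiples of the periods $2$ and $2r$ already established. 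Concretely, $(i,j,k)\mapsto(i,j,k+4r)$ moves the plane by $r(i+j)+t(k+4r)=2m+4rt$, which is not $2m+4t$ unless $r=1$, so one cannot argue quite so naively; instead I expect to combine the two moves — a shift in $i$ and a shift in $k$ — to realize exactly a plane shift of $4t$ while keeping $i,k$ changes within the known period lattice. Since $\gcd(r,t)=1$, the additive subgroup generated by the plane shifts $2rt$ (from $k\to k+2r$) and $2t$-type moves is all of $2t\Z$, and $4t$ lands in it; tracking the compensating $(i,k)$ changes and checking they are multiples of the established periods is the bookkeeping core of this part.

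The main obstacle, I expect, is precisely this bookkeeping: verifying that every index appearing in the four-term formula for $L$ genuinely lands on the constraint locus $k'\equiv\theta m'\ \mathrm{mod}\ r$ (so that the sharper ``depends only on parity'' statement applies rather than merely the coarser $\mathrm{mod}\ 2r$ statement), and, for part (2), assembling the compensating translations for a plane shift of exactly $4t$ out of the generators $2$ (in $i$) and $2r$ (in $k$, equivalently $2t$ in $i$) while respecting the parity of $i+j+k$. None of it is deep, but it is the kind of modular-arithmetic argument where an off-by-one in the parity of $\theta m$ or a sign in $r-t$ versus $t-r$ can derail things, so I would state the locus-membership as a short sublemma (``if $k\equiv\theta m\ \mathrm{mod}\ r$ then $k-1\equiv\theta(m')\ \mathrm{mod}\ r$ for each of the relevant $m'$'') and prove it once, then let parts (1) and (2) follow formally. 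An alternative, slightly slicker route would be to observe that the uniform-solution prefactor $\al^{m(m-1)/2}$ satisfies its own Gale--Robinson recursion, so that the ratio $L$ is invariant under rescaling $t_{i,j}\to \al^{m(m-1)/2}t_{i,j}$, reducing everything to the pure $2\times2$-periodic weights $a,b,c,d$; this isolates the genuinely periodic part and may make the $4t$-periodicity in part (2) more transparent, since then one is tracking a solution of the $T$-system whose initial data is literally $4t$-periodic (in fact $2t$-periodic up to the relabeling) in the plane index.
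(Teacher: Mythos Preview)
Your approach to part (1) is correct and in fact cleaner than the paper's: once one observes that the two translation symmetries of the initial data (namely $(i,j,k)\mapsto(i+2,j-2,k)$ and $(i,j,k)\mapsto(i,j-2t,k+2r)$, both of which preserve the stepped surface, the $\alpha$-prefactor, and the $a,b,c,d$ assignment) propagate by uniqueness to the full solution, Lemma \ref{periodic lemma} holds for all $m$, and your locus-preservation sublemma plus the parity bookkeeping finishes it. The paper does not argue this way; it instead constructs an explicit ansatz $\theta^{2m+r+t}_{i+1,k+1}=\theta^{2m}_{i,k}\,f_{i,k}(m)$ with concrete functions $f_{i,k}(m)$, verifies that $\theta=T$ by checking the remaining initial planes and the $T$-system relation, and then reads off part (1) from the fact that $\mathcal L^{2m}_{i,k}=f_{i,k-2}(m-t)/f_{i-1,k-1}(m-\tfrac{r+t}{2})$ manifestly depends only on parities. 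Your route is more conceptual for (1); the paper's route is heavier but is what actually carries part (2).

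For part (2) your plan has a genuine gap that cannot be repaired along the lines you sketch. The auxiliary claim $T^{2m+4t}_{i,k}=T^{2m}_{i,k}$ is simply false: the solution grows with $m$ (visibly through the prefactor $\alpha^{m(2m-1)}$, and even after stripping it the $a,b,c,d$ part grows, since the explicit $f_{i,k}(m)$ contain factors like $(\tfrac{a^2+d^2}{2})^{\eta(m)}$ with $\eta(m)\sim m/t$). More structurally, every translation symmetry of the initial data---the only mechanism by which uniqueness produces a periodicity of $T$---\emph{preserves} the plane index $2m$: both generators $(i+2,j-2,k)$ and $(i,j-2t,k+2r)$ leave $r(i+j)+tk$ fixed. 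Hence no combination of them can realize a plane shift of $4t$ (your ``$\gcd$'' sentence conflates shifts in the ambient lattice with symmetries of the initial data). The alternative route of dividing out $\alpha^{m(2m-1)}$ does not help either: the rescaled system has initial data constant across planes, but constancy of initial data in the plane index says nothing about periodicity of the evolved solution in that index.

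What the paper does, and what is genuinely needed, is to compute $\mathcal L^{2m}_{i,k}$ explicitly via the ansatz and express it through the auxiliary functions $\varphi(m)=\eta(m+r-t)-\eta(m)$, $\psi(m)=\mu(m+r-t)-\mu(m)$, $\psi'(m)=\mu'(m+r-t)-\mu'(m)$. These \emph{differences} of floor functions are bounded and $2t$-periodic in $m$, which yields $\mathcal L^{2m+4t}_{i,k}=\mathcal L^{2m}_{i,k}$ even though the underlying $T$'s are not periodic. In short: periodicity of $L$ is a cancellation phenomenon in a ratio, not a shadow of any periodicity of $T$, and establishing it requires the explicit solution.
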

\begin{proof}
The proof uses the uniqueness of the solution $T^{2m}_{i,k}$ of the $T$-system subject to the initial data. Let us define recursively a new variable $\theta^{2m}_{i,k}$ by:
\begin{eqnarray}\theta^{2m+r+t}_{i+1,k+1}&=&\theta^{2m}_{i,k}\, f_{i,k}(m) \qquad (i,k\in \Z;m\geq 0)\label{rec}\\
\theta^{2m}_{i,k}&=&t'_{i,k}, \qquad \left(i,k\in \Z; m\in \left[0, \frac{t+r-2}{2}\right]\right), \label{initofrec}
\end{eqnarray}
where the functions $f_{i,k}(m)$ depend only on $i,k$ mod $2$ with
 	\begin{equation}\label{quantityf}
 		\begin{aligned}
 			f_{0,0}(m)&=\al^{\frac{r+t}{2}(4m+r+t-1)}\,\frac{\left(\frac{a^2+d^2}{2}\right)^{\eta(m)}\times \left(\frac{b^2+c^2}{2}\right)^{\eta(m-\frac{t-r}{2})}}{a^{\mu'(m)}b^{\mu(m-\frac{t-r}{2})}c^{\mu'(m-\frac{t-r}{2})}d^{\mu(m)}}\\
 			f_{1,0}(m)&=\al^{\frac{r+t}{2}(4m+r+t-1)}\,\frac{\left(\frac{a^2+d^2}{2}\right)^{\eta(m)}\times \left(\frac{b^2+c^2}{2}\right)^{\eta(m-\frac{t-r}{2})}}{a^{\mu(m)}b^{\mu'(m-\frac{t-r}{2})}c^{\mu(m-\frac{t-r}{2})}d^{\mu'(m)}}=f_{0,0}(m)\big|_{a \leftrightarrow d, b \leftrightarrow c}\\
 			f_{0,1}(m)&=\al^{\frac{r+t}{2}(4m+r+t-1)}\,\frac{\left(\frac{a^2+d^2}{2}\right)^{\eta(m-\frac{t-r}{2})}\times \left(\frac{b^2+c^2}{2}\right)^{\eta(m)}}{a^{\mu(m-\frac{t-r}{2})}b^{\mu'(m)}c^{\mu(m)}d^{\mu'(m-\frac{t-r}{2})}}=f_{0,0}(m)\big|_{a \leftrightarrow b, c \leftrightarrow d}\\
 			f_{1,1}(m)&=\al^{\frac{r+t}{2}(4m+r+t-1)}\,\frac{\left(\frac{a^2+d^2}{2}\right)^{\eta(m-\frac{t-r}{2})}\times \left(\frac{b^2+c^2}{2}\right)^{\eta(m)}}{a^{\mu'(m-\frac{t-r}{2})}b^{\mu(m)}c^{\mu'(m)}d^{\mu(m-\frac{t-r}{2})}}=f_{0,0}(m)\big|_{a \leftrightarrow c, b \leftrightarrow d}
 		\end{aligned}
 	\end{equation}
and 
 		\begin{align*}\label{eta_mu}
 			\eta(m)&=\left\lfloor\frac{m}{t}\right\rfloor+\left\lfloor\frac{m-\frac{t-r}{2}}{t-r}\right\rfloor-\left\lfloor\frac{m-(t-r)}{t-r}\right\rfloor\\
 			\mu(m)&=2\left\lfloor\frac{m+t}{2t}\right\rfloor\\
 			\mu'(m)&=2\left\lfloor\frac{m}{2t}\right\rfloor+1 .
 		\end{align*}
Our aim is to show that $\theta_{i,k}^{2m}=T_{i,k}^{2m}$ for all $i,k\in \Z$ and $m\geq 0$. 
By construction the initial values for $0\leq m\leq \frac{t+r}{2}-1$
coincide. To show the identity between $\theta_{i,k}^{2m}$ and $T_{i,k}^{2m}$, we must show that they agree on the remaining initial data planes, and that moreover they obey the same $T$-system relations. This is the content of the following lemma:
 	\begin{lemma}
 	\begin{enumerate}
 		\item $\theta^{2m}_{i,k}=T^{2m}_{i,k}$ for all $\ds \frac{r+t}{2}\leq m\leq t-1$
 		\item The 2 quantities
 		\begin{equation*}
 			{\mathcal L}^{2m}_{i, k}=\frac{\theta^{2m+r-t}_{i+1,k-1}\theta^{2m-r-t}_{i-1,k-1}}{\theta^{2m-2t}_{i,k-2}\theta^{2m}_{i,k}}
			 \ \ \ {and}\ \ \ 
 			 {\mathcal R}^{2m}_{i,k}=\frac{\theta^{2m+r-t}_{i,k-1}\theta^{2m-r-t}_{i,k-1}}{\theta^{2m-2t}_{i,k-2}\theta^{2m}_{i,k}}
 		\end{equation*}
 		can be expressed as ratios of $f$'s, which satisfy ${\mathcal L}^{2m}_{i, k}+{\mathcal R}^{2m}_{i,k}=1$.
 	\end{enumerate}
 	\end{lemma}
	\begin{proof}

First note that as both the initial data $t'_{i,k}$ and the transition functions $f_{i,k}(m)$ only depend on $i,k$ mod $2$, so does  $\theta_{i,k}^{2m}$ for all $m\geq 0$.
It is sufficient to check (1) in the case where $i,k = 0 \ [2]$, as we can reach the other cases by a permutation of the variables $(a,b,c,d)$ (as is clear from \eqref{quantityf}). For $0\leq m<\frac{t-r}{2}$, we have:
 		\begin{align*}
 			\theta^{2m+r+t}_{0,0}=\theta^{2m}_{1,1}\frac{\left(\frac{a^2+d^2}{2}\right)^{\eta(m)}\times \left(\frac{b^2+c^2}{2}\right)^{\eta(m-\frac{t-r}{2})}}{a^{\mu'(m)}b^{\mu(m-\frac{t-r}{2})}c^{\mu'(m-\frac{t-r}{2})}d^{\mu(m)}}=c\times\frac{\left(\frac{a^2+d^2}{2}\right)^{\eta(m)}\times \left(\frac{b^2+c^2}{2}\right)^{\eta(m-\frac{t-r}{2})}}{a^{\mu'(m)}b^{\mu(m-\frac{t-r}{2})}c^{\mu'(m-\frac{t-r}{2})}d^{\mu(m)}}
 		\end{align*}
 		We must compare this to $T^{2m+r+t}_{0,0}=a$ for $0\leq m \leq \ds \frac{t-r-2}{2}$. The desired identification follows from the relations $\ds \eta(m)=\eta(m-\frac{t-r}{2})=\mu(m-\frac{t-r}{2})=\mu(m)=0$, $\mu'(m-\frac{t-r}{2})=1$ and $\mu'(m)=-1$, all valid for $0\leq m \leq \ds \frac{t-r-2}{2}$. \hfill\newline
To show (2), 
first note that we have from \eqref{rec}:
$${\mathcal L}^{2m}_{i, k}=\frac{f_{i,k-2}(m-t)}{f_{i-1,k-1}(m-\frac{r+t}{2})}, $$
which depends only on $i,k$ mod 2.
The quantity ${\mathcal R}^{2m}_{i,k}$ can be easily related to ${\mathcal L}^{2m}_{i, k}$ by noting that it corresponds to an interchange of the roles of $i$ and $j$ in the original variables, which is implemented by the interchange $b\leftrightarrow c$ in the initial data. This gives:
$${\mathcal R}^{2m}_{i,k}={\mathcal L}^{2m}_{i, k}\Big\vert_{b\leftrightarrow c},$$
also depending only on $i$ and $k$ mod 2.
We may restrict ourselves to the case $i=k=0$ mod 2 as all the other parities of $i,k$ are obtained by permuting $a,b,c,d$. 

We compute
\begin{equation}\label{LR00}
{\mathcal L}^{2m}_{0,0}=\frac{f_{0,0}(m-t)}{f_{1,1}(m-\frac{r+t}{2})}=\frac{\left(\frac{b^2+c^2}{2}\right)^{\varphi(m-\frac{t+r}{2})}}{2\, b^{\psi(m-\frac{t+r}{2})}c^{\psi'(m-\frac{t+r}{2})}}, \quad 
 {\mathcal R}^{2m}_{0,0}={\mathcal L}^{2m}_{0,0}\Big\vert_{b \leftrightarrow c}=\frac{\left(\frac{b^2+c^2}{2}\right)^{\varphi(m-\frac{t+r}{2})}}{2\, b^{\psi'(m-\frac{t+r}{2})}c^{\psi(m-\frac{t+r}{2})}}, \end{equation}
expressed in terms of the functions
\begin{eqnarray*}
\varphi(m)=\eta(m+r-t)-\eta(m),\ \ \psi(m)=\mu(m+r-t)-\mu(m), \ \ \psi'(m)=\mu'(m+r-t)-\mu'(m).
\end{eqnarray*}	
Notice that these can only take finitely many values. Specifically, we note that $\varphi(m+t)=\varphi(m)$, and 
$$\varphi(m)=\left\{ \begin{matrix} -1 &{\rm for}& m\in [0,t-r-1]\, {\rm mod}\, t\\ 0 &{\rm for}&  m\in [t-r,t-1]\, {\rm mod}\, t\end{matrix}\right. $$
We also have $\psi(m)=\psi'(m+t)$, $\psi'(m+2t)=\psi'(m)$, and
\begin{eqnarray*}\psi'(m)&=&\left\{ 
\begin{matrix} 0  &{\rm for}& m\in [t-r,2t-1]\, {\rm mod}\, 2t\\ -2 & {\rm for}& m\in [0,t-r-1]\, {\rm mod}\, 2t\end{matrix}\right.\\ 
\psi(m)&=&\psi'(m+t)=\left\{ 
\begin{matrix} 0  &{\rm for}& m\in [-r,t-1]\, {\rm mod}\, 2t\\ -2 & {\rm for}& m\in [-t,-r-1]\, {\rm mod}\, 2t\end{matrix}\right.
\end{eqnarray*} 

We deduce that if $m\in [t-r,t-1] $ mod $2t$, then $\psi(m)=\psi'(m)=0$, and similarly for $m\in [2t-r,2t-1] $ mod $2t$,
while in both cases $\varphi(m)=0$.
On the other hand, if
$m\in [0,t-r-1] $ mod $2t$, then $\psi'(m)=-2$, $\psi(m)=0$, whereas if $m\in [t,2t-r-1] $ mod $2t$,
then $\psi'(m)=0$, $\psi(m)=-2$, while in both cases $\varphi(m)=-1$. This leads finally to:
\begin{eqnarray*}{\mathcal L}^{2m}_{0,0}&=&{\mathcal R}^{2m}_{0,0}=\frac{1}{2}, \ \ {\rm for}\ \ m\in \left[ \frac{t-3r}{2},\frac{t-r}{2}-1\right]\ {\rm mod}  \ t\\
{\mathcal L}^{2m}_{0,0}&=&\frac{c^2}{b^2+c^2},\  {\mathcal R}^{2m}_{0,0}= \frac{b^2}{b^2+c^2}, \ \ {\rm for}\ \ m\in \left[-\frac{t+r}{2}, \frac{t-3r}{2}-1\right]\ {\rm mod}  \ 2t\\
{\mathcal L}^{2m}_{0,0}&=&\frac{b^2}{b^2+c^2},\  {\mathcal R}^{2m}_{0,0}= \frac{c^2}{b^2+c^2}, \ \ {\rm for}\ \ m\in \left[\frac{t-r}{2}, \frac{3t-3r}{2}-1\right]\ {\rm mod}  \ 2t\\
\end{eqnarray*}
%	On the other hand, if $\eta(m-\frac{3}{2}t+\frac{1}{2}r)-\eta(m-\frac{1}{2}t-\frac{1}{2}r)=-1$ then either $$\ds \mu(m-\frac{3}{2}t+\frac{1}{2}r)-\mu(m-\frac{1}{2}t-\frac{1}{2}r)=-2, \hskip 1cm  \mu'(m-\frac{3}{2}t+\frac{1}{2}r)-\mu'(m-\frac{1}{2}t-\frac{1}{2}r)=0$$ or
%	$$ \ds \mu'(m-\frac{3}{2}t+\frac{1}{2}r)-\mu'(m-\frac{1}{2}t-\frac{1}{2}r)=-2, \hskip 1cm \ds \mu(m-\frac{3}{2}t+\frac{1}{2}r)-\mu(m-\frac{1}{2}t-\frac{1}{2}r)=0$$ 
In all cases this gives ${\mathcal L}^{2m}_{0,0}+{\mathcal R}^{2m}_{0,0}=1$, which is equivalent to the $T$-system relation for $i,k=0$ mod $2$.
\end{proof}
We conclude that the variables $\theta^{2m}_{i,k}$ and $T^{2m}_{i,k}$ are identical, as they obey the same $T$-system relations with the same initial data. The statement (1) of Theorem \ref{ratiothm} follows, as $L_{i,k}^{2m}={\mathcal L}_{i,k}^{2m}$ only depends on $i,k$ mod $2$.
The statement (2) follows from the fact that the functions $\varphi,\psi,\psi'$ are $2t$-periodic in $m$.
Indeed, restricting again to the case $i=k=0$ mod 2, the periodicity property 
${\mathcal L}_{0,0}^{2m+4t}={\mathcal L}_{0,0}^{2m}$ follows immediately from eq. \eqref{LR00}.
%$$L^{2m}_{0,0}=\frac{f_{0,0}(m)}{f_{1,1}(m+\frac{t-r}{2})}=\frac{\left(\frac{b^2+c^2}{2}\right)^{\varphi(m+\frac{t-r}{2})}}{2\,b^{\psi(m+\frac{t-r}{2})}c^{\psi'(m+\frac{t-r}{2})}}$$
% 	Notice that:
% 	\begin{align*}
%		& \eta(m-\frac{1}{2}t+\frac{1}{2}r)-\eta(m+\frac{1}{2}t-\frac{1}{2}r)\\
%		&=\left\lfloor\frac{m-\frac{1}{2}t+\frac{1}{2}r}{t}\right\rfloor+\left\lfloor\frac{m-t+r}{t-r}\right\rfloor-\left\lfloor\frac{m-\frac{3}{2}t+\frac{3}{2}r}{t-r}\right\rfloor\\
%		&-\left(\left\lfloor\frac{m+\frac{1}{2}t-\frac{1}{2}r}{t}\right\rfloor+\left\lfloor\frac{m}{t-r}\right\rfloor+\left\lfloor\frac{m-\frac{1}{2}t+\frac{1}{2}r}{t-r}\right\rfloor\right)\\
%		&=\left\lfloor\frac{m-\frac{3}{2}t+\frac{1}{2}r}{t}+1\right\rfloor+\left\lfloor\frac{m-2t+r}{t-r}+\frac{t}{t-r}\right\rfloor-\left\lfloor\frac{m-\frac{5}{2}t+\frac{3}{2}r}{t-r}+\frac{t}{t-r}\right\rfloor\\
%		&-\left(\left\lfloor\frac{m-\frac{1}{2}t-\frac{1}{2}r}{t}+1\right\rfloor+\left\lfloor\frac{m-t}{t-r}+\frac{t}{t-r}\right\rfloor-\left\lfloor\frac{m-\frac{3}{2}t+\frac{1}{2}r}{t-r}+\frac{t}{t-r}\right\rfloor\right)\\
%		&=\eta(m-\frac{3}{2}t+\frac{1}{2}r)-\eta(m-\frac{1}{2}t-\frac{1}{2}r)
%	\end{align*}
%	We can prove similarly for $\mu$ and $\mu'$ thus concluding $L^{2m}_{0,0}=L^{2m+2t}_{0,0}$. The computation for periodicity $(i,k)=(0,1),(1,0),(1,1)$ also follows by the change of variable of $f$
%% 	\end{enumerate} 
 	\end{proof}

% 	\begin{remark}
% 		Theorem \ref{ratiothm} also suggests that we can intepret the $T$-system equipped with a slanted initial data with parameters $(r,r,t)$, $r,t$ odd, in terms of a recurrence relation. Specifically, by considering the product: 
% 		\begin{equation}\label{recurrence relation on T}
% 		\begin{aligned}
% 			\frac{T^{2m+2r+2t}_{0 [2],0 [2]}}{T^{2m+r+t}_{1 [2],1 [2]}}\frac{T^{2m+r+t}_{1 [2],1 [2]}}{T^{2m}_{0 [2],0 [2]}}&=\frac{T^{2m+2r+2t}_{0 [2],0 [2]}}{T^{2m}_{0 [2],0 [2]}}\\
% 			&=\frac{\left(\frac{a^2+d^2}{2}\right)^{\eta(m+\frac{t+r}{2})+\eta(m-\frac{t-r}{2})}\times \left(\frac{b^2+c^2}{2}\right)^{\eta(m+r)+\eta(m)}}{a^{\mu'(m+\frac{t+r}{2})+\mu'(m-\frac{t-r}{2})}b^{\mu(m+r)+\mu(m)}c^{\mu'(m+r)+\mu'(m)}d^{\mu(m+\frac{t+r}{2})+\mu(m-\frac{t-r}{2})}}
% 		\end{aligned}	
% 		\end{equation}
% 	\end{remark}
% 

%%%%%%%%%%%%%%%%%%%%%%%%%%%

\subsection{The case of $r=s$ and $t$ of opposite parity}
\label{sec:evenodd}

Most of the results in this section are proved identically to those of Section \ref{sec:oddodd}.
In the case when $r=s$ and $t$ have opposite parity, all integer values of $m=r(i+j+k)+(t-r)k$ contribute. 
We now have a unique solution
$k=\xi m$ mod $2r$ where $\xi$ is the inverse of $(t-r)$ mod $2r$. The periodicity Lemma 
\ref{periodic lemma} now becomes:
\begin{lemma}
For every $i,k$, the following periodicity relations hold for all initial data planes $m=0,1,...,2t-1$: 
\begin{align*}
T^{m}_{i+2,k}=T^{m}_{i,k}\\
T^{m}_{i,k+2r}=T^{m}_{i,k}
\end{align*}
\end{lemma}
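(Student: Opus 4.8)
The plan is to copy the argument of Lemma~\ref{periodic lemma} almost verbatim, the only change being that now \emph{all} integer planes $P_m$, $m=0,1,\dots,2t-1$, carry initial data rather than only the even ones. First I would recall that, by the change of variables $(i,j,k)\to (i,k,m)$ introduced at the start of Section~\ref{sec:evenodd}, one has $m=r(i+j+k)+(t-r)k$, and that the constraint $i+j+k\equiv 0$ mod $2$ together with $\gcd(r,2r)\mid$ the appropriate combination forces $k\equiv \xi m$ mod $2r$, where $\xi=(t-r)^{-1}$ mod $2r$; this is exactly what makes the notation $T^m_{i,k}$ well defined. The two periodicity statements to prove are then purely about which lattice translations leave the $2\times 2$-periodic initial data \eqref{twobytwo} invariant and stay parallel to the slanted planes.

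For the first relation $T^{m}_{i+2,k}=T^{m}_{i,k}$: the points $(i+2,j,k)$ and $(i,j+2,k)$ satisfy $r(i+2)+rj+tk = ri+r(j+2)+tk = m+2r$, i.e. they lie on the \emph{same} plane $P_m$ only after the relabeling — more precisely, $(i+2,j,k)$ and $(i,j,k)$ map under the change of variables to $T^{m+2r}$-type data, but since the $2\times 2$-periodicity gives $t_{i+2,j}=t_{i,j+2}=t_{i,j}$, the full assignment of initial data on the stepped surface is invariant under $(i,j)\mapsto(i+2,j)$ and under $(i,j)\mapsto(i,j+2)$; by uniqueness of the $T$-system solution with given admissible initial data (Section~\ref{general setting}), $T_{i+2,j,k}=T_{i,j,k}$ for all $(i,j,k)$, which in the new variables is exactly $T^{m}_{i+2,k}=T^{m}_{i,k}$. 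For the second relation, I would compare $(i+2t,j,k)$ with $(i,j,k+2r)$: both satisfy $ri+rj+tk+2rt=m+2rt$, so they correspond to the same value of $m$ mod $2t$ and, after the relabeling, to $T^m$ with the second index shifted respectively by $0$ and by $2r$; since $t_{i+2t,j}=t_{i,j}$ (because $2t\equiv 0$ mod $2$, so translation by $2t$ in $i$ is a composition of translations by $2$, under which the data is invariant), uniqueness again gives $T^{m}_{i+2t,k}=T^{m}_{i,k+2r}$, and combining with the already-established invariance under $i\mapsto i+2$ (hence under $i\mapsto i+2t$) yields $T^{m}_{i,k+2r}=T^{m}_{i,k}$.

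The only genuine bookkeeping is making sure the change of variables $(i,j,k)\leftrightarrow(i,k,m)$ is a clean bijection in the opposite-parity case: one must check that $k\equiv \xi m$ mod $2r$ recovers $k$ uniquely and then $j=(m-r(i+k)-(t-r)k)/r$ is an integer, so that $T^m_{i,k}$ is unambiguous and the translations above really do correspond to honest lattice translations of $(i,j,k)$. Once that is in place the proof is identical in spirit to the odd-odd case. I do not expect a real obstacle here; the statement is a direct transcription of Lemma~\ref{periodic lemma}, with $2r$ (rather than $r$) being the relevant period in the $k$-direction because now $\xi$ inverts $t-r$ modulo $2r$ instead of modulo $r$. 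The one point requiring a line of care is that translation by $2t$ in the $i$-direction preserves the $2\times2$-periodic data — which holds precisely because $t$ has parity opposite to $r$ does not matter here; what matters is that $2t$ is even, hence a multiple of the period $2$ in both $i$ and $j$.
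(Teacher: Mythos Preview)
Your argument has a genuine gap in the uniqueness step. You claim that the full initial data on the stepped surface is invariant under $(i,j)\mapsto(i+2,j)$, and hence by uniqueness $T_{i+2,j,k}=T_{i,j,k}$ for all $(i,j,k)$. This is false. The initial data \eqref{twobytwo} is $t_{i,j}=\alpha^{m(m-1)/2}\cdot f(i\bmod 2,\,j\bmod 2)$ where $m=ri+rj+tk_{i,j}\in[0,2t-1]$ is the plane index. The translation $(i,j)\mapsto(i+2,j)$ does \emph{not} preserve the stepped surface (in general $k_{i+2,j}\neq k_{i,j}$), so the plane index and hence the $\alpha$-factor change; already for the uniform specialization $a=b=c=d=1$ one has $T_{i,j,k}=\alpha^{\mu(\mu-1)/2}$ with $\mu=ri+rj+tk$, visibly not invariant under $i\mapsto i+2$. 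Moreover, even if $T_{i+2,j,k}=T_{i,j,k}$ held, in the new variables it would read $T^{m+2r}_{i+2,k}=T^{m}_{i,k}$ (the superscripts differ by $2r$), not the desired $T^{m}_{i+2,k}=T^{m}_{i,k}$.

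The fix is the direct comparison you start with and then abandon: the two points $(i+2,j,k)$ and $(i,j+2,k)$ lie on the \emph{same} plane $P_{m'}$ with $m'=r(i+j+2)+tk$, and in the new variables they are $T^{m'}_{i+2,k}$ and $T^{m'}_{i,k}$. When $m'\in[0,2t-1]$ both are initial data, with equal $\alpha$-factor (same plane) and equal $(a,b,c,d)$-factor (same parities of $i,j$), hence equal --- this is exactly the paper's argument for Lemma~\ref{periodic lemma}. For the second relation your comparison of $(i+2t,j,k)$ with $(i,j,k+2r)$ is right and matches the paper; the identity $t_{i+2t,j}=t_{i,j}$ is true, but not merely because $2t$ is even --- it holds because $k_{i+2t,j}=k_{i,j}-2r$, which puts both points on the same initial plane. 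If you prefer a uniqueness argument, the correct translation is $(i,j,k)\mapsto(i+2,j-2,k)$: since $r=s$, this preserves both the stepped surface and the full initial data, and yields $T_{i+2,j-2,k}=T_{i,j,k}$, which in the new variables is precisely $T^{m}_{i+2,k}=T^{m}_{i,k}$.
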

Note that as a consequence of $2r$-periodicity in $k$, and the fact that $k$ is fixed mod $2r$ (to the value $\xi m$ mod $2r$), we may drop the index $k$ from the initial data. Finally Theorem \ref{ratiothm} becomes:

\begin{thm}\label{ratioperthm} 
The solution of the $T$-system with $2 \times 2$-periodic initial data has the following periodicity properties:
\begin{enumerate}
\item $L^{m}_{i,k}$, for $k=\xi m$ mod $2r$, depends only on $i$ modulo $2$, and not on $k$.
\item $L^{m+4t}_{i,k}=L^{m}_{i,k}$.
\end{enumerate}
 \end{thm}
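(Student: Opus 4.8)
The argument duplicates the proof of Theorem \ref{ratiothm}, resting on the uniqueness of the $T$-system solution determined by admissible initial data; I describe only where the opposite-parity case diverges. Now every integer plane index $m$ occurs (not merely the even ones), and for each $m$ the variable $k$ is pinned to the single residue $k\equiv\xi m$ mod $2r$, where $\xi(t-r)\equiv 1$ mod $2r$. Writing the solution as $T^{m}_{i,k}$ and invoking the preceding periodicity Lemma, we may regard it as a function of $i$ mod $2$ and $m$ alone.

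First I would introduce an auxiliary family $\theta^{m}_{i,k}$ through a one-step recursion of the form
$$\theta^{m+r+t}_{i+1,k+1}=\theta^{m}_{i,k}\,\widetilde f_{i,k}(m)\qquad (i,k\in\Z,\ m\ge 0),$$
seeded by $\theta^{m}_{i,k}=t'_{i}$ on the $r+t$ consecutive initial planes $m=0,1,\dots,r+t-1$, where $\widetilde f_{i,k}(m)$ is the opposite-parity analogue of \eqref{quantityf}: an explicit product of powers of $a,b,c,d$ and of $\tfrac{a^{2}+d^{2}}{2}$, $\tfrac{b^{2}+c^{2}}{2}$, the exponents being built from floor functions and depending on $i,k$ only through their parities and on $m$ periodically. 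A preliminary check is that the elementary octahedron step $(i,j,k)\to(i+1,j,k+1)$ shifts the plane index by exactly $r+t$ and advances the pinned $k$-residue consistently (this is precisely where $\xi(t-r)\equiv 1$ mod $2r$ is used), so the recursion is well posed.

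The core of the proof then replays the Lemma inside the proof of Theorem \ref{ratiothm}: one must verify (i) that $\theta^{m}_{i,k}$ agrees with $T^{m}_{i,k}$ on the remaining initial planes $r+t\le m\le 2t-1$, by evaluating the floor-function exponents in the ``small $m$'' regime, and (ii) that
$${\mathcal L}^{m}_{i,k}=\frac{\theta^{m+r-t}_{i+1,k-1}\,\theta^{m-r-t}_{i-1,k-1}}{\theta^{m-2t}_{i,k-2}\,\theta^{m}_{i,k}}=\frac{\widetilde f_{i,k-2}(m-2t)}{\widetilde f_{i-1,k-1}(m-r-t)},\qquad {\mathcal R}^{m}_{i,k}={\mathcal L}^{m}_{i,k}\big|_{b\leftrightarrow c}$$
satisfy ${\mathcal L}^{m}_{i,k}+{\mathcal R}^{m}_{i,k}=1$ in each parity class of $(i,k)$, which is exactly the $T$-system relation written in the $(i,k,m)$ variables. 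Uniqueness then forces $\theta^{m}_{i,k}=T^{m}_{i,k}$, hence $L^{m}_{i,k}={\mathcal L}^{m}_{i,k}$. Statement (1) is read off from the resulting closed form of ${\mathcal L}^{m}_{i,k}$, a ratio of $\widetilde f$'s that depends only on $i$ mod $2$ — there is no residual dependence on $k$, since $k$ is itself determined by $m$ mod $2r$. Statement (2) follows from the periodicity in $m$ of the auxiliary exponent functions (the analogues of $\varphi,\psi,\psi'$), which makes ${\mathcal L}^{m}_{i,k}$ periodic with period dividing $4t$, so that iterating gives $L^{m+4t}_{i,k}=L^{m}_{i,k}$.

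As in Section \ref{sec:oddodd}, the only real work is bookkeeping, and that is where I expect the obstacle: pinning down the correct opposite-parity floor-function exponents so that the seeds $\theta^{m}_{i,k}=t'_{i}$ propagate to the prescribed values $a,b,c,d$ on all of $m\in[r+t,2t-1]$, and then checking the single nonlinear identity ${\mathcal L}+{\mathcal R}=1$ in every parity class. Because all planes are now active and the recursion step $r+t$ is odd (so it swaps plane parity), the arithmetic modulo $t$ and modulo $2r$ must be tracked with more care than in the odd--odd case; but no new idea enters, and the periodicity skeleton supplied by the $\varphi,\psi,\psi'$-type functions is identical — which is what justifies the section's assertion that the results are ``proved identically''.
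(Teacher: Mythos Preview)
The paper does not actually prove this theorem---it defers the technical details ``elsewhere,'' noting only that the argument proceeds identically to Section~\ref{sec:oddodd}. Your sketch therefore goes beyond what the paper provides while following precisely the indicated approach, correctly flagging the opposite-parity modifications (all plane indices $m$ active, $k$ pinned to $\xi m$ mod $2r$, recursion step $r+t$ odd so plane parity flips) and the residual bookkeeping burden.
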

 
The technical details of the proof are somewhat cumbersome and will be given elsewhere.

\section{The case of slanted $2 \times 2$-periodic Initial Data: Arctic Phenomenon}
Throughout this section we restrict to the case when $0\leq r=s<t$, with $r,t$ coprime, and to the 2x2 periodic $(r,r,t)$-slanted initial data \eqref{twobytwo}.
\subsection{General case: deriving the density function}

Using the change of variables $(i,j,k)\to (i,k,m)$, the local density of dimers at $(i_0,j_0,k_0)$ in the domain centered at $(i,j)$ can be written $\rho_{i,j,k}=\rho^m_{i,k}$, and satisfies the equation
\eqref{recurho}, namely
\begin{equation}\label{recurhop}
\rho^{m}_{i,k}+\rho^{m-2t}_{i,k-2}= {\mathcal L}^m_{i,k}\, (\rho^{m-t+r}_{i+1,k-1}+\rho^{m-t-r}_{i-1,k-1})+{\mathcal R}^m_{i,k}\, (\rho^{m-t+s}_{i,k-1}+\rho^{m-t-s}_{i,k-1}), 
\end{equation}
subject to the initial conditions $\rho^m_{i,k} =\delta_{i,i_0}\delta_{k,k_0}\delta_{m,ri_0+sj_0+tk_0}$. 

In the previous sections, we have established periodicity properties of the coefficients $L_{i,j,k}={\mathcal L}^m_{i,k}$ 
and $R_{i,j,k}={\mathcal L}^m_{i,k}$ in the variables $i,k,m$. 

Assume that $L_{i,j,k}$ 
is periodic along some lattice $\Lambda\subset\Z^3$,
then $\rho_{i,j,k}=\rho_{i,k}^m$ can be computed explicitly by the method of \cite{DiFrancesco1} (See section 3.2
in particular), which consists of splitting the 
generating function $\rho(x,y,z)=\sum_{i,j,k\in \Z, ri+sj+tk\geq 0} \rho_{i,j,k}x^i y^j z^k$ into pieces corresponding to equivalence classes of points $(i,j,k)$ modulo the periodicity lattice $\Lambda$.
The results of previous sections display naturally the lattice $\Lambda$ in the variables $(i,k,m)$. For short, we write
${\mathcal L}_{i,k}^m={\mathcal L}_{\beta}$, $\rho_{i,k}^m=\rho_\beta$, etc. with $\beta=(i,k,m)$. The  periodicity property is ${\mathcal L}_{\beta+\Lambda}={\mathcal L}_\beta$.
We have the density generating function 
$$\rho(x,y,z)=\sum_{i,k\in \Z,\ m\in \Z_+}  \rho^m_{i,k}x^i y^{\frac{m-tk-ri}{s}} z^k=\sum_{\beta} \rho_\beta w_\beta(x,y,z), $$
where $w_\beta(x,y,z)=x^{\beta_1}y^{\frac{\beta_3-t \beta_2-r \beta_1}{s}}z^{\beta_2}$ are additive weights,
namely satisfying $w_{\beta} w_{\gamma}=w_{\beta+\gamma}$. The recursion relation \eqref{recurhop} reads:
\begin{equation}\label{recurhop2}
\rho_\beta+\rho_{\beta-(0,2,2t)}= {\mathcal L}_\beta\, (\rho_{\beta+(1,-1,r-t)}+\rho_{\beta-(1,1,t+r)})+{\mathcal R}_\beta\, (\rho_{\beta+(0,-1,s-t)}+\rho_{\beta-(0,1,t+s)}) .
\end{equation}
The above splitting amounts to write
$\rho(x,y,z)=\sum_{\gamma\in  F} \rho^{(\gamma)}(x,y,z) $ where $F$ is a fundamental domain for $\Lambda$, and
$$\rho^{(\gamma)}(x,y,z)=\sum_{\beta\in \gamma+\Lambda,\ \beta_3\geq 0}  \rho_\beta\, w_\beta(x,y,z) .$$
This allows to rewrite \eqref{recurhop2} in terms of generating functions with $\beta$ summed over $\gamma+\Lambda$.
For all $\gamma\in F$, we get:
\begin{eqnarray}\label{recurhopsys}
\rho^{(\gamma)}+w_{(0,2,2t)}\rho^{(\gamma-(0,2,2t))}&=& \epsilon_\gamma+{\mathcal L}_\gamma\, (w_{(1,-1,r-t)}\rho^{(\gamma+(1,-1,r-t))}+w_{-(1,1,t+r)}\rho^{(\gamma-(1,1,t+r))})\nonumber \\
&&\qquad+{\mathcal R}_\gamma\, (w_{(0,-1,s-t)}\rho^{(\gamma+(0,-1,s-t))}+w_{-(0,1,t+s)}\rho^{(\gamma-(0,1,t+s))}), 
\end{eqnarray}
where all superscripts are understood modulo $\Lambda$, and represented by elements of the fundamental domain $F$.
The term $\epsilon_\gamma$ corresponds to the initial condition $\rho_{i,j,k}=\delta_{i,i_0}\delta_{j,j_0}\delta_{k,k_0}$
along the plane $P_{m_0}$, with $m_0=ri_0+sj_0+tk_0$, namely $\epsilon_\gamma=\delta_{\gamma,(m_0,i_0,k_0)}$.
This gives a linear system of $|F|$ equations for the functions $\rho^{(\gamma)}=\rho^{(\gamma)}(x,y,z)$, $\gamma\in F$,
which can be solved by Cramer's rules. The common feature to all $\rho^{(\gamma)}(x,y,z)$ is that they are rational functions of $(x,y,z)$ with common denominator $D(x,y,z)$ given by the $|F|\times |F|$ determinant of the system.

\subsection{Singularity loci}
By  the same argument as in Section \ref{asymptotic of the density}, we deduce that the singularities of the actual dimer density generating function
come from the function 
$$\Delta(x,y,z)=D(z^{r/t}x^{-1},z^{s/t}y^{-1},z) .$$
We finally have to apply ACSV \cite{PW2004,pemantle_wilson_2013,PW2005,BP1,DiFrancesco1} to the vicinity of the point $(x,y,z)=(1,1,1)$ to derive the arctic curve of the model. 

\subsection{Elimination}\label{Euclidean section}\hfill \newline
For $r>1$, similarly to Sect.3.1.1, we expand 
\begin{equation}
\label{Hdef}
\Delta(e^{\epsilon x},e^{\epsilon y},e^{-\epsilon (ux+vy)})=\epsilon^\theta (H(x,y,u,v)+O(\epsilon))
\end{equation} 
at leading order in $\epsilon$,
leading to polynomials $H(x,y,u,v)$ generically of higher degree $\theta>2$. 
Explicit calculations up to $t=8$ lead us to conjecture:
\begin{conjecture}\label{thetadef}
 	\begin{equation}
\theta=\theta_{r,r,t}:= 2\big(t+1+\,{\rm Mod}(r,2)\big).
\end{equation}
 \end{conjecture}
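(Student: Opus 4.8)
The plan is to identify $\theta$ with the multiplicity of the point $(x,y,z)=(1,1,1)$ on the zero locus of the denominator $D$ of the density generating function, to recast that multiplicity as the order of vanishing of a transfer-matrix determinant, and then to control the cancellations produced by the structural symmetries of the system. Concretely, using Theorems \ref{ratiothm} and \ref{ratioperthm} I would first make the periodicity lattice $\Lambda\subset\Z^3$ and a fundamental domain $F$ fully explicit in the $(i,k,m)$ coordinates, so that \eqref{recurhopsys} becomes an explicit $|F|\times|F|$ linear system over $\Q(x,y,z)$ with Cramer determinant $D(x,y,z)$ and $\Delta(x,y,z)=D(z^{r/t}x^{-1},z^{s/t}y^{-1},z)$. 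Since $(x,y,u,v)$ are free, the direction along which the arguments of $D$ approach $(1,1,1)$ under the ACSV substitution $x=e^{\epsilon x},\,y=e^{\epsilon y},\,z=e^{-\epsilon(ux+vy)}$ ranges over all of $\R^3$; hence $\theta={\rm mult}_{(1,1,1)}(D)$, the degree of the lowest nonzero homogeneous part $H$ of the Taylor expansion of $D$ at $(1,1,1)$, and the whole problem is to compute this multiplicity.

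Next I would rewrite $D$ as a transfer determinant. Ordering $F$ by the residue of $m$ modulo the $m$-period (equal to $4t$ by part (2) of Theorems \ref{ratiothm}--\ref{ratioperthm}) and eliminating the $i$- and $k$-directions, one gets $D(x,y,z)=\pm({\rm monomial})\,\det\bigl(\mathrm{Id}-{\mathcal T}(x,y,z)\bigr)$, where ${\mathcal T}$ is the ordered product over one $m$-period of the elementary one-step matrices extracted from \eqref{recurhopsys}; their entries are monomials in $x,y,z$ times the coefficients ${\mathcal L}^m_{i,k},{\mathcal R}^m_{i,k}$, which by the proof of Theorem \ref{ratiothm} take only the values $\tfrac12,\ \tfrac{b^2}{b^2+c^2},\ \tfrac{c^2}{b^2+c^2}$. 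Two facts constrain $H$. First, $D$ is invariant up to a monomial under each of $x\mapsto 1/x$, $y\mapsto 1/y$ (reflecting $(i,j)\mapsto(-i,-j)$, legitimate because ${\mathcal L},{\mathcal R}$ depend on $i$ only mod $2$ and $r=s$) and $z\mapsto 1/z$ (reflecting the $k$-reversibility of the octahedron relation, already visible in the uniform case where $D_{r,s,t}(x,y,1/z)=z^{-2}D_{r,s,t}(x,y,z)$); hence $H$ is even in each of its three arguments, so ${\rm mult}_{(1,1,1)}(D)$ is even. Second, because ${\mathcal L}+{\mathcal R}=1$, every one-step matrix evaluated at $(1,1,1)$ has unit row sums, so ${\mathcal T}(1,1,1)$ has $1$ as an eigenvalue and $\det(\mathrm{Id}-{\mathcal T})$ vanishes at $(1,1,1)$; the order of vanishing is then governed by the size of the principal (generalized-eigenspace) part of $\mathrm{Id}-{\mathcal T}$ at the eigenvalue $1$.

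It then remains to show the even number ${\rm mult}_{(1,1,1)}(D)$ equals $2(t+1+{\rm Mod}(r,2))$. By the proof of Theorem \ref{ratiothm}, the mixing pattern of the one-step matrices is locally constant in $m$ with jumps only at residues controlled by ${\rm Mod}(m,t)$ and ${\rm Mod}(m,2t)$: away from these switch points a one-step matrix is permutation-like and merely advances the $k$-coordinate, while at a switch point it mixes two neighbouring classes. Thus ${\mathcal T}(1,1,1)-\mathrm{Id}$ is a staircase-type matrix whose generalized $1$-eigenspace has dimension equal to the number of switch points in one period; I would count these and identify the count with $t+1+{\rm Mod}(r,2)$, then upgrade the order-of-vanishing bound from ``$\ge t+1+{\rm Mod}(r,2)$'' to the exact value ``$2(t+1+{\rm Mod}(r,2))$'' using the reciprocal symmetries above (each symmetry suppresses one family of would-be lower-order terms), reducing finally to the nonvanishing of the leading coefficient $H$, a polynomial in $b,c$ that can be checked to be not identically zero. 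The natural proof of the switch-point count is an induction on $t-r$: one full $(r,r,t)$ period, after deleting a contiguous block of permutation-like steps, reduces to an $(r,r,t-r)$-type period, with the base cases $t\le 8$ matched against the stated computations. The main obstacle is precisely this uniform control --- proving the generalized-eigenspace dimension is the predicted one for all coprime $0\le r<t$, and that no accidental cancellation inflates $\theta$ beyond $2(t+1+{\rm Mod}(r,2))$; the $t\le 8$ data makes this plausible, but closing it would require a clean nonvanishing invariant (e.g.\ a determinant or discriminant in $b,c$).
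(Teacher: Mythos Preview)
The statement you are trying to prove is labeled a \emph{Conjecture} in the paper, not a theorem. The paper offers no proof: it states explicitly that the formula $\theta_{r,r,t}=2(t+1+{\rm Mod}(r,2))$ is suggested by ``explicit calculations up to $t=8$'', and then verifies it in the worked examples $(1,1,3),(2,2,3),(3,3,5)$ where $\theta=10,8,14$ respectively. There is therefore no ``paper's own proof'' to compare against; any complete argument you give would go strictly beyond the paper.

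As for your proposal itself, it is a plausible outline but not a proof, and you say as much. The identification $\theta={\rm mult}_{(1,1,1)}(D)$ is correct (the change of variables $(x,y,z)\mapsto(z^{r/t}x^{-1},z^{s/t}y^{-1},z)$ is a local diffeomorphism at $(1,1,1)$, and the ACSV substitution sweeps all tangent directions), and the transfer-matrix rewriting of $D$ is a reasonable organizing device. But the crux of your plan --- the claim that the generalized $1$-eigenspace of ${\mathcal T}(1,1,1)$ has dimension exactly $t+1+{\rm Mod}(r,2)$, obtained by counting ``switch points'' and then doubling via the reciprocal symmetries --- is asserted rather than established. You would need (i) a precise enumeration of the switch residues in one $m$-period from the formulas for $\varphi,\psi,\psi'$ in the proof of Theorem~\ref{ratiothm}, (ii) a proof that the staircase structure of ${\mathcal T}(1,1,1)-\mathrm{Id}$ has no extra nilpotent depth, and (iii) a genuine nonvanishing argument for the leading homogeneous part $H$ (your proposed check ``a polynomial in $b,c$ not identically zero'' is exactly the hard step, since one must rule out accidental cancellations uniformly in $r,t$). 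The sketched induction on $t-r$ does not obviously close (ii) or (iii). In short: your strategy is sensible and better-structured than anything the paper provides, but the conjecture remains open after your proposal.
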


%The following method is our non-rigorous approach to obtain a function of in $(u,v)$.\hfill \newline 
To find the (dual) arctic curve, we must eliminate the $x,y$ variables from the system $H(x,y,u,v)=\partial_x H(x,y,u,v)=\partial_y H(x,y,u,v)=0$.
 	To that effect, we perform the euclidean division of the polynomial $H(x,y,u,v)$ by $\partial_x H(x,y,u,v)$,
	both  considered as polynomials of $x$. 
	%Notice that if we accept the condition $H(x,y)=\partial_x H(x,y)=\partial_y H(x,y)=0$, then the Euclidean algorithm on $H(x,y)$ gives
	This gives: $H(x,y)=q_1(x)(\partial_x H(x,y))+R_1(x)$, which we iterate in the form $R_{i-1}(x)=q_{i+1}(x)R_{i}(x)+R_{i+1}(x)$ for $i\geq 1$, with $R_0(x)=\partial_x H(x,y)$. 
% 	Thus, $R_1(x)=0$ is a polynomial in $x$ with degree lower than $\partial_x H(x,y)$. We continue the process, by writing: 
% 	$$\partial_x H(x,y)=q_2(x)R_1(x)+R_2(x)$$ 
The process is iterated until we reach the "constant" (say $R_m(y,u,v)$) term, which will be a polynomial in $y,u,v$ where $y$ can be factored since the starting polynomial $H(x,y,u,v)$ is homogenous in $x$ and $y$. 
%Notice that the polynomial $H(x,y)$ is homogeneous in $x,y$ also implies this the process works similarly if we take $\partial_y$ and considering the Euclidean algorithm starting with $H(x,y)$ as polynomial in $y$. We will have all of the example in the section \hfill\newline
After removing the $y$ dependent factor, we end up with a polynomial of $u,v$ which determines the arctic curve.
%\subsubsection{Identifying the arctic curves}
Note that in this elimination process, there are instances when at some  $i$-th iteration of the Euclidean algorithm, the remainder $R_i(x)$ already factors out some polynomial in $u,v$ independent of $x,y$. However, such polynomials in general are either linear or of lower degree than the one of interest, reached only at the last step. 
%Thus, we pick the factor of the highest degree in $u,v$ and continue the process.

\subsection{Symmetries}
\label{sec:sym}

As we only consider the cases with $r=s$, we note that the $(r,r,t)$-slanted initial data planes are invariant under the translation 
$(i,j,k)\to (i+1,j-1,k)=(i,j,k)+(1,-1,0)$. However, the initial data assignments \eqref{twobytwo} are not invariant: this translation corresponds to a permutation $(a,b,c,d)\to (d,c,b,a)$. In the scaling limit of large $k$ and finite $i/k,j/k$, 
the dimer partition functions $T_{i,j,k}$ and $T_{i+1,j-1,k}$ become undistinguishable: in particular,
such a (bounded) translation does not affect the limit shape, therefore we expect the limiting arctic curve to be invariant
under the permutation $(a,b,c,d)\to (d,c,b,a)$.

We may repeat the argument with translations by $(1,-2,1)$ and $(-2,1,1)$, which do not leave the (r,r,t)-slanted planes invariant but map them on uniformly close ones in the scaling limit. As a consequence, we expect the limiting arctic curve to be invariant under the permutations $(a,b,c,d)\to (c,d,a,b)$ and $(a,b,c,d)\to(b,a,d,c)$ as well.

\subsection{Examples}\label{toroidal example section}

For both cases of Sections \ref{sec:oddodd} and \ref{sec:evenodd}, the fundamental domain of $\Lambda$ has  $|F|=8t$ points.
This number is quite large in general, and we choose to give a few meaningful examples in the following sections.
Throughout the reamaining sections, we use the following two parameters:
\begin{equation}\label{sigtau}
\tau=\frac{a^2}{a^2+d^2}\ ,\qquad \sigma=\frac{b^2}{b^2+c^2} .
\end{equation}
The arctic curves are determined using the method described in Sect. \ref{Euclidean section}. We use the notation $H_{rst}(x,y,u,v)$ for the generic $(r,s,t)$ case.
The symmetries described in Sect. \ref{sec:sym} imply that the arctic curves are invariant under each of the three transformations:  
$$(\sigma,\tau)\to (1-\sigma,1-\tau), \qquad (\sigma,\tau)\to (\tau,\sigma),\qquad (\sigma,\tau)\to (1-\tau,1-\sigma). $$

%\begin{remark}
%	As we have seen in previous section, determine the generating function that yields the arctic curves requires one to compute the two quantities $L_{i,j,k}$ and $R_{i,j,k}$. We want to make a remark on the symmetry of $\sigma$ and $\tau$ for the case $(r,r,t)$-slanted $T$-system. Specifically, the change in the four quantities $\sigma, \tau, 1-\sigma, 1-\tau$ (which dictates $L_{i,j,k}$ and $R_{i,j,k}$) corresponding to exactly an intechange of pairs of $a,b,c,d \in \R$. Specifically,
%	\begin{equation}
%		\begin{matrix}
%			(\sigma, \tau) \mapsto (1-\sigma, 1-\tau) \sim a\leftrightarrow d,b\leftrightarrow c\\
%			(\sigma, \tau) \mapsto (1-\tau, 1-\sigma) \sim a\leftrightarrow c,b\leftrightarrow d\\	
%			(\sigma, \tau) \mapsto (\tau, \sigma) \sim a\leftrightarrow b,d\leftrightarrow c\\	
%		\end{matrix}
%	\end{equation}
%	and other combinations. This correspondence of symmetries in initial data and $\sigma, \tau$, thus, yields the symmetries and other interesting properties when one derives arctic curves. We will provide detail in the next section
%	
%%%%%%%%%%%%%%%%%%%%%%%%%%%%%%%
%
%\end{remark}

\subsubsection{\textbf{The case $r=s=1$, $t$  odd in general}}\label{odd_odd_section}\hfill\newline
For simplicity, the coefficients of the linear system \eqref{recurhopsys} may be organized into quadruples
$Q_m=(\mathcal L^{2m}_{0,0},\mathcal L^{2m}_{1,0},\mathcal L^{2m}_{0,1},\mathcal L^{2m}_{1,1})$ for $m \in [0,1,\cdots,2t-1]$. The periodicity lattice $\Lambda$ for the triples $(i,k,m)$ for $\mathcal L_{i,k}^{2m}$'s and $\mathcal R_{i,k}^{2m}$'s is generated by $(2,0,0),(0,2,0),(0,0,2t)$
with $F=[0,1]\times [0,1]\times [0,2t-1]$.

%	The two quantities $$\ds \mathcal L^{2m}_{i , k}=\frac{T^{2m+r-t}_{i+1,k-1}T^{2m-r-t}_{i-1,k-1}}{T^{2m-2t}_{i,k-2}T^{2m}_{i,k}}$$ 
%	$$\ds \mathcal R^{2m}_{i,k}=\frac{T^{2m+r-t}_{i,k-1}T^{2m-r-t}_{i,k-1}}{T^{2m-2t}_{i,k-2}T^{2m}_{i,k}}=1-\mathcal L^{2m}_{i,k}$$ dictates the asymptotic behavior of the system, which gives rise to the arctic phenomenon separating frozen regions using ACSV \cite{PW2004,pemantle_wilson_2013,PW2005,BP1,DiFrancesco1}. By the $2 \times 2$ periodic property, it is sufficient to investigate the 2 quantities up to modulo $2$ in the $(i,k)$ direction. 
%	
%	\noindent In order to study the arctic phenomenon, we once again use the density generating function $\rho^{i_0,j_0,k_0}(x,y,z)$. Similar to section $\bf{(3.2)}$ in \cite{DiFrancesco1}, we want to find a system of linear equations for all the periodiciy of the quantities $L_{i,j,k}$ and $R_{i,j,k}$. Because of the complexity of the slanted initial data, we will focus on the specific case where $(r,s,t)=(1,1,3)$ for detail computation.The computation for other cases will be in the appendix section. \newline
We mainly discuss the case $(r=s=1,t=3)$ in full detail, as higher odd values of $t$ display similar behaviors.
On $F$, the coefficients of the system \eqref{recurhopsys} read:
\begin{equation}\label{(r,s,t)=113 lattice}
\begin{matrix}
			Q_0=(\sigma,1-\sigma,\tau,1-\tau)&
			Q_1=(\frac{1}{2},\frac{1}{2}, \frac{1}{2},\frac{1}{2})&
			Q_2=(1-\sigma,\sigma, 1-\tau,\tau)\\
			Q_3=(1-\sigma,\sigma, 1-\tau,\tau)&
			Q_4=(\frac{1}{2},\frac{1}{2}, \frac{1}{2},\frac{1}{2})&
			Q_5=(\sigma,1-\sigma, \tau,1-\tau)\\
\end{matrix}
\end{equation}
in terms of the parameters \eqref{sigtau}. 
\begin{remark}
	Notice from the vector $Q_m$ that $\mathcal L_{0,0}^{2m}=1-\mathcal L_{1,0}^{2m}$ and $\mathcal L^{2m}_{0,1}=1- \mathcal L^{2m}_{1,1}$. This is simply a consequence of \eqref{quantityf} and the fact that $\mathcal L_{i,k}^{2m}=\mathcal R_{i,k}^{2m}|_{a \leftrightarrow d, b\leftrightarrow c}$. For example, the case: 
	\begin{equation}\label{symmetry in L}
	 	\begin{aligned}
	 		\mathcal L^{2m}_{0,0}&=\frac{f_{0,0}(m-t)}{f_{1,1}(m-\frac{t+r}{2})}=\frac{f_{0,0}(m-t)}{f_{0,0}(m-\frac{t+r}{2})|_{a \leftrightarrow c, b\leftrightarrow d}}\\
	 		1-\mathcal L_{1,0}^{2m}&=\mathcal R_{1,0}^{2m}=\mathcal L_{1,0}^{2m}|_{b \leftrightarrow c, a\leftrightarrow d}=\frac{f_{0,0}(m-t)}{f_{0,0}(m-\frac{t+r}{2})|_{a \leftrightarrow c, b\leftrightarrow d}}
	 	\end{aligned}
	 \end{equation}

\end{remark}
The coefficient matrix of the linear system \eqref{(r,s,t)=113 lattice} can be found in \cite{mathfiles} (file ``Supplementary Material"). Following the steps of ACSV like in Section \ref{asymptotic of the density}, we find that the leading expansion \eqref{thetadef} of the scaled determinant of the system leads to $\theta=\theta_{1,1,3}=10$ and: 
\begin{equation}\label{113 general determinant}
	\begin{aligned}
		H_{1,1,3}(x,y,u,v)&=1024 \,(x + 2 u x - y + 2 v y) (-x + 2 u x + y + 2 v y)\\
		&\times (2 u x+2 v y+2 \sigma  \tau  x-\sigma  x-\tau  x-x-2 \sigma  \tau  y+\sigma  y+\tau  y-2 y)\\
		&\times(4 u x+4 v y+2 \sigma  \tau  x-\sigma  x-\tau  x+2 x-2 \sigma  \tau  y+\sigma  y+\tau  y+y)\\
		&\times H_{1,1,3}^*(x,y,u,v)
	\end{aligned}
\end{equation} 
$ H_{1,1,3}^*(x,y,u,v)$ is the factor of interest with the highest degree, which yields the arctic curves in our case. The full expression is cumbersome (See file ``Supplementary Material" in \cite{mathfiles}). We provide some initial examples where we pick somewhat arbitrary values of $\sigma$ and $\tau$. This provides us with 2 inner regions inside the "initial" ellipses from the uniform case. However, in this case, the initial polynomial to which we apply the elimination procedure of Section \ref{Euclidean section} is generically of degree 6, 
while the final (dual) arctic curve is of degree $14$.
\begin{figure}[H]
\begin{subfigure}{0.2\textwidth}
	\includegraphics[scale=0.25]{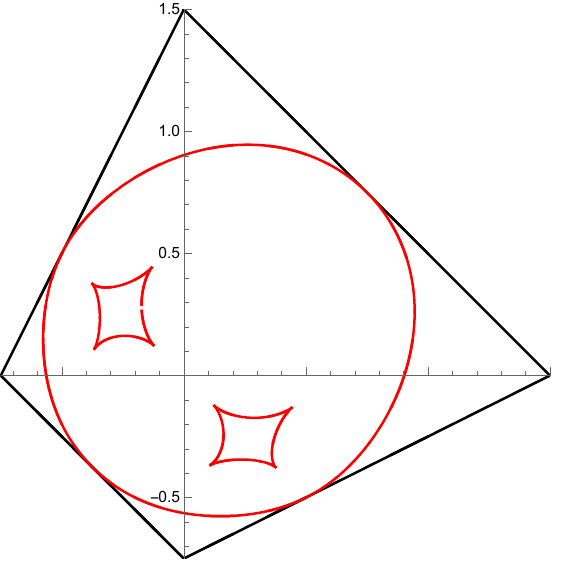}
	\subcaption[t]{\tiny $\sigma=1/4, \tau=1/2 $}
\end{subfigure}\hskip .5cm
\begin{subfigure}{0.2\textwidth}
	\includegraphics[scale=0.25]{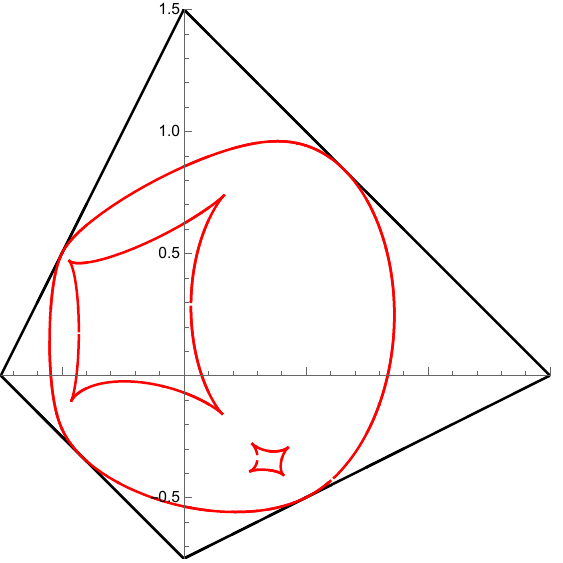}
	\subcaption[t]{\tiny $\sigma=1/4, \tau=1/8 $}
\end{subfigure}\hskip .5cm
\begin{subfigure}{0.2\textwidth}
	\includegraphics[scale=0.25]{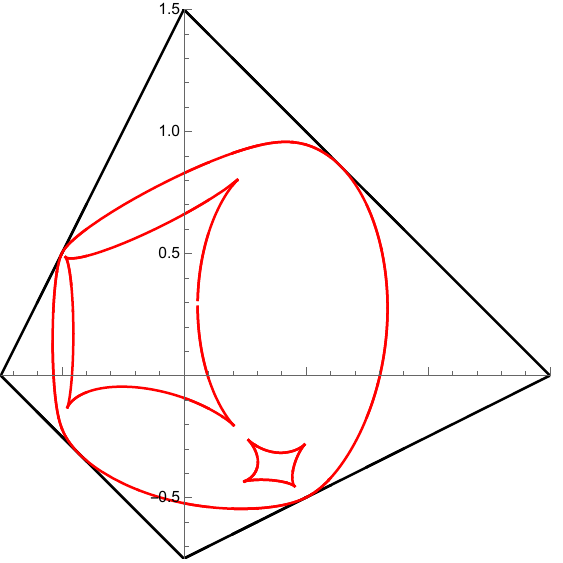}
	\subcaption[t]{\tiny $\sigma=1/4, \tau=1/16 $}
\end{subfigure}\hskip .5cm
\begin{subfigure}{0.2\textwidth}
	\includegraphics[scale=0.25]{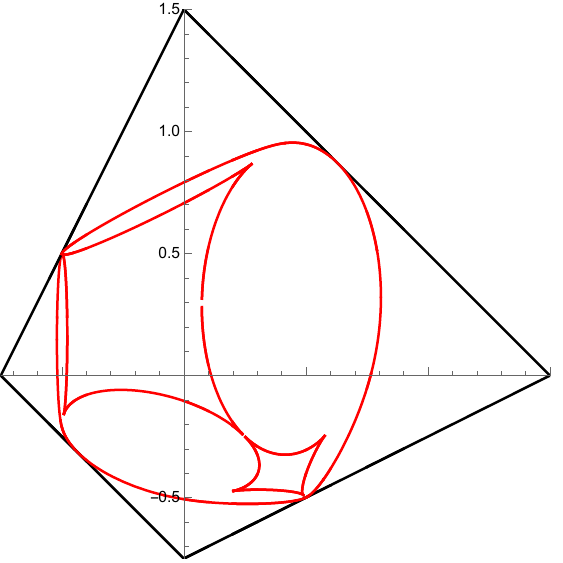}
	\subcaption[t]{\tiny $\sigma=1/4, \tau=1/64 $}
\end{subfigure}
	\caption{\small Arctic curves for $(r,s,t)=(1,1,3)$, together with the scaled domain with fixed $\sigma=1/4$.} 
\label{fig:generic113}
\end{figure}
Through investigating different values of $\sigma$ and $\tau$, we observe interesting collapses of inner regions when for instance $\tau$ becomes small (see e.g. Fig.~ \ref{fig:generic113} (D)), which is the motivation for the next section.

\noindent{$\bullet$ \textbf{$\tau=0$ case.}}\hfill\newline
For $\tau=0$, the function $H_{1,1,3}(x,y,u,v)$ factors into a product of some linear and two quadratic polynomials, denoted $H_1(x,y,u,v)$ and $H_2(x,y,u,v)$. Imposing the condition of vanishing of the Hessian like in section \ref{asymptotic of the density} on each of the latter results in two tangent ellipses. The full phase separation also includes segments obtained by including the other factors.

We have:
	\begin{equation}\label{eq: H(x,y),tau=0}
		\resizebox{.95\hsize}{!}{$\begin{aligned}
		H(x,y,u,v)|_{\tau=0}&=1024(-2 u x-2 v y+x-y)^2 (-2 u x-2 v y-x+y)^2 (-2 u x-2 v y+\sigma  x+x-\sigma  y+2 y)
		(-4 u x-4 v y+\sigma  x-2x-\sigma  y-y) \\
		 &\times H_1(x,y,u,v) \, H_2(x,y,u,v)\\
   		H_1(x,y,u,v)&=
%(4 u x+4 v y+\sigma  x-2 x-\sigma  y-y) 
   (4 u^2 x^2-8 u v x y+2 \sigma  u x^2-4 u x^2-2 \sigma  u x y-2 u x y+4 v^2 y^2
+2 \sigma  v x y-4 v x y-2
   \sigma  v y^2-2 v y^2-3 \sigma  x^2+x^2+x y+3 \sigma  y^2-2 y^2)\\
   		H_2(x,y,u,v)&=\left(8 u^2 x^2+16 u v x y+6 \sigma  u x^2-6 \sigma  u
   x y+6 u x y+8 v^2 y^2+6 \sigma  v x y-6 \sigma  v y^2+6 v y^2+3 \sigma  x^2-2 x^2+x y-3 \sigma  y^2+y^2\right)
	\end{aligned}$
	}
	\end{equation}
	which gives the 2 elliptic pieces of arctic curves:
	\begin{equation}\label{eq: H(x,y),tau=1/4}
		\resizebox{.9\hsize}{!}{$\begin{aligned}
   		P_1(u,v)&=-(4 - 10 u - 14 v + 32 u v)^2 + (2 - 28 u + 32 u^2) (-10 - 20 v + 
    32 v^2)\\
   		P_2(u,v)&=-(4 + 18 u + 6 v + 64 u v)^2 + (-10 + 12 u + 64 u^2) (2 + 36 v + 
    64 v^2)
	\end{aligned}$
	}
	\end{equation}
These encompass the liquid phase (see Fig: \ref{fig: first degenerate curve}).
	\begin{figure}
		\centering
		\includegraphics[scale=.35]{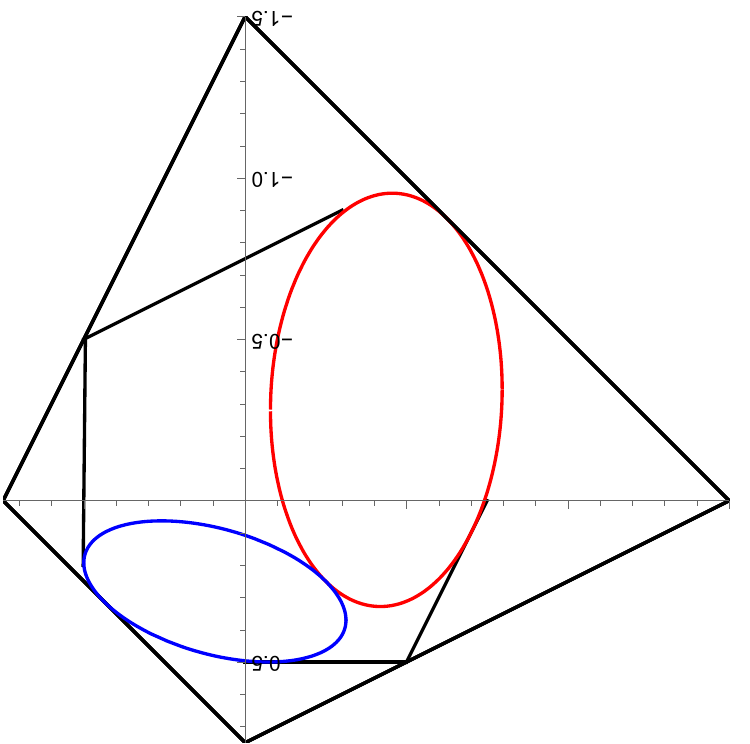}
		\caption{\small Artic Curve for $(r,s,t)=(1,1,3)$ for $\tau=0$ and $\sigma=\frac{1}{4}$ }
		\label{fig: first degenerate curve}
	\end{figure}

In addition to the two tangent ellipses, we find segments that are tangent to the ellipses. We argue that these are the degenerate limits when $\tau\to 0$ of the smooth arctic curve with $\tau>0$. 
This can be visualized by comparing Fig. \ref{fig: first degenerate curve} to the last plots $(C),(D)$ of Fig. \ref{fig:generic113},  upon interchanging the roles of $\sigma$ and $\tau$.

We end this section by providing another 4 arctic curves corresponding to $(r,s,t)=(1,1,3)$ and $(r,s,t)=(1,1,5)$ with $\tau=0$ and some choices of $\sigma$ in Fig:\ref{fig:tau=0_(11t)}.
\begin{figure}
\begin{subfigure}{0.33\textwidth}
        \centering
        \includegraphics[width=3.cm]{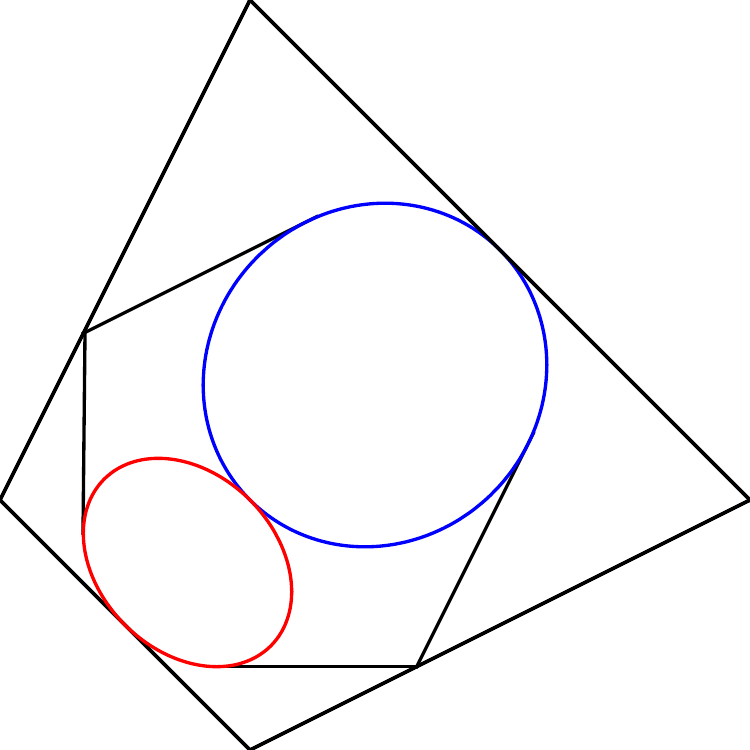} % second figure itself
        %\caption{second figure}
    \end{subfigure}
\begin{subfigure}{0.33\textwidth}
        \centering
        \includegraphics[width=3.cm]{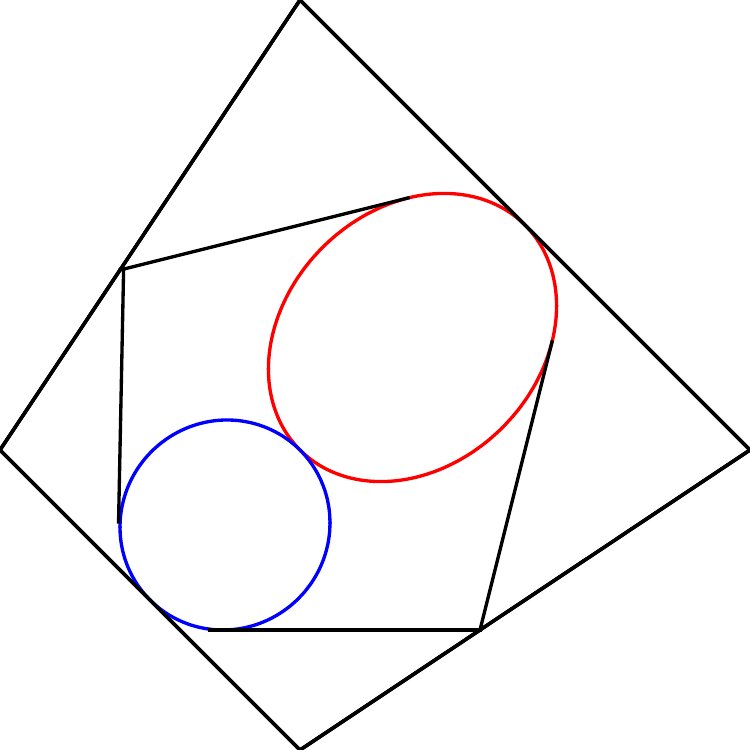} % first figure itself
        %\caption{first figure}
    \end{subfigure}\hfill
    \begin{subfigure}{0.33\textwidth}
        \centering
        \includegraphics[width=3.cm]{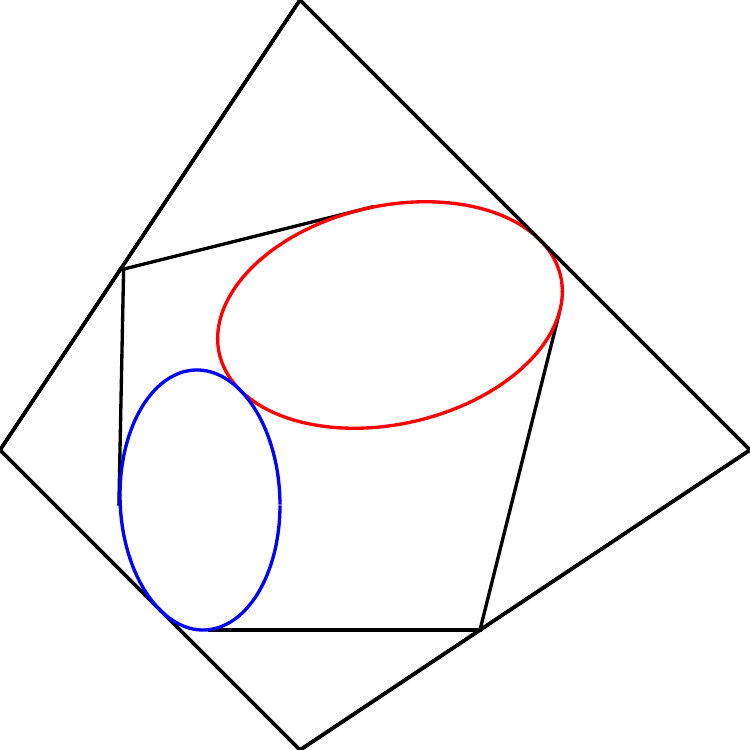} % second figure itself
        %\caption{second figure}
    \end{subfigure}
\caption{\small Arctic curves for $r,s,t$-pinecones, together with the scaled domain for fixed $\tau=0$. Left: $(r,s,t)=(1,1,3), \sigma=\frac{1}{2}$, \, Center: $(r,s,t)=(1,1,5),\sigma=\frac{1}{2}$, \, Right: $(r,s,t)=(1,1,5), \sigma=\frac{2}{3}$ .}
\label{fig:tau=0_(11t)}
\end{figure}

\noindent{$\bullet$ \textbf{$\sigma=\tau$ case.} }\hfill\newline
Another interesting case is when $\sigma=\tau$. 
The leading term $H(x,y,u,v)$ at the leading order $\lambda^{10}$ reads: 
\begin{equation}\label{eq: H(x,y),tau=sigma}
		\begin{aligned}
		H(x,y,u,v)|_{\tau \to \sigma}&=1024 (2 u x+x+2 v y-y) (2 u x-x+2 v y+y) \\
		&\times (-x + 2 u x - 2 y + 2 v y - 2 x \sigma + 2 y \sigma + 
  2 x\sigma^2 - 2 y \sigma^2)\\
   &\times (2 x + 4 u x + y + 4 v y - 2 x \sigma + 2 y \sigma + 
  2 x \sigma^2 - 2 y \sigma^2)\\
   &\times (-x + 2 u x + y + 2 v y + 2 x \sigma - 2 y \sigma - 
  2 x \sigma^2 + 2 y \sigma^2)^2 \\
   &\times H^*(x,y,u,v)
	\end{aligned}
	\end{equation}
where $H^*(x,y,u,v)$ is the highest order polynomial factor of interest 
(see file ``Supplementary Material" in \cite{mathfiles} for an explicit expression).
%\eqref{H_star_113_sigma=tau})	.
Note that $H^*(x,y,u,v)$ has degree $4$ in $x,y$, hence we must use the 
elimination process of (\ref{Euclidean section}).
	
	We display in Fig:\ref{tau=sigma} the resulting arctic curves, for $1/2>\tau\geq 0$.
	Observe the development of inner curves from the interior of a bigger curve as $\sigma=\tau$ decreases. 
\begin{figure}
	\begin{subfigure}{.2\textwidth}
		\centering 
		\includegraphics[scale=0.16]{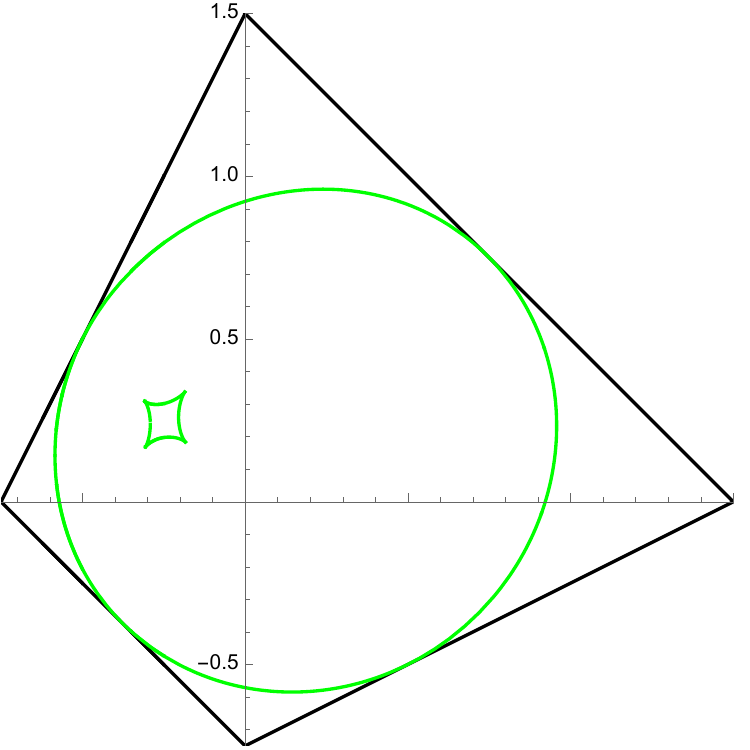}
		\subcaption[t]{\tiny$\sigma=\tau=7/16$}
	\end{subfigure}
	\begin{subfigure}{.2\textwidth}
		\centering
		\includegraphics[scale=0.16]{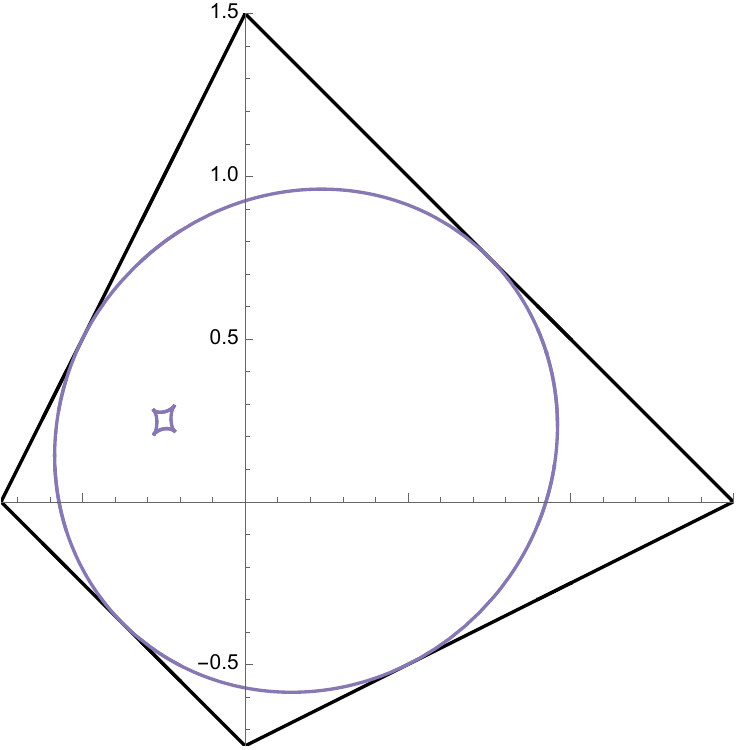}
		\subcaption[t]{\tiny$\sigma=\tau=15/32$}
	\end{subfigure}
	\begin{subfigure}{.2\textwidth}
		\centering
		\includegraphics[scale=0.16]{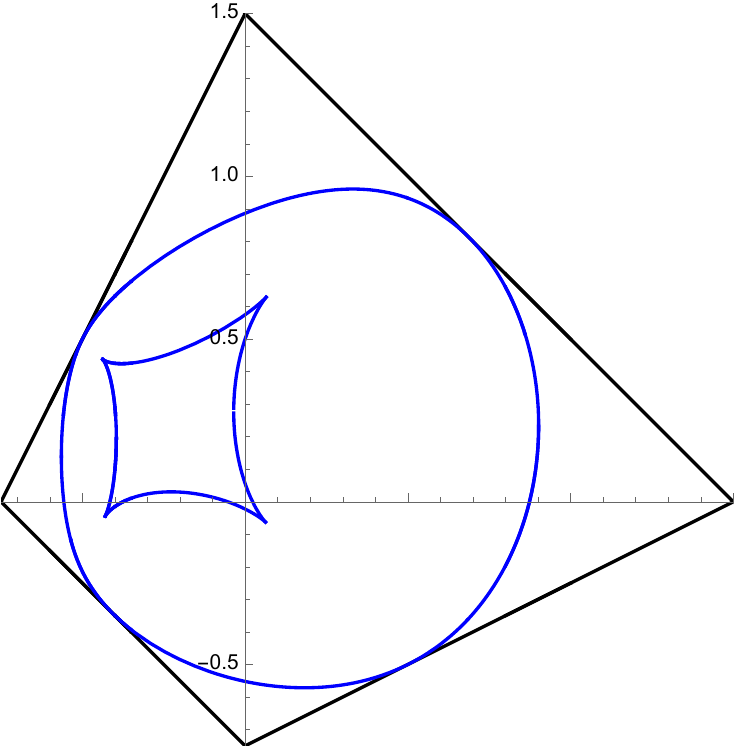}
		\subcaption[t]{\tiny$\sigma=\tau=1/4$}
	\end{subfigure}
	\begin{subfigure}{.2\textwidth}
		\centering
		\includegraphics[scale=0.16]{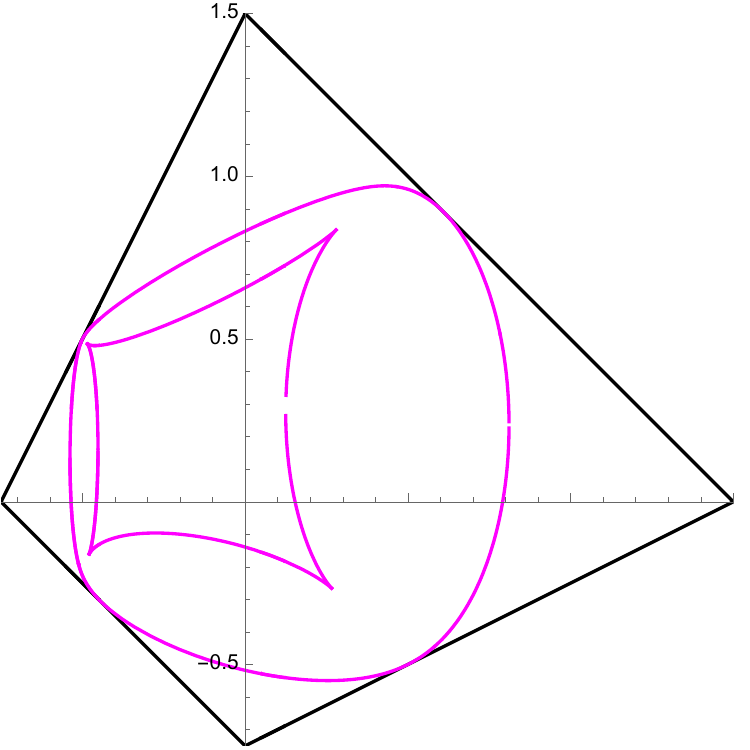}
		\subcaption[t]{\tiny$\sigma=\tau=1/8$}
	\end{subfigure}\\
	\begin{subfigure}{.2\textwidth}
		\centering
		\includegraphics[scale=0.16]{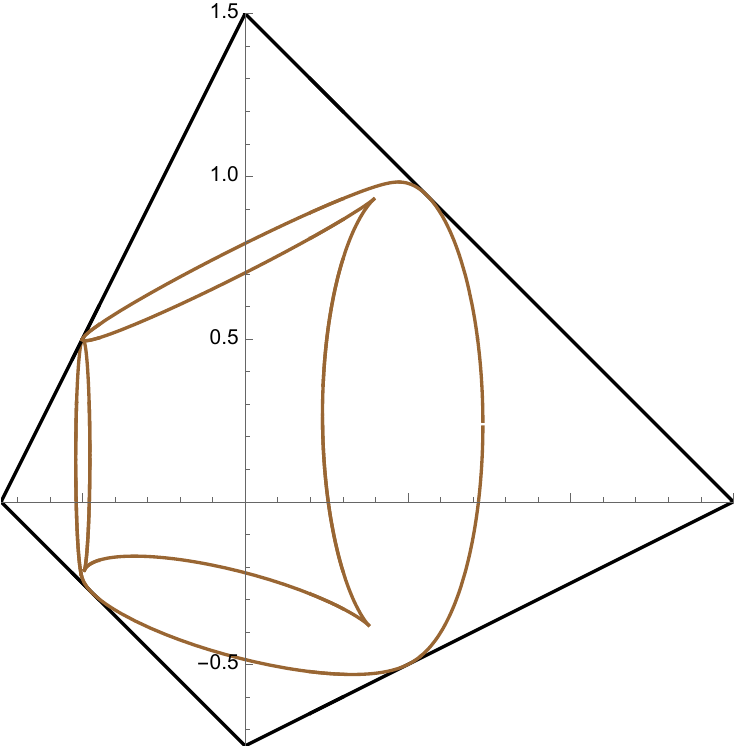}
		\subcaption[t]{\tiny$\sigma=\tau=1/16$}
	\end{subfigure}
	\begin{subfigure}{.2 \textwidth}
		\centering
		\includegraphics[scale=0.16]{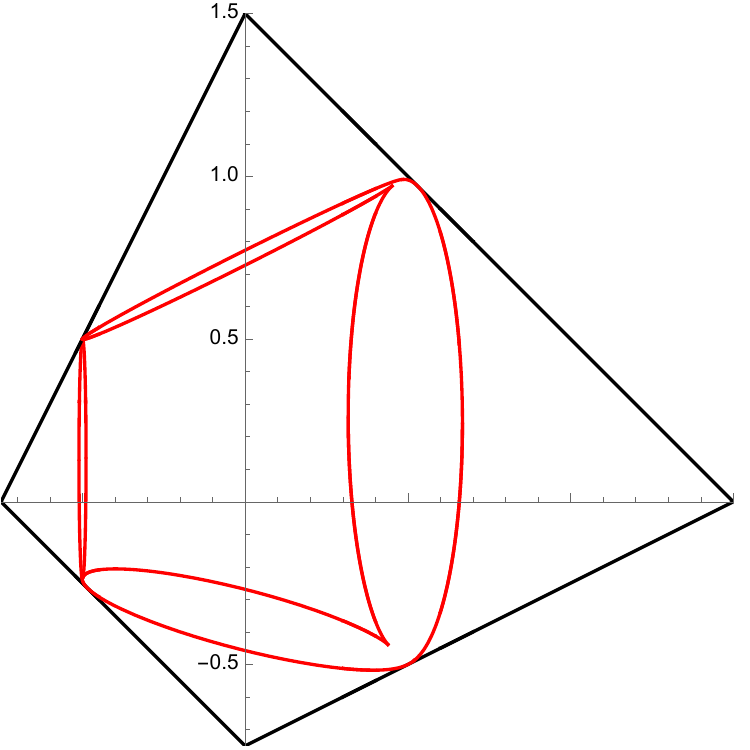}
		\subcaption[t]{\tiny$\sigma=\tau=1/32$}
	\end{subfigure}
	\begin{subfigure}{.2\textwidth}
		\centering
		\includegraphics[scale=0.25]{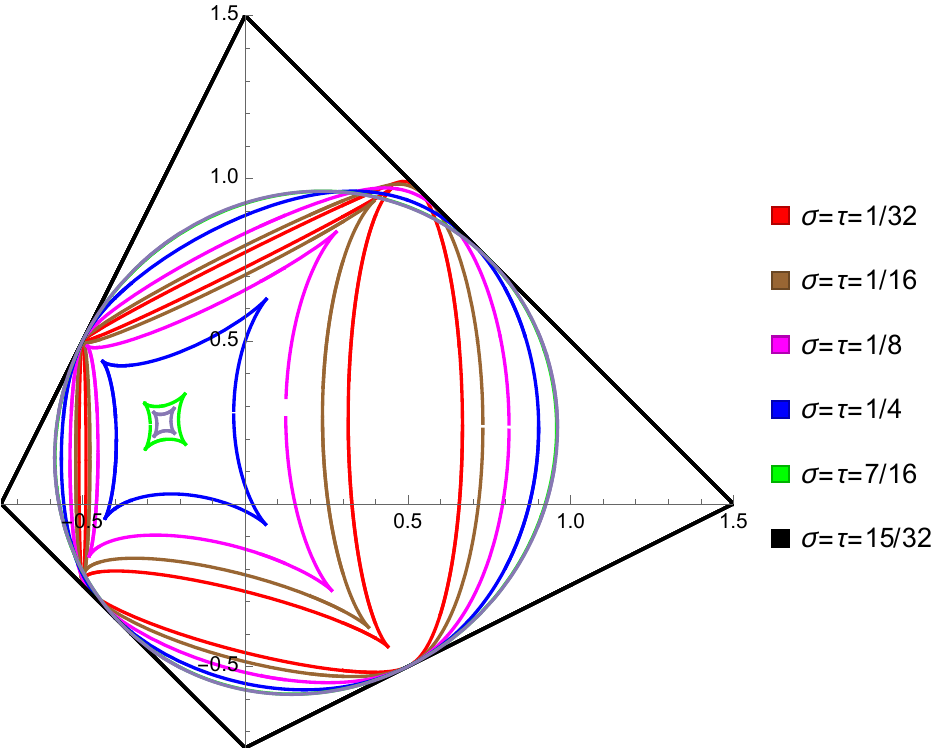}
		\subcaption[t]{Curves (A-F)}
	\end{subfigure}\hskip 1cm
	\begin{subfigure}{.2\textwidth}
		\centering
		\includegraphics[scale=0.16]{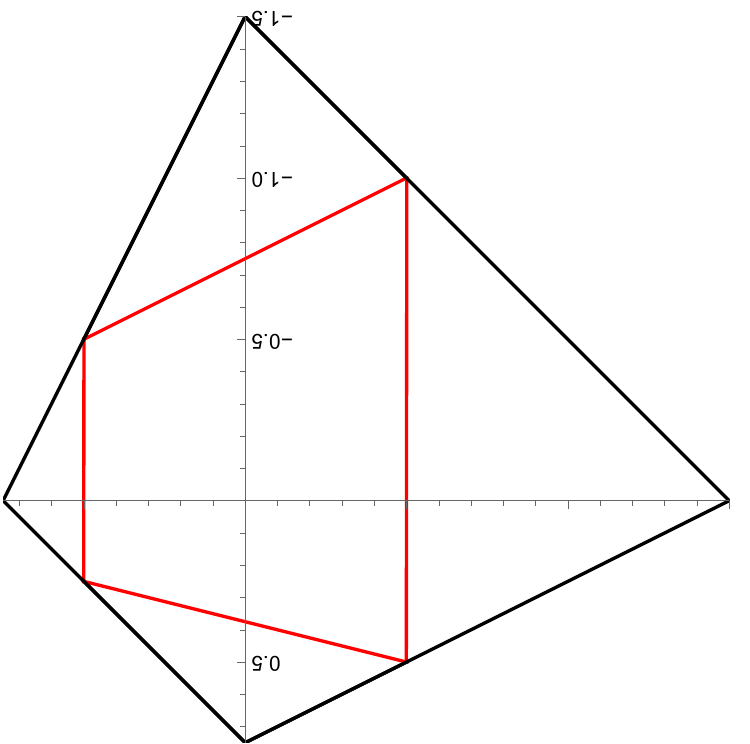}
		\subcaption[t]{\tiny$\sigma=\tau=0$}
	\end{subfigure}
	\caption{\small Arctic Curves for $(r,s,t)=(1,1,3)$ and $\sigma=\tau$}
	\label{tau=sigma}
\end{figure}

An interesting feature of these curves is that when $\tau=\sigma=0$, the curve degenerates into a polygon. We will provide a combinatorial intepretation of this phenomenon in the discussion section. To conclude this section, we notice that a similar behavior happens when $\sigma=1-\tau$ (see Fig: \ref{tau=1-sigma}).  
\begin{figure}
	\begin{subfigure}{.3 \textwidth}
		\centering
		\includegraphics[scale=0.2]{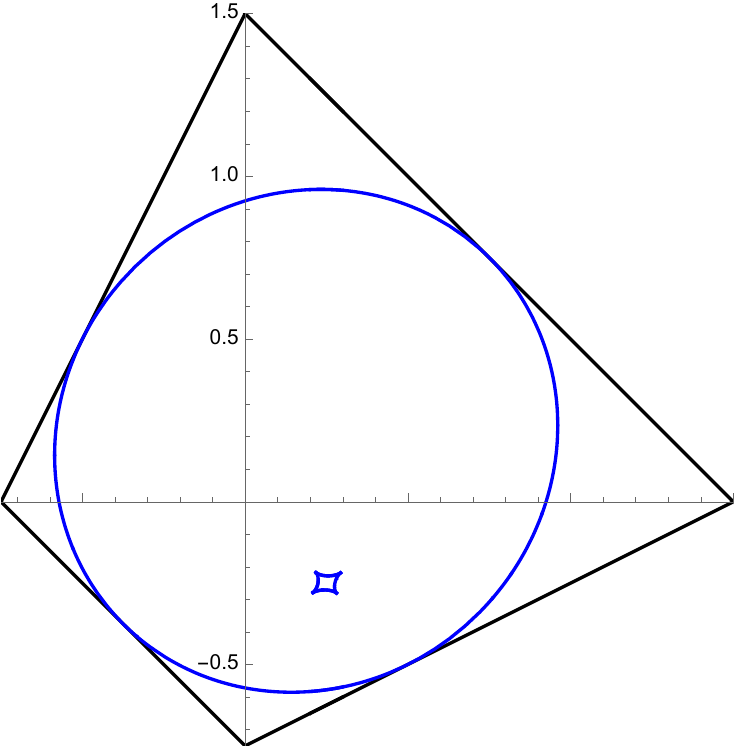}
		\subcaption[t]{\small $\sigma=1-\tau=17/32$}
	\end{subfigure}
	\begin{subfigure}{.3\textwidth}
		\centering
		\includegraphics[scale=0.2]{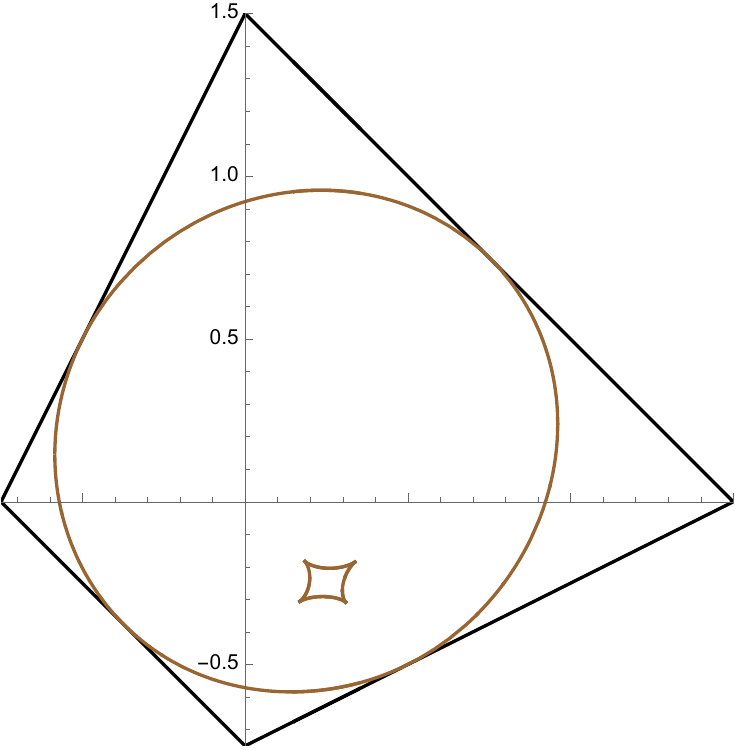}
		\subcaption[t]{\small $\sigma=1-\tau=9/16$}
	\end{subfigure}
	\begin{subfigure}{.3\textwidth}
		\centering
		\includegraphics[scale=0.2]{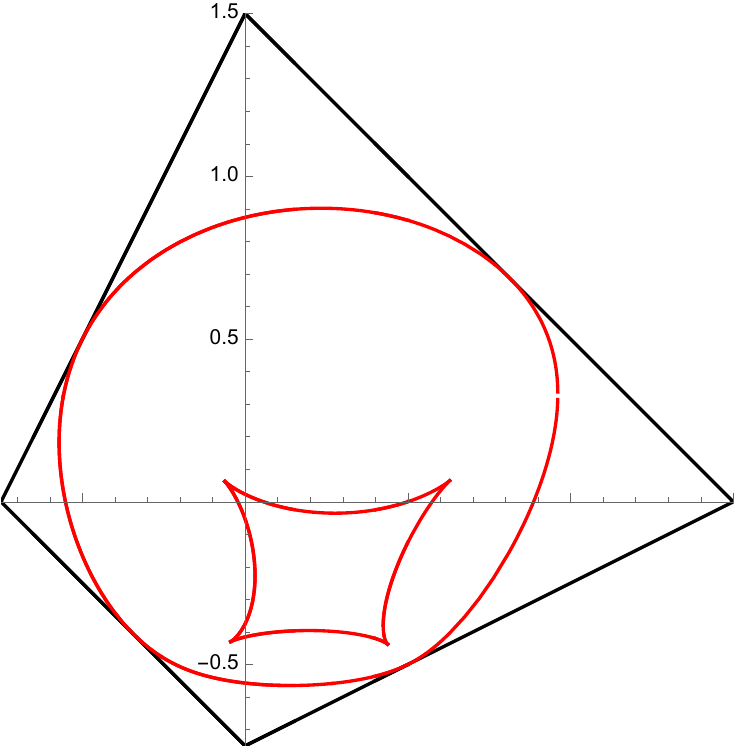}
		\subcaption[t]{\small $\sigma=1-\tau=3/4$}
	\end{subfigure}
	\begin{subfigure}{.3\textwidth}
		\centering
		\includegraphics[scale=0.2]{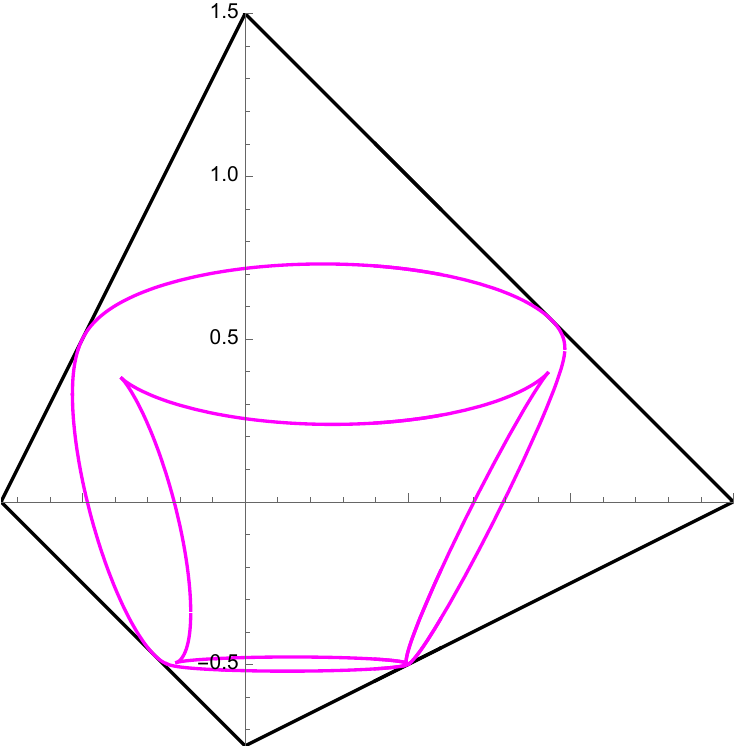}
		\subcaption[t]{\small $\sigma=1-\tau=15/16$}
	\end{subfigure}
	\begin{subfigure}{.3\textwidth}
		\centering
		\includegraphics[scale=0.225]{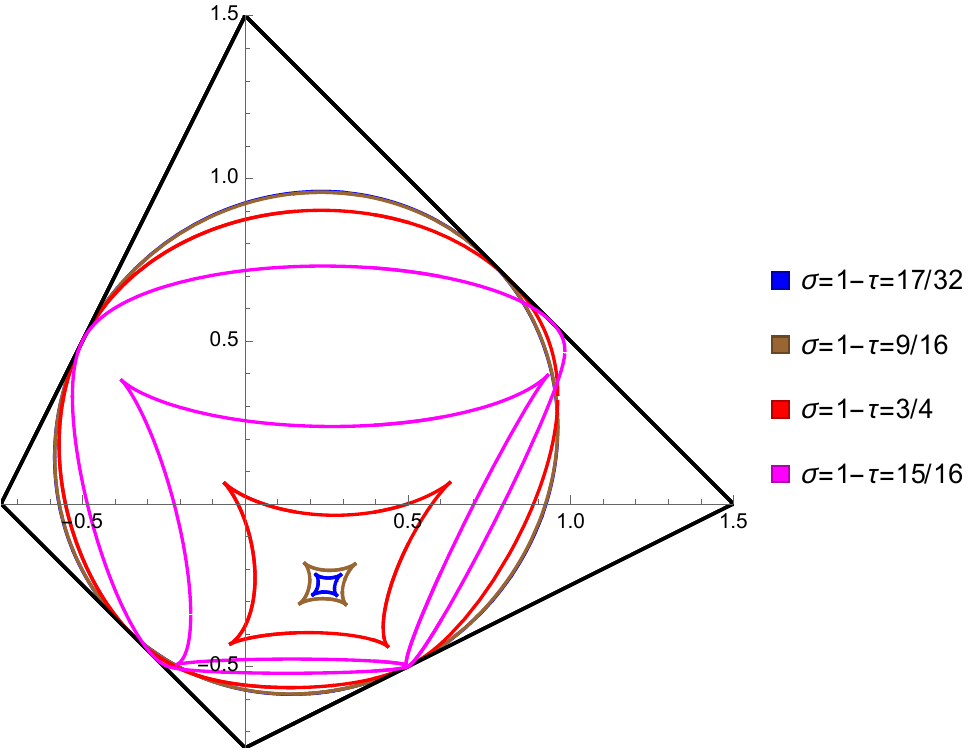}
		\subcaption[t]{Curves (A-D)}
	\end{subfigure}
	\begin{subfigure}{.3\textwidth}
		\centering
		\includegraphics[scale=0.2]{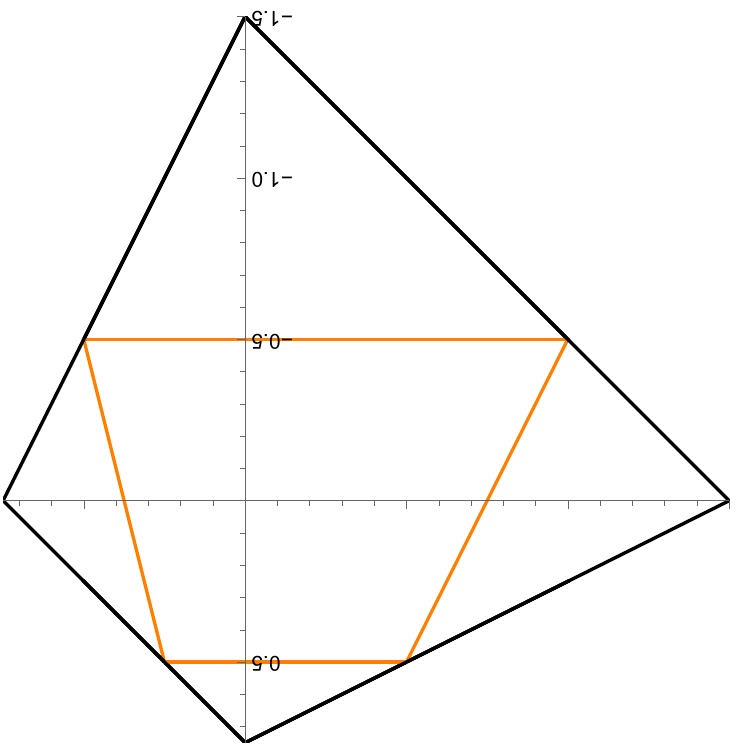}
		\subcaption[t]{\small $\sigma=1-\tau=0$}
	\end{subfigure}
	\caption{Arctic Curves for $(r,s,t)=(1,1,3)$ and $\sigma=1-\tau$}
	\label{tau=1-sigma}
	\end{figure}

\subsubsection{\textbf{The case $r,t$ not both odd}}\hfill\newline
\noindent{$\bullet$ \textbf{$\sigma,\tau$ general }}\hfill\newline
We fix the case of interest to be $(r,s,t)=(2,2,3)$. Recall that theorem \ref{ratioperthm}, the periodicity of $\mathcal L$ and $\mathcal R$ only dependent on $i$ modulo $2$. The periodicity lattice $\Lambda$ for the $\mathcal L$'s and $\mathcal R$'s is now generated only by $(2,0,0),(0,0,4t)$, i.e. $F=\{(i,m)\}=[0,1]\times [0,4t-1]$

The coefficients of the linear system \eqref{recurhopsys} may be organized into
$Q_m=(\mathcal L^{m}_{0},\mathcal L^{m}_{1})$ for $m \in [0,1,\cdots,4t-1]$ which also gives $|F|=24$, as $F=[0,1]\times [0,4t-1]$. The difference with section (\ref{odd_odd_section}) where both $r,t$ are odd is that each $Q_i$ contains only two values since $\mathcal L$ depends only on $i$ modulo $2$ but there are twice as many values of $m$ to be considered.
On $F$, the coefficients read
\begin{equation}
\begin{matrix}
Q_0=(\sigma,1-\sigma) &
Q_1=(\frac{1}{2},\frac{1}{2})&
Q_2=(\frac{1}{2},\frac{1}{2})&
Q_3=(\frac{1}{2},\frac{1}{2})&
Q_4=(\frac{1}{2},\frac{1}{2})&
Q_5=(1-\tau,\tau)\\
Q_6=(1-\sigma, \sigma)&
Q_7=(\frac{1}{2},\frac{1}{2})&
Q_8=(\frac{1}{2},\frac{1}{2})&
Q_9=(\frac{1}{2},\frac{1}{2})&
Q_{10}=(\frac{1}{2},\frac{1}{2})&
Q_{11}=(\tau,1-\tau)
\end{matrix}
\label{223_lattice}
\end{equation}
in terms of the variables $\sigma,\tau$ of \eqref{sigtau}.

The coefficient matrix for the linear system \eqref{223_lattice} is shown in \cite{mathfiles} (file ``Supplementary Material"). We apply a similar technique as previous section and obtain the following data. Notice that the choice $\sigma=\tau=\frac{1}{2}$ reduces to the uniform case, and we recover an ellipse with no inner region: 
$$P(u,v)|_{\sigma=\tau=1/2}=\left(20 u^2+24 u-18\right) \left(20 v^2+24 v-18\right)-(20 u v+12 u+12 v)^2$$
For general $\sigma$ and $\tau$, the expansion is up to order $\theta=8$, and the quantity $H_{223}(x,y,u,v)$ is a homogenous polynomial of order $8$ in $x,y$,
which necessitates seven iterations of the eliminating process of $H_{223}(x,y,u,v)$ and its derivative w.r.t. $x$ to obtain the final arctic curve $P(u,v)$. We end this section with some explicit examples of $(2,2,3)$-slanted $2 \times 2$ toroidal initial data. However, the detail of the computation for these cases is cumbersome and will be available only upon request. 
\begin{figure}
	\begin{subfigure}{0.2\textwidth}
	\centering
	\includegraphics[scale=0.25]{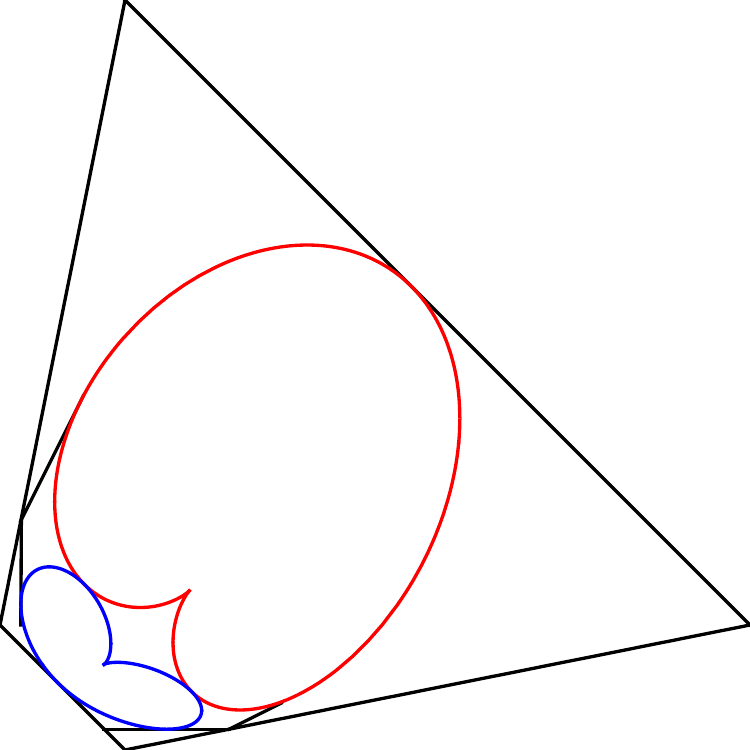}
	\subcaption[t]{\tiny $\tau=0, \sigma=1/4 $}
	\end{subfigure}\hskip 1cm
	\begin{subfigure}{0.2\textwidth}
	\centering
	\includegraphics[scale=0.25]{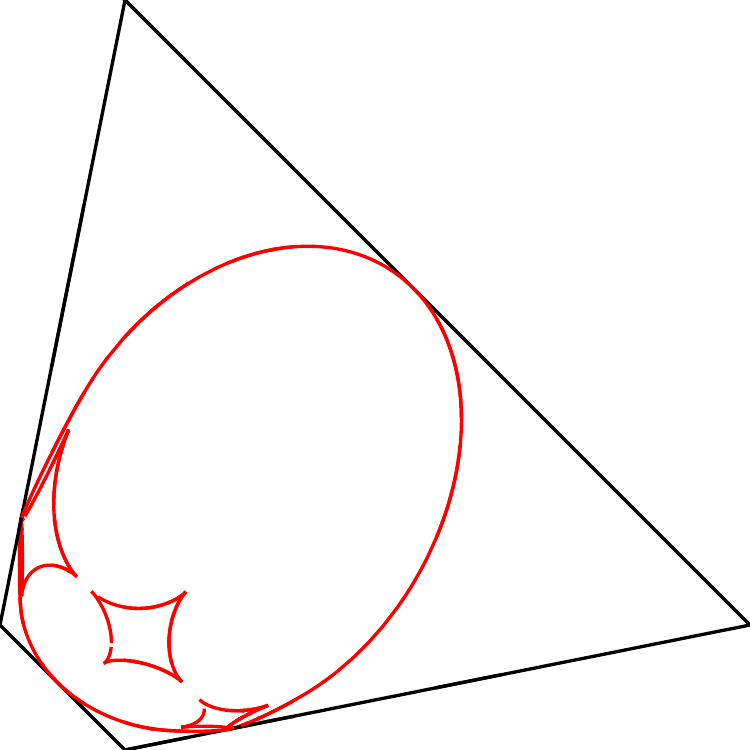}
	\subcaption[t]{\tiny $\tau=1/32,\sigma=1/4 $}
	\end{subfigure}\hskip 1cm 
	\begin{subfigure}{0.2\textwidth}
	\centering
	\includegraphics[scale=0.25]{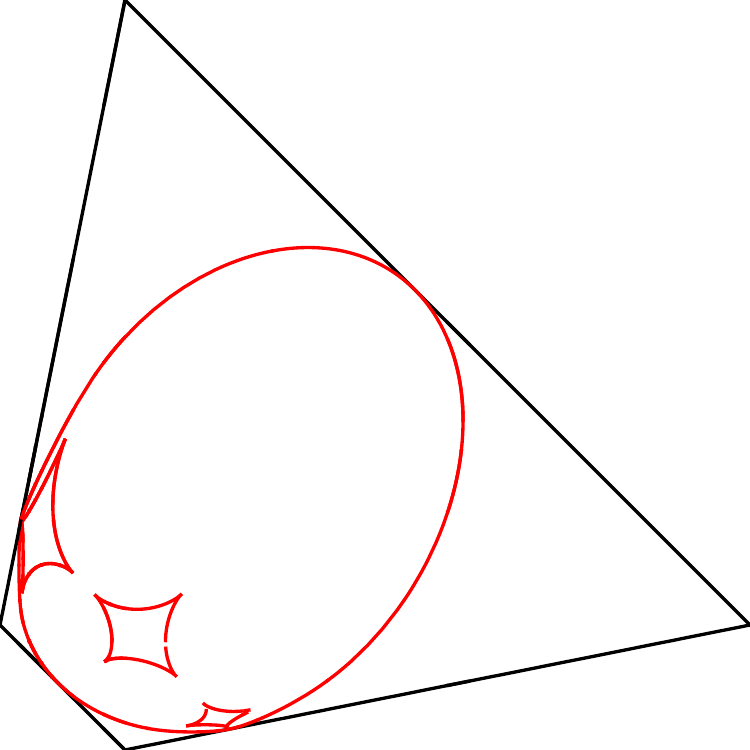}
	\subcaption[t]{\tiny $\tau=1/16,\sigma=1/4$}
	\end{subfigure}\\
	\begin{subfigure}{0.2\textwidth}
	\centering
	\includegraphics[scale=0.25]{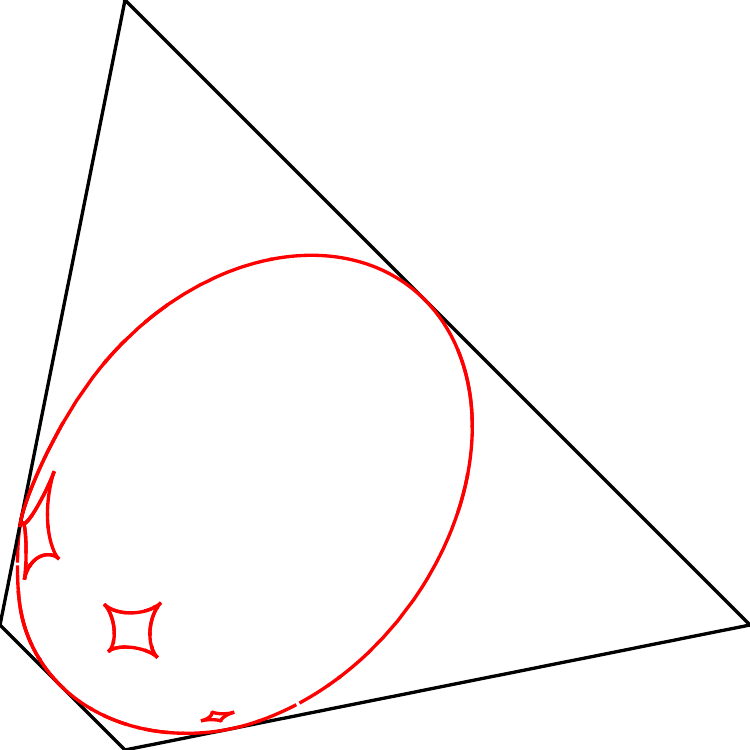}
	\subcaption[t]{\tiny $\tau=1/4,\sigma=1/4 $}
	\end{subfigure}\hskip .5cm
	\begin{subfigure}{0.2\textwidth}
	\centering
	\includegraphics[scale=0.25]{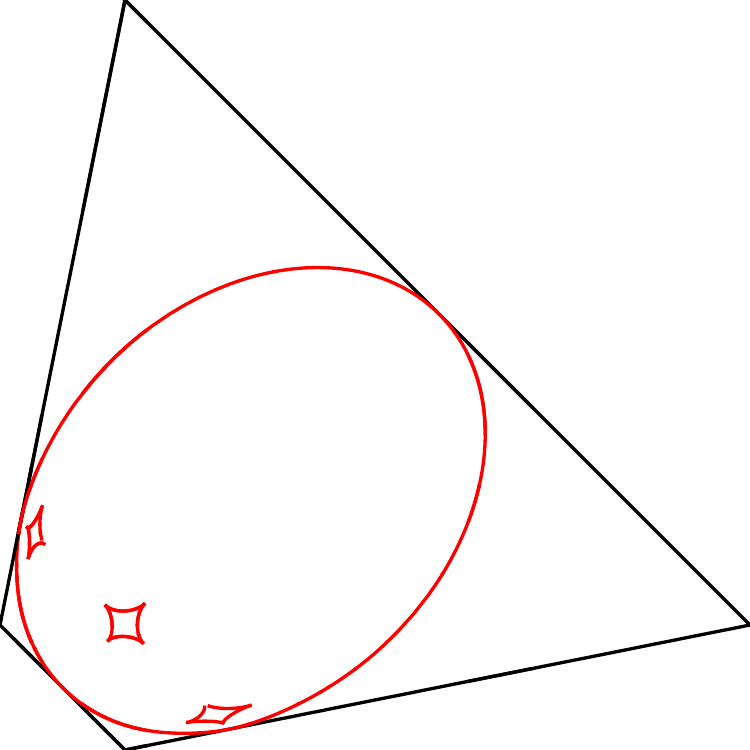}
	\subcaption[t]{\tiny $\tau=17/32,\sigma=1/4 $}
	\end{subfigure}\hskip .5cm
	\begin{subfigure}{0.2\textwidth}
	\centering
	\includegraphics[scale=0.25]{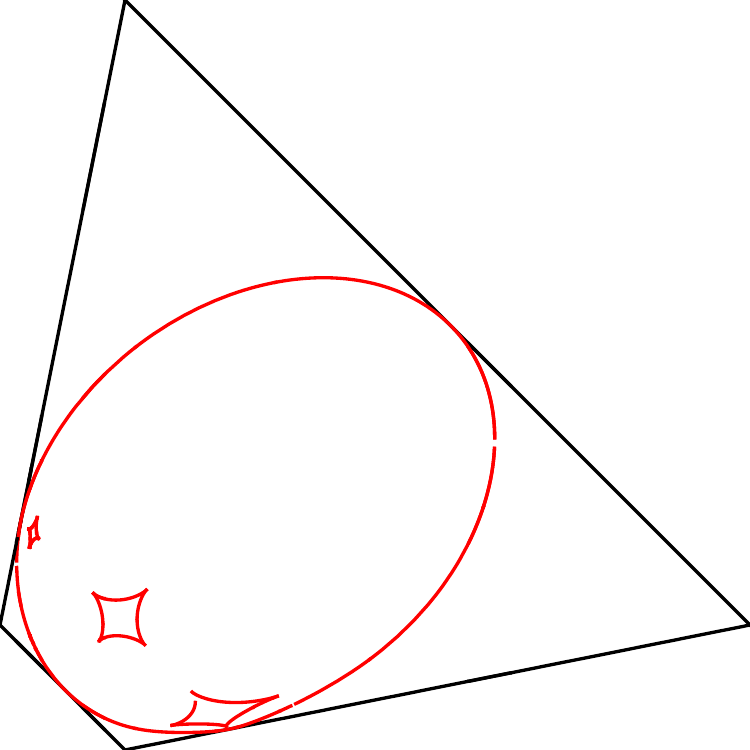}
	\subcaption[t]{\tiny $\tau=3/4,\sigma=1/4 $}
	\end{subfigure}\hskip.5cm
	\begin{subfigure}{0.2\textwidth}
	\centering
	\includegraphics[scale=0.25]{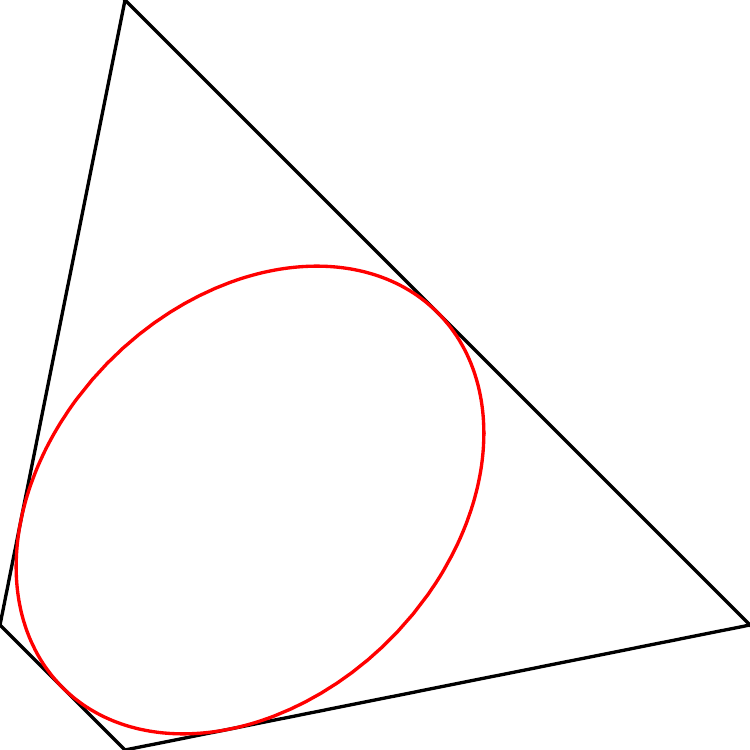}
	\subcaption[t]{\tiny $\tau=1/2,\sigma=1/2$}
	\end{subfigure}
	\caption{Some arctic curves for $(r,s,t)=(2,2,3)$}
	\label{general 223 library}
\end{figure}
%%%%%%

\noindent{$\bullet$ \textbf{$\tau=0$ case}}\hfill \newline
The polynomial $H(x,y,u,v)|_{\tau=0}$ reads:\newline
\begin{adjustbox}{scale={0.9}{0.9},center}
$\label{tau=0}
\begin{aligned}
	&H(x,y,u,v)|_{\tau=0}=16 (2 u x+2 v y+x-y) (2 u x+2 v y-x+y)\\
						&\times (32 u^3 x^3+96 u^2 v x^2 y+48 u^2 x^3+48 u^2 x^2 y+96 u v^2 x y^2+96 u v x^2 y+96 u v x y^2-24\sigma  u x^3 \\
   						&-2 u x^3+28 u x^2 y+24 \sigma  u x y^2-26 u x y^2+32 v^3 y^3+48 v^2 x y^2+48 v^2 y^3-24 \sigma  v x^2 y-2 v x^2 y\\
   						&+28 v x y^2+24 \sigma  v
   y^3-26 v y^3+4 \sigma  x^3-5 x^3-12 \sigma  x^2 y+9 x^2 y+12 \sigma  x y^2-3 x y^2-4 \sigma  y^3-y^3)\\ 
   &\times (160 u^3 x^3+480 u^2 v x^2 y-64
   \sigma  u^2 x^3-16 u^2 x^3+64 \sigma  u^2 x^2 y-80 u^2 x^2 y+480 u v^2 x y^2\\
   &-128 \sigma  u v x^2 y-32 u v x^2 y+128 \sigma  u v x y^2-160 u v x y^2+24
   \sigma  u x^3-34 u x^3+44 u x^2 y\\
   &-24 \sigma  u x y^2-10 u x y^2+160 v^3 y^3-64 \sigma  v^2 x y^2-16 v^2 x y^2+64 \sigma  v^2 y^3-80 v^2 y^3+24 \sigma 
   v x^2 y\\
   &-34 v x^2 y+44 v x y^2-24 \sigma  v y^3-10 v y^3+4 \sigma  x^3+x^3-12 \sigma  x^2 y+3 x^2 y+12 \sigma  x y^2-9 x y^2\\
   &-4 \sigma  y^3+5 y^3)
\end{aligned}
$
\end{adjustbox}
Note that in this case, the polynomial factors into two polynomials, each of order $3$ in $x,y$. This results in two higher degree curves, which delimit two tangent regions like in the previous section (see Fig.~\ref{223_tau=0} for an illustration).
\begin{figure}[H]
	\begin{subfigure}{0.2\textwidth}
	\centering
	\includegraphics[scale=0.3]{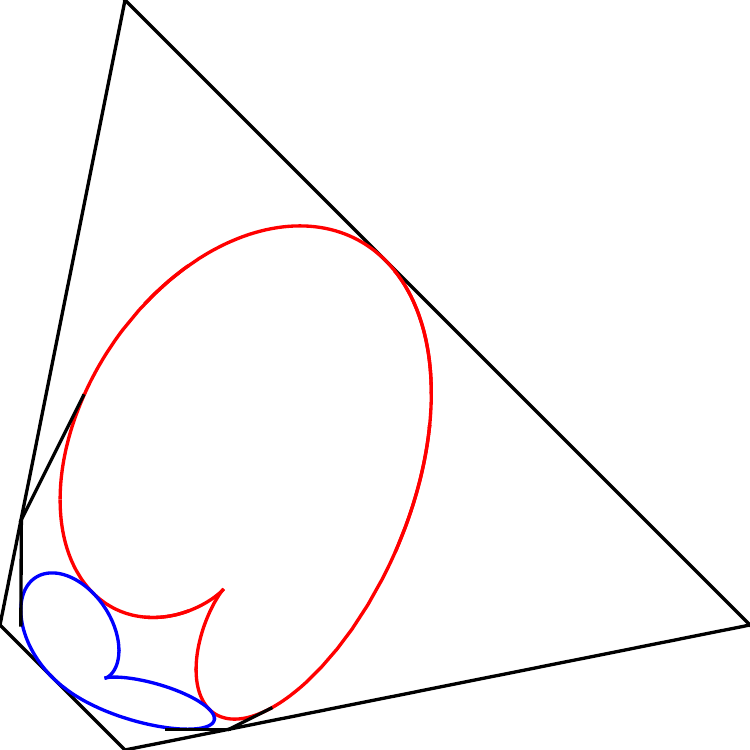}
	\subcaption[t]{\tiny $\tau=\sigma=0 $}
	\end{subfigure}\hskip 1cm
	\begin{subfigure}{0.2\textwidth}
	\centering
	\includegraphics[scale=0.3]{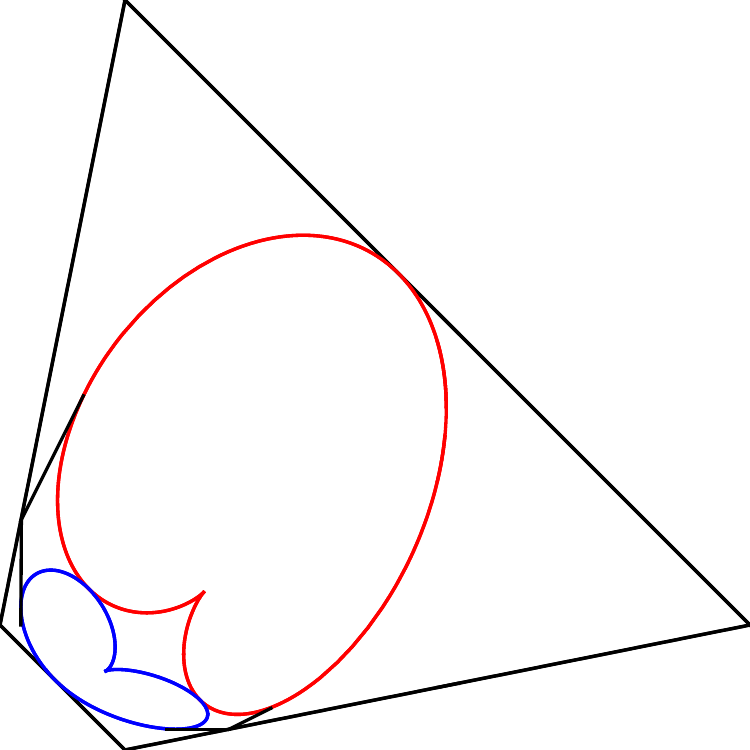}
	\subcaption[t]{\tiny $\tau=0,\sigma=1/8 $}
	\end{subfigure}\hskip 1cm 
	\begin{subfigure}{0.2\textwidth}
	\centering
	\includegraphics[scale=0.3]{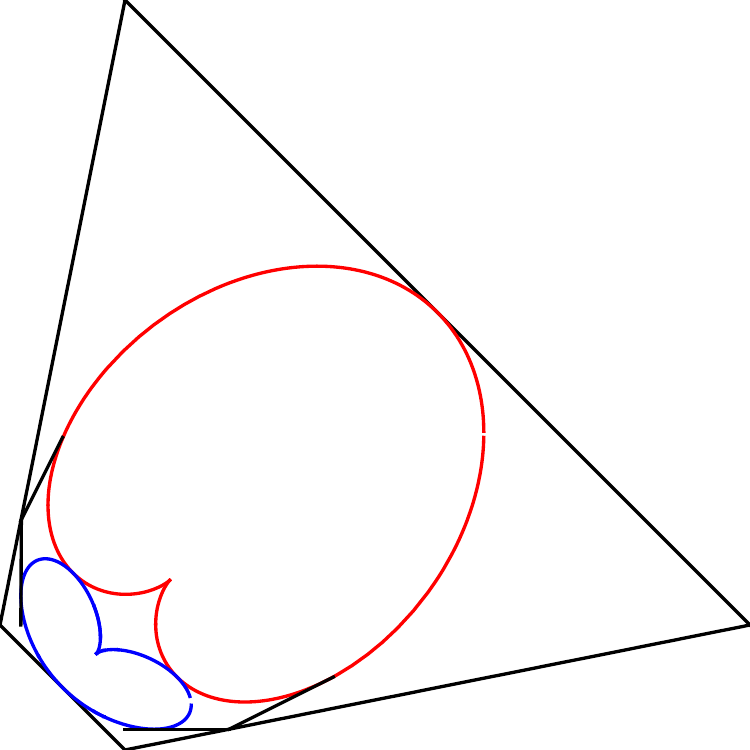}
	\subcaption[t]{\tiny $\tau=0,\sigma=1/2 $}
	\end{subfigure}\hskip 1cm
	\begin{subfigure}{0.2\textwidth}
	\centering
	\includegraphics[scale=0.3]{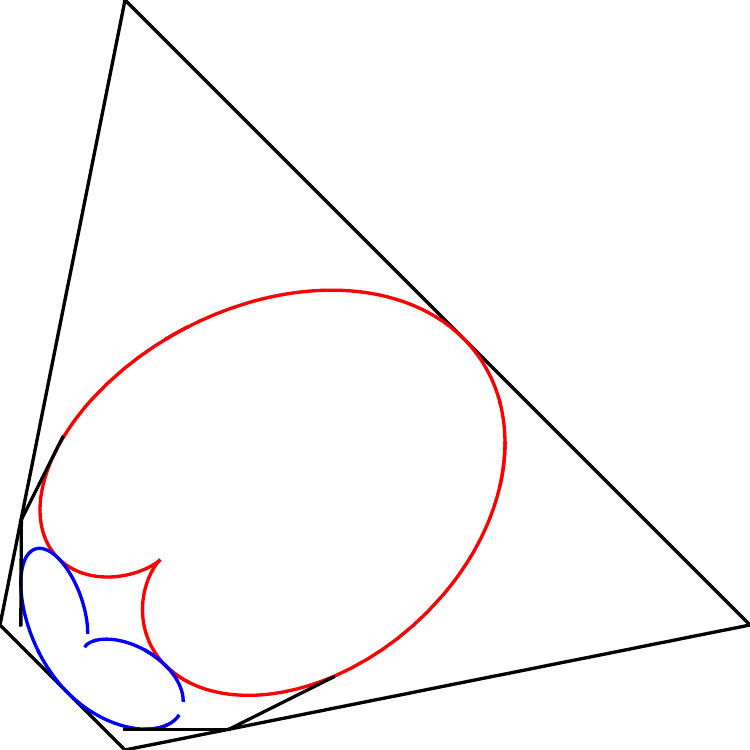}
	\subcaption[t]{\tiny $\tau=0,\sigma=3/4 $}
	\end{subfigure}
	\caption{Arctic curves for $(r,s,t)=(2,2,3)$ with fixed $\tau=0$}
	\label{223_tau=0}
\end{figure}

However, one different feature for this case is that when $\tau=\sigma=0$, the arctic curve no longer degenerates into a polygon.

\noindent{$\bullet$ \textbf{$\tau=\sigma$ case}}\hfill \newline
When $\tau=\sigma$, the leading coefficient $H_{223}(x,y,u,v)$ is a homogenous polynomial of degree $8$ in $x,y$ and the curve $P(u,v)$ is of degree $20$ in $u,v$ (see Fig.~\ref{random_223} for some illustration). Note the symmetry $\tau \to 1-\tau$ as expected from Sect. \ref{sec:sym}. 
%We will provide this computation in the Appendix B. 
%Therefore, we are not expecting any reduction in the number of inner regions.
\begin{figure}[H]
	\begin{subfigure}{0.17\textwidth}
	\centering
	\includegraphics[scale=0.2]{figures/223_toroidal/223_1_4_1_4.pdf}
	\subcaption[t]{\small $ \tau=1/4$}
	\end{subfigure}\hskip .3cm
	\begin{subfigure}{0.17\textwidth}
	\centering
	\includegraphics[scale=0.2]{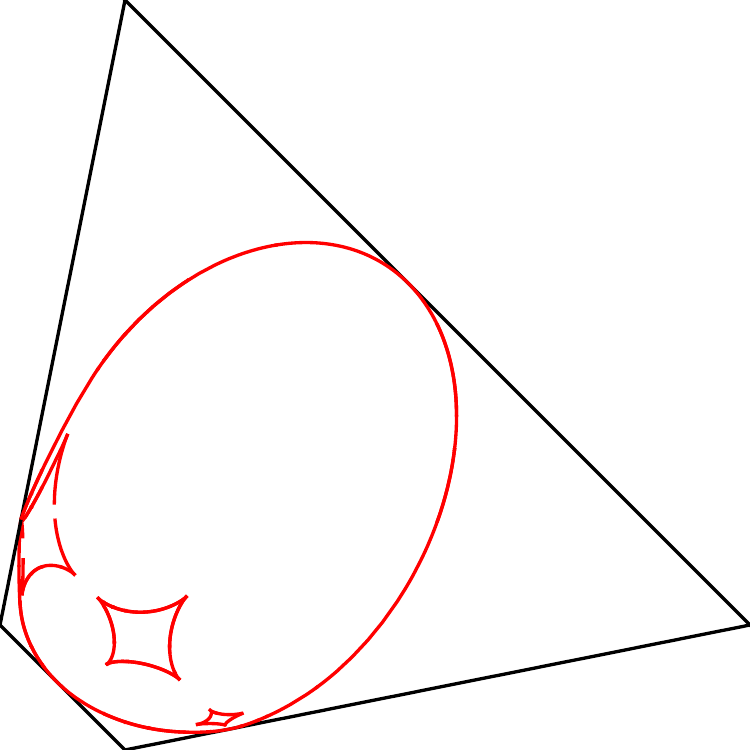}
	\subcaption[t]{\small $\tau=1/8$}
	\end{subfigure}\hskip .3cm
	\begin{subfigure}{0.17\textwidth}
	\centering
	\includegraphics[scale=0.2]{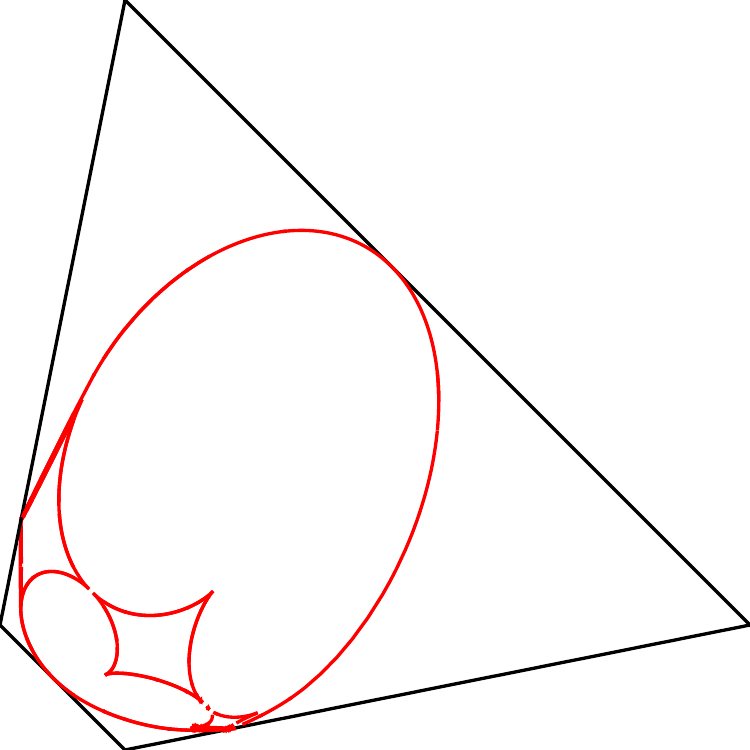}
	\subcaption[t]{\small $\tau=1/32$}
	\end{subfigure}\hskip .3cm
	\begin{subfigure}{0.17\textwidth}
	\centering
	\includegraphics[scale=0.2]{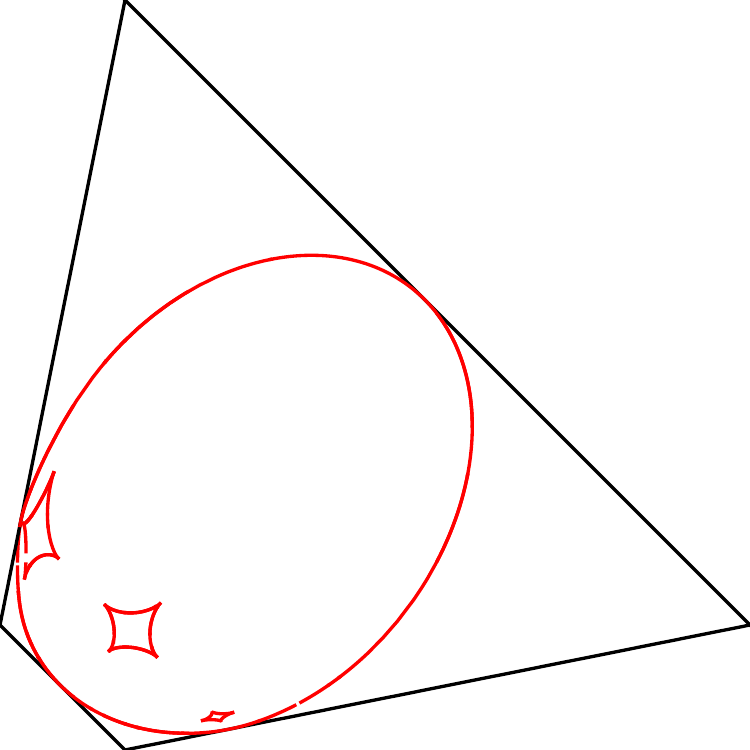}
	\subcaption[t]{\small $\tau=3/4$}
	\end{subfigure}\hskip .3cm
	\begin{subfigure}{0.17\textwidth}
	\centering
	\includegraphics[scale=0.2]{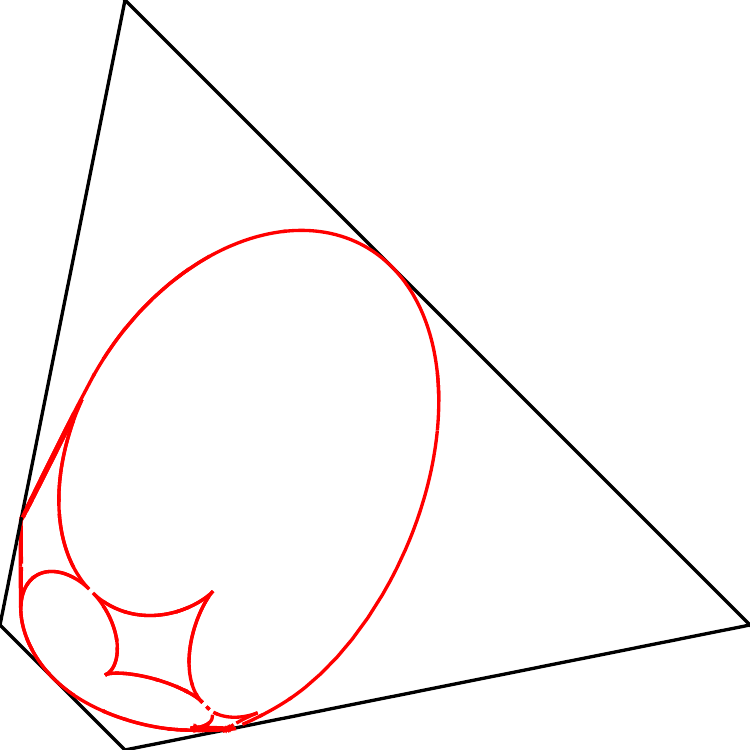}
	\subcaption[t]{\small $\tau=31/32$}
	\end{subfigure}
	\caption{Arctic curves for $(r,s,t)=(2,2,3)$ with some sample values of $\sigma=\tau$}
	\label{random_223}
\end{figure}

\subsubsection{\textbf{The case $r>1$, $t$  odd}}\hfill\newline
We fix the case of interest to be $(r,s,t)=(3,3,5)$. For the most general values of $\sigma,\tau$, the order of the expansion is $\theta=\theta_{3,3,5}=14$, but our computational capability does not provide credible resolution for the arctic curves. As the value of $\theta$ increases, we expect more inner regions within the scaled domain.
%	\begin{center}
%		\includegraphics[scale=.3]{plots/335_1:4_1:2.pdf}
%	\end{center}
However, as before, the calculation simplifies in special cases such as $\tau=0$ or $\tau=\sigma$ described below.

\noindent{$\bullet$ \textbf{$\tau=0$ and $\sigma$ arbitrary}} \hfill\newline
When $\tau=0$ and for all $\sigma$, the coefficient $H(x,y,u,v)$ is of the form:
\begin{equation}
% \resizebox{.8\textwidth}{!}{$
	\begin{aligned}
	H(x,y,u,v)|_{\tau=0}&= 4096 (2 u x+x+2 v y-y)^2 (2 u x-x+2 v y+y)^2\\
	&\times(2 u x+\sigma  x+2 x+2 v y+3 y-y \sigma ) (8 u x+\sigma  x-3 x+8 v y-2 y-y \sigma )\\
	&\times  H_1(x,y,u,v) \times H_2(x,y,u,v)
   \end{aligned}
	\label{eq:335_tau=0}
\end{equation}
where $H_1$ and $H_2$ are two factors of degree $4$ as polynomials of $x,y$. We obtain a similar situation as in previous section (see Fig.~\ref{fig:335_tau0}), where the arctic curve consists of two tangent components corresponding to the two factors  $H_1$ and $H_2$.
	\begin{figure}
		\begin{subfigure}{0.2\textwidth}
			\centering
			\includegraphics[width=2.cm]{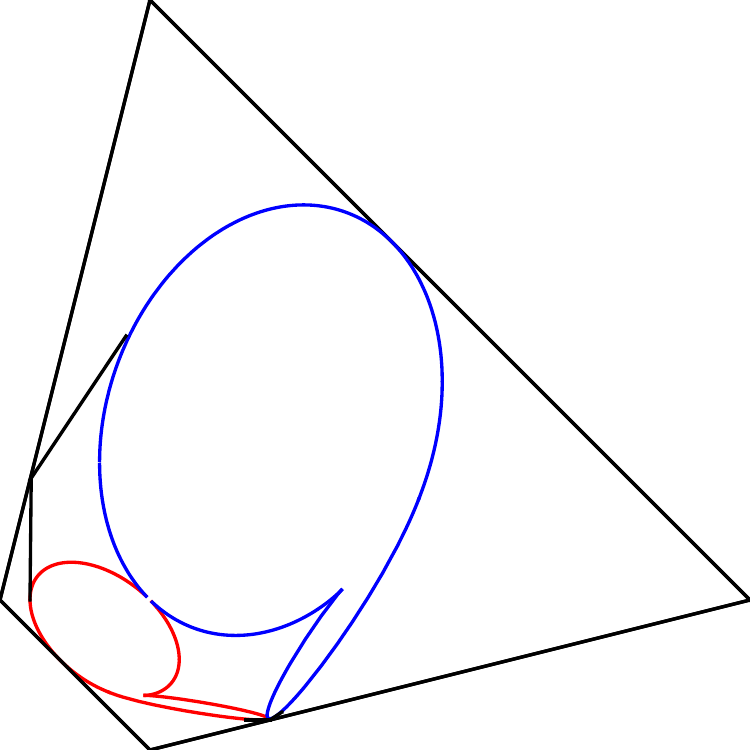}
			\subcaption[t]{$\tau=0,\sigma=1/64$}
		\end{subfigure}\hskip .5cm
		\begin{subfigure}{0.2\textwidth}
			\centering
			\includegraphics[width=2.cm]{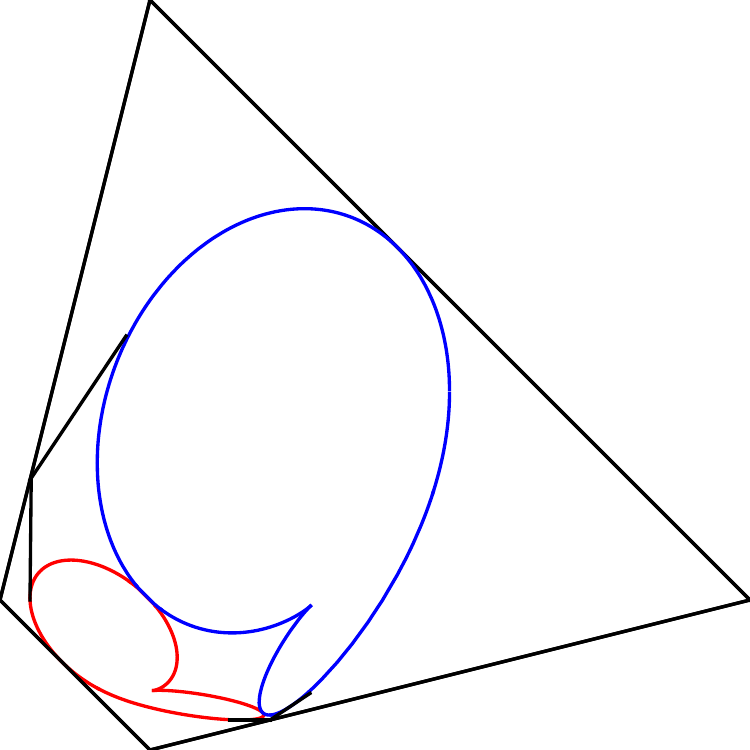}
			\subcaption[t]{$\tau=0,\sigma=1/16$}
		\end{subfigure}\hskip .5cm
		\begin{subfigure}{0.2\textwidth}
			\centering
			\includegraphics[width=2.cm]{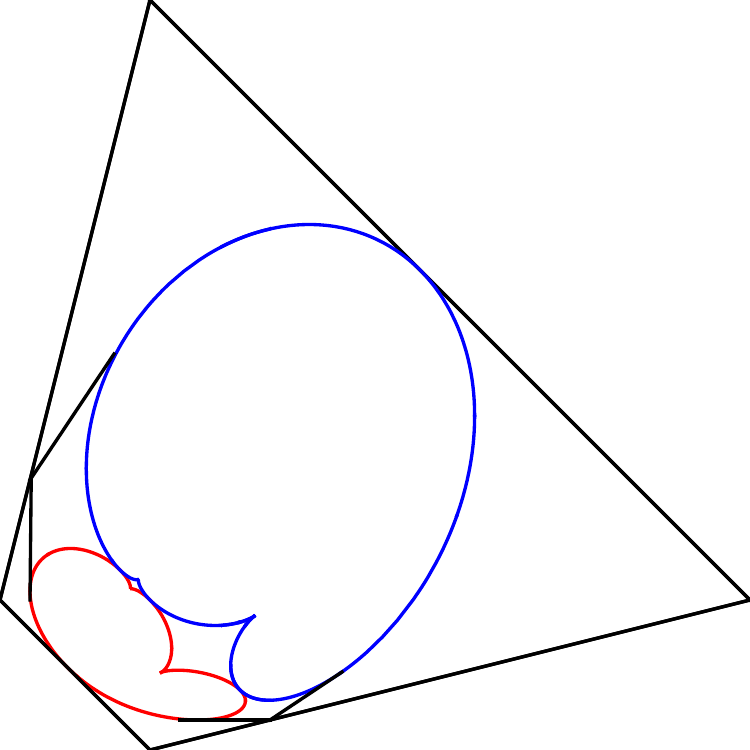}
			\subcaption[t]{$\tau=0,\sigma=1/4$}
		\end{subfigure}
		\begin{subfigure}{0.2\textwidth}
			\centering
			\includegraphics[width=2.cm]{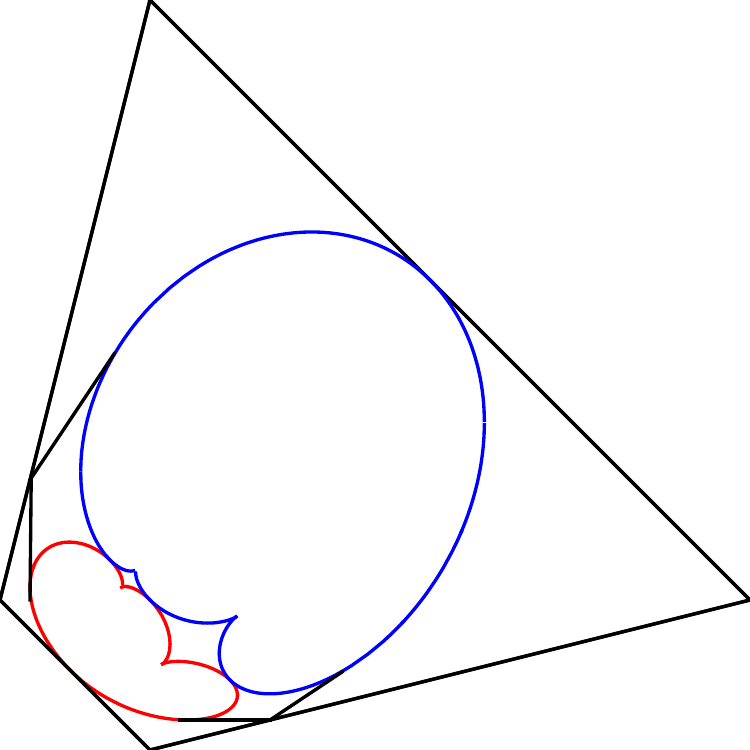}
			\subcaption[t]{$\tau=0,\sigma=1/3$}
		\end{subfigure}\\
		\begin{subfigure}{0.2\textwidth}
			\centering
			\includegraphics[width=2.cm]{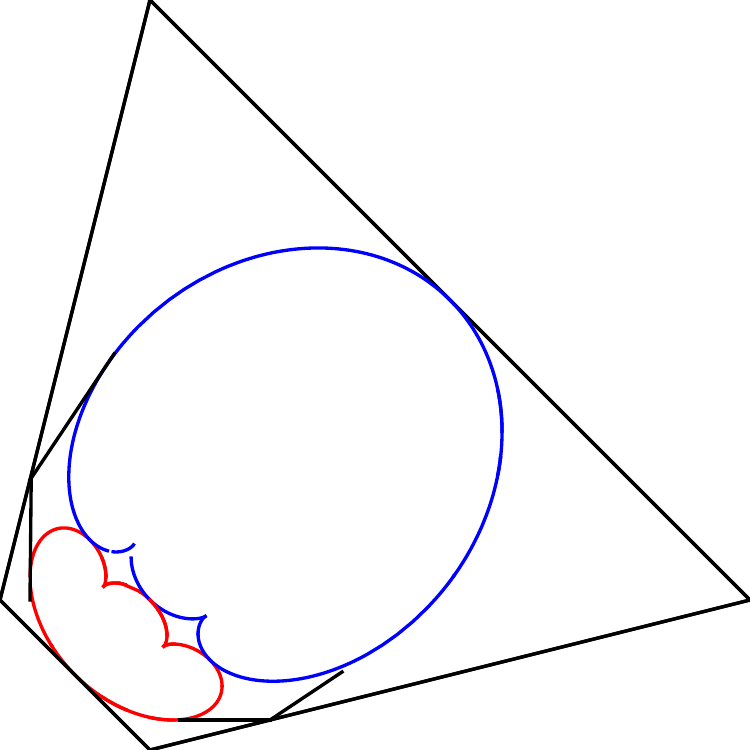}
			\subcaption[t]{$\tau=0,\sigma=1/2$}
		\end{subfigure}\hskip .5cm
		\begin{subfigure}{0.2\textwidth}
			\centering
			\includegraphics[width=2.cm]{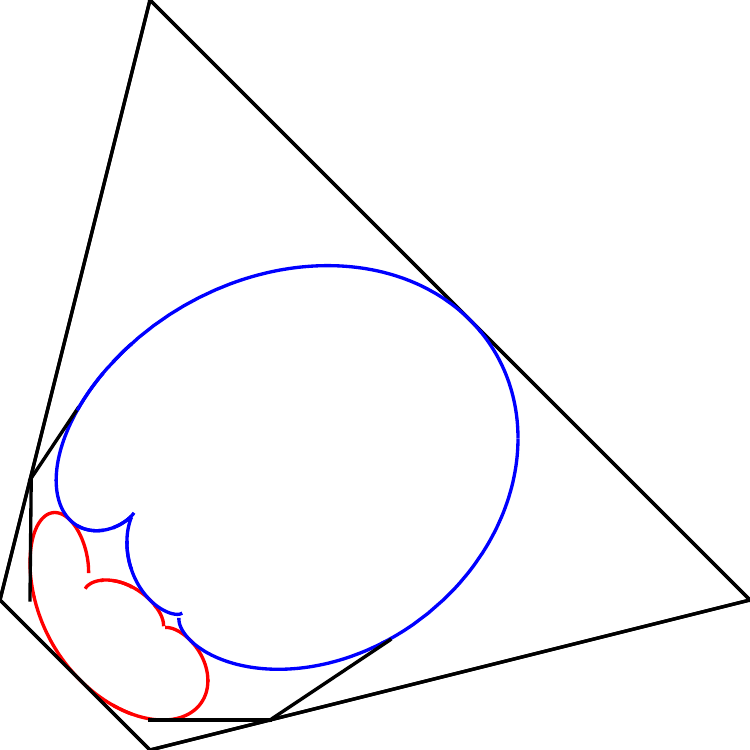}
			\subcaption[t]{$\tau=0,\sigma=2/3$}
		\end{subfigure}\hskip .5cm
		\begin{subfigure}{0.2\textwidth}
			\centering
			\includegraphics[width=2.cm]{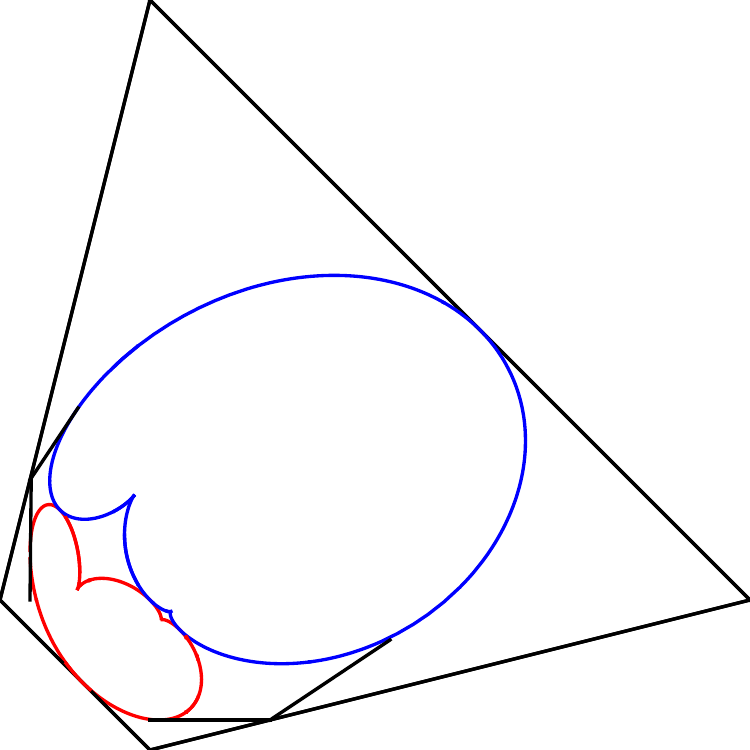}
			\subcaption[t]{$\tau=0,\sigma=3/4$}
		\end{subfigure}
		\caption{2 pieces curve of $(r,s,t)=(3,3,5), \tau=0$ }
		\label{fig:335_tau0}
	\end{figure}

\noindent{$\bullet$ \textbf{$\tau=\sigma$ arbitrary}} \hfill\newline
%We provide the coefficient polynomial in the appendix A. 
In this case, the factor of interest $H^*(x,y,u,v)$ in the coefficient $H(x,y,u,v)$ is of degree $6$ in $x,y$ (see \cite{mathfiles} for details), resulting in two inner regions (see Fig: \ref{fig:335_tau=sigma}).
\begin{figure}
		\begin{subfigure}{0.2\textwidth}
			\centering
			\includegraphics[width=2.5cm]{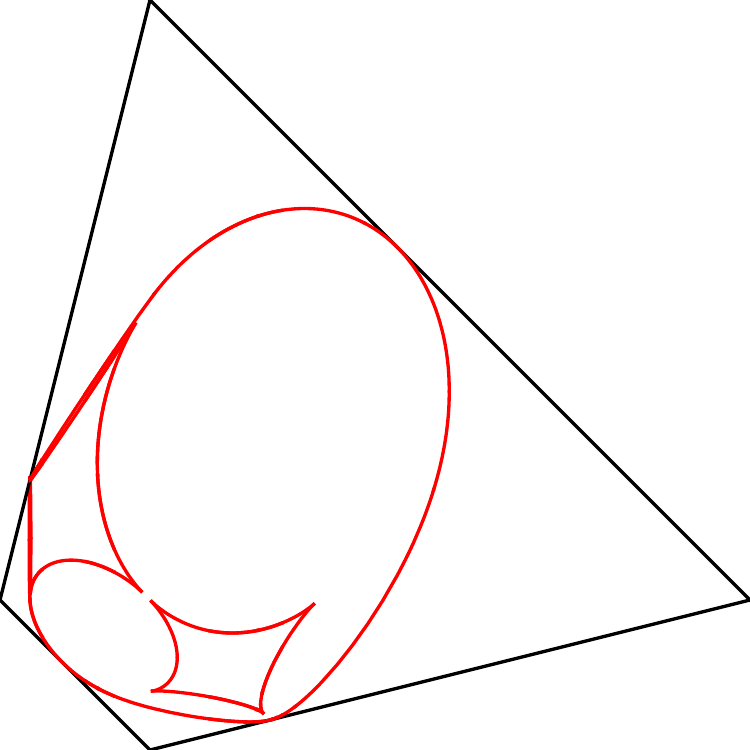}
			\subcaption[t]{$\tau=\sigma=1/32$}
		\end{subfigure}\hskip .5cm
		\begin{subfigure}{0.2\textwidth}
			\centering
			\includegraphics[width=2.5cm]{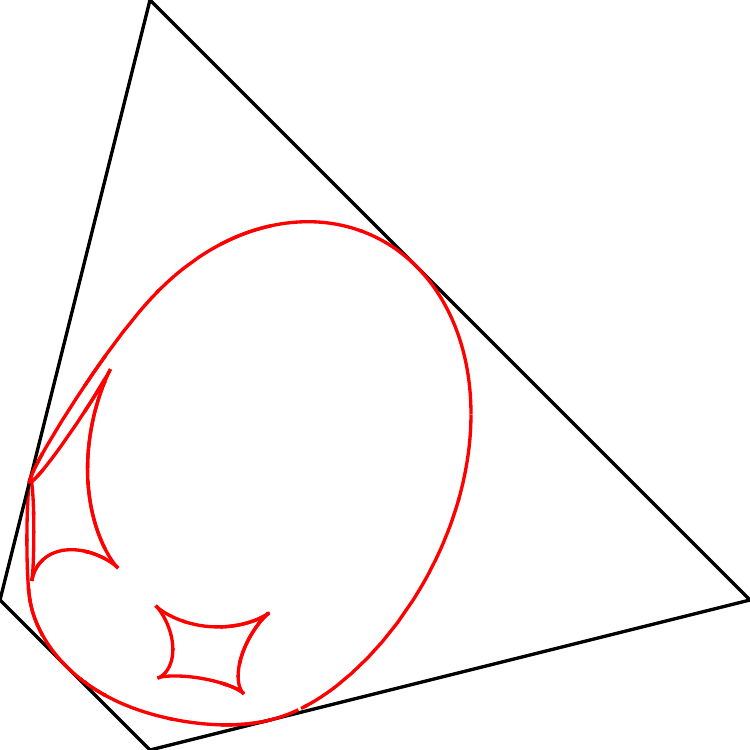}
			\subcaption[t]{$\tau=\sigma=1/8$}
		\end{subfigure}\hskip .5cm
		\begin{subfigure}{0.2\textwidth}
			\centering
			\includegraphics[width=2.5cm]{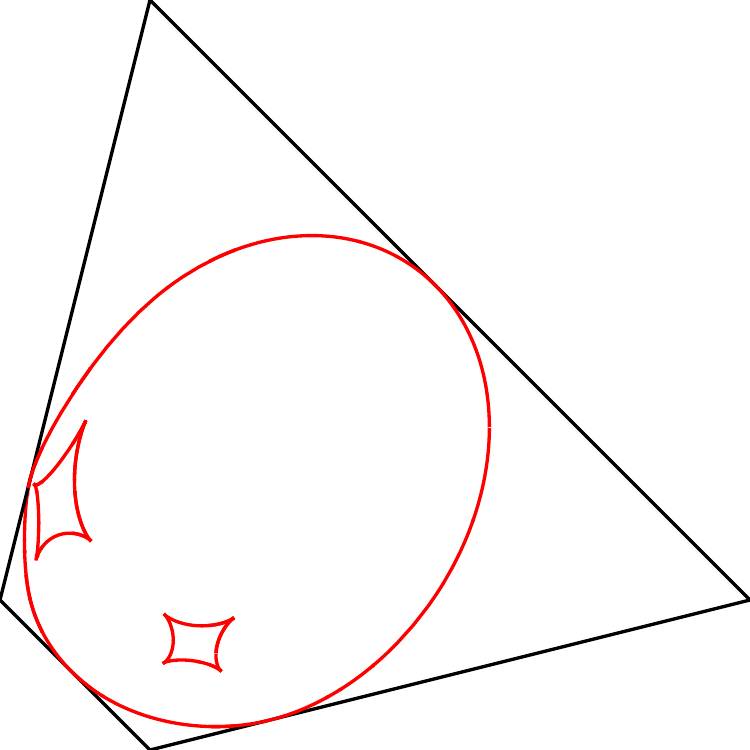}
			\subcaption[t]{$\tau=\sigma=1/4$}
		\end{subfigure}
		\begin{subfigure}{0.2\textwidth}
			\centering
			\includegraphics[width=2.5cm]{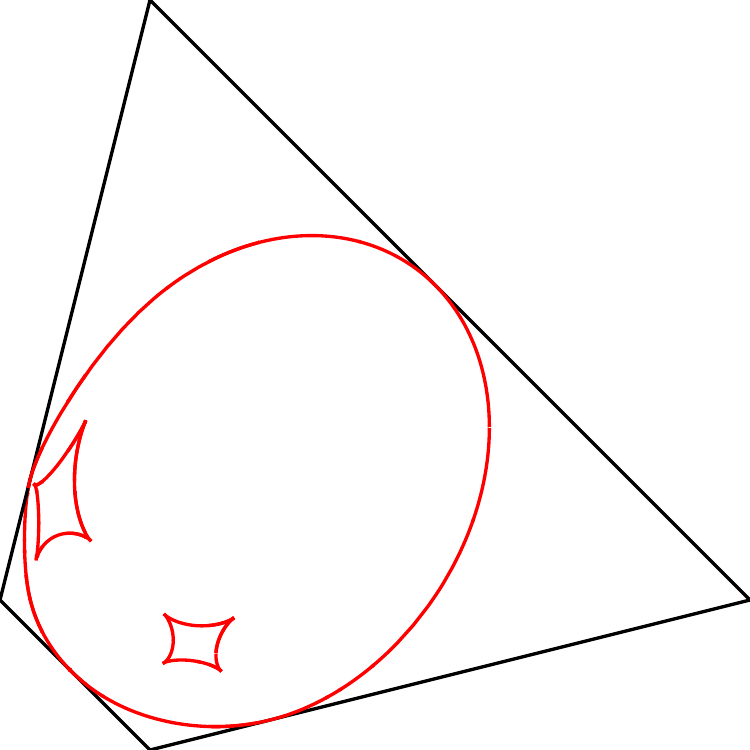}
			\subcaption[t]{$\tau=\sigma=3/4$}
		\end{subfigure}\\
		\caption{2 pieces curve of $(r,s,t)=(3,3,5), \tau=\sigma$ }
		\label{fig:335_tau=sigma}
	\end{figure}
%%%%%%%%%%%%%%%%
\section{A Holographic principle for $T$-system limit shapes}\label{sec:holo}

\subsection{General principle}\label{sec:hologen}
In this paper we have studied new exact solutions of the $T$-system, with $(r,s,t)$-initial data specified along parallel planes
prependicular to some direction $(r,s,t)$. The ``flat" case studied in \cite{DiFrancesco1} 
corresponds to $r=s=0$, $t=1$, and used different solutions of the $T$-system with various periodicities. 
In this section, starting from a solution of the $T$-system for some $(r,s,t)$-initial,
we re-interpret the {\it same} solution as corresponding to some different initial data along some $(\tilde r,\tilde s,\tilde t)$-planes. The latter is simply dictated by the exact solution of the former, however simple $(r,s,t)$ initial data (such as the uniform case studied in Section \ref{uniform solution}) naturally lead to highly non-uniform and more complicated initial data
in arbitrary $(\tilde r,\tilde s,\tilde t)$-planes. In particular, when $(\tilde r,\tilde s,\tilde t)=(0,0,1)$ our 2x2 periodic solutions give rise to new solutions of the $T$-system with ``flat " initial data in the planes $k=0,1$, which were not considered in \cite{DiFrancesco1}. However the holographic principle described below allows to derive arctic 
curves for those as well.

Using the new $(\tilde r,\tilde s,\tilde t)$ initial data settings, the solution of the $T$-system is interpreted as partition function for dimers on $(\tilde r,\tilde s,\tilde t)$ pinecones, whose limit shape is governed by the {\it same} equations as the original $(r,s,t)$ setting. In particular, the determinant of the linear system for the density $\rho$, 
which is a function $D_{r,s,t}(x,y,z)$ of $(r,s,t)$, 
remains the same. However, due to the new interpretation, we must apply the rescaling \eqref{denom} with the
new values $(\tilde r,\tilde s,\tilde t)$ instead of $(r,s,t)$, therefore the singularities of the density generating function $\rho(x,y,z)$ are governed by the function:
\begin{equation}
\label{holo}
\Delta_{r,s,t}^{\tilde r,\tilde s,\tilde t}(x,y,z)=D_{r,s,t}(z^\frac{\tilde r}{\tilde t}/x,z^\frac{\tilde s}{\tilde t}/y,z)
\end{equation}

This amounts to simply changing the ``point of view" 
on the same solution of the $T$-system, namely considering it from the perspective of another direction $(\tilde r,\tilde s,\tilde t)\neq (r,s,t)$, providing a sort of holographic view on the former result. 

In the next sections, we illustrate this holographic principle for uniform and 2x2 periodic cases.

\subsection{The uniform case}\label{sec:holo-uniform}

To illustrate the holographic concept, let us first consider the simplest case of uniform $(r,s,t)$-plane initial data
viewed from the ``flat" perspective with $(\tilde r,\tilde s,\tilde t)=(0,0,1)$.
Specifically, we re-interpret the solution $T_{i,j,k}$ of the $T$-system with $(r,s,t)$-slanted uniform initial data in section \ref{uniform solution} as new ``flat" initial data the solution for $k=0,1$. This new initial data reads $\tilde t_{i,j}=T_{i,j,{\rm Mod}(i+j,2)}$, with:
\begin{align*}
		T_{i,j,0}&=\alpha^{\frac{(ri+sj)(ri+sj-1)}{2}}\qquad (i+j=0\, {\rm mod}\, 2)\\
		T_{i,j,1}&=\alpha^{\frac{(ri+sj+t)(ri+sj+t-1)}{2}}\qquad (i+j=1\, {\rm mod}\, 2)
\end{align*}
with $\al>1$ as in \eqref{alphaeq}.
This non-uniform initial data $\tilde t_{i,j}$ on the flat stepped surface $k_{i,j}=i+j$ mod 2 depends explicitly on $(i,j,k)$
via the quantity $ri+sj+tk$ and on $(r,s,t)$ via $\al$ as well. 
Applying the rescaling \eqref{holo} with $(\tilde r,\tilde s,\tilde t)=(0,0,1)$, and repeating the derivation of the dual curve as in Section \ref{asymptotic of the density}, we find that the arctic curve \eqref{arcticu} is replaced with:
\begin{equation}\label{nonuniform_flat}
		(1-A)u^2+Av^2-A(1-A)=0
\end{equation}
where $A=A_{r,s,t}$ as in \eqref{adef}. Note that the result amounts to substituting  $(r,s,t)\to (\tilde r,\tilde s,\tilde t)=(0,0,1)$ in \eqref{arcticu}, while keeping the value of $A=A_{r,s,t}$ unchanged. 
The above is a family of ellipses parameterized by $A$, inscribed in the Aztec square $|u|+|v|=1$.
%Be aware that this is not the $r,s,t$ in the new non-uniform initial data (which we already makes clear by having $\tilde r, \tilde s, \tilde t$), but rather the $r,s,t$ encoded the scaled domain from the initial stepped surface. 
Note that when $r=s$, \eqref{nonuniform_flat} reduces to the artic circle, as $a=1-a=\frac{1}{2}$. 
%By the simlar process, we would have a family of ellipes inscribed in the rescaled domain, the square $|u|+|v|=1$, when $\tilde r\neq \tilde s$. 
We provide computational evidence for this observation in Fig. \ref{tildeuniform} for a few values of $(r,s,t)$. 
\begin{figure}[H]
	\begin{subfigure}{0.2\textwidth}
	\centering
	\includegraphics[scale=0.2]{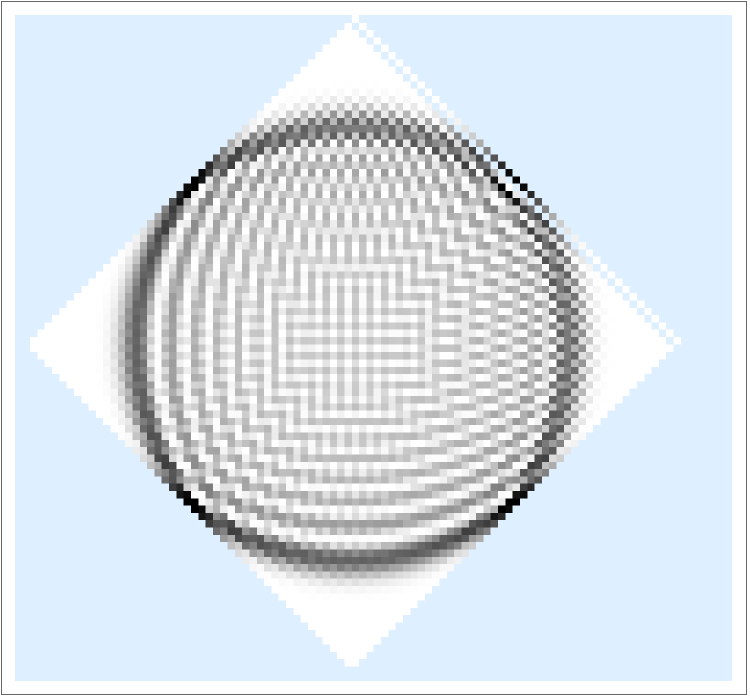}
	\subcaption[t]{\tiny $(r,s,t)=(1,1,3)$}
	\end{subfigure}\hskip 0.5cm
	\begin{subfigure}{0.2\textwidth}
	\centering
	\includegraphics[scale=0.2]{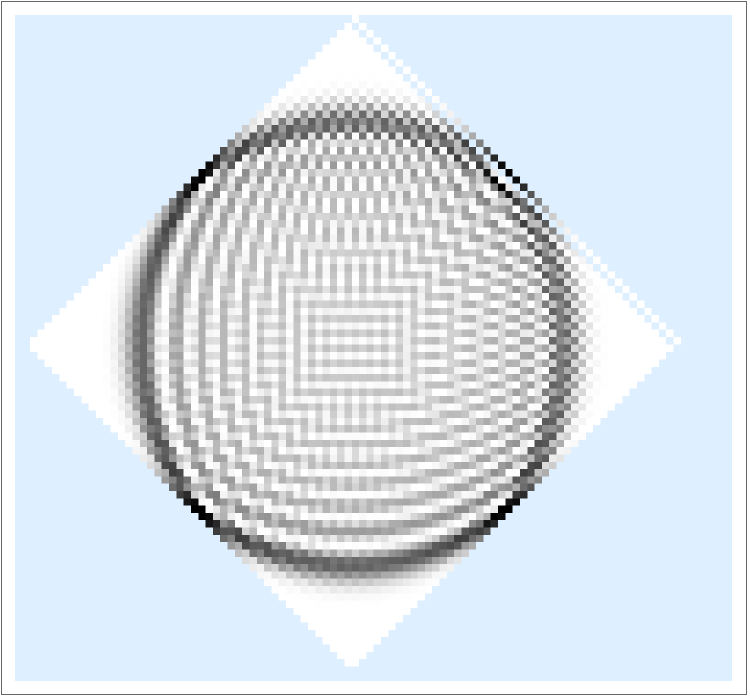}
	\subcaption[t]{ \tiny $(r,s,t)=(1,2,5)$}
	\end{subfigure}\hskip .5cm 
	\begin{subfigure}{0.2\textwidth}
	\centering
	\includegraphics[scale=0.2]{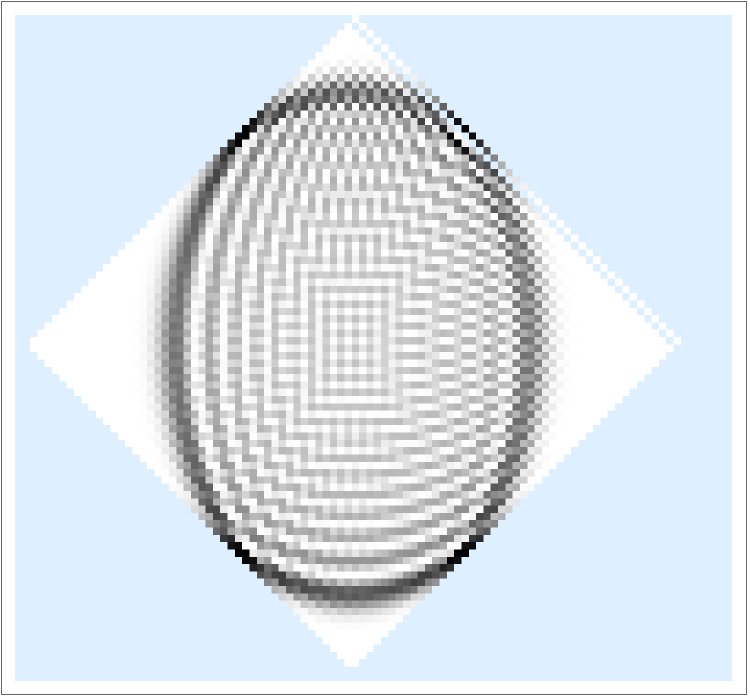}
	\subcaption[t]{\tiny $(r,s,t)=(1,7,9)$}
	\end{subfigure}\hskip .5cm
	\begin{subfigure}{0.2\textwidth}
	\centering
	\includegraphics[scale=0.2]{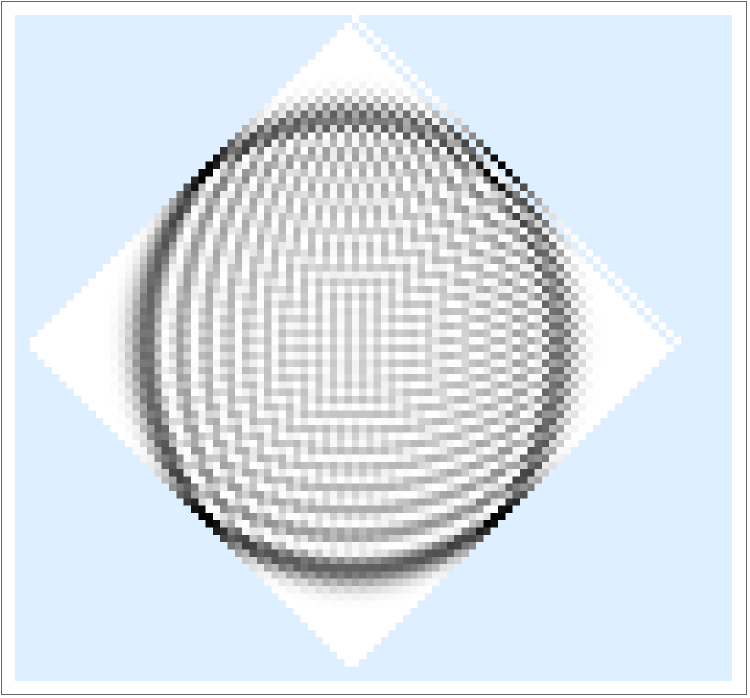}
	\subcaption[t]{\tiny $(r,s,t)=(1,7,15) $}
	\end{subfigure}\\\vskip 0.5cm
	\begin{subfigure}{0.2\textwidth}
	\centering
	\includegraphics[scale=0.2]{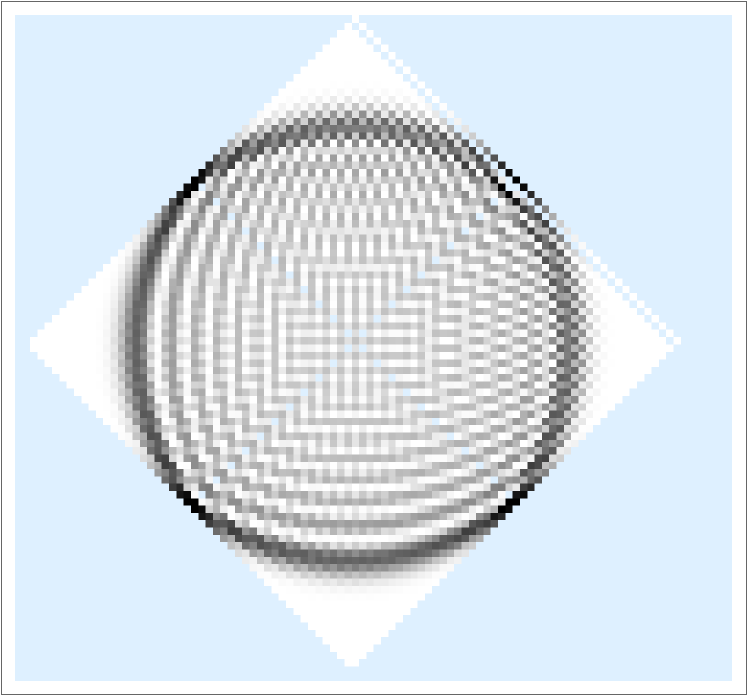}
	\subcaption[t]{\tiny $(r,s,t)=(2,2,9)$}
	\end{subfigure}\hskip .5cm
	\begin{subfigure}{0.2\textwidth}
	\centering
	\includegraphics[scale=0.2]{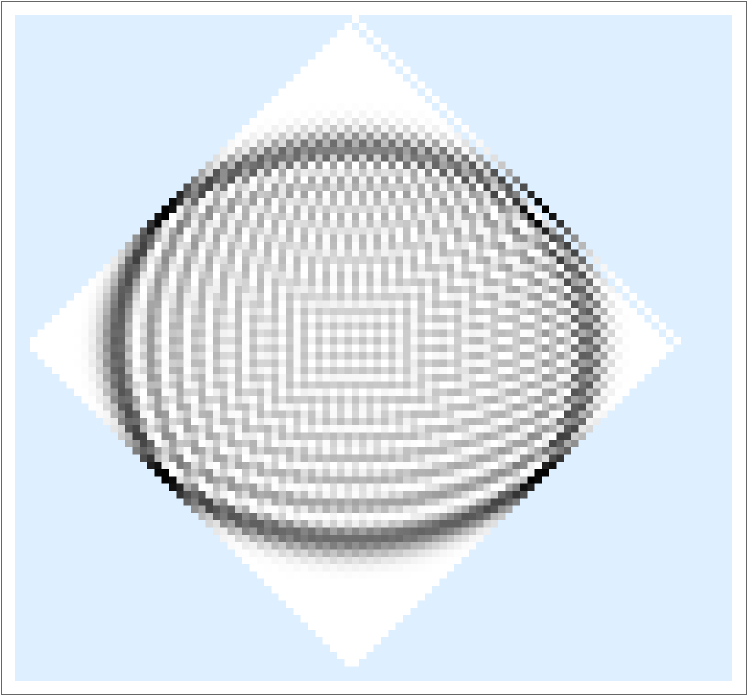}
	\subcaption[t]{\tiny $(r,s,t)=(5,0,8)$}
	\end{subfigure}\hskip .5cm
	\begin{subfigure}{0.2\textwidth}
	\centering
	\includegraphics[scale=0.2]{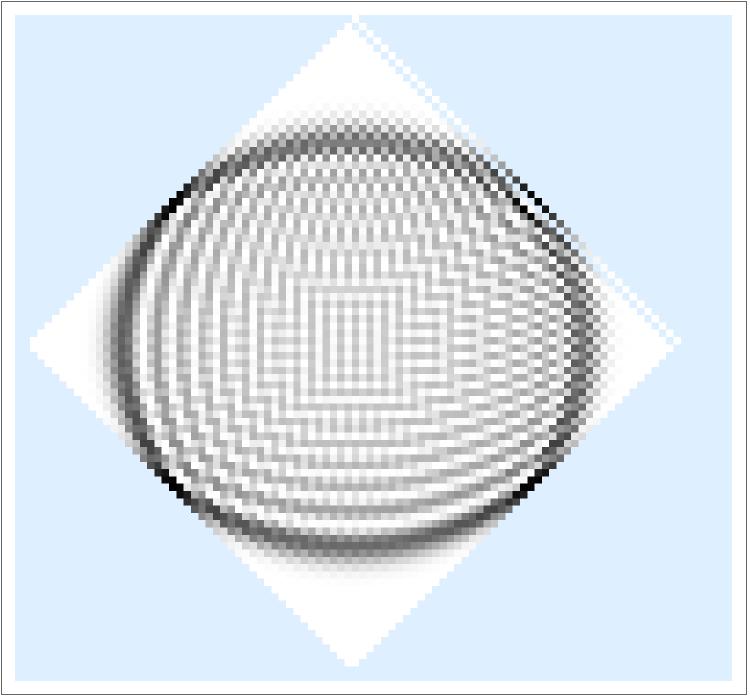}
	\subcaption[t]{\tiny $(r,s,t)=(7,0,13)$}
	\end{subfigure}\hskip .5cm
	\begin{subfigure}{0.2\textwidth}
	\centering
	\includegraphics[scale=0.2]{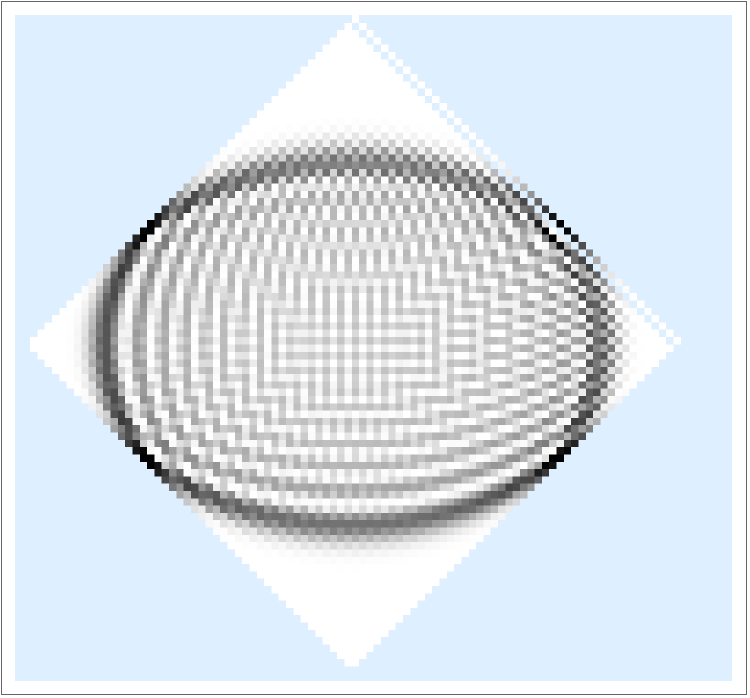}
	\subcaption[t]{\tiny $(r,s,t)=(3,0,4)$}
	\end{subfigure}
	\caption{\tiny{Density profile for given $(r,s,t)$-slanted values viewed from the $(0,0,1)$ perspective on the scaled domain $|u|+|v|=1$. We have represented the coefficient $\rho_{i,j,k}$ of the generating series $\rho(x,y,z)$ of order $k=45$ and $k=46$, written as a function of $(i/k,j/k)$, for $i \in \{-45, \cdots, 45\}$ and $j \in \{-45, \cdots, 52\}$} }
\label{tildeuniform}
\end{figure}

More generally, let us examine the $(r,s,t)$-uniform solution from the generic $(\tilde r,\tilde s,\tilde t)$ point of view. Recall the value \eqref{dfunc} of the denominator $D_{r,s,t}(x,y,z)$ of the density generating function $\rho^{(0,0,0)}(x,y,z)$ for the uniform $(r,s,t)$-slanted case.
%, the denominator of the density generating function is of the form: 
%$$D_{r,s,t}(x,y,z):=1+z^2-z\alpha^{r^2-t^2}\Big(x+\frac{1}{x}\Big)-z\alpha^{s^2-t^2}\Big(y+\frac{1}{y}\Big)$$ 
%where $\al$ satisfying the equation $\al^{t^2}=\al^{r^2}+\al^{s^2}$. 
Using the new rescaling \eqref{holo}, we get: 
$$\Delta_{r,s,t}^{\tilde r, \tilde s, \tilde t}(x,y,z):=D_{r,s,t}(z^{\tilde r/\tilde{t}}x^{-1},z^{\tilde s/\tilde{t}}y^{-1},z), $$
and expanding at leading order in $\lambda$, we find:
$$ \Delta_{r,s,t}^{\tilde r, \tilde s, \tilde t}(e^{\lambda x},e^{\lambda y},e^{-\lambda(ux+ vy)})= \lambda^2\, H(x,y,u,v)  +O(\lambda^4)$$
We again have an explicit polynomial $H(x,y,u,v)$ for which the vanishing Hessian condition leads to the following family of ellipses:
% \color{red}
% [xxx   probably the same as \eqref{arcticu} but with a unchanged while $(r,s,t)\to (\tilde r,\tilde s,\tilde t)$.]
% \color{black}
\begin{equation}\label{arctictttilde}
(1-A)\,{\tilde t}^2\,u^2+A\,{\tilde t}^2\,v^2-A \,(1-A)\,({\tilde r}u+{\tilde s}  v+{\tilde t})^2=0
\end{equation}
with $A=A_{r,s,t}$ as in \eqref{adef}.

\subsection{$2\times 2$-toroidal case}
We now illustrate the holographic principle in the case of $(r,r,t)$-$2 \times 2$ periodic initial data as in \eqref{initdataik}, with the solution as in Section \ref{periodic lemma}, viewed from the ``flat" perspective with 
$(\tilde r,\tilde s,\tilde t)=(0,0,1)$. 
As before, the new initial data is simply $\tilde t_{i,j}=T_{i,j,{\rm Mod}(i+j,2)}$, where $T_{i,j,0}$ and $T_{i,j,1}$ are the solutions of the $2 \times 2$-toroidal, $(r,r,t)$-slanted $T$-system. 

For the remainder of this section, let us restrict to the case $(r,r,t)=(1,1,3)$. 
As explained above, the singularities of the density function are governed by the rescaled determinant of the 
system $\Delta_{113}^{0,0,1}(x,y,z):=D_{113}(x^{-1},y^{-1},z)$ with $D_{1,1,3}=D$ as in Sect. \ref{odd_odd_section}. 
This leads to new dual arctic curves inscribed in the Aztec square $|x|+|y|=1$. We now follow the sequence of results of Section \ref{toroidal example section}, which we reinterpret in the $(\tilde r,\tilde s,\tilde t)=(0,0,1)$ setting.
\subsubsection{\textbf{case $\tau =0$.}}\hfill\newline
For $\tau=0$, we again obtain the same type of inner curves but now with the scaling domain $|u|+|v|=1$ of the Aztec Diamond (see Fig.~\ref{flat_nonuniform_tau0}). The leading order terms at $\lambda^{10}$ and depending on $\sigma,x,y$ takes the form: 
$$\resizebox{.95\hsize}{!}{$\begin{aligned}
&	H(x,y)=1024 (2 u x+2 v y+x-y)^2 (2 u x+2 v y-x+y)^2 \\
&\times(3 u x+3 v y-\sigma  x-x+\sigma  y-2 y) (3 u x+3 v y-\sigma  x+2 x+\sigma  y+y) \\
&\times\left(6 u^2 x^2+12 u v x y+4 \sigma  u x^2+u x^2-4 \sigma  u x y+5 u x y+6
   v^2 y^2+4 \sigma  v x y+v x y-4 \sigma  v y^2+5 v y^2+3 \sigma  x^2-2 x^2+x y-3 \sigma  y^2+y^2\right) \\
 &\times \left(6 u^2 x^2+12 u v x y+4 \sigma  u x^2-5 u x^2-4 \sigma  u x y-u x y+6 v^2 y^2+4 \sigma  v x y-5 v x y-4 \sigma  v y^2-v y^2-3 \sigma  x^2+x^2+x y+3 \sigma  y^2-2 y^2\right)
\end{aligned}$} $$ 
By the same techniques as before, we obtain similar curves as in Section  \ref{tau=0}, but with the scaled domain for $(\tilde r,\tilde s,\tilde t)=(0,0,1)$.
\begin{figure}
\begin{subfigure}{0.2\textwidth}
	\centering
    \includegraphics[scale=.2]{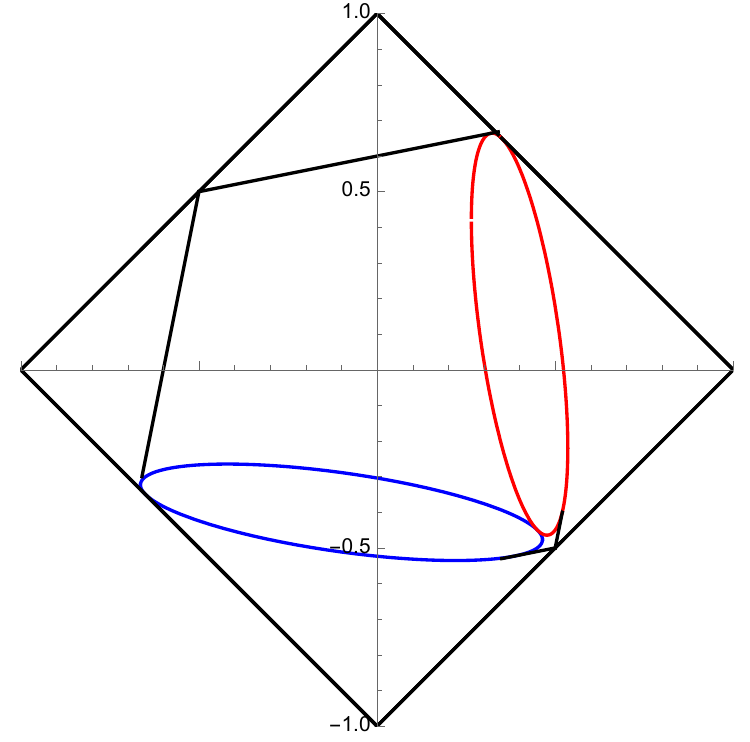}
    \subcaption[t]{\tiny $\tau=0$, $\sigma=1/20$} 
\end{subfigure}\hskip 2cm
\begin{subfigure}{0.2\textwidth}
		\centering
        \includegraphics[scale=.2]{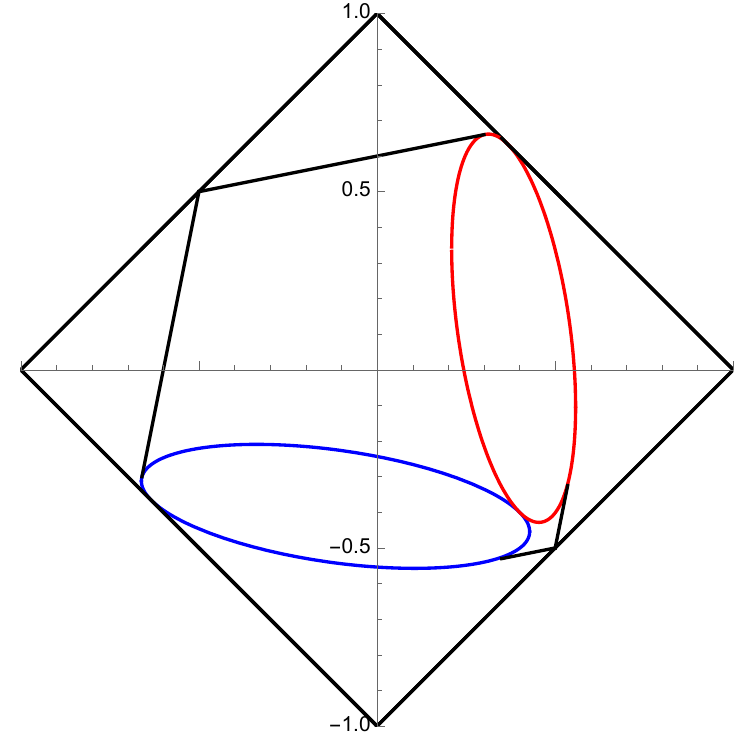}
        \subcaption[t]{\tiny $\tau=0$, $\sigma=1/10$} 
\end{subfigure}\hskip 2cm
\begin{subfigure}{0.2\textwidth}
		\centering
        \includegraphics[scale=.2]{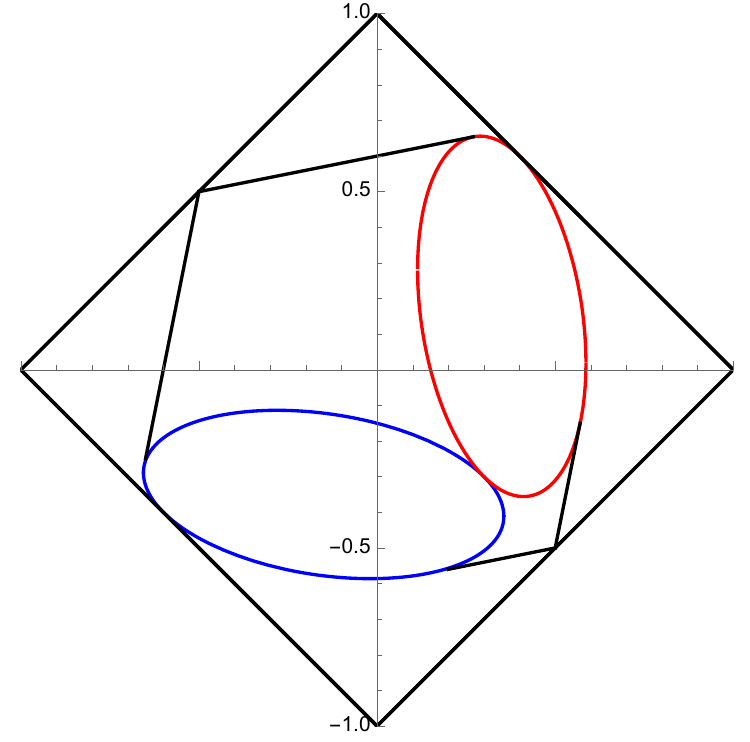}
        \subcaption[t]{\tiny $\tau=0$, $\sigma=1/5$} 
 \end{subfigure}\hskip 2cm
 \begin{subfigure}{0.2\textwidth}
 		\centering
        \includegraphics[scale=.2]{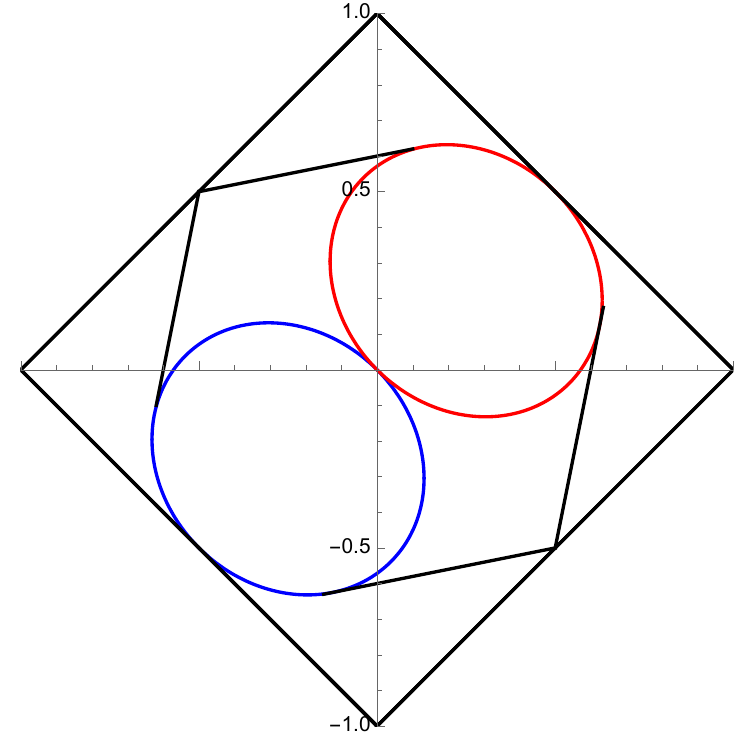}
        \subcaption[t]{\tiny$\tau=0$, $\sigma=1/2$} 
 \end{subfigure}\hskip 2cm
 \begin{subfigure}{0.2\textwidth}
 		\centering
        \includegraphics[scale=.2]{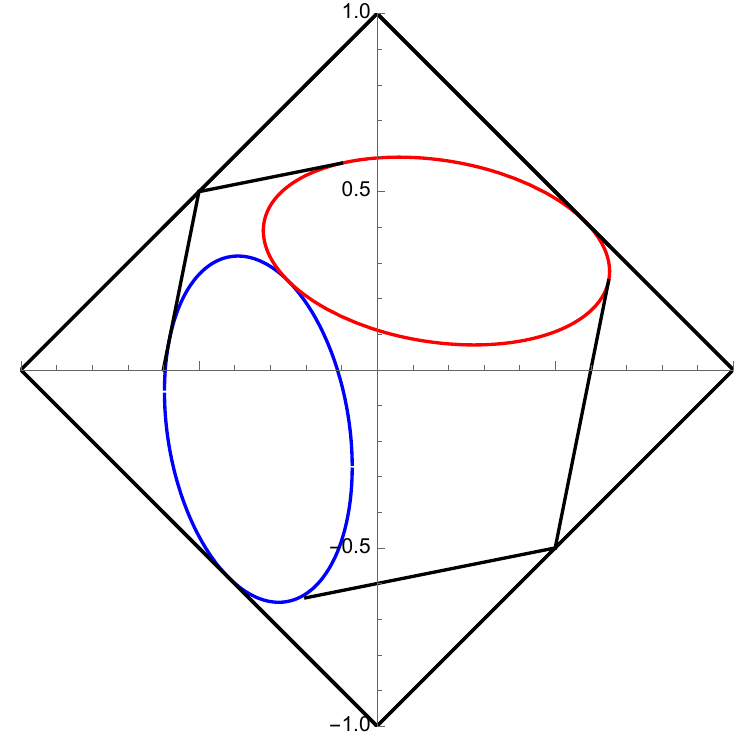}
        \subcaption[t]{\tiny$\tau=0$, $\sigma=3/4$} 
 \end{subfigure}
 	\caption{Arctic curves library for non-uniform data, fixed $\tau=0$ and $\sigma$ varied}
  \label{flat_nonuniform_tau0}
\end{figure}

\subsubsection{\textbf{case $\tau =\sigma$.}}\hfill\newline
We repeat the above with the case $\sigma=\tau$ in Fig \ref{flat_nonuniform_sigma=tau} and the case $\sigma \neq\tau$ in Fig \ref{flat_nonuniform_sigma not tau}. Explicit forms of arctic curves are available upon request.
\begin{figure}
\begin{subfigure}{0.2\textwidth}
	\centering
    \includegraphics[scale=.2]{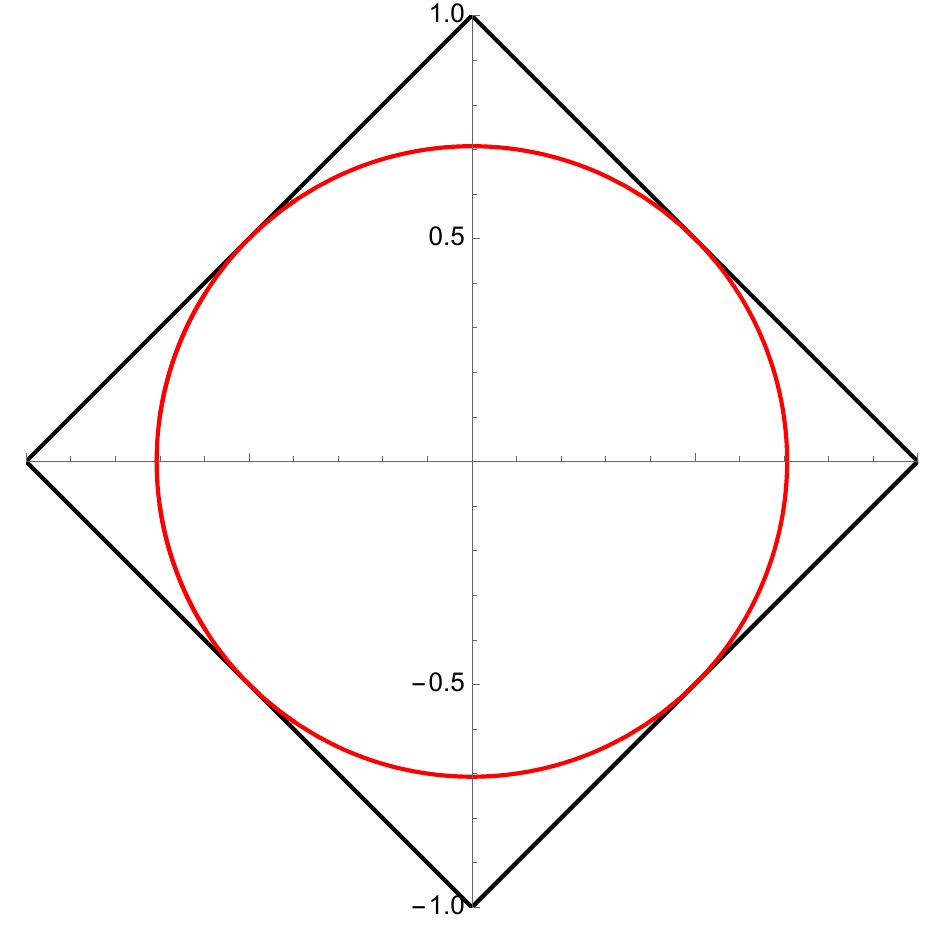}
    \subcaption[t]{\tiny $\tau=\sigma=1/2$} 
\end{subfigure}\hskip 2cm
\begin{subfigure}{0.2\textwidth}
		\centering
        \includegraphics[scale=.3]{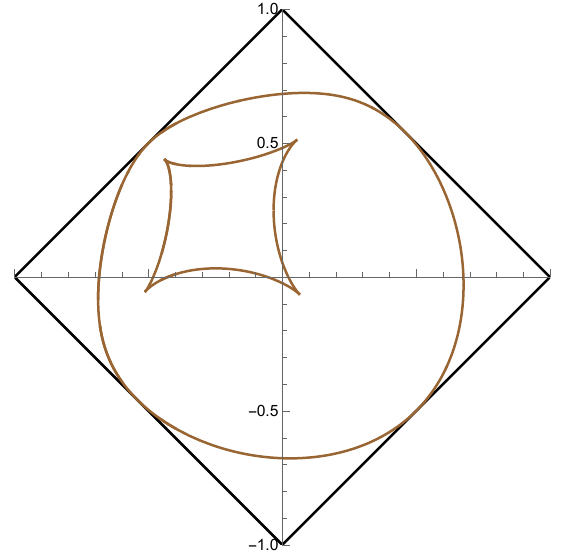}
        \subcaption[t]{\tiny $\tau=\sigma=1/4$} 
\end{subfigure}\hskip 2cm
\begin{subfigure}{0.2\textwidth}
		\centering
        \includegraphics[scale=.3]{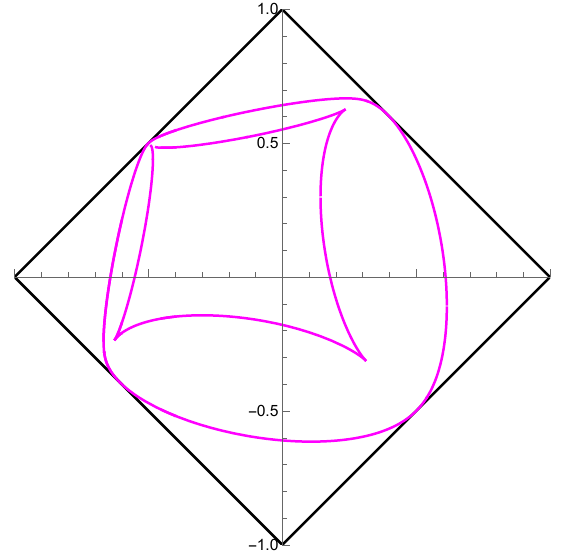}
        \subcaption[t]{\tiny $\tau=\sigma=1/10$} 
 \end{subfigure}\\
 \begin{subfigure}{.2\textwidth}
 		\centering
        \includegraphics[scale=.2]{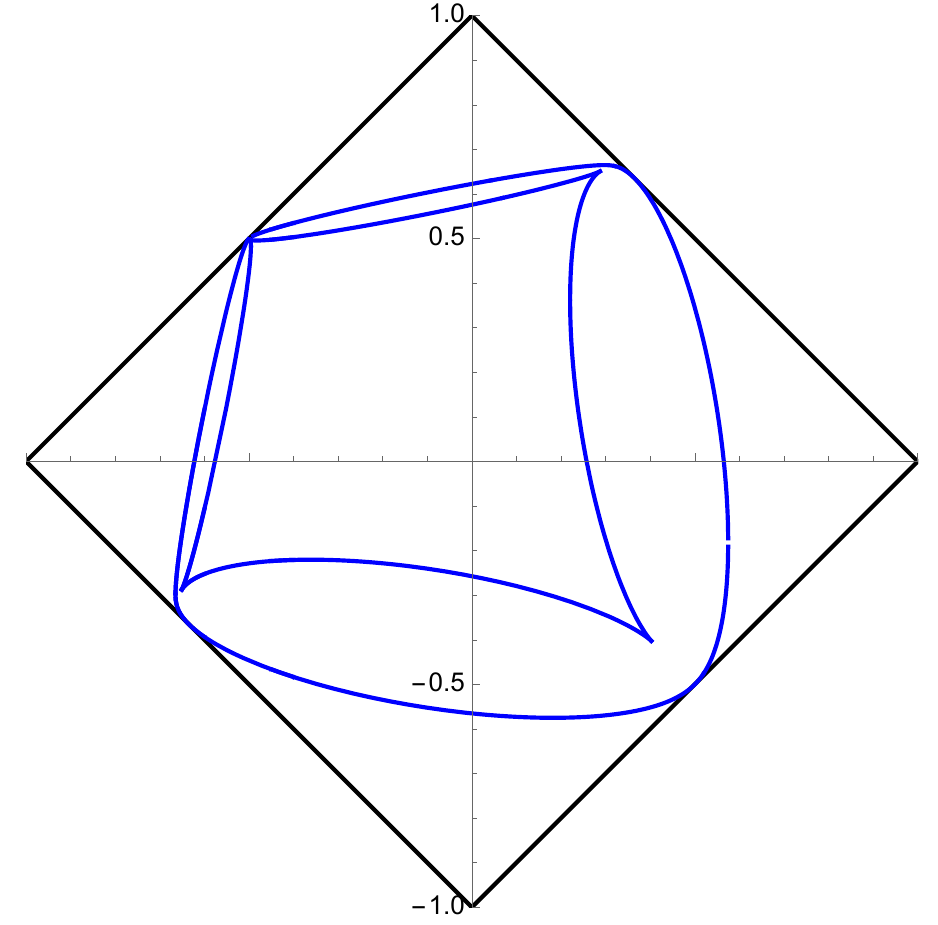}
        \subcaption[t]{\tiny$\tau=\sigma=1/20$} 
 \end{subfigure}\hskip 2cm
 \begin{subfigure}{.2\textwidth}
 		\centering
        \includegraphics[scale=.2]{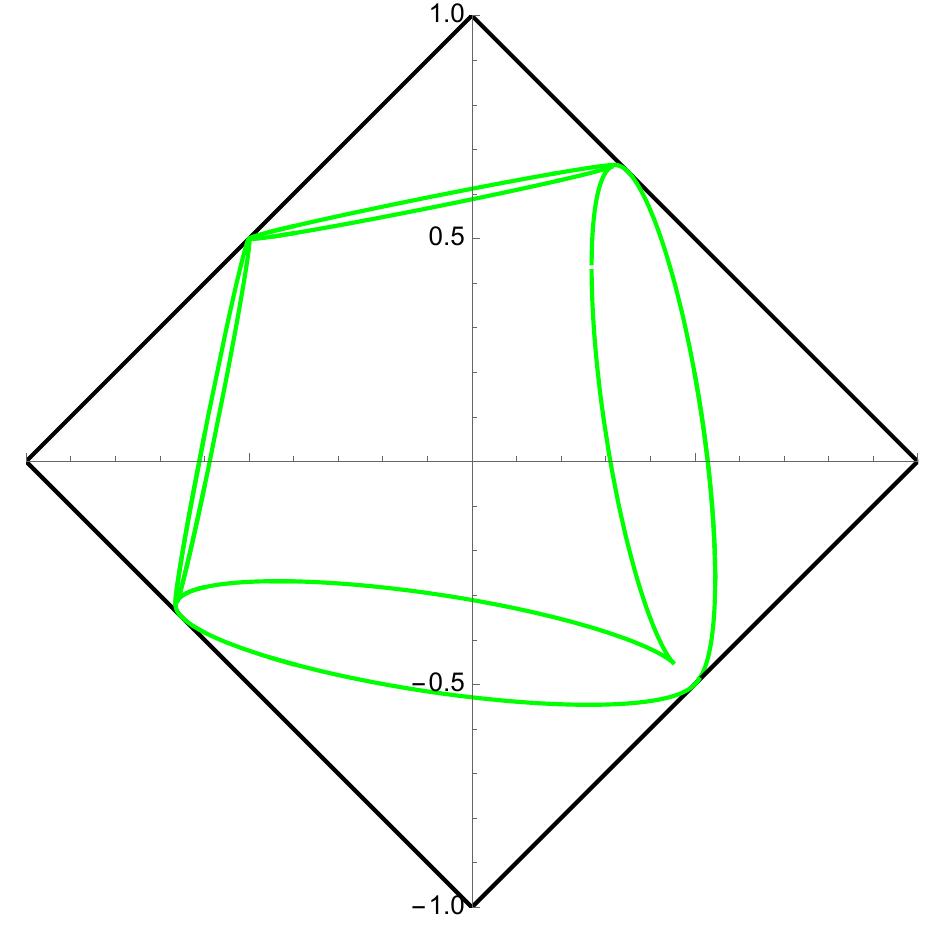}
        \subcaption[t]{\tiny$\tau=\sigma=1/40$} 
 \end{subfigure}\hskip 2cm
 \begin{subfigure}{.2\textwidth}
 	\centering
 	\includegraphics[scale=.2]{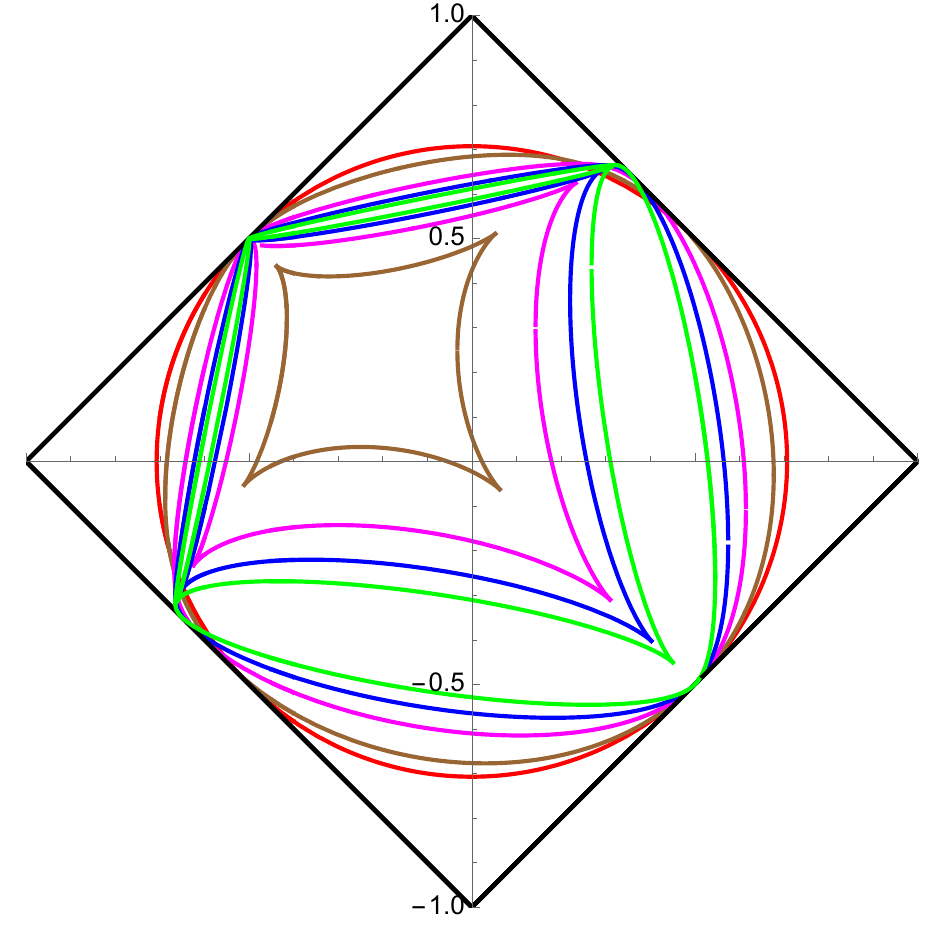}
    \subcaption[t]{Curves (A-E)} 
 \end{subfigure}
 % \begin{subfigure}
 % 	\includegraphics[scale=.3]{plots/nonuniform_113_tau=sigma_mergeall.pdf} \end{subfigure}
 	\caption{Arctic curves library for non-uniform data, $\tau=\sigma$ varied}
  \label{flat_nonuniform_sigma=tau}
\end{figure}
\subsubsection{\textbf{case $\tau=1/2$, $\sigma$ arbitrary.}}\hfill\newline
The case  $\tau=1/2$ and $\sigma$ arbitrary is represented in Fig.~\ref{flat_nonuniform_sigma not tau}.
\begin{figure}[H]
\begin{subfigure}{0.2\textwidth}
	\centering
    \includegraphics[scale=.25]{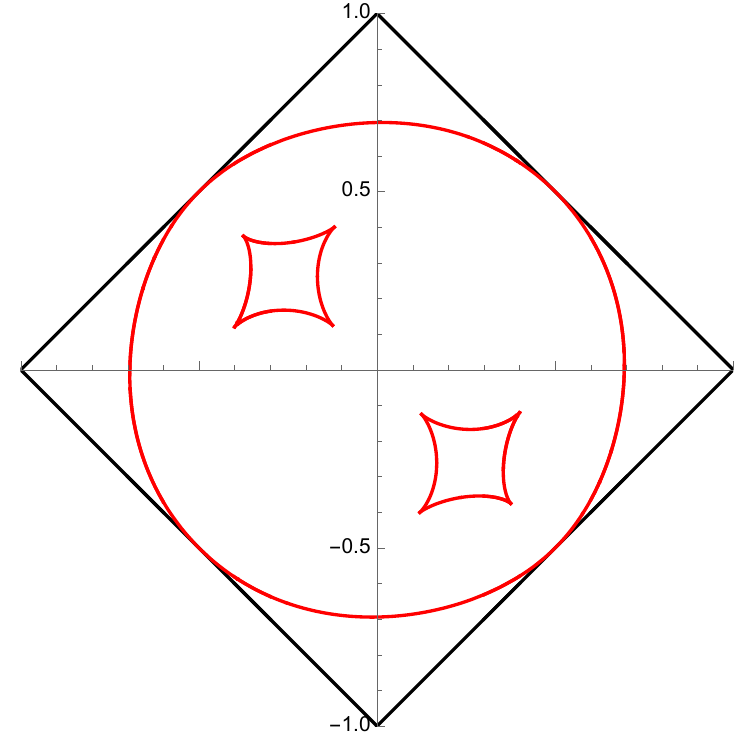}
    \subcaption[t]{\tiny $\tau=1/2$, $\sigma=1/4$} 
\end{subfigure}\hskip 1cm
\begin{subfigure}{0.2\textwidth}
		\centering
        \includegraphics[scale=.25]{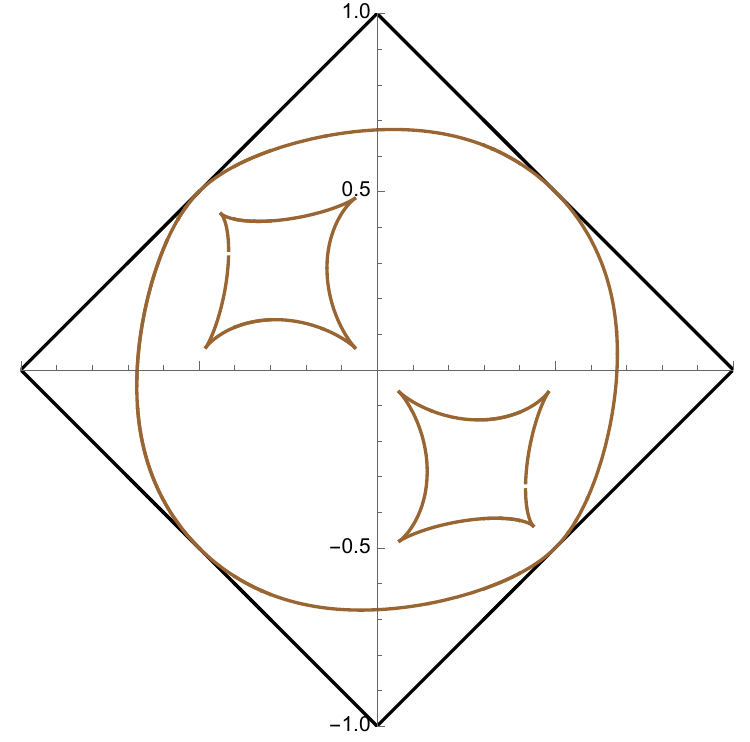}
        \subcaption[t]{\tiny $\tau=1/2$, $\sigma=1/8$} 
\end{subfigure}\hskip 1cm
\begin{subfigure}{0.2\textwidth}
		\centering
        \includegraphics[scale=.2]{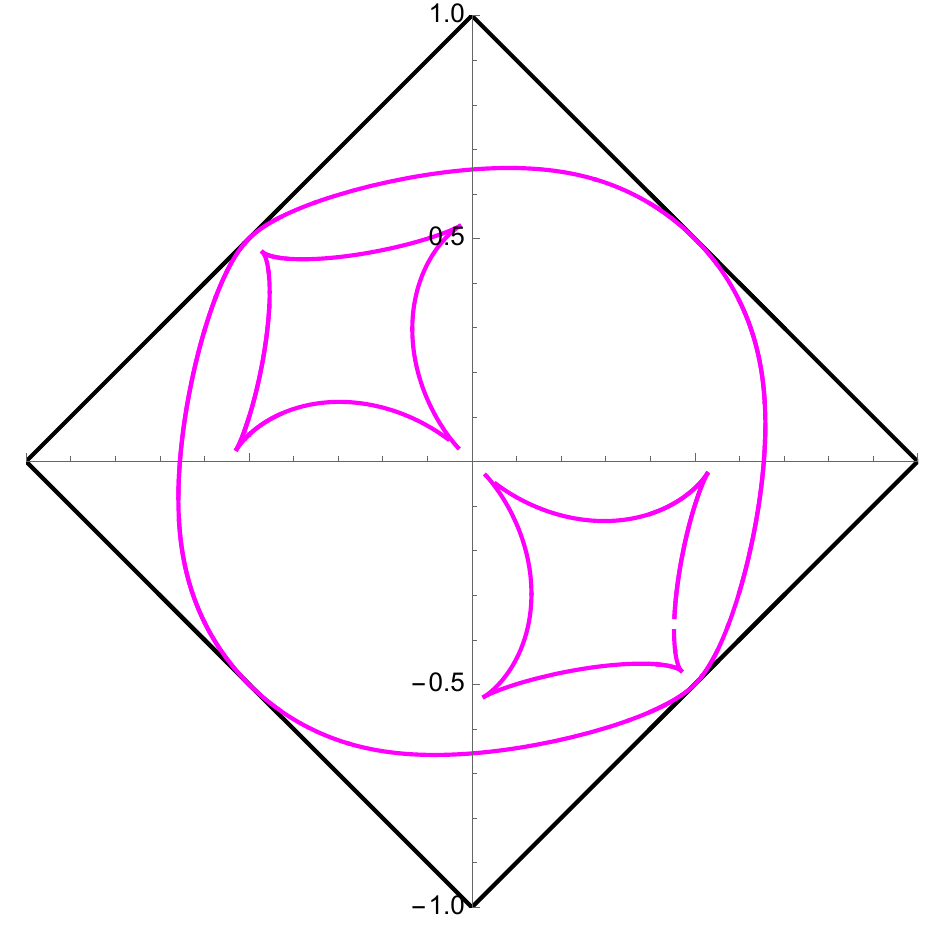}
        \subcaption[t]{\tiny $\tau=1/2$, $\sigma=1/16$} 
 \end{subfigure}\\
 \begin{subfigure}{0.3\textwidth}
 		\centering
        \hskip .65cm\includegraphics[scale=.25]{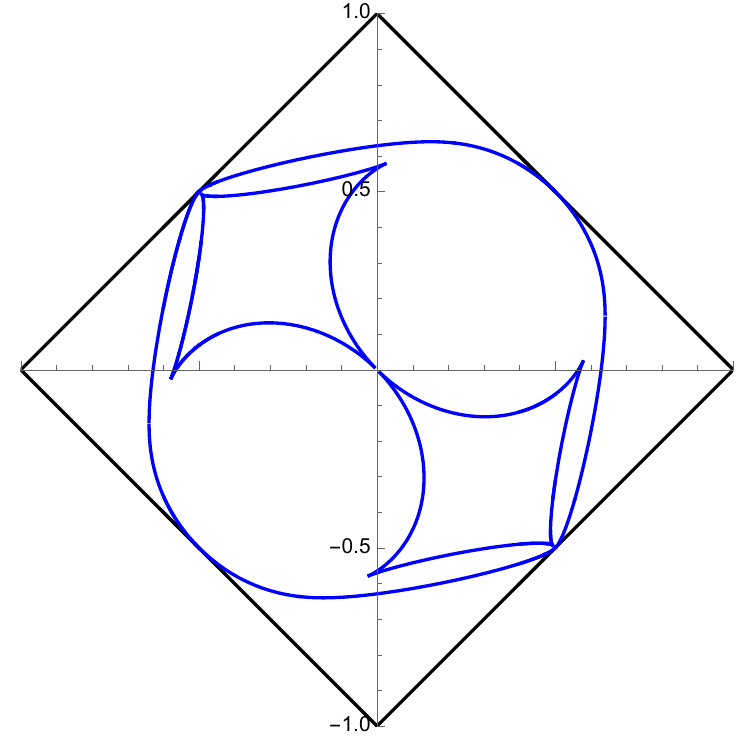}
        \subcaption[t]{\tiny$\tau=1/2$, $\sigma=1/64$} 
 \end{subfigure}
 \begin{subfigure}{0.5\textwidth}
 	\centering
 	\hskip 2cm\includegraphics[scale=.3]{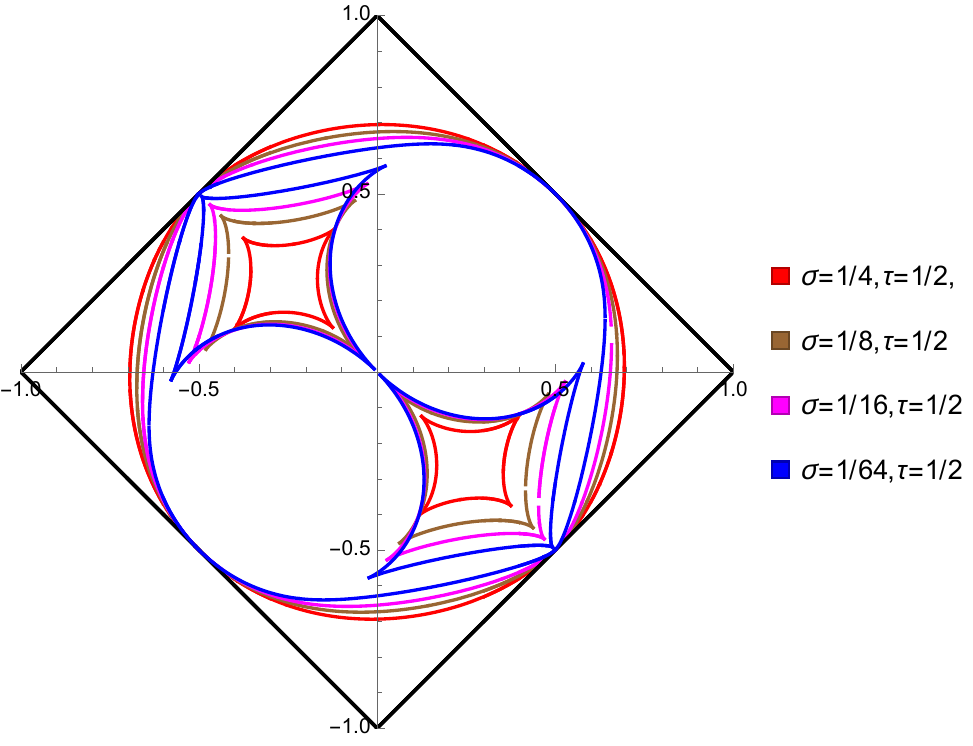}
    \subcaption[t]{Curves (A-D)} 
 \end{subfigure}
 % \begin{subfigure}
 % 	\includegraphics[scale=.3]{plots/nonuniform_113_tau=sigma_mergeall.pdf} \end{subfigure}
 \caption{Arctic curves library for non-uniform data, fixed $\tau=1/2$ and $\sigma$ varied}
  \label{flat_nonuniform_sigma not tau}
\end{figure}

\subsubsection{\textbf{Generic holography.}}\hfill\newline

\begin{remark}
While our $2\times 2$ toroidal solutions only hold in the $(r,r,t)$ cases,  there is no restriction on the ``point of view" direction $(\tilde r,\tilde s,\tilde t)$. Indeed, we can take any triple $(\tilde r,\tilde s,\tilde t)$ encoding the desired stepped surface where the new initial data $\tilde t_{i,j}$ lies. 
\end{remark}

To illustrate the remark above, we show in Fig.~\ref{Fig:nonuniform library non flat pic} several arctic curves from different $(\tilde r,\tilde s,\tilde t)$-stepped surfaces points of view, taking for initial data the $2 \times 2$-toroidal $(1,1,3)$-slanted $T$-system with 
$\sigma=\frac{1}{2}$ and $\tau=\frac{1}{4}$. 

\begin{figure}[H]
\begin{subfigure}{0.2\textwidth}
	\centering
    \includegraphics[scale=.2]{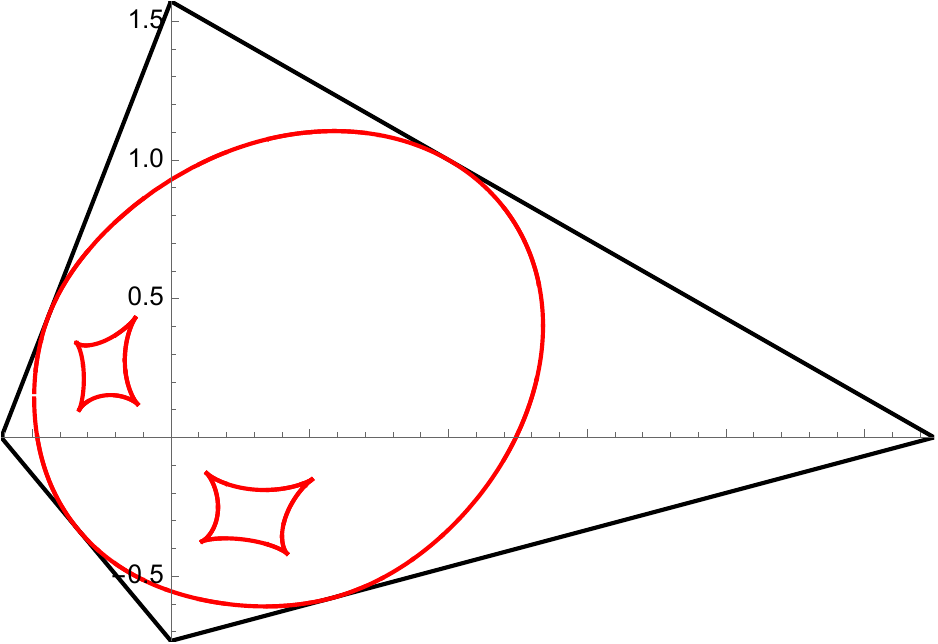}
    \subcaption[t]{\tiny $(\tilde r,\tilde s,\tilde t)=(7,4,11)$} 
\end{subfigure}\hskip .5cm
\begin{subfigure}{0.2\textwidth}
		\centering
        \includegraphics[scale=.2]{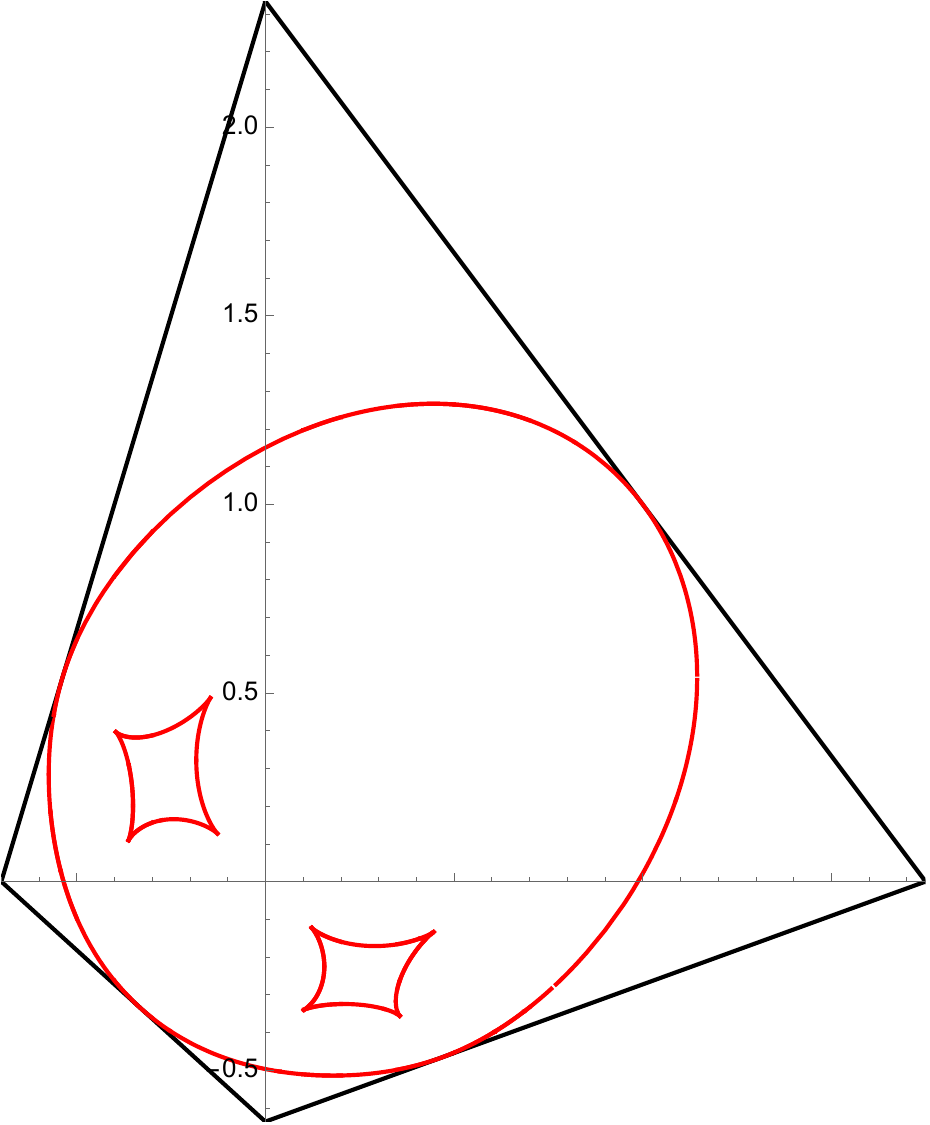}
        \subcaption[t]{\tiny $(\tilde r,\tilde s,\tilde t)=(3,4,7)$} 
\end{subfigure}\hskip .5cm
\begin{subfigure}{0.2\textwidth}
		\centering
        \includegraphics[scale=.2]{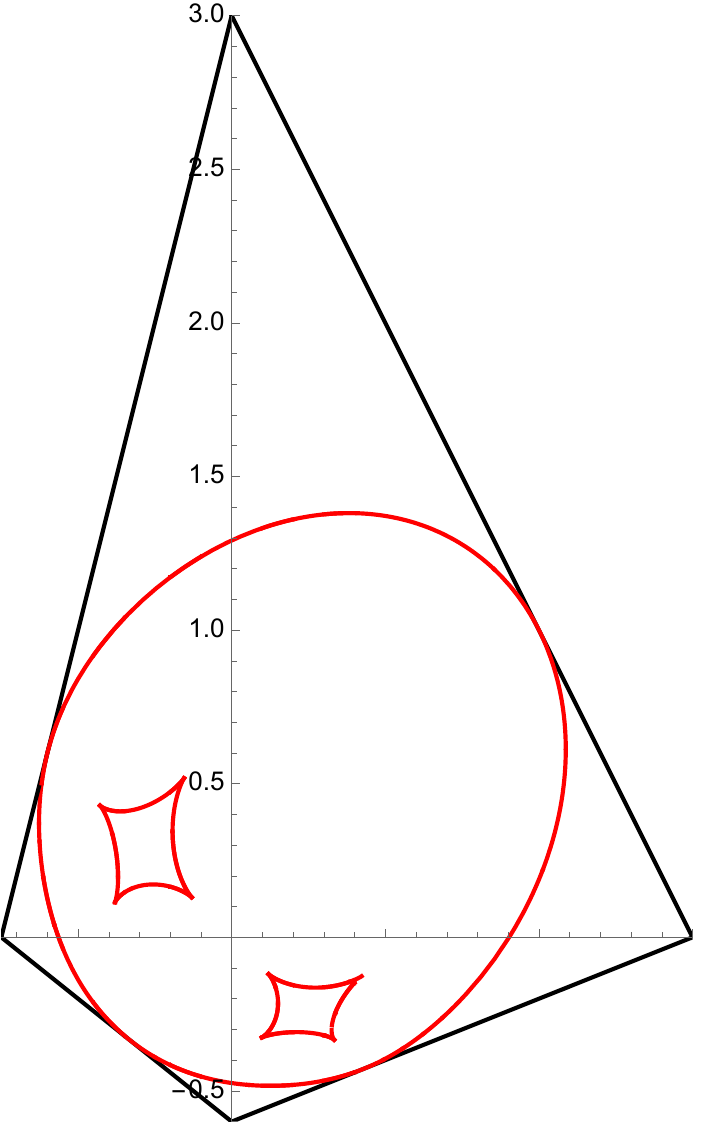}
        \subcaption[t]{\tiny $(\tilde r,\tilde s,\tilde t)=(1,2,3)$}
 \end{subfigure}\hskip .5cm
 \begin{subfigure}{0.2\textwidth}
 		\centering
        \includegraphics[scale=.2]{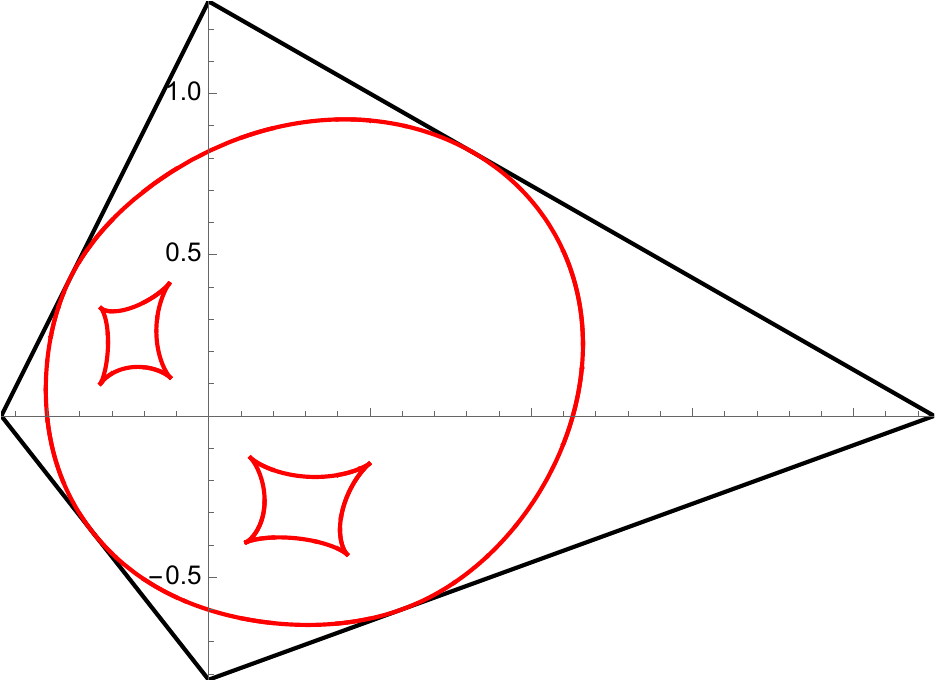}
        \subcaption[t]{\tiny $(\tilde r,\tilde s,\tilde t)=(5,2,9)$}
 \end{subfigure}
 \caption{Several $(\tilde r,\tilde s,\tilde t)$ views of arctic curves of the $(1,1,3)$ slanted 2x2 periodic solution with $\tau=1/4$, $\sigma=1/2$.}
\label{Fig:nonuniform library non flat pic}
 \end{figure}
 \subsection{$m$-toroidal holography}
 In this section, we apply the holographic principle to the m-toroidal solutions of the $T$-system with flat iniital data with $(r,s,t)=(0,0,1)$ (see Section 4 in Ref.~\cite{DiFrancesco1}). The m-toroidal initial data on the flat initial data plane $(r,s,t)=(0,0,1)$ are prescribed to be:
 \begin{equation}\label{m-toroidal data}
 	\begin{array}{cc}
 		a_i=T_{i+1,-i,0}=t_{i+1,-i} & b_i=T_{i+2,-i+1,0}=t_{i+2,-i+1}\\
 		c_i=T_{i,-i,1}=t_{i,-i}     & d_i=T_{i+1,-i+1,1}=t_{i+1,-i+1} 
 	\end{array}
 \end{equation}
 for $i \in \Z$, with the additional restriction: 
 \begin{equation*}
 	t_{i+m,j-m}=t_{i,j} \hskip .5cm \text{and} \hskip .5cm t_{i+2,j+2}=t_{i,j} \hskip .5 cm (i,j \in \Z)
 \end{equation*}
 The $T$-system with this initial data was solved explicitly (see Theorem 4.2 of \cite{DiFrancesco1} for details). 
For our interest, we want an explicit description of the quantities $L_{i,j,k}$ and $R_{i,j,k}$ in the recursion relation for the density $\rho_{i,j,k}$. They obey the following periodicities:
 \begin{equation*}
 	\begin{array}{ccc}
 		L_{i+2,j+2,k}=L_{i,j,k} & L_{i+m,j-m,k}=L_{i,j,k} & L_{i+1,j+1,k+2}=L_{i,j,k}\\
 		R_{i+2,j+2,k}=R_{i,j,k} & R_{i+m,j-m,k}=R_{i,j,k} & R_{i+1,j+1,k+2}=R_{i,j,k}
 	\end{array}
 \end{equation*}
Therefore, the density function can split as above, modulo the periodicity lattice $\Lambda\subset \Z^3$ generated by the vectors $(2,2,0), (m,-m,0)$ and $(1,1,2)$, and obeys a linear system. Let $\ds \lambda_i=\frac{a_ib_{i-1}}{a_{i-1}b_i+a_ib_{i-1}}$ and $\ds \mu_i=L_{i+2,-i+1,1}=\frac{c_{i+1}d_i}{c_id_{i+1}+c_{i+1}d_i}$. These quantities are not indepedent and satisfy the relation: 
$$\prod_{i=0}^{m-1}(\frac{1}{\lambda_i}-1)=\prod_{i=0}^{m-1}(\frac{1}{\mu_i}-1)=1$$
The solution of the general system of $m$-periodic densities is the rational function in $x,y,z$ with denominator the determinant of the following block matrix: 
\begin{equation}
C=	\left(\begin{array}{c|c|c|c}
		z^{-1}I	&zI &-M(x,y) &-\bar{M}(y^{-1},x^{-1})\\
		\hline
		zI&z^{-1}I&-M(y^{-1},x^{-1}) &-\bar{M}(x,y)\\
		\hline
		-P(x,y)&-\bar{P}(y^{-1},x^{-1})&z^{-1}I&zI\\
		\hline
		-P(y^{-1},x^{-1})&-\bar{P}(x,y)&zI&z^{-1}I
	\end{array}\right)
\end{equation}
where
\begin{equation*}
	P(x,y)=	\left(\begin{array}{cccccc}
		\frac{\mu_0}{x}	&0 &0 &\cdots&0&\frac{1-\mu_0}{y}\\
		\frac{1-\mu_1}{y}	&\frac{\mu_1}{x}&0 &\cdots&0&0\\
		0&\frac{1-\mu_2}{y}&\frac{\mu_2}{x}&\ddots&0&0\\
		\vdots&\ddots&\ddots&\ddots&\ddots&\vdots\\
		0&\cdots&\cdots&0&\frac{1-\mu_{m-1}}{y}&\frac{\mu_{m-1}}{x}
	\end{array}\right)
\end{equation*}
and 
\begin{equation*}
	M(x,y)=	\left(\begin{array}{cccccc}
		\frac{1-\lambda_0}{y}	&\frac{\lambda_0}{x} &0 &\cdots&0&0\\
		0&\frac{1-\lambda_1}{y}&\frac{\lambda_1}{x} &\cdots&0&0\\
		0&0&\frac{1-\lambda_2}{y}&\frac{\lambda_2}{x}&\ddots&0\\
		\vdots&\ddots&\ddots&\ddots&\ddots&\frac{\lambda_{m-2}}{x}\\
		\frac{\lambda_{m-1}}{x}&\cdots&\cdots&0&0&\frac{1-\lambda_{m-1}}{y}
	\end{array}\right)
\end{equation*}
and $\bar{P}$ is $P$ where $\mu_i$ and $1-\mu_i$ interchanged and $\bar M$ is $M$ where $\mu_i$ and $1-\mu_i$ interchanged. 

By the holographic principle, we now wish to view the exact solution of the $T$-system with m-toroidal ``flat" initial data 
from an $(\tilde r,\tilde s,\tilde t)$-slanted perspective.
%The word "holographic" means that we want to view the asymptotic behavior of an initial data on any slanted surface as the solution of some simpler initial data $T$-system where $L_{i,j,k}$ and $R_{i,j,k}$ takes finitely many values independent of $i,j,k$. In this case, we want to fix $(\tilde{r},\tilde s, \tilde t)=(0,0,1)$ with $m$-toroidal initial data satisfying the restriction in \eqref{m-toroidal data}. 
Setting $D_{0,0,1}(x,y,z)=\det(C)$, the denominator of the density in the new persective reads:
$\Delta_{0,0,1}^{\tilde r,\tilde s,\tilde t}(x,y,z):=D_{0,0,1}(z^{\tilde r/\tilde t}x^{-1},z^{\tilde s/\tilde t}y^{-1},z)$. 
The corresponding dual curves give the limit shapes of large $(\tilde r,\tilde s,\tilde t)$ pinecones 
corresponding to the m-toroidal solutions.
In Figs.~\ref{123_3toroidal}--\ref{235_3toroidal}, we list these new arctic curves for the same parameters as in section 4.2 of \cite{DiFrancesco1} 
for $(\tilde r,\tilde s,\tilde t)=(1,2,3), (1,0,3)$ and $(2,3,5)$ respectively. 
\begin{figure}[H]
	\begin{subfigure}{0.2\textwidth}
		\includegraphics[scale=.2]{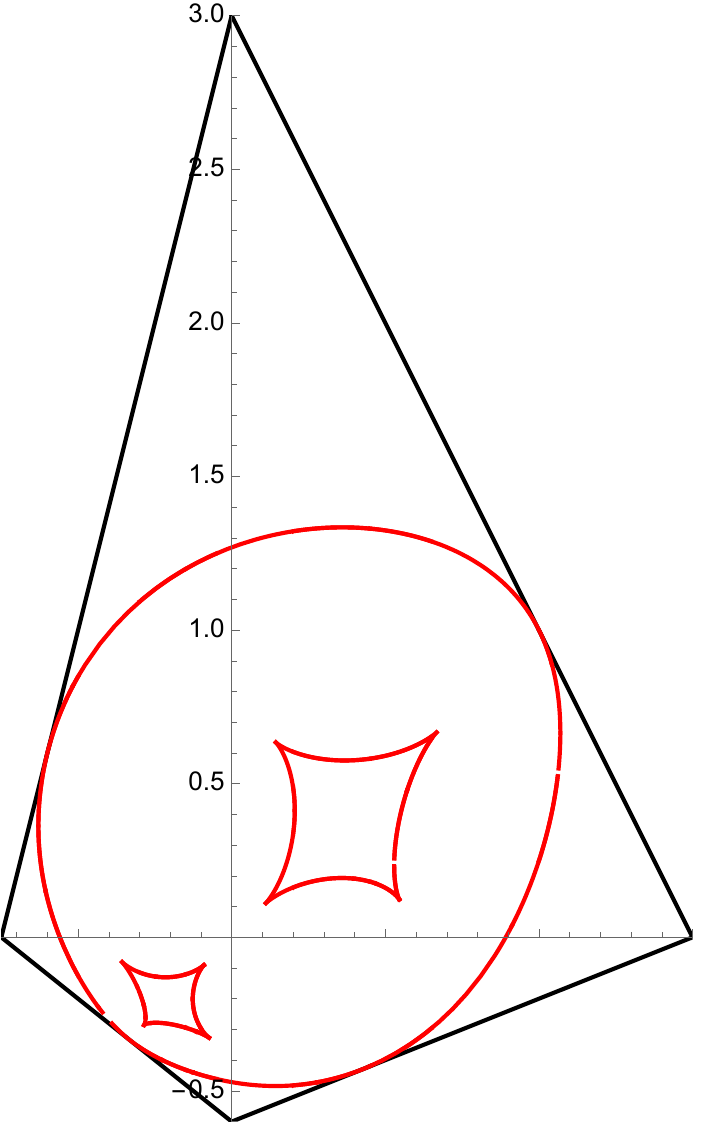}
		\caption{$\lambda_1=1/5$}
	\end{subfigure}
	\begin{subfigure}{0.2\textwidth}
		\includegraphics[scale=.2]{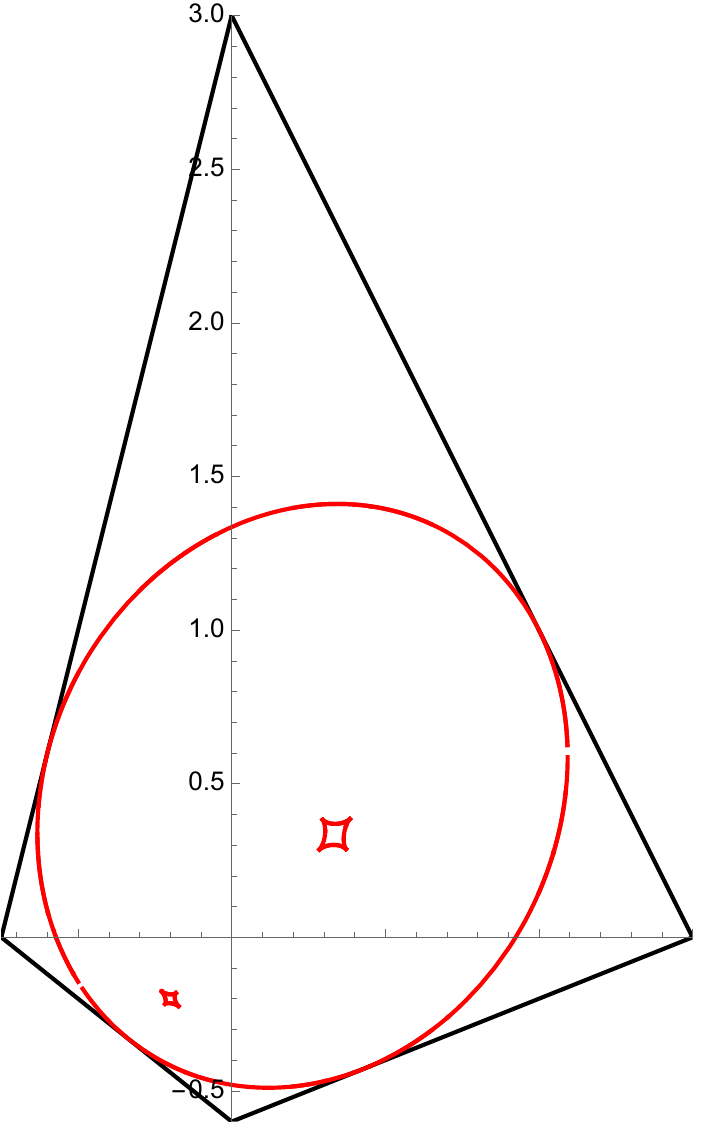}
		\caption{$\lambda_1=4/9$}
	\end{subfigure}
	\begin{subfigure}{0.2\textwidth}
		\includegraphics[scale=.2]{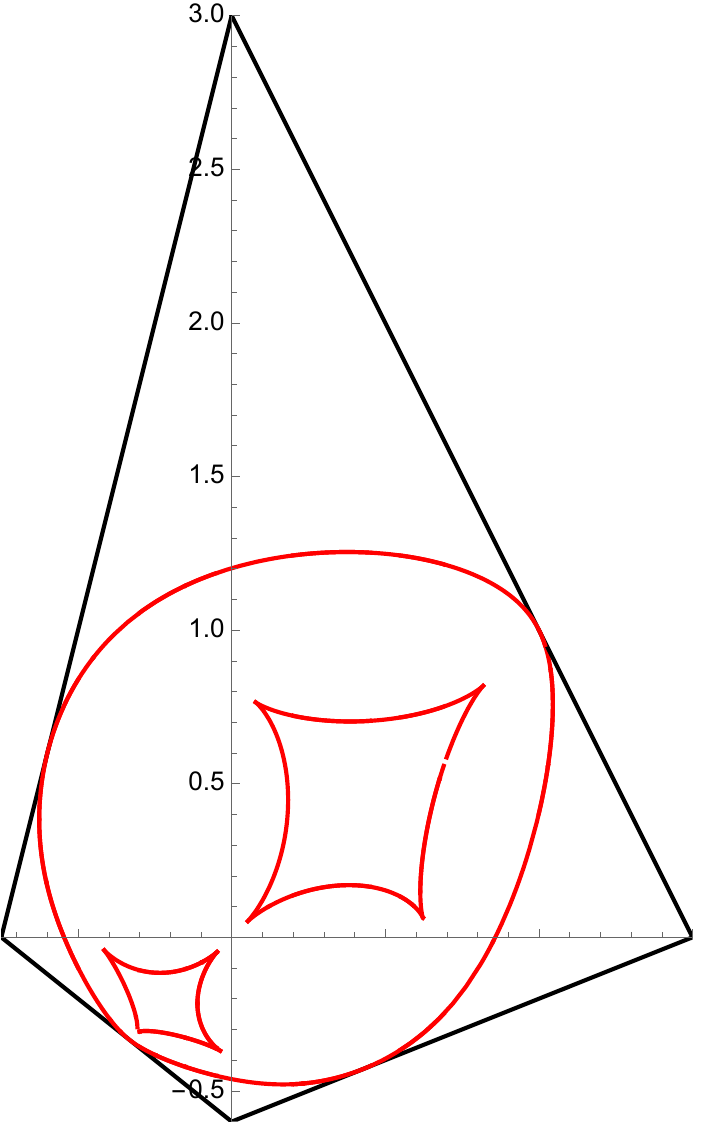}
		\caption{$\lambda_1=9/10$}
	\end{subfigure}
	\begin{subfigure}{0.2\textwidth}
		\includegraphics[scale=.2]{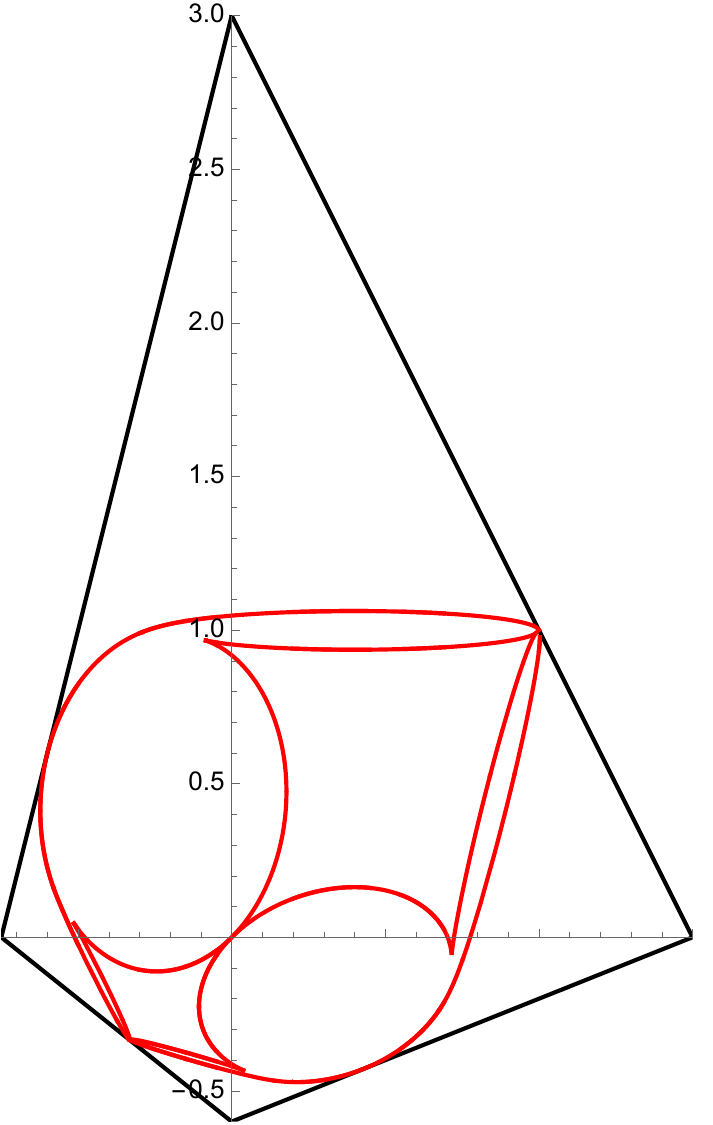}
		\caption{$\lambda_1=200/201 $}
	\end{subfigure}
	\caption{Arctic curves for the $3$-toroidal initial data corresponding to different values $\lambda_1$, where $\lambda_0 = 1/2$, $\lambda_2 = 1 - \lambda_1$ and $\mu_0 = \mu_1 = \mu_2 = 1/2$. View from $(\tilde r,\tilde s,\tilde t)=(1,2,3)$ perspective.}
\label{123_3toroidal}
\end{figure}

\begin{figure}[H]
	\begin{subfigure}{0.2\textwidth}
		\includegraphics[scale=.2]{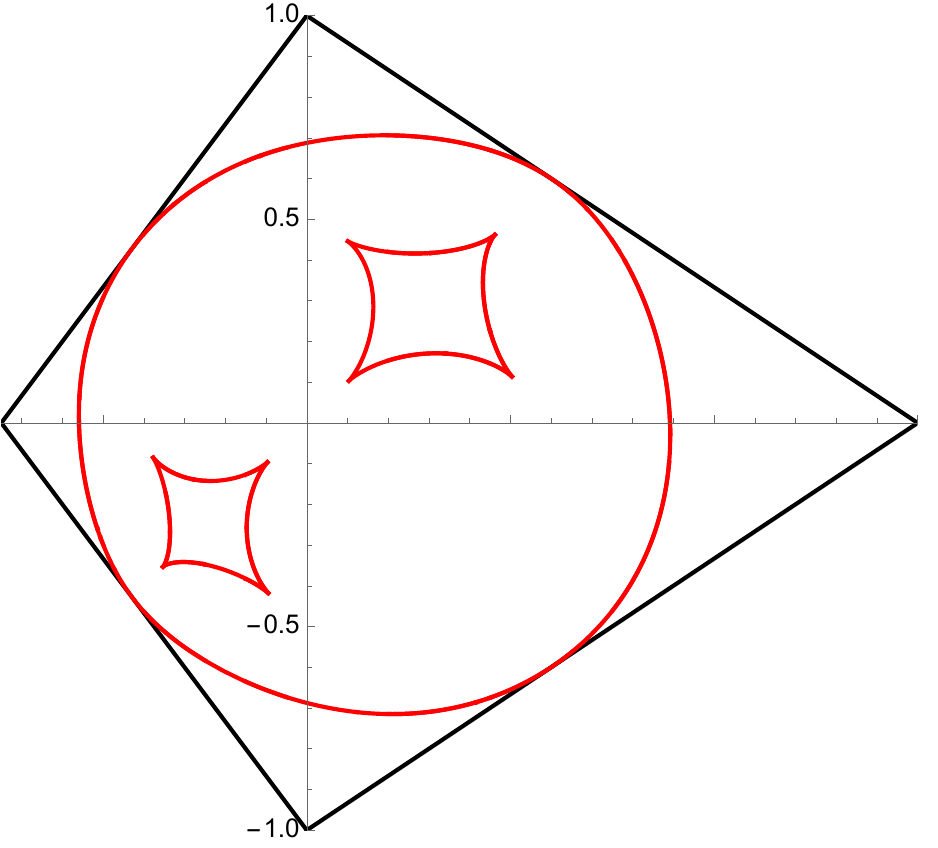}
		\caption{$\lambda_1=1/5$}
	\end{subfigure}
	\begin{subfigure}{0.2\textwidth}
		\includegraphics[scale=.2]{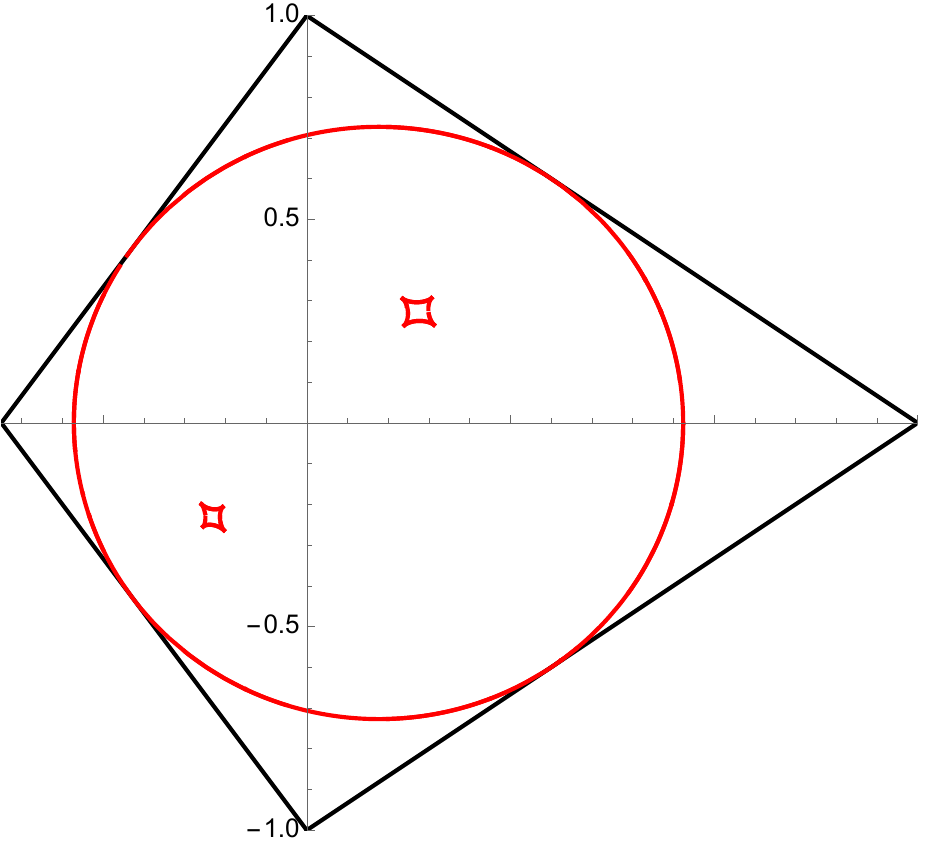}
		\caption{$\lambda_1=4/9$}
	\end{subfigure}
	\begin{subfigure}{0.2\textwidth}
		\includegraphics[scale=.2]{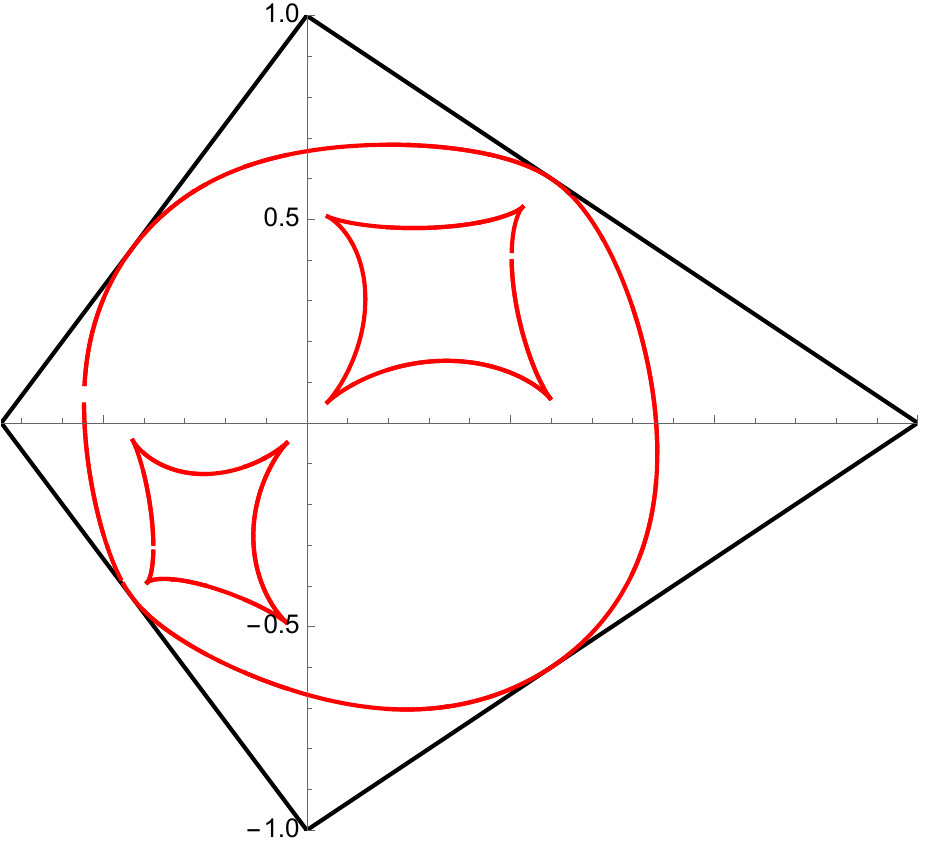}
		\caption{$\lambda_1=9/10$}
	\end{subfigure}
	\begin{subfigure}{0.2\textwidth}
		\includegraphics[scale=.2]{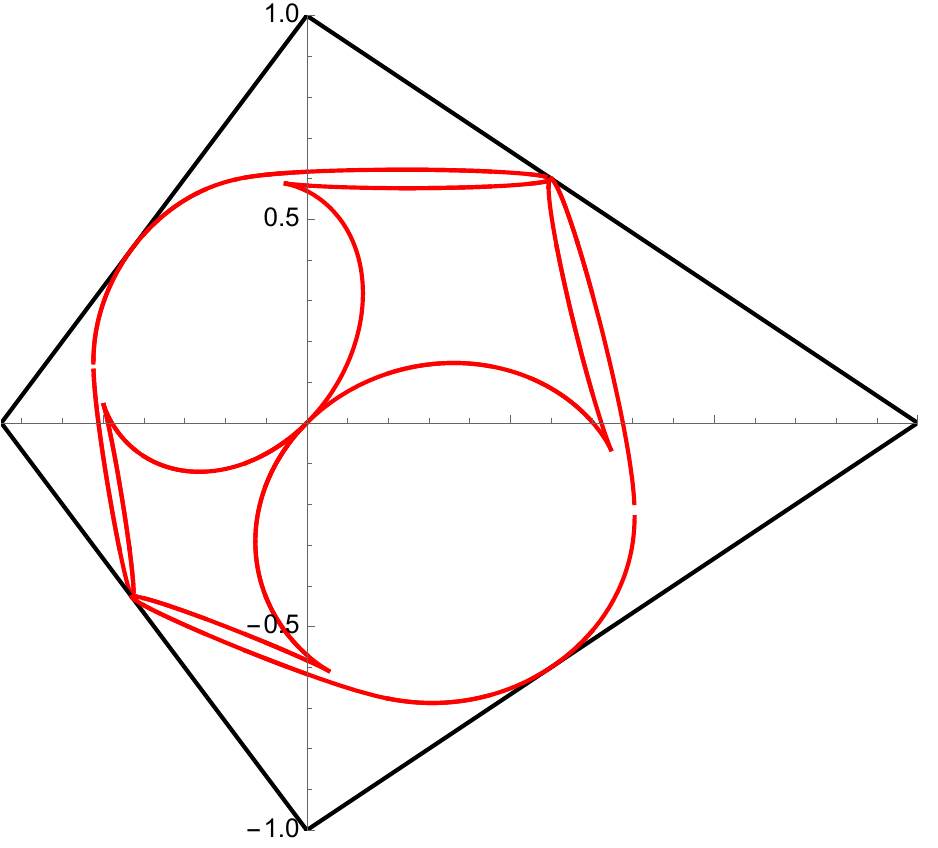}
		\caption{$\lambda_1=200/201 $}
	\end{subfigure}
	\label{103_3toroidal}
	\caption{Arctic curves for the $3$-toroidal initial data corresponding to different values $\lambda_1$, where $\lambda_0 = 1/2$, $\lambda_2 = 1 - \lambda_1$ and $\mu_0 = \mu_1 = \mu_2 = 1/2$. View from $(\tilde r,\tilde s,\tilde t)=(1,0,3)$ perspective.}
\end{figure}

\begin{figure}[H]
	\begin{subfigure}{0.2\textwidth}
		\includegraphics[scale=.2]{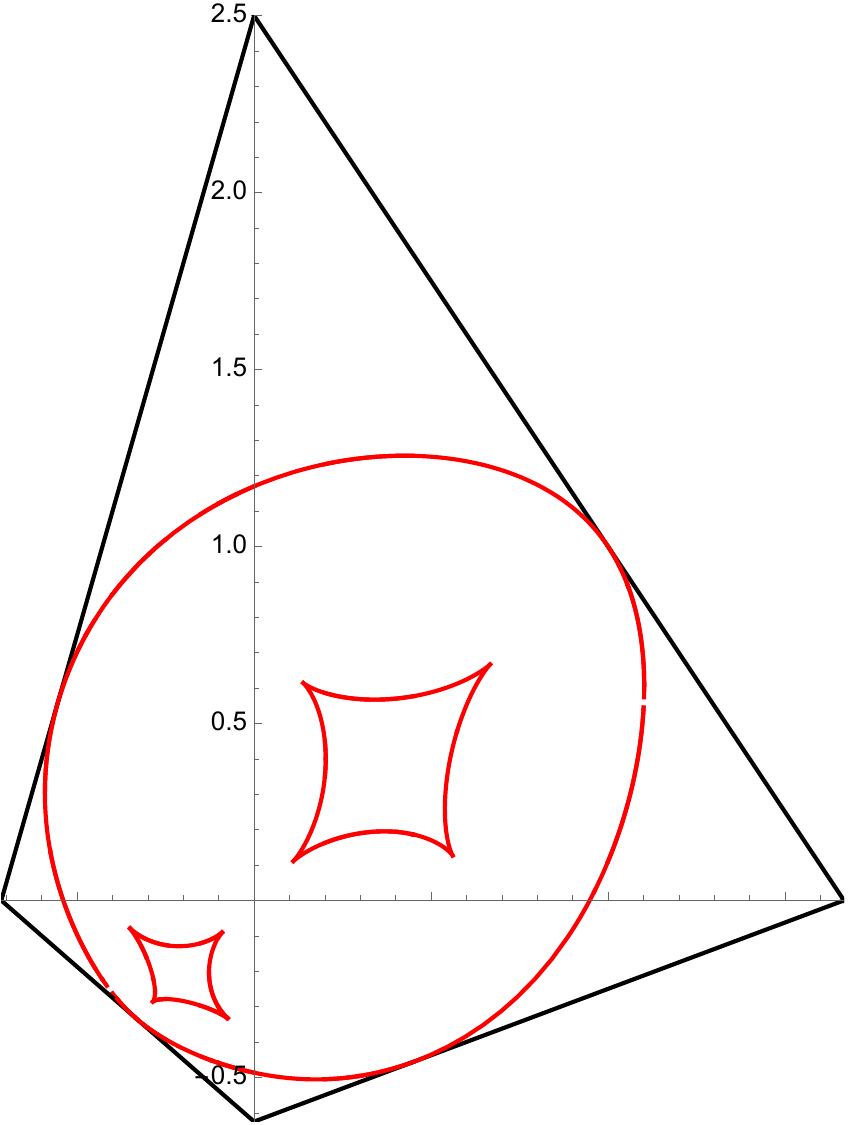}
		\caption{$\lambda_1=1/5$}
	\end{subfigure}
	\begin{subfigure}{0.2\textwidth}
		\includegraphics[scale=.2]{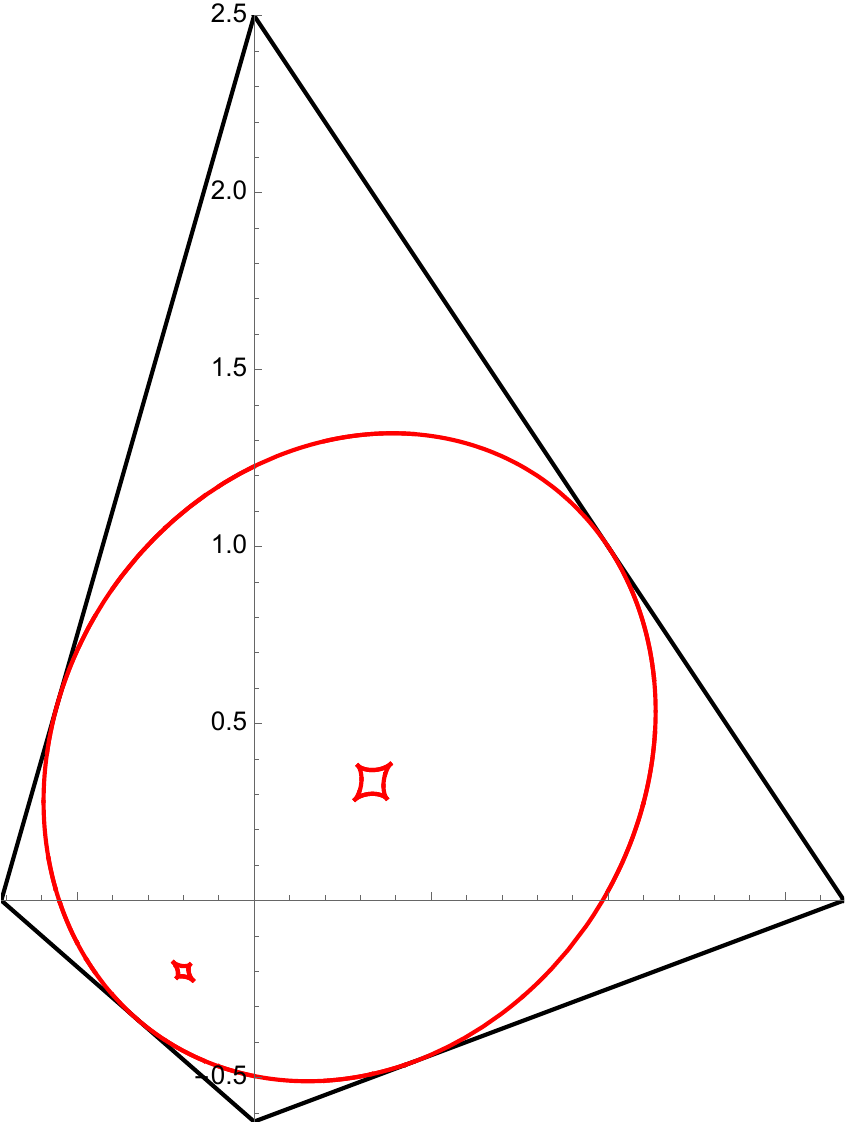}
		\caption{$\lambda_1=4/9$}
	\end{subfigure}
	\begin{subfigure}{0.2\textwidth}
		\includegraphics[scale=.2]{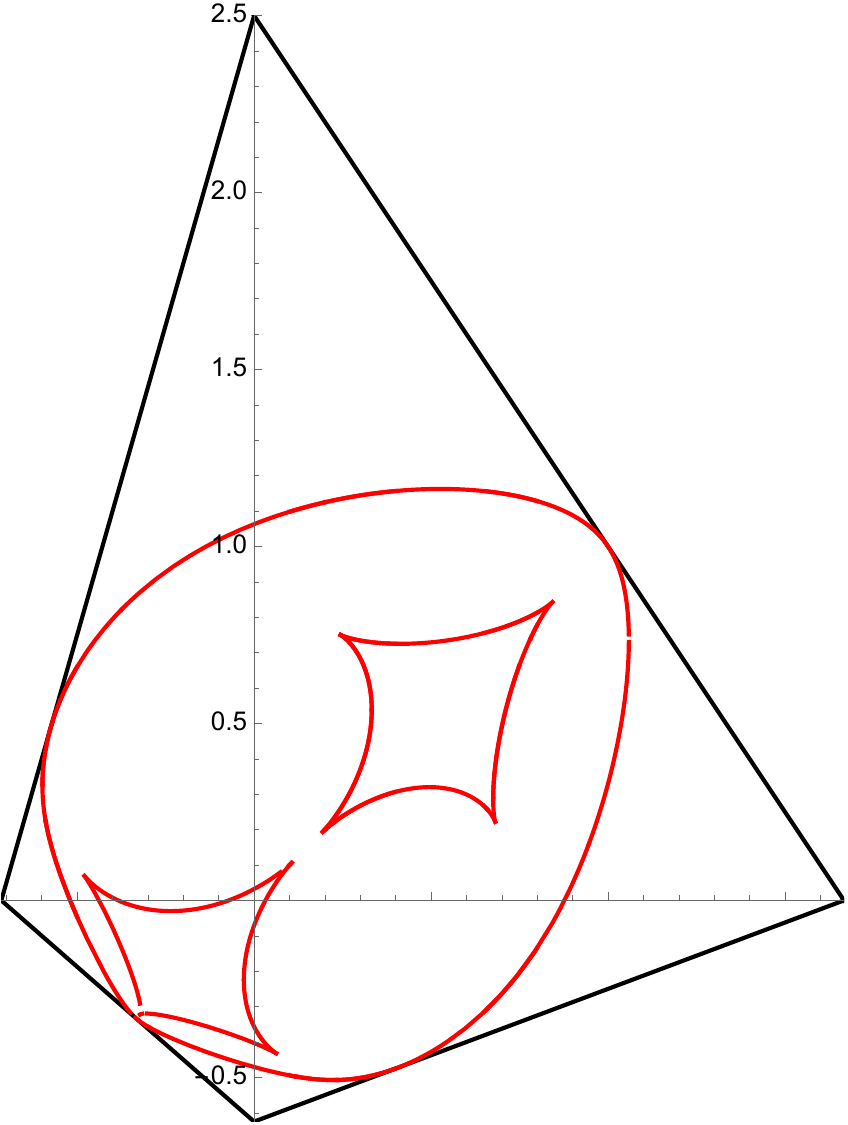}
		\caption{$\lambda_1=9/10$}
	\end{subfigure}
	\begin{subfigure}{0.2\textwidth}
		\includegraphics[scale=.2]{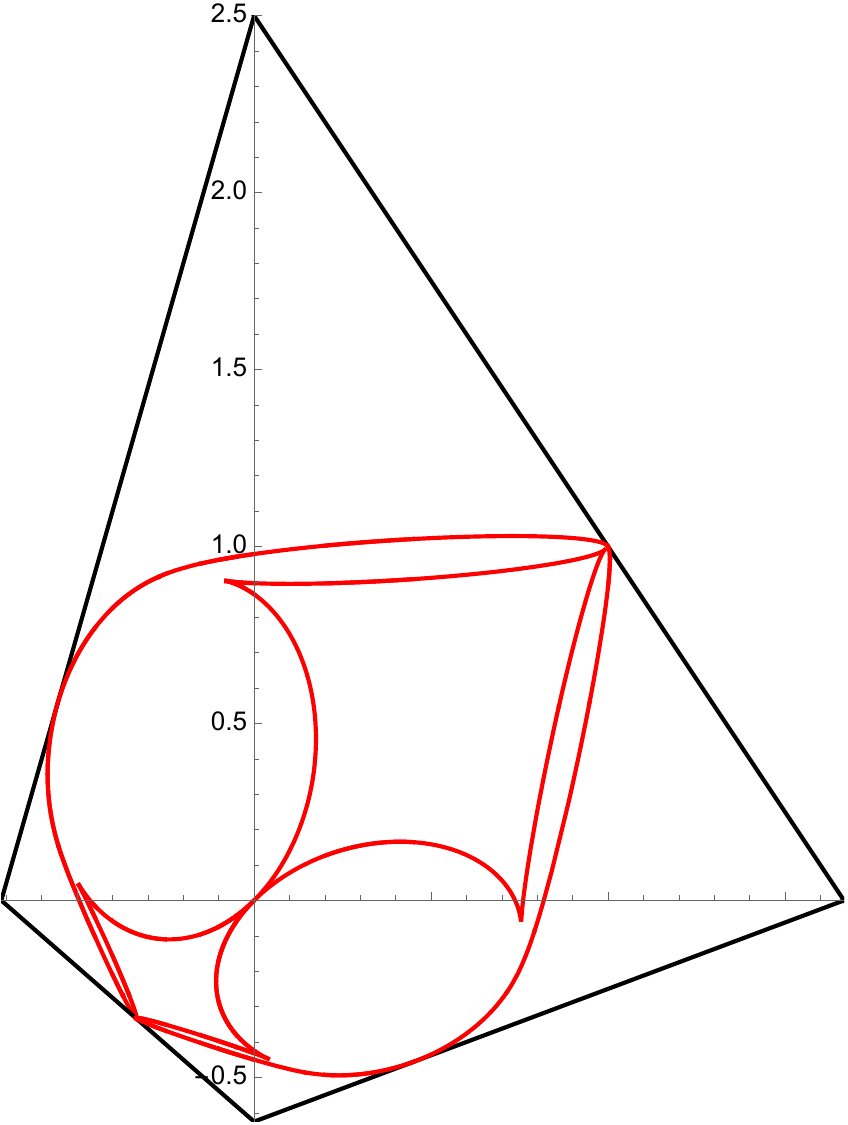}
		\caption{$\lambda_1=200/201 $}
	\end{subfigure}
	\caption{Arctic curves for the $3$-toroidal initial data corresponding to different values $\lambda_1$, where $\lambda_0 = 1/2$, $\lambda_2 = 1 - \lambda_1$ and $\mu_0 = \mu_1 = \mu_2 = 1/2$. View from $(\tilde r,\tilde s,\tilde t)=(2,3,5)$ perspective.}
\label{235_3toroidal}
\end{figure}

%%%%%%%%%%%%%%%%%%%
\section{Discussion/Conclusion}

\subsection{The ``facet" or ``pinned" phase}
In this paper, we have investigated the limit shape of large typical dimers configurations on $r,s,t$-pinecones, in the cases of uniform (initial data plane-dependent) and 2x2 periodic slanted plane initial data. Whereas the uniform case only displays a liquid region separated from frozen corners by an arctic ellipse, the periodic case shows the emergence of a new ``facet" phase already observed in the case of the domino tilings of large Aztec diamonds with 2x2 periodic weights \cite{DiFrancesco1}. In this work, the phase was investigated in the limit when $a\to 0$, where the liquid phase disappears,
and shown to be ``pinned" on the sublattice of square faces with initial data weight $a$.

%\subsection{Explanation of the polygonal behavior when $\sigma,\tau \to 0$}
We argue that a similar structure holds in the $2\times 2$ slanted case considered in this paper. 
Let us first restrict to the case $(r,s,t)=(1,1,3)$.
To investigate this new facet phase, we note that the limit $\sigma,\tau\to 0$ of Figs. \ref{tau=sigma} suppresses the liquid phase to leave us with
only a central facet phase separated from the frozen corners by a quadrangular arctic separation. 
(The same phenomenon occurs in all $(1,1,t)$ cases for odd $t$.). We may therefore concentrate on the $\sigma,\tau\to 0$ limit.

Let us consider as an example the tessellation domain for the case $(r,s,t)=(1,1,3)$ where we only include the active region in Fig: \ref{113_combinatorial_object}, along with its dual graph: 
\begin{figure}[H]
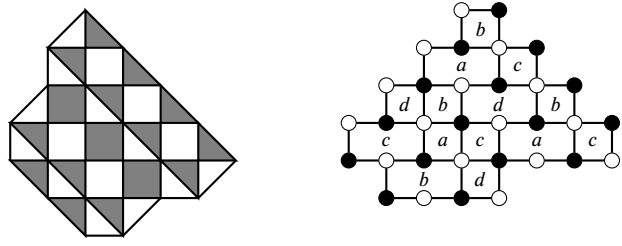

			\ctikzfig{figures/113_skeleton}
			\vspace{-4cm}
			\caption{(a): Tessellation domain of $(r,s,t)=(1,1,3)$, (b): Dual graph to the tessellation with the initial values $a,b,c,d$ indicated in each face.}
			\label{113_combinatorial_object}
\end{figure}
	\noindent As explained above, the facet phase is maximal for $\sigma=\frac{a^2}{a^2+d^2}\to 0$ and $\tau=\frac{b^2}{b^2+c^2} \to 0$, obtained by sending $a,b\to 0$ while $c,d$ remain finite and positive. From the defintion of the partition function, the contribution of the local weight at face $(x,y)$ to the partition function is $\ds t_{x,y}^{v_{x,y}/2-1-\mathcal N_{x,y}}$. Thus, as $a,b \to 0$, the contribution of maximally occupied dimer configurations around the $a, b$ faces dominates the partition function $T_{i,j,k}$, expressed as Laurent polynomial of initial data $t_{x,y}$. As $i,j,k \to \infty$, the dominating configurations are those corresponding to Laurent monomial terms with highest total degree in $a,b$ in the denominator. We illustrate this with two examples 
for the case $(r,s,t)=(1,1,3)$ via the explicit  $2\times 2$ periodic solutions $T_{0,0,4}$ and $T_{1,1,4}$. 
\subsubsection{$T_{0,0,4}$ and $T_{1,1,4}$ with $\sigma, \tau \to 0 $}
%The explicit solution $T_{0,0,4}$ as Laurent polynomial of initial data is displayed in  Fig. \ref{exactT_004} of Appendix A. 
Applying the initial data \eqref{twobytwo}, we find four dominant terms in the explicit solution $T_{0,0,4}$ as $a,b,\to 0$ namely $T_{0,0,4}$ is up to a numerical factor: 
		\begin{equation}\label{dominant term T004}
			\frac{t_{-1,0} t_{-1,3} t_{0,-2} t_{1,0} t_{1,3}}{t_{0,-1} t_{0,0} t_{0,2} t_{0,3}}+\frac{t_{-1,-1} t_{-1,3} t_{1,-2} t_{1,0} t_{1,3}}{t_{0,-1} t_{0,2} t_{0,3}
   t_{1,-1}}+\frac{t_{-1,0} t_{-1,2} t_{0,-2} t_{1,0} t_{1,1} t_{1,3}}{t_{0,-1} t_{0,0} t_{0,1} t_{0,2} t_{1,2}}+\frac{t_{-1,-1} t_{-1,2} t_{1,-2} t_{1,0} t_{1,1}
   t_{1,3}}{t_{0,-1} t_{0,1} t_{0,2} t_{1,-1} t_{1,2}} .
		\end{equation}
Similarly, $T_{1,1,4}$ is dominated by the following 8 terms: 
		\begin{equation}\label{dominant term T114}
		\begin{aligned}
		&\frac{t_{-1,0} t_{-1,2} t_{1,1}^2 t_{1,4} t_{2,-1} t_{3,1} t_{3,4}}{t_{0,0} t_{0,1} t_{0,2} t_{2,0} t_{2,1} t_{2,3} t_{2,4}}+\frac{t_{-1,0} t_{-1,3} t_{1,1}
   t_{1,2} t_{1,4} t_{2,-1} t_{3,1} t_{3,4}}{t_{0,0} t_{0,2} t_{0,3} t_{2,0} t_{2,1} t_{2,3} t_{2,4}}+\frac{t_{-1,0} t_{-1,2} t_{1,0} t_{1,1} t_{1,4} t_{3,-1} t_{3,1}
   t_{3,4}}{t_{0,0} t_{0,1} t_{0,2} t_{2,0} t_{2,3} t_{2,4} t_{3,0}}\\
   &+\frac{t_{-1,0} t_{-1,3} t_{1,0} t_{1,2} t_{1,4} t_{3,-1} t_{3,1} t_{3,4}}{t_{0,0} t_{0,2} t_{0,3}
   t_{2,0} t_{2,3} t_{2,4} t_{3,0}}+\frac{t_{-1,0} t_{-1,2} t_{1,1}^2 t_{1,3} t_{2,-1} t_{3,1} t_{3,2} t_{3,4}}{t_{0,0} t_{0,1} t_{0,2} t_{2,0} t_{2,1} t_{2,2} t_{2,3}
   t_{3,3}}+\frac{t_{-1,0} t_{-1,3} t_{1,1} t_{1,2} t_{1,3} t_{2,-1} t_{3,1} t_{3,2} t_{3,4}}{t_{0,0} t_{0,2} t_{0,3} t_{2,0} t_{2,1} t_{2,2} t_{2,3}
   t_{3,3}}\\
   &+\frac{t_{-1,0} t_{-1,2} t_{1,0} t_{1,1} t_{1,3} t_{3,-1} t_{3,1} t_{3,2} t_{3,4}}{t_{0,0} t_{0,1} t_{0,2} t_{2,0} t_{2,2} t_{2,3} t_{3,0}
   t_{3,3}}+\frac{t_{-1,0} t_{-1,3} t_{1,0} t_{1,2} t_{1,3} t_{3,-1} t_{3,1} t_{3,2} t_{3,4}}{t_{0,0} t_{0,2} t_{0,3} t_{2,0} t_{2,2} t_{2,3} t_{3,0} t_{3,3}} .
		\end{aligned}
		\end{equation}
In each of these contributions, it is easy to track the maximally occupied faces $(i,j)$, as they contribute $t_{i, j}^{-1}$. The four (resp. eight)  terms in (\ref{dominant term T004}-\ref{dominant term T114}) correspond to the following dimer configurations: 
		\begin{figure}[H]
			\centering
			\includegraphics[scale=.2]{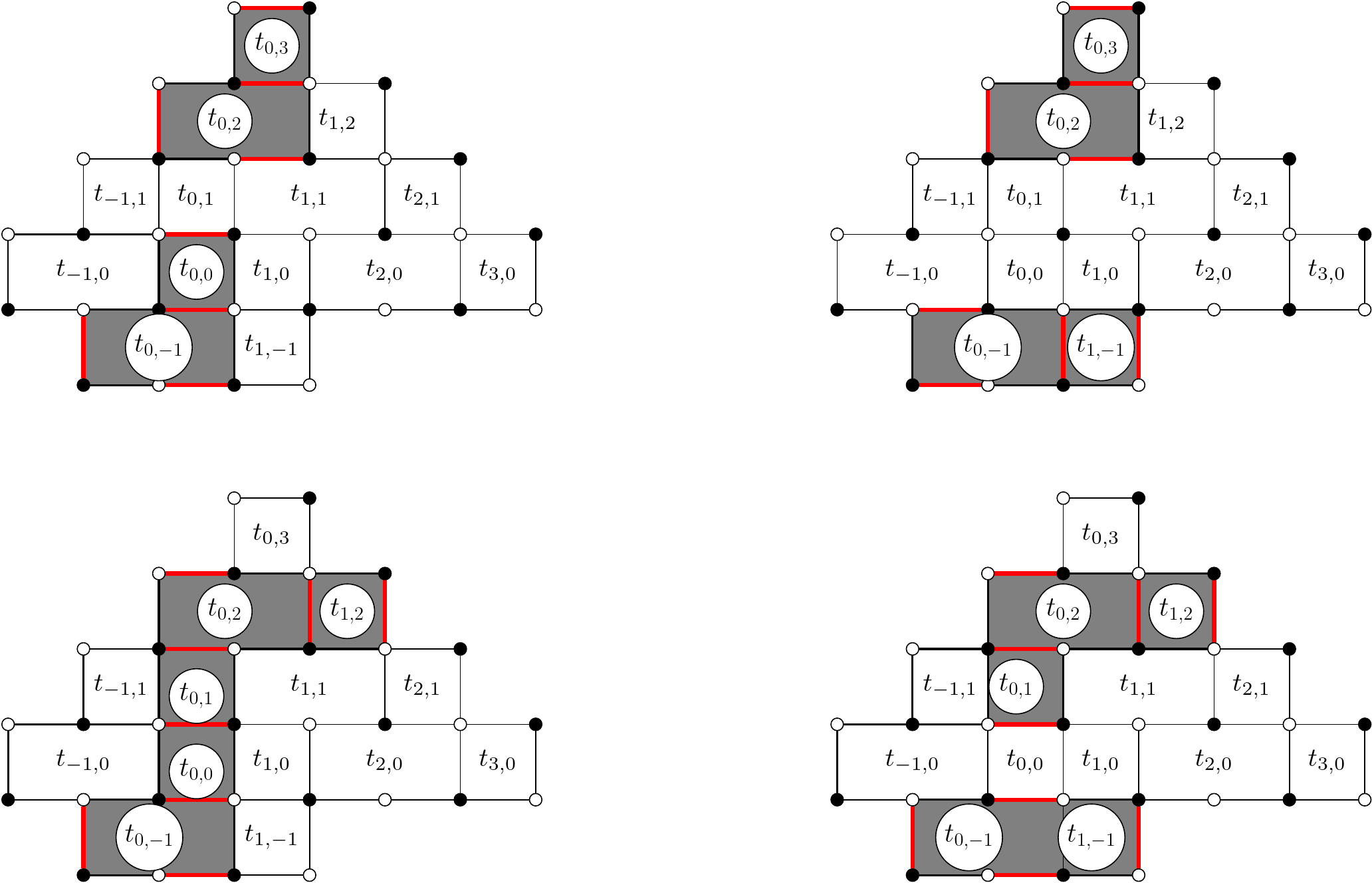}
			\caption{Dimer configurations corresponding to \eqref{dominant term T004} for the dominant terms in the (1,1,3)-slanted solution $T_{0,0,4}$. 
			The maximally occupied faces are shaded.}
			\label{dominating domino T004}
		\end{figure}
		\begin{figure}[H]
			\centering
			\includegraphics[scale=.125]{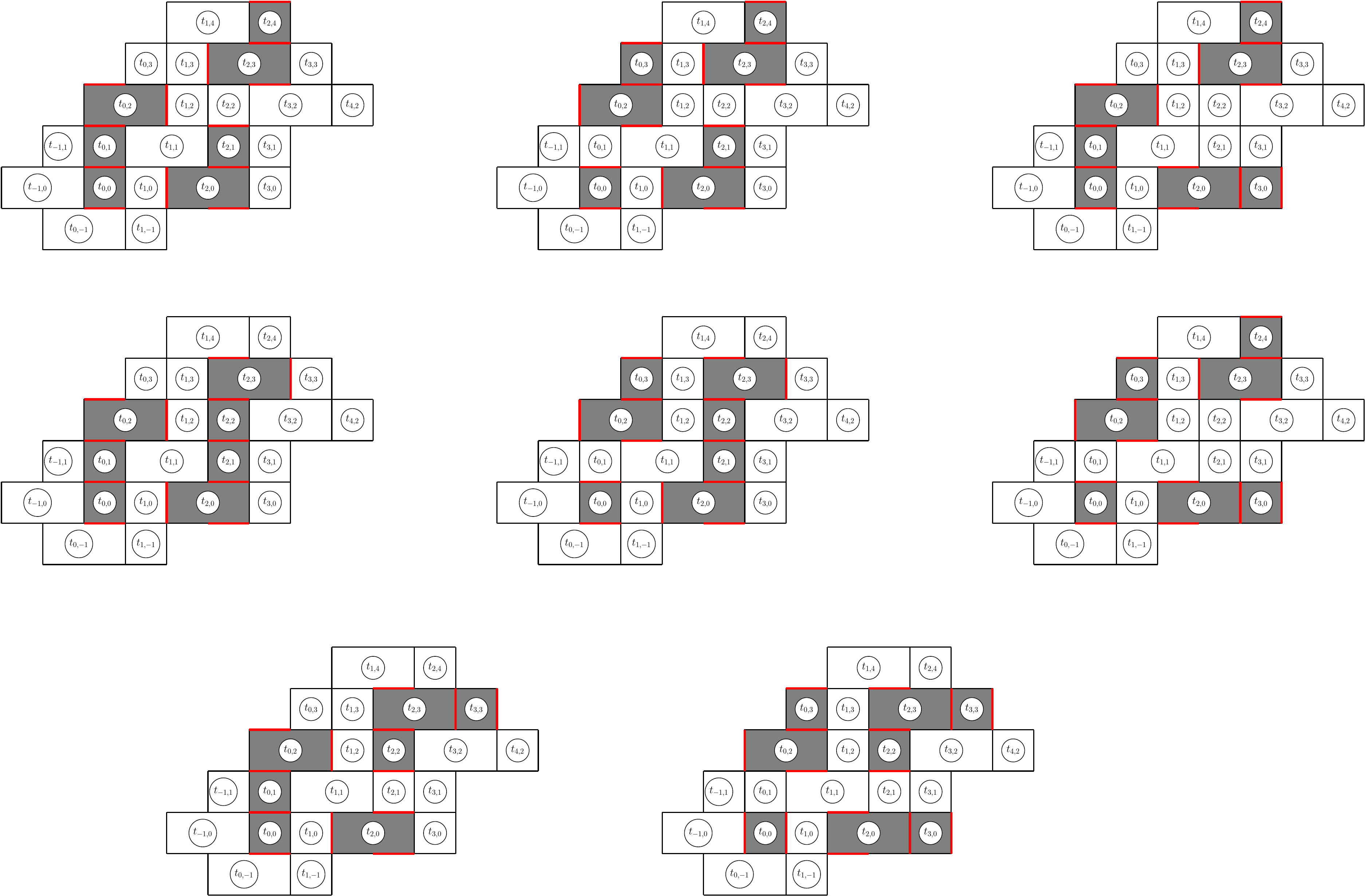}
			\caption{Dimer configurations corresponding to \eqref{dominant term T114} for the dominant terms in the (1,1,3)-slanted solution $T_{1,1,4}$, with shaded maximally occupied faces.}
			\label{dominating domino T114}
		\end{figure}
The structure of the slanted planes gives a sequence of square and hexagonal faces on the pinecone. Upon examining Figs. \ref{dominating domino T004} and \ref{dominating domino T114}, we observe that some specific hexagonal $a$ and $b$ type faces are always maximally occupied by three dimers, each with two independent ``pinned" equally probable configurations, while their surroundings vary. 
Alternatively, the dominant terms listed in \eqref{dominant term T004} and \eqref{dominant term T114} share some particular terms in the denominator that correspond to these pinned hexagonal faces. We argue that this structure generalizes to arbitrary size for $\sigma, \tau \to 0$. To see how, we display
in Fig. \ref{Density profile toroidal small cases} below some sample densities  $\rho^{(i_0,j_0,k_0)_{i,j,k}}$ say for $m=26$ ($\rho_{1,1,8}$) and $m=30$ ($\rho_{0,0,10}$): 
\begin{figure}
\begin{center}
\begin{tabular}{cc}
\includegraphics[scale=.3]{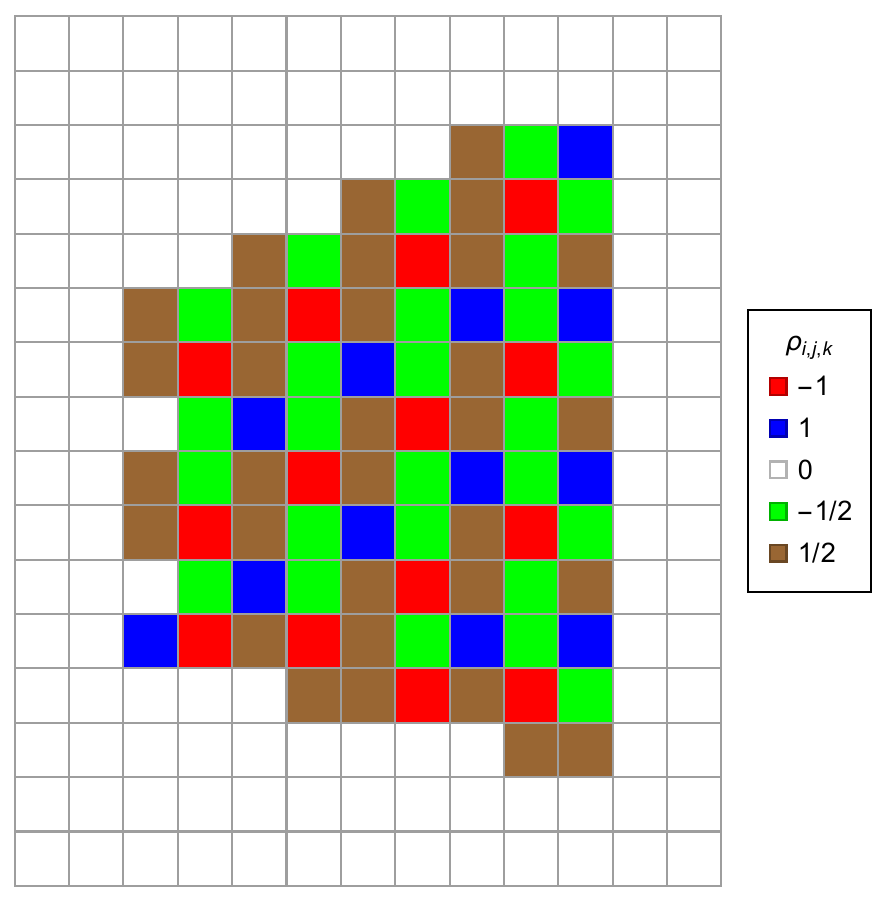}& \begin{adjustbox}{scale={0.45}{0.45},raise=12ex} $\left(
\begin{array}{ccccccccccccc}
 0 & 0 & 0 & 0 & 0 & 0 & 0 & 0 & 0 & 0 & 0 & 0 & 0 \\
 0 & 0 & 0 & 0 & 0 & 0 & 0 & 0 & 0 & 0 & 0 & 0 & 0 \\
 0 & 0 & 0 & 0 & 0 & 0 & 0 & 0 & \frac{1}{2} & -\frac{1}{2} & 1 & 0 & 0 \\
 0 & 0 & 0 & 0 & 0 & 0 & \frac{1}{2} & -\frac{1}{2} & \frac{1}{2} & -1 & -\frac{1}{2} & 0 & 0 \\
 0 & 0 & 0 & 0 & \frac{1}{2} & -\frac{1}{2} & \frac{1}{2} & -1 & \frac{1}{2} & -\frac{1}{2} & \frac{1}{2} & 0 & 0 \\
 0 & 0 & \frac{1}{2} & -\frac{1}{2} & \frac{1}{2} & -1 & \frac{1}{2} & -\frac{1}{2} & 1 & -\frac{1}{2} & 1 & 0 & 0 \\
 0 & 0 & \frac{1}{2} & -1 & \frac{1}{2} & -\frac{1}{2} & 1 & -\frac{1}{2} & \frac{1}{2} & -1 & -\frac{1}{2} & 0 & 0 \\
 0 & 0 & 0 & -\frac{1}{2} & 1 & -\frac{1}{2} & \frac{1}{2} & -1 & \frac{1}{2} & -\frac{1}{2} & \frac{1}{2} & 0 & 0 \\
 0 & 0 & \frac{1}{2} & -\frac{1}{2} & \frac{1}{2} & -1 & \frac{1}{2} & -\frac{1}{2} & 1 & -\frac{1}{2} & 1 & 0 & 0 \\
 0 & 0 & \frac{1}{2} & -1 & \frac{1}{2} & -\frac{1}{2} & 1 & -\frac{1}{2} & \frac{1}{2} & -1 & -\frac{1}{2} & 0 & 0 \\
 0 & 0 & 0 & -\frac{1}{2} & 1 & -\frac{1}{2} & \frac{1}{2} & -1 & \frac{1}{2} & -\frac{1}{2} & \frac{1}{2} & 0 & 0 \\
 0 & 0 & 1 & -1 & \frac{1}{2} & -1 & \frac{1}{2} & -\frac{1}{2} & 1 & -\frac{1}{2} & 1 & 0 & 0 \\
 0 & 0 & 0 & 0 & 0 & \frac{1}{2} & \frac{1}{2} & -1 & \frac{1}{2} & -1 & -\frac{1}{2} & 0 & 0 \\
 0 & 0 & 0 & 0 & 0 & 0 & 0 & 0 & 0 & \frac{1}{2} & \frac{1}{2} & 0 & 0 \\
 0 & 0 & 0 & 0 & 0 & 0 & 0 & 0 & 0 & 0 & 0 & 0 & 0 \\
 0 & 0 & 0 & 0 & 0 & 0 & 0 & 0 & 0 & 0 & 0 & 0 & 0 \\
\end{array}
\right)$\end{adjustbox}\\
\includegraphics[scale=.31]{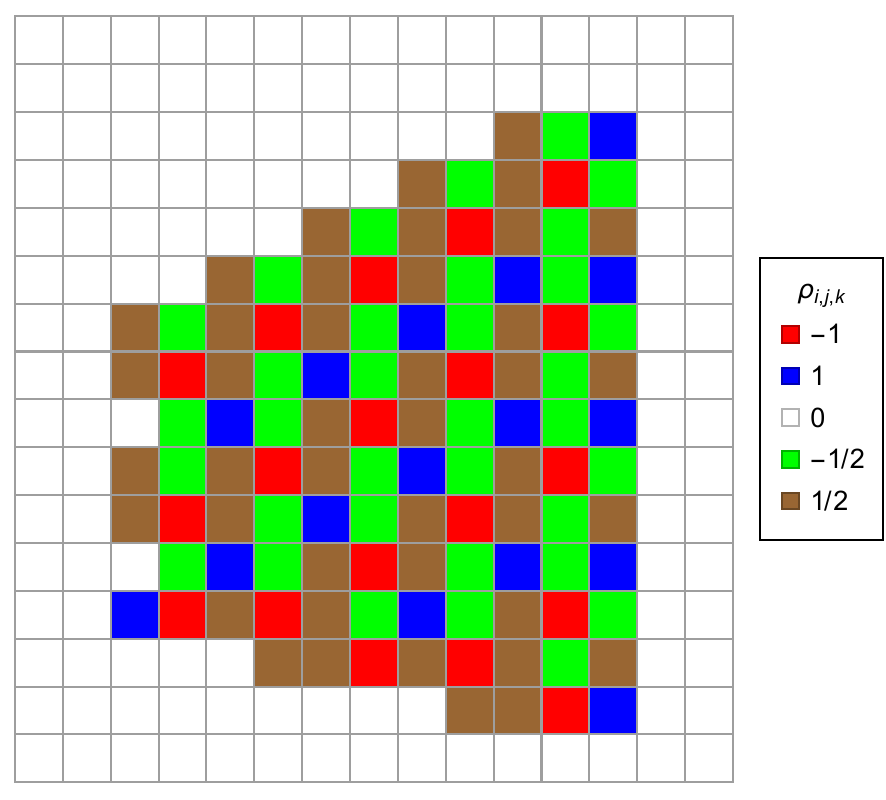}
& \begin{adjustbox}{scale={0.45}{0.45},raise=10ex} $\left(
\begin{array}{ccccccccccccccc}
 0 & 0 & 0 & 0 & 0 & 0 & 0 & 0 & 0 & 0 & 0 & 0 & 0 & 0 & 0 \\
 0 & 0 & 0 & 0 & 0 & 0 & 0 & 0 & 0 & 0 & 0 & 0 & 0 & 0 & 0 \\
 0 & 0 & 0 & 0 & 0 & 0 & 0 & 0 & 0 & 0 & \frac{1}{2} & -\frac{1}{2} & 1 & 0 & 0 \\
 0 & 0 & 0 & 0 & 0 & 0 & 0 & 0 & \frac{1}{2} & -\frac{1}{2} & \frac{1}{2} & -1 & -\frac{1}{2} & 0 & 0 \\
 0 & 0 & 0 & 0 & 0 & 0 & \frac{1}{2} & -\frac{1}{2} & \frac{1}{2} & -1 & \frac{1}{2} & -\frac{1}{2} & \frac{1}{2} & 0 & 0 \\
 0 & 0 & 0 & 0 & \frac{1}{2} & -\frac{1}{2} & \frac{1}{2} & -1 & \frac{1}{2} & -\frac{1}{2} & 1 & -\frac{1}{2} & 1 & 0 & 0 \\
 0 & 0 & \frac{1}{2} & -\frac{1}{2} & \frac{1}{2} & -1 & \frac{1}{2} & -\frac{1}{2} & 1 & -\frac{1}{2} & \frac{1}{2} & -1 & -\frac{1}{2} & 0 & 0 \\
 0 & 0 & \frac{1}{2} & -1 & \frac{1}{2} & -\frac{1}{2} & 1 & -\frac{1}{2} & \frac{1}{2} & -1 & \frac{1}{2} & -\frac{1}{2} & \frac{1}{2} & 0 & 0 \\
 0 & 0 & 0 & -\frac{1}{2} & 1 & -\frac{1}{2} & \frac{1}{2} & -1 & \frac{1}{2} & -\frac{1}{2} & 1 & -\frac{1}{2} & 1 & 0 & 0 \\
 0 & 0 & \frac{1}{2} & -\frac{1}{2} & \frac{1}{2} & -1 & \frac{1}{2} & -\frac{1}{2} & 1 & -\frac{1}{2} & \frac{1}{2} & -1 & -\frac{1}{2} & 0 & 0 \\
 0 & 0 & \frac{1}{2} & -1 & \frac{1}{2} & -\frac{1}{2} & 1 & -\frac{1}{2} & \frac{1}{2} & -1 & \frac{1}{2} & -\frac{1}{2} & \frac{1}{2} & 0 & 0 \\
 0 & 0 & 0 & -\frac{1}{2} & 1 & -\frac{1}{2} & \frac{1}{2} & -1 & \frac{1}{2} & -\frac{1}{2} & 1 & -\frac{1}{2} & 1 & 0 & 0 \\
 0 & 0 & 1 & -1 & \frac{1}{2} & -1 & \frac{1}{2} & -\frac{1}{2} & 1 & -\frac{1}{2} & \frac{1}{2} & -1 & -\frac{1}{2} & 0 & 0 \\
 0 & 0 & 0 & 0 & 0 & \frac{1}{2} & \frac{1}{2} & -1 & \frac{1}{2} & -1 & \frac{1}{2} & -\frac{1}{2} & \frac{1}{2} & 0 & 0 \\
 0 & 0 & 0 & 0 & 0 & 0 & 0 & 0 & 0 & \frac{1}{2} & \frac{1}{2} & -1 & 1 & 0 & 0 \\
 0 & 0 & 0 & 0 & 0 & 0 & 0 & 0 & 0 & 0 & 0 & 0 & 0 & 0 & 0 \\
\end{array}
\right)$\end{adjustbox}
\end{tabular}
\end{center}
	\caption{Local density $\rho_{1,1,8}$ (top) and $\rho_{0,0,10}$ (bottom) at all points $(i_0,j_0)$. }
	\label{Density profile toroidal small cases}		
	\end{figure}
First note that the local density only takes values
$-1,-\frac{1}{2},0,\frac{1}{2},1$.		
The value $-1$ corresponds to maximally occupied faces, among which hexagons form a sublattice. The hexagons correspond to the red faces (value $-1$) which alternate with blue faces (also hexagons, but with value $1$) along diagonal lines with direction $(1,-1)$, spaced by 3 units.
Once we fix the configuration of these maximally occupied hexagonal faces, 
there is a unique configuration of their surroundings, and averaging over the two possible configurations of each such face produces the factors 
$\frac{1}{2}(1+0)=\frac{1}{2}$ or $\frac{1}{2}(-1+0))=-\frac12$ i.e. the green and brown faces, while the blue hexagons correspond to an average over 4 configurations determined by the choices of the two pinned adjacent hexagons: $\frac14(0+1+1+2)=1$ i.e. the blue faces. 
In other words, we have the following local structure in Fig \ref{figpinned}. 
	\begin{figure}
		\includegraphics[scale=.2]{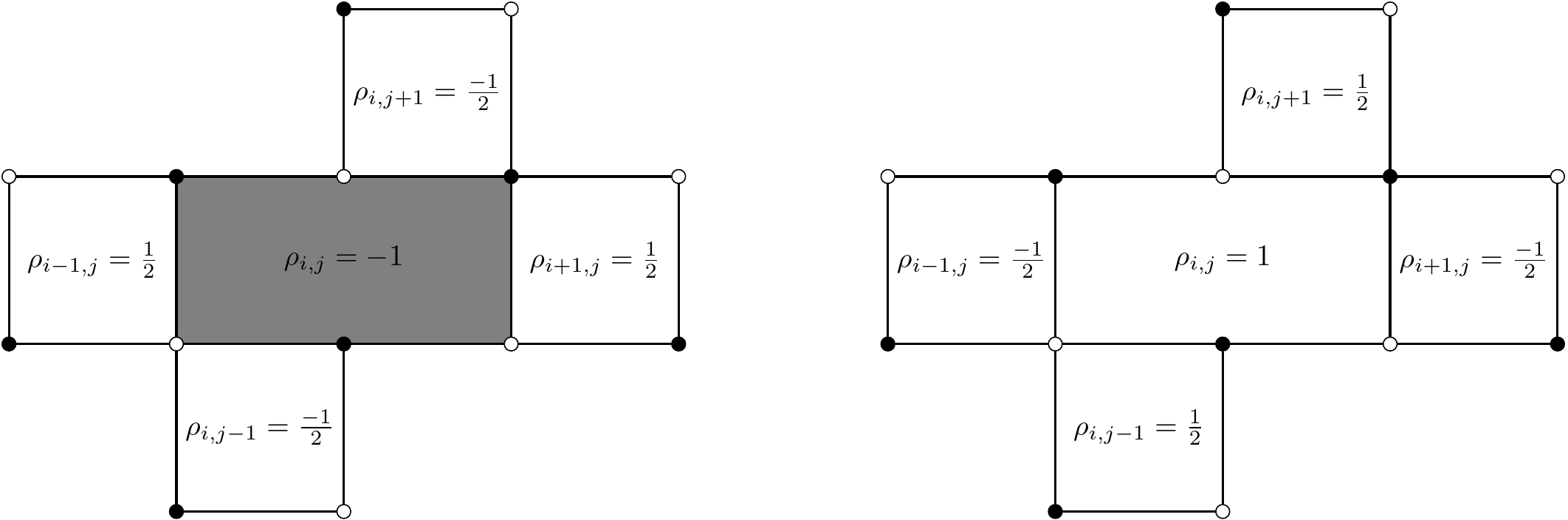}
		\caption{Local density configuration around the maximally- (left) and minimally (right) occupied hexagonal face $(i,j)$ of the dimer model in the thermodynamic limit $k\to\infty$. }
		\label{figpinned}
	\end{figure}
From Fig.\ref{figpinned} left, we see that the squares at the left and right of the pinned hexagon are occupied by 1 or $0$ dimer
(average $\frac{1}{2}$), while those on top and bottom are occupied by $1$ or $2$ (average $-\frac{1}{2}$), and it is easy to reconstruct the 
unique configuration for each choice of the pinned hexagon configurations.
%The two faces $(i,j+1)$ and $(i,j-1)$ adjacent to the maximally occupied face $(i,j)$ are always covered by at least $1$ dimer.
%	\begin{remark}
%		We want to make a remark that the tiling configurations in Fig. \ref{dominating domino T004} are completely fixed once the faces carrying weights $a,b$ are pinned with maximum number of dimers. This means that in the scaling limits, the liquid region is completely determined by the sequence of pinned faces. In \cite{DiFrancesco1}, we observed the same phenomena for the Aztec diamond dimer configuration with some $m$-toroidal weights approaching $0$. This solidifies our expectation that there is a meaningful connection between the slanted and the $m$-toroidal flat inital data $T$-system 
%	\end{remark}

In summary, like in the Aztec case of \cite{DiFrancesco1}, the facet phase observed here is pinned on a particular sublattice 
(here of hexagons of type $a,b$), but has a non-zero entropy of $2$ per pinned hexagon. This explains the fact that the partition function has always $2^{n_H}$ contributions, where $n_H$ is the number of pinned hexagons ($n_H=$2 and 3 in the examples of Figs. \ref{dominating domino T004} and \ref{dominating domino T114}).
We argue that this is the general structure of the facet phase occurring in general as bubbles inside the liquid zones.
The same structure holds for more general slanted planes $(1,1,t)$ for odd $t$. However, in addition to a sublattice of pinned hexagons,
there are additional frozen domains where strips of $t-3$ consecutive squares are maximally occupied by dimers, 
next to shifted strips of $t-3$ consecutive empty squares in alternance, between rows of pinned hexagons. 
For example, the $(1,1,5)$-slanted density profile for $\rho_{1,0,9}$ reads:
\begin{figure}[H]
	\begin{center}
		\begin{tabular}{cc}
 		\includegraphics[scale=.3]{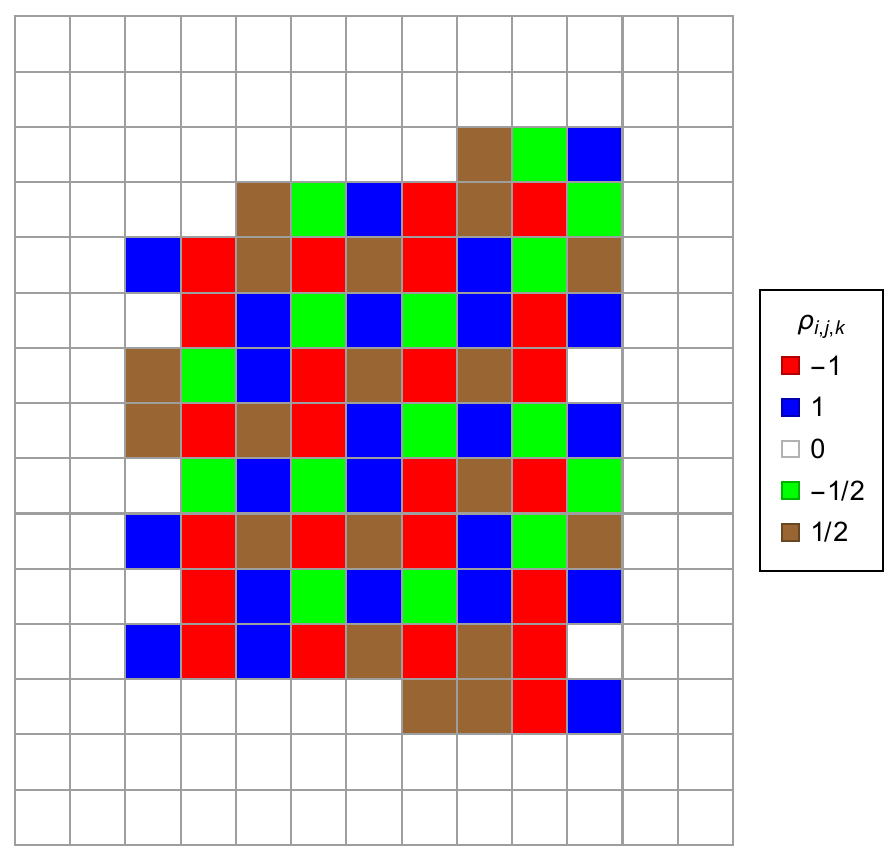}&\begin{adjustbox}{scale={0.45}{0.45},raise=10ex} $\left(
\begin{array}{ccccccccccccc}
 0 & 0 & 0 & 0 & 0 & 0 & 0 & 0 & 0 & 0 & 0 & 0 & 0 \\
 0 & 0 & 0 & 0 & 0 & 0 & 0 & 0 & 0 & 0 & 0 & 0 & 0 \\
 0 & 0 & 0 & 0 & 0 & 0 & 0 & 0 & \frac{1}{2} & -\frac{1}{2} & 1 & 0 & 0 \\
 0 & 0 & 0 & 0 & \frac{1}{2} & -\frac{1}{2} & 1 & -1 & \frac{1}{2} & -1 & -\frac{1}{2} & 0 & 0 \\
 0 & 0 & 1 & -1 & \frac{1}{2} & -1 & \frac{1}{2} & -1 & 1 & -\frac{1}{2} & \frac{1}{2} & 0 & 0 \\
 0 & 0 & 0 & -1 & 1 & -\frac{1}{2} & 1 & -\frac{1}{2} & 1 & -1 & 1 & 0 & 0 \\
 0 & 0 & \frac{1}{2} & -\frac{1}{2} & 1 & -1 & \frac{1}{2} & -1 & \frac{1}{2} & -1 & 0 & 0 & 0 \\
 0 & 0 & \frac{1}{2} & -1 & \frac{1}{2} & -1 & 1 & -\frac{1}{2} & 1 & -\frac{1}{2} & 1 & 0 & 0 \\
 0 & 0 & 0 & -\frac{1}{2} & 1 & -\frac{1}{2} & 1 & -1 & \frac{1}{2} & -1 & -\frac{1}{2} & 0 & 0 \\
 0 & 0 & 1 & -1 & \frac{1}{2} & -1 & \frac{1}{2} & -1 & 1 & -\frac{1}{2} & \frac{1}{2} & 0 & 0 \\
 0 & 0 & 0 & -1 & 1 & -\frac{1}{2} & 1 & -\frac{1}{2} & 1 & -1 & 1 & 0 & 0 \\
 0 & 0 & 1 & -1 & 1 & -1 & \frac{1}{2} & -1 & \frac{1}{2} & -1 & 0 & 0 & 0 \\
 0 & 0 & 0 & 0 & 0 & 0 & 0 & \frac{1}{2} & \frac{1}{2} & -1 & 1 & 0 & 0 \\
 0 & 0 & 0 & 0 & 0 & 0 & 0 & 0 & 0 & 0 & 0 & 0 & 0 \\
 0 & 0 & 0 & 0 & 0 & 0 & 0 & 0 & 0 & 0 & 0 & 0 & 0 \\
\end{array}
\right)$\end{adjustbox}
		\end{tabular}
	\end{center}
	\caption{Local density $\rho_{1,0,9}$ for the case of $(1,1,5)$-slanted initial data}
	\label{T109_profile_115}
\end{figure}
	In Fig \ref{T109_profile_115}, we observe the usual alternance of red/blue hexagons along diagonals in the direction $(1,-1)$, now spaced by 5 units.
In addition,  we have pairs of consecutive red (maximally occupied) square faces along vertical lines spaced by 2 units, alternating with pairs of consecutive blue (empty) squares. 

More generally, the number of square faces between two hexagonal faces along vertical lines is $t-1$ in the case of $(1,1,t)$-slanted initial data. Inbetweeen two consecutive pinned (red) hexagons along a vertical, say at positions $(i,j)$ and $(i,j+t)$, 
there is a sequence of $t-3$ maximally occupied (red) square faces (with only one frozen configuration, 
with all their horizontal edges occupied) at positions
$(i,j+2), \cdots, (i,j+t-2)$, while between the two (blue) hexagons at positions $(i+1,j-1)$ and $(i+1,j-1+t)$
there is a sequence of $t-3$ empty (brown) squares at positions $(i+1,j+1), \cdots, (i+1,j+t-1)$. The pattern is repeated on a lattice generated by $(2,-2)$
and $(0,t)$. The only variations in the configurations are determined by the 2 choices for each pinned hexagonal face. We expect a similar ``pinned" structure to hold within all the facet phases observed above.

\subsection{3D view of the Holographic principle}

\begin{figure}
\begin{center}
\begin{minipage}{0.5\textwidth}
        \centering
        \includegraphics[width=4cm]{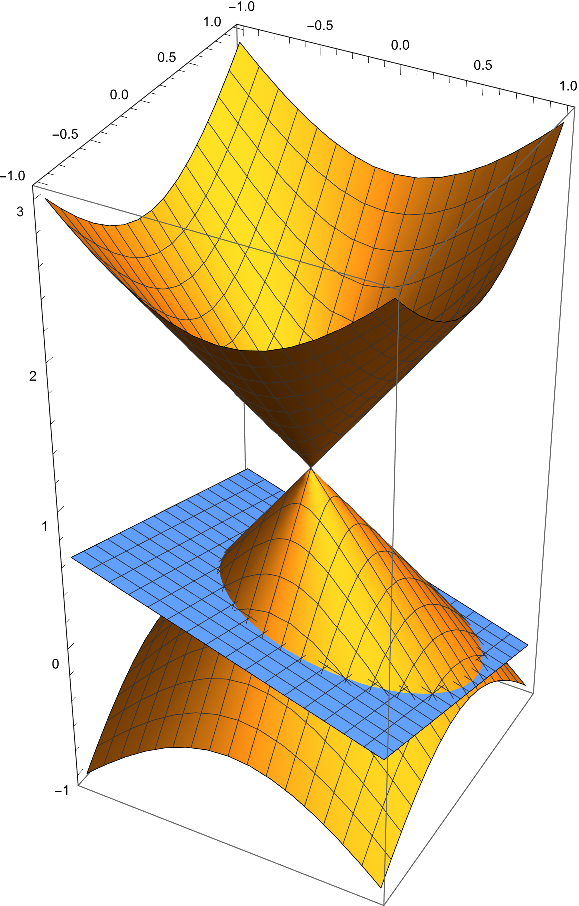} % first figure itself
        %\caption{first figure}
    \end{minipage}\hfill
    \begin{minipage}{0.5\textwidth}
        \centering
        \includegraphics[width=4cm]{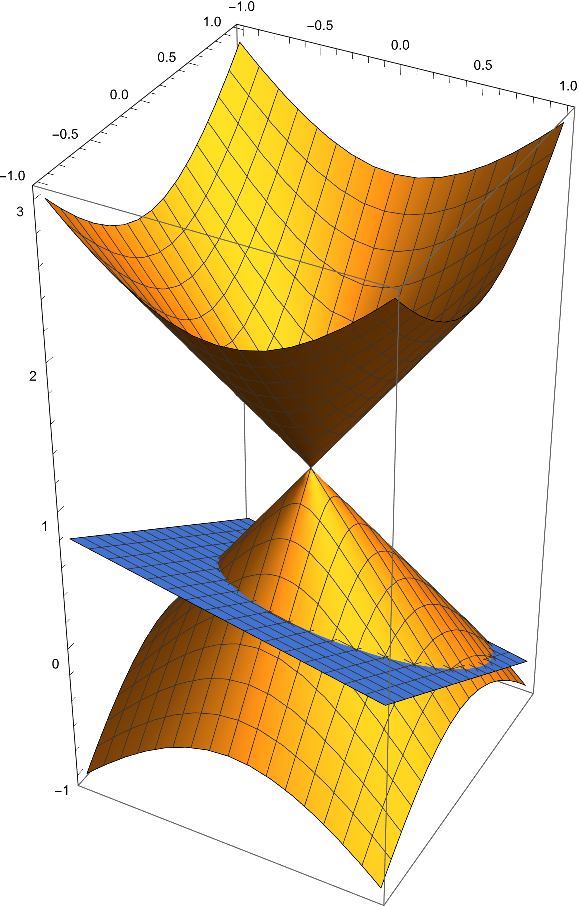}  % second figure itself
        %\caption{second figure}
    \end{minipage}
\end{center}
\caption{\small The surface for the $T$-system with (1,1,3)-slanted uniform initial data (depicted in orange/brown), together with
Left: the original slanted initial data plane ${\mathcal P}_{1,1,3}$ (depicted in blue); Right: the holographic section in the direction $(1,3,5)$, i.e. the plane ${\mathcal P}_{1,3,5}$ (depicted in blue). }
\label{fig:surface113unif}
\end{figure}

In Section \ref{sec:hologen}, we introduced a holographic principle which allows to re-interpret in dimer language any given solution of the $T$-system with an $(r,s,t)-$slanted initial data giving rise to an arctic phenomenon, in terms of any other slanted direction $(\tilde r,\tilde s, \tilde t)$. We argue now that the two arctic curves pertaining to the same solution of the $T$-system are simply intersections of a single two-dimensional surface in three dimensions
with the corresponding slanted planes. This is easy to see on the uniform case of Section \ref{sec:holo-uniform}. 
Indeed, the holographic arctic ellipse equation \eqref{arctictttilde} may indeed be interpreted as the intersection in 3D space with coordinates $(u,v,w)$
of the slanted plane
${\mathcal P}_{\tilde  r,\tilde s,\tilde t}: \ \tilde r u+\tilde s v+\tilde t w=0$ with the curve
$${\mathcal C}_{r,s,t}: \ \ (1-A_{r,s,t})\, u^2 +A_{r,s,t}\, v^2 -A_{r,s,t}\,(1-A_{r,s,t})\, (w-1)^2 =0 $$
The latter is a cone\footnote{This property is easily seen from the homogeneity of the surface equation in the variables $(u,v,w-1)$.} with apex $(0,0,1)$ which is parameterized by the initial data direction $(r,s,t)$, and contains the original arctic curve
of the uniform $(1,1,3)-$slanted model in the plane ${\mathcal P}_{r,s,t}$ (as the original arctic ellipse is the intersection of the plane ${\mathcal P}_{r,s,t}$ with the surface ${\mathcal C}_{r,s,t}$). In fact, the surface ${\mathcal C}_{r,s,t}$ is also defined as the family of lines through the apex $(0,0,1)$ that intersect the $(r,s,t)$ arctic curve in the plane ${\mathcal P}_{r,s,t}$, defined by:
$$ r u +s v+t w=0,\quad {\rm and} \quad (1-A_{r,s,t})\, t^2u^2 +A_{r,s,t}\, t^2v^2 -A_{r,s,t}\,(1-A_{r,s,t})\, (r u+s v+t)^2 =0 .$$
We may therefore think of the curve \eqref{arctictttilde} as the 2D holographic view of the 
surface ${\mathcal C}_{r,s,t}$ in 3D (see Fig.~\ref{fig:surface113unif} for the example $r=1,s=1,t=3$). Note finally that the domain for the dimer models
corresponds to the inside of the pyramid $|u|+|v|=|w-1|$, which is tangent to the surface ${\mathcal C}_{r,s,t}$ along four lines.

We suspect the surface ${\mathcal C}_{r,s,t}$ may have a physical meaning as the singularity
locus of some 3D statistical model inside the pyramid $|u|+|v|=|w-1|$, where the surface corresponds to sharp phase separations like in the 2D interpretation.

\begin{figure}
\begin{center}
\begin{minipage}{0.5\textwidth}
        \centering
        \includegraphics[width=4cm]{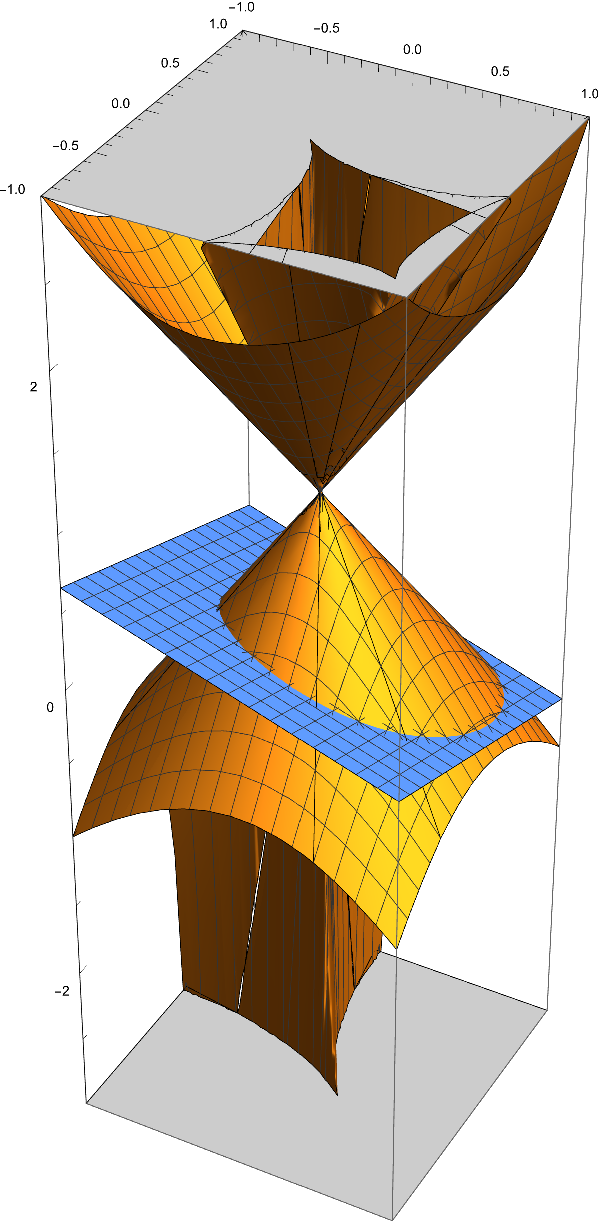} % first figure itself
        %\caption{first figure}
    \end{minipage}\hfill
    \begin{minipage}{0.5\textwidth}
        \centering
        \includegraphics[width=4cm]{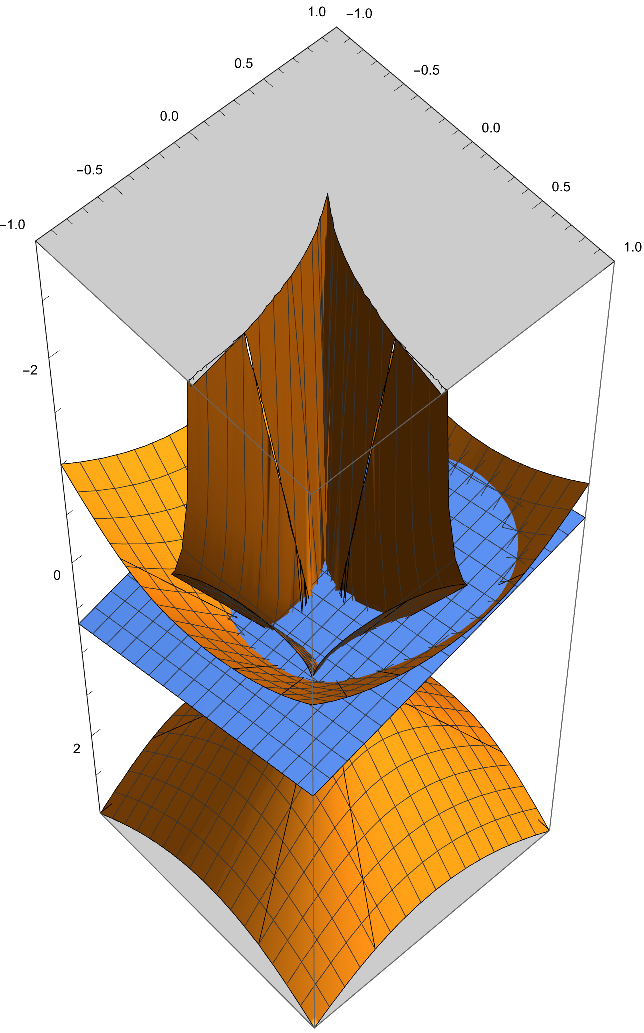}  % second figure itself
        %\caption{second figure}
    \end{minipage}
\end{center}
\caption{\small The surface for the $T$-system with (1,1,3)-slanted 2x2-periodic initial data with $\sigma=\tau=1/4$ (depicted in orange/brown), and the original slanted initial data plane ${\mathcal P}_{1,1,3}$ (depicted in blue). Left: upper side view. Right: Lower side view. }
\label{fig:surface113per}
\end{figure}

We expect this phenomenon to be general, namely that all holographic views of any given $(r,s,t)$-model studied in this paper are obtained as
the intersection of a suitable cone in 3D (conjecturally defined by the family of lines through the apex $(0,0,1)$ that intersect the original $(r,s,t)$-arctic curve in the plane ${\mathcal P}_{r,s,t}$), with the corresponding view-planes. 
%The existence of this surface and the fact that it is a cone are a consequence
%of the scaling equation \eqref{holo}: $ \Delta(x,y,z)=D(z^a/x,z^b/x,z)$ for $a={\tilde r}/{\tilde t}$, $b={\tilde s}/{\tilde t}$, leading to
%$$ \Delta(e^{\epsilon x} ,e^{\epsilon y},e^{-\epsilon (xu+yv)})=D(e^{-\epsilon(b(y-x)v -x (w-1))},e^{-\epsilon(a(x-y)u -y (w-1))},e^{-\epsilon (xu+yv)})
%= \epsilon^{\theta} (K(x,y\vert u,v,w)+O(\epsilon))$$
%where $K$ is a homogeneous function of the variable $u,v,w=-(au+bv)$.

As an illustration, we have represented in 
Fig.~\ref{fig:surface113per} the surface for the
$(1,1,3)-$ slanted 2x2 periodic case for $\sigma=\tau=1/4$ (in orange/brown) in two different views showing the above and below parts. 
The blue plane is the original slanted plane ${\mathcal P}_{1,1,3}$, and the intersection with the surface was depicted in Fig. 20 (C). 
The actual equation of the surface (a cone of homogeneous degree $8$) is available on demand from the authors.

\subsection{Conclusion and perspectives} 

This paper has introduced new solutions of the $T$-system and interpreted them in terms of dimer partition functions with special initial data.
This study is by no means exhaustive and would deserve a more systematic approach, leading possibly to a classification of exact solutions. 
However, the study has allowed us to find explicit arctic curves for a large class of suitably weighted pinecone dimer models, thus extending
widely the results of \cite{DiFrancesco1}. In particular, we have identified the structure of the new included ``facet" phase forming bubbles inside 
the liquid phase, as being pinned on some sublattice of hexagonal faces of the pinecones, while keeping a non-zero entropy. 

We also introduced a holographic principle allowing for re-interpreting exact solutions from different points of view, and eventually exhibiting an
underlying three-dimensional structure. 

Dimer models have many different formulations, and it would be interesting to investigate the non-intersecting lattice path/network formulation 
associated to the $(r,s,t)$ pinecones. This formulation has the advantage of giving an alternative route to access to thermodynamic properties 
of the models, and in particular the arctic phenomenon: we may hope to be able to use the so-called tangent method of Colomo and Sportiello \cite{Colomo_2016,DFLapa,DFGUI,DFG2,DFG3,DFref,DebRu,DebinDiFrancescoGuitter_2020,DFTRI}, 
and compare the results to those obtained in the present paper. Some advances in this direction were perfomed in  \cite{Ruelle} for the case 
of the two-periodic Aztec diamond.

Finally, beyond the case of dimers, arctic curves have been derived for interacting fermion models such as the Six or Twenty Vertex models \cite{copro,Colomo_2016,DebinDiFrancescoGuitter_2020,DFref}, and display new features, such as non-analyticity of arctic curves in non-free femion cases. These cases escape Kasteleyn theory, although the partition functions are still determinantal. It is known however that determinants obey Pl\"ucker relations, of which the $T$-system is one particular example. 
Such relations were already used in \cite{copro,DFref} to obtain the thermodynamic free energy and boundary one-point functions, but a general approach to path density is still lacking.
It would be very interesting to mimick the approach of the present paper in these more involved cases.
\bibliographystyle{plain}
\bibliography{Tslanted.bib}

\end{document}